\numberwithin{equation}{section}
\newtheorem{Th}{Theorem}[section]
\newtheorem{lem}[Th]{Lemma}
\newtheorem{Prop}[Th]{Proposition}
\newtheorem{Cor}[Th]{Corollary}
\newtheorem{Rk}[Th]{Remark}
\title[Scattering Theory for weak interactions]{Scattering Theory for Mathematical Models of the Weak Interaction}
\author[B. Alvarez]{Benjamin Louis Alvarez}
\address[B. Alvarez]{Institut Elie Cartan de Lorraine \\
Universit{\'e} de Lorraine, 
57045 Metz Cedex 1, France}
\email{benjamin.alvarez@univ-lorraine.fr}
\author[J. Faupin]{J{\'e}r{\'e}my Faupin}
\address[J. Faupin]{Institut Elie Cartan de Lorraine \\
Universit{\'e} de Lorraine, 
57045 Metz Cedex 1, France}
\email{jeremy.faupin@univ-lorraine.fr}
\date{\today}
\DeclareMathOperator{\slim}{s-lim}
\begin{document}

\begin{abstract}

We consider mathematical models of the weak decay of the vector bosons $W^{\pm}$ into leptons. The free quantum field hamiltonian is perturbed by an interaction term from the standard model of particle physics. After the introduction of high energy and spatial cut-offs, the total quantum hamiltonian defines a self-adjoint operator on a tensor product of Fock spaces. We study the scattering theory for such models. First, the masses of the neutrinos are supposed to be positive: for all values of the coupling constant, we prove asymptotic completeness of the wave operators. In a second model, neutrinos are treated as massless particles and we consider a simpler interaction Hamiltonian: for small enough values of the coupling constant, we prove again asymptotic completeness, using singular Mourre's theory, suitable propagation estimates and the conservation of the difference of some number operators.

\end{abstract}

\maketitle

\section{Introduction and results}

This paper is devoted to the scattering theory of mathematical models arising from Quantum Field Theory (QFT). One of our main concerns is to establish asymptotic completeness of the wave operators for models involving massless fields. In the recent literature, this problem has been notably studied for Pauli-Fierz Hamiltonians describing confined non-relativistic particles interacting with a quantized radiation field \cite{Ge02_01,DeKu13_01,FaSi14_01,DeGrKu15_01}. Asymptotic completeness has been proven for the massless spin-boson model, but proving this property for more general Pauli-Fierz Hamiltonians remains an important open problem. In this paper, among other results, asymptotic completeness for a simplified model of QFT involving a massless field is proven, thanks to the particular structure of the model.

We consider the weak interaction between the vector bosons $W^{\pm}$ and the full family of leptons. The latter involves the electron $e^-$, the positron $e^+$, the muon $\mu^-$, the antimuon $\mu^+$, the tau $\tau^-$, the antitau $\tau^+$, the associated neutrinos $\nu_{e}$, $\nu_{\mu}$, $\nu_{\tau}$ and the antineutrinos $\bar{\nu}_{e}$, $\bar{\nu}_{\mu}$, $\bar{\nu}_{\tau}$. Typical examples of processes we are interested in are the weak decay of the $W^{\pm}$ bosons into a lepton $l^\pm$ and its associated neutrino $\nu_l$ or antineutrino $\bar{\nu}_l$,
\begin{equation}
\label{process}
W^- \, \to \,  l^{-} + \bar{\nu}_{l}, \quad W^+ \, \to \, l^+ + \nu_l.
\end{equation} 

In what follows, the mass of a particle $p$ will be denoted by $m_p$. It is equal to the mass of the corresponding antiparticle. Physically, the following inequalities hold: 
\begin{equation*}
 m_{e} < m_{\mu} < m_{\tau} < m_W.
\end{equation*}
Neutrinos were usually assumed to be massless in the classical form of the standard model of particle physics, but recent experiments have provided evidences for nonzero neutrino masses (see, e.g., \cite{Ol14_01} and references therein). Since the latter are extremely small, however, it is legitimate -- and conceptually interesting -- to consider models where neutrinos are supposed to be massless. In this paper, $m_e$, $m_\mu$, $m_\tau$ and $m_W$ will be treated as strictly positive parameters (we will not use the inequalities above), and we will consider separately two cases: i) $m_{\nu_e}>0$, $m_{\nu_\mu}>0$, $m_{\nu_\tau}>0$ and ii) $m_{\nu_e}=m_{\nu_\mu}=m_{\nu_\tau}=0$. 

The interaction term for the specific process \eqref{process} is given, in the Lagrangian formalism and for each lepton channel $l$, by (see, e.g., \cite{Gr00_01,GrMu00_01} and references therein)
\begin{equation}
\label{Interac}
I =  \int\left[ \Psi_l(x)^{\dagger}\gamma^0 \gamma^{\alpha} (1-\gamma_{5}) \Psi_{\nu_l}(x)W_{\alpha}(x) + \Psi_{\nu_l}(x)^{\dagger}\gamma^0 \gamma^{\alpha} (1-\gamma_{5})\Psi_{l}(x)W_{\alpha}(x)^* \right] d^3x,
\end{equation}
with
\begin{align}
&\Psi_l(x)=(2\pi)^{-\frac{3}{2}} \sum_{s_1=\pm\frac{1}{2}}\int\left[ \frac{u(p_1,s_1)e^{\mathrm{i}p_1 \cdot x}}{(2(|p_1|^2+m_l^2)^{\frac{1}{2}})^{\frac{1}{2}}} b_{l,+}(p_1,s_1) + \frac{v(p_1,s_1)e^{-\mathrm{i}p_1 \cdot x}}{(2(|p_1|^2+m_l^2)^{\frac{1}{2}})^{\frac{1}{2}}} b^*_{l,-}(p_1,s_1) \right]d^3p_1, \label{lept} \\
& \Psi_{\nu_l}(x)=(2\pi)^{-\frac{3}{2}} \sum_{s_2=\pm\frac{1}{2}}\int\left[ \frac{u(p_2,s_2)e^{\mathrm{i}p_2\cdot x}}{(2(|p_2|^2+m_{\nu_l}^2)^{\frac{1}{2}})^{\frac{1}{2}}} c_{l,+}(p_2,s_2) + \frac{v(p_2,s_2)e^{-\mathrm{i}p_2 \cdot x}}{(2(|p_2|^2+m_{\nu_l}^2)^{\frac{1}{2}})^{\frac{1}{2}}} c^*_{l,-}(p_2,s_2) \right]d^3p_2, \label{neu} \\
& W_{\alpha}(x)=(2\pi)^{-\frac{3}{2}} \sum_{\lambda=-1, 0, 1} \int\left[ \frac{\epsilon_{\alpha}(p_3,\lambda)e^{\mathrm{i}p_3 \cdot x} }{(2(|p_3|^2+m_{W}^2)^{\frac{1}{2}})^{\frac{1}{2}}} a_+(p_3,\lambda) + \frac{ \epsilon^*_{\alpha}(p_3,\lambda) e^{-\mathrm{i}p_3 \cdot x} }{(2(|p_3|^2+m_{W}^2)^{\frac{1}{2}})^{\frac{1}{2}}} a^*_-(p_3,\lambda)  \right] d^3p_3. \label{Boson}
\end{align}
Here, $u$ and $v$ are the solutions to the Dirac equation (normalized as in \cite[(2.13)]{GrMu00_01}), $\epsilon_{\alpha}$ is a polarisation vector,  $\gamma^{\alpha}$, $\alpha = 0 , \dots , 3$ and $\gamma_{5}$ are the usual gamma matrices. Moreover, the index $l \in \{1,2,3\}$ labels the lepton families, $p_1,p_2,p_3 \in \mathbb{R}^3$ stand for the momentum variables of fermions and bosons, $s_i \in \{-\frac{1}{2},\frac{1}{2}\}$ denotes the spin of fermions and $\lambda \in \{-1,0,1\}$ the spin of bosons. The operators $b_{l,+}(p_1,s_1)$ and $b^*_{l,+}(p_1,s_1)$ are annihilation and creation operators for the electron if $l=1$, muon if $l=2$ and tau if $l=3$. The operators $b_{l,-}(p_1,s_1)$ and $b^*_{l,-}(p_1,s_1)$ are annihilation and creation operators for the associated antiparticles. Likewise, $c_{l,+}(p_2,s_2)$ and $c^*_{l,+}(p_2,s_2)$ (respectively $c_{l,-}(p_2,s_2)$ and $c^*_{l,-}(p_2,s_2)$) stand for annihilation and creation operators for the neutrinos of the $l$-family (respectively antineutrinos) and the operators $a_+(p_3,\lambda)$ and $a^*_+(p_3,\lambda)$ (respectively $a_-(p_3,\lambda)$ and $a^*_-(p_3,\lambda)$) are annihilation and creation operators for the boson $W^-$ (respectively $W^+$).

It should be mentioned that, when neutrinos are supposed to be massive, a slightly different interaction term can be found in the literature (see, e.g., \cite{Xi14_01}). More precisely, massive neutrinos fields $(\tilde{\Psi}_{\nu_1},\tilde{\Psi}_{\nu_2},\tilde{\Psi}_{\nu_3})$ may be defined by applying a $3\times3$ unitary matrix transformation to the fields $(\Psi_{\nu_1},\Psi_{\nu_2},\Psi_{\nu_3})$ in \eqref{neu}. Our results can be proven without any noticeable change if one considers such interaction terms. We will not do so in the present paper.

For shortness, we denote by $\xi_i = (p_i,s_i)$, $i=1,2$, the quantum variables for fermions, and $\xi_3 = (p_3,\lambda)$ for bosons. The following canonical commutation and anticommutation relations hold: 
\begin{align*}
\{b_{l,\epsilon}(\xi_1),b^*_{l',\epsilon'}(\xi_2)\} & = \{c_{l,\epsilon}(\xi_1),c^*_{l',\epsilon'}(\xi_2)\} = \delta_{ll'}\delta_{\epsilon \epsilon'} \delta(\xi_1-\xi_2) ,\\
~[a_{\epsilon}(\xi_1),a^*_{\epsilon'}(\xi_2)] & = \delta_{\epsilon \epsilon'} \delta(\xi_1-\xi_2) , \\
\{b_{l,\epsilon}(\xi_1),b_{l',\epsilon'}(\xi_2)\} & = \{c_{l,\epsilon}(\xi_1),c_{l',\epsilon'}(\xi_2)\} = 0 ,  \\
~[a_{\epsilon}(\xi_1),a_{\epsilon'}(\xi_2)] & = 0 , \\
\{b_{l,\epsilon}(\xi_1),c_{l',\epsilon'}(\xi_2)\} & = \{b_{l,\epsilon}(\xi_1),c^*_{l',\epsilon'}(\xi_2)\} = 0 , \\
\{b^*_{l,\epsilon}(\xi_1),c_{l',\epsilon'}(\xi_2)\} & = \{b^*_{l,\epsilon}(\xi_1),c^*_{l',\epsilon'}(\xi_2)\} = 0 , \\
~[a_{\epsilon}(\xi_1),c_{\epsilon'}(\xi_2)] & = [a_{\epsilon}(\xi_1),b_{\epsilon'}(\xi_2)] = 0 , \\
~[a_{\epsilon}(\xi_1),c^*_{\epsilon'}(\xi_2)] & = [a_{\epsilon}(\xi_1),b^*_{\epsilon'}(\xi_2)] = 0 ,\\
~[a^*_{\epsilon}(\xi_1),c_{\epsilon'}(\xi_2)] & = [a^*_{\epsilon}(\xi_1),b_{\epsilon'}(\xi_2)] = 0 , \\
~[a^*_{\epsilon}(\xi_1),c^*_{\epsilon'}(\xi_2)] & = [a^*_{\epsilon}(\xi_1),b^*_{\epsilon'}(\xi_2)] = 0 ,
\end{align*}
with $l,l'\in\{1,2,3\}$, $\epsilon,\epsilon'=\pm$.

 Inserting \eqref{lept}--\eqref{Boson} into \eqref{Interac}, integrating with respect to $x$, and using the convention
\begin{equation*}
\int d\xi_1d\xi_2d\xi_3 = \sum_{s_1=\pm \frac{1}{2}} \sum_{s_2=\pm \frac{1}{2}} \sum_{\lambda=-1, 0, 1}\int d^3p_1d^3p_2d^3p_3,
\end{equation*}
we arrive at the formal expression
\begin{align}
H_I := \sum_{j=1}^4 H^{(j)}_I := \sum_{l=1}^3 \sum_{\epsilon=\pm} \int & \left\{ \left[ G^{(1)}_{l,\epsilon} (\xi_1,\xi_2,\xi_3)b^*_{l,\epsilon}(\xi_1)c^*_{l,-\epsilon}(\xi_2)a_{\epsilon}(\xi_3) + \mathrm{h.c.} \right]\right. \notag \\
& + \left[ G^{(2)}_{l,\epsilon} (\xi_1,\xi_2,\xi_3)b^*_{l,-\epsilon}(\xi_1)c^*_{l,\epsilon}(\xi_2)a^*_{\epsilon}(\xi_3) + \mathrm{h.c.} \right] \notag \\
&+\left[ G^{(3)}_{l,\epsilon} (\xi_1,\xi_2,\xi_3)b^*_{l,-\epsilon}(\xi_1)c_{l,-\epsilon}(\xi_2)a^*_{\epsilon}(\xi_3) + \mathrm{h.c.} \right] \notag \\
&+ \left.  \left[ G^{(4)}_{l,\epsilon} (\xi_1,\xi_2,\xi_3)b^*_{l,\epsilon}(\xi_1)c_{l,\epsilon}(\xi_2) a_{\epsilon}(\xi_3) + \mathrm{h.c.} \right] \right\}d\xi_1d\xi_2d\xi_3 ,  \label{interaction_term}
\end{align}
where we set $-\epsilon=\mp$ if $\epsilon=\pm$. The kernels $G_{l,\epsilon}^{(j)}$, $j = 1 , \dots , 4$, are of the form
\begin{align}
&G_{l,\epsilon}^{(1)}( \xi_1 , \xi_2 , \xi_3 ) = f^{(1)}_{l,\epsilon,1}( \xi_1 ) f^{(1)}_{l,\epsilon,2}( \xi_2 ) f^{(1)}_{l,\epsilon,3}( \xi_3 ) \delta ( - p_1 - p_2 + p_3 ) , \label{eq:kernel1} \\
&G_{l,\epsilon}^{(2)}( \xi_1 , \xi_2 , \xi_3 ) = f^{(2)}_{l,\epsilon,1}( \xi_1 ) f^{(2)}_{l,\epsilon,2}( \xi_2 ) f^{(2)}_{l,\epsilon,3}( \xi_3 ) \delta ( p_1 + p_2 + p_3 ) , \\
&G_{l,\epsilon}^{(3)}( \xi_1 , \xi_2 , \xi_3 ) = f^{(3)}_{l,\epsilon,1}( \xi_1 ) f^{(3)}_{l,\epsilon,2}( \xi_2 ) f^{(3)}_{l,\epsilon,3}( \xi_3 ) \delta ( - p_1 + p_2 - p_3 ) , \\
&G_{l,\epsilon}^{(4)}( \xi_1 , \xi_2 , \xi_3 ) = f^{(4)}_{l,\epsilon,1}( \xi_1 ) f^{(4)}_{l,\epsilon,2}( \xi_2 ) f^{(4)}_{l,\epsilon,3}( \xi_3 ) \delta ( - p_1 + p_2 + p_3 ) , \label{eq:kernel4}
\end{align}
where the maps $p_i \mapsto f^{(j)}_{l,\epsilon,i}( \xi_i )$ are bounded in any compact set of $\mathbb{R}^3$.
Their explicit expressions are given in Appendix \ref{app:interaction}.

An important property of the interaction Hamiltonian \eqref{interaction_term} is that it preserves the lepton number in the sense that, formally, $N_{l^-} + N_{\nu_l} - N_{l^+} - N_{\bar{\nu}_l}$ commutes with $H_I$. Here, $N_p$ stands for the number operator corresponding to a particle $p$. 

We observe that the first term in \eqref{interaction_term}, $H_I^{(1)}$, describes explicitly processes like \eqref{process}, while $H_I^{(2)}$ prevents the bare vacuum from being a bound state, as expected from physics. To study a process like \eqref{process}, it is reasonable, in a first approximation, to keep only these first two terms, thus considering the simpler interaction Hamiltonian $H_I = H_I^{(1)} + H_I^{(2)}$. Under the assumption that the masses of the neutrinos vanish, we will make this approximation. The advantage is that the differences of number operators $N_{l^-} - N_{\bar{\nu}_l}$ and $N_{l^+} - N_{\nu_l}$ are preserved by the Hamiltonian. This property will be essential in some of our arguments. On the other hand, if we assume that the masses of the neutrinos do not vanish, our argument applies without requiring that such a quantity be conserved, and therefore the full interaction Hamiltonian \eqref{interaction_term} can be studied.

Now, the free Hamiltonian is given by
\begin{align}
H_0  = & \sum_{l=1}^{3} \sum_{\epsilon=\pm} \int \omega_l^{(1)}(\xi_1) {b^*_{l,\epsilon}}(\xi_1){b_{l,\epsilon}}(\xi_1) d\xi_1  +  \sum_{l=1}^{3} \sum_{\epsilon=\pm} \int \omega_l^{(2)}(\xi_2) {c^*_{l,\epsilon}}(\xi_2){c_{l,\epsilon}}(\xi_2) d\xi_2 \notag \\
  & +  \sum_{\epsilon=\pm} \int \omega^{(3)}(\xi_3) {a^*_{\epsilon}}(\xi_3){a_{\epsilon}}(\xi_3) d\xi_3, \label{Hlibre}
\end{align}
with the dispersion relations
\begin{equation}
\label{disprela}
\omega^{(1)}_l(\xi_1) = \sqrt{p^2_1+m_l^2}, \quad \omega^{(2)}_l(\xi_2) = \sqrt{p^2_2+m_{\nu_l}^2}, \quad  \omega^{(3)}(\xi_3)  =  \sqrt{p^2_3+m_{W^{\pm}}^2} .
\end{equation}
The total Hamiltonian is defined by
\begin{equation}
\label{total_hamiltonian}
H = H_0 + H_I .
\end{equation}

Since the kernels $G^{(j)}_{l,\epsilon}$ are singular, the formal expressions \eqref{interaction_term}--\eqref{total_hamiltonian} do not define a self-adjoint operator in Fock space (see the next section for the precise definition of the Hilbert space that we consider). In order to obtain such a self-adjoint operator, following a standard procedure in constructive QFT (see e.g. \cite{GlJa77_01} and references therein), we introduce ultraviolet and spatial cut-offs in the interaction Hamiltonian. Of course, eventually, it would be desirable to find a renormalization procedure allowing one to remove those cut-offs. This constitutes an important open problem which is beyond the scope of this paper. 

Let $\Lambda > 0$ be a fixed ultraviolet parameter and let $B( 0 , \Lambda )$ denotes the ball centered at $0$ and of radius $\Lambda$ in $\mathbb{R}^3$. In the formal expression \eqref{interaction_term}, we introduce ultraviolet cut-offs, i.e., we replace $f^{(j)}_{l,\epsilon,i}( \xi_i )$ by $\chi_{ B( 0 , \Lambda ) }( p_i )f^{(j)}_{l,\epsilon,i}( \xi_i )$, for some smooth function $\chi_{ B( 0 , \Lambda ) }$ supported in $B( 0 , \Lambda )$, and we replace the Dirac delta function $\delta(p)$ by an approximation, $\delta_n(p) = n^3 \delta_1( np )$, for some smooth and compactly supported function $\delta_1$. The resulting kernels are still denoted by the same symbols $G^{(j)}_{l,\epsilon}$. In particular, $G^{(j)}_{l,\epsilon}$ are now square integrable. As will be shown in the next section, square integrability of the kernels is actually sufficient to prove that $H=H_0+H_I$ defines a self-adjoint operator in Fock space.

Some of our results will be proven in the weak coupling regime. We will therefore study in this paper an abstract class of Hamiltonians given by
\begin{equation}
H = H_0 + g H_I , \label{total_hamiltonian_g}
\end{equation}
where $g$ is a real coupling parameter, $H_0$ is defined in \eqref{Hlibre}, and $H_I$ is given by \eqref{interaction_term} with abstract kernels $G^{(j)}_{l,\epsilon}$. The latter will always be supposed to be square integrable and, in some cases, stronger regularity assumptions on $G^{(j)}_{l,\epsilon}$ will be required. This will be made more precise below.

The spectral theory of such models of the weak interaction has been studied in particular in \cite{AmGrGu07_01,AsBaFaGu11_01,BaFaGu16_01,BaFaGu16_02,BaFaGuappear,BaGu09_01} (see also \cite{BaDiGu04_01,Ta14_01} for related models of QED). Without entering into details, the results established in these references show that, for weak coupling, and under suitable assumptions on the kernels, $H$ is self-adjoint and has a ground state (i.e. $E:=\inf \sigma( H )$ is an eigenvalue of $H$), and the essential spectrum of $H$ coincides with the semi axis $\sigma_{\mathrm{ess}}(H) = [ E + m_{\nu} , \infty )$, with $m_\nu=\min(m_{\nu_e},m_{\nu_\mu},m_{\nu_\tau})$. In particular, if the masses of the neutrinos vanish, the ground state energy is an eigenvalue of $H$ embedded into its essential spectrum. Moreover, except for the ground state energy, the spectrum of $H$ below the electron mass is purely absolutely continuous.

In this paper, we complement the previous spectral results by studying the structure of the essential spectrum in the whole semi-axis $[ E + m_{\nu} , \infty )$ (not only below the electron mass) and by relaxing the weak coupling assumption in the case where the masses of the neutrinos do not vanish.

Our main purpose is then to study scattering theory for models of the form \eqref{interaction_term}--\eqref{total_hamiltonian_g} and, in particular, to prove asymptotic completeness. Scattering theory for models of non-relativistic matter coupled to a massive, bosonic quantum field -- massive Pauli-Fierz Hamiltonians -- has been considered by many authors. See, among others, \cite{Ar83_01,De03_01,DeGe99_01,DeGe04_01,Fr74_01,FrGrSc01_01,FrGrSc02_01,FrGrSc04_01,FrGrSc07_01,HuSp95_01,Ho68_01,Ho69_01,Sp97_01}; see also \cite{Am04_01} for fermionic Pauli-Fierz systems, \cite{DeGe00_01} for spatially cut-off $P(\varphi)_2$ Hamiltonians, \cite{GePa09_01} for abstract QFT Hamiltonians, and \cite{BoFaSi12_01,DeKu13_01,DeKu15_01,DeGrKu15_01,FaSi14_01,FaSi14_02,Ge02_01} for massless Pauli-Fierz Hamiltonians. A large part of the techniques used in the present paper are adapted from the ones developed in these references.

The first step of the approach to scattering theory that we follow consists in establishing the existence and basic properties of the asymptotic creation and annihilation operators
\begin{equation}
a^{\pm,\sharp}_{\epsilon} (h)  :=  \underset{t \to +\infty}{\slim} e^{ \pm \mathrm{i} t H} e^{\mp \mathrm{i} tH_0} a^\sharp_{\epsilon}(h) e^{\pm \mathrm{i} tH_0} e^{\mp \mathrm{i} t H } , \label{eq:def_asympt_a}
\end{equation}
for any $h$ in $L^2( \mathbb{R}^3 \times \{ -1,0,1 \} )$, where $a^{\sharp}$ stands for $a$ or $a^*$ and
\begin{equation}
a_\epsilon^*(h) = \int h(\xi_3) a_\epsilon^*(\xi_3) d\xi_3, \quad a_\epsilon(h) = \int \bar{h}(\xi_3) a_\epsilon(\xi_3) d\xi_3. \label{eq:def_a}
\end{equation}
The fermionic asymptotic creation and annihilation operators $b^{\pm,\sharp}_{l,\epsilon}(h)$ and $c^{\pm,\sharp}_{l,\epsilon}(h)$ are defined similarly, for $h \in L^2( \mathbb{R}^3 \times \{-\frac12,\frac12\})$.

Let $\mathscr{H}$ be the Hilbert space of the model, defined as a tensor product of Fock spaces, see the next section for precise definitions. The space of asymptotic vacua is defined by
\begin{equation*}
\mathscr{K}^\pm := \{ u , \, d^\pm( h ) u = 0 \text{ for all asymptotic annihilation operator } d^\pm( h ) \}, 
\end{equation*}
where $d^\pm(h)$ stands for either $a_\epsilon^\pm(h)$, with $h \in L^2( \mathbb{R}^3 \times \{ -1,0,1 \} )$, or $b^{\pm}_{l,\epsilon}(h)$ or $c^{\pm}_{l,\epsilon}(h)$, with $h \in L^2( \mathbb{R}^3 \times \{-\frac12,\frac12\})$. There is a natural definition of isometric wave operators
\begin{equation}
\Omega^\pm : \mathscr{K}^\pm \otimes \mathscr{H} \to \mathscr{H}, \label{eq:def_Omega_1}
\end{equation}
with the property that
\begin{equation}
\Omega^\pm ( \mathds{1} \otimes d^{\sharp}(h) ) = d^{\pm,\sharp}(h) \Omega^\pm, \label{eq:def_Omega_2}
\end{equation}
where, again, $d^{\sharp}(h)$ stands for any kind of creation or annihilation operator. Asymptotic completeness of $\Omega^\pm$ is the statement that $\Omega^\pm$ are unitary and that $\mathscr{K}^\pm = \mathscr{H}_{\mathrm{pp}}( H )$, where $\mathscr{H}_{\mathrm{pp}}( H )$ denotes the pure point spectral subspace of $H$. An interpretation of asymptotic completeness is that any evolving state $e^{-\mathrm{i}tH}u$, with $u \in \mathscr{H}$, can be decomposed, asymptotically as time $t$ goes to infinity, into a bound state together with asymptotically free particles.

Recall that the parameter $j\in\{1,\dots,4\}$ labels the different interaction terms in \eqref{interaction_term} and that the index $l \in \{1,2,3\}$ labels the lepton families. Moreover $\epsilon=\pm$. In what follows, for shortness, we say that
\begin{equation*}
\text{``$G \in L^2$'' if, for all $j$, $l$ and $\epsilon$, $G_{l,\epsilon}^{(j)}$ is square integrable}.
\end{equation*}
Recall that $s_1,s_2$ denote the spin variables for fermions and that $\lambda$ denotes the spin variable for bosons. We say that 
\begin{equation*}
\text{``$G \in \mathbb{H}^\mu$'' if, for all $j$, $l$, $\epsilon$, $s_1$, $s_2$ and $\lambda$, $G_{l,\epsilon}^{(j)}( s_1 , \cdot , s_2 , \cdot , \lambda , \cdot )$ belongs to the Sobolev space $\mathbb{H}^{\mu}( \mathbb{R}^9 )$}.
\end{equation*}

Remembering that the dispersion relation $\omega_l^{(i)}$, $i=1,2$, $l=1,2,3$ and $\omega^{(3)}$ are defined in \eqref{disprela}, we set
\begin{align}
& a_{(i),l} := \frac{\mathrm{i}}{2} \big ( \nabla_{p_i} \cdot \nabla \omega^{(i)}_l(p_i) + \nabla \omega^{(i)}_l( p_i ) \cdot \nabla_{p_i} \big ), \quad i=1,2 , \quad l = 1 ,2 , 3 , \label{eq:defa12} \\
& a_{(3)} := \frac{\mathrm{i}}{2} \big ( \nabla_{p_3} \cdot \nabla \omega^{(3)}(p_3) + \nabla \omega^{(3)}( p_3 ) \cdot \nabla_{p_3} \big ), \label{eq:defa3} \\
& b_{(i),l} :=  \frac{\mathrm{i}}{2} \big ( ( p_i \cdot \nabla \omega^{(i)}_l (p_i ) )^{-1} p_i \cdot \nabla_{p_i} + \nabla_{p_i} \cdot p_i  ( p_i \cdot \nabla \omega^{(i)}_l (p_i ) )^{-1} \big ) , \quad i=1,2 , \quad l = 1 ,2 , 3 , \label{eq:defb12} \\
& b_{(3)} :=  \frac{\mathrm{i}}{2} \big ( ( p_3 \cdot \nabla \omega^{(3)} (p_3 ) )^{-1} p_3 \cdot \nabla_{p_3} + \nabla_{p_3} \cdot p_3  ( p_3 \cdot \nabla \omega^{(3)} (p_3 ) )^{-1} \big ), \label{eq:defb3}
\end{align}
as partial differential operators acting on $L^2( d\xi_1 d\xi_2 d\xi_3 )$. See Section \ref{sec:spectral} for details concerning the domains and properties of these operators. It should be noted that, in the case where the masses of the neutrinos vanish, we have that $a_{(2),l} = b_{(2),l}$.

To shorten the statement of some of our results below, we introduce the notation ``$a_{(i),\cdot} G \in L^2$'' with the following meaning: for $i=1,2$, we say that
\begin{equation*}
\text{``$a_{(i),\cdot} G \in L^2$'' if, for all $j$, $l$ and $\epsilon$, $a_{(i),l} G_{l,\epsilon}^{(j)}$ is square integrable},
\end{equation*}
and, for $i=3$, we say that
\begin{equation*}
\text{``$a_{(3),\cdot} G \in L^2$'' if, for all $j$, $l$ and $\epsilon$, $a_{(3)} G_{l,\epsilon}^{(j)}$ is square integrable}.
\end{equation*}
The notation $a_{(i),\cdot} a_{(i'),\cdot} G \in L^2$ is defined analogously, and likewise for $b_{(i),\cdot} G \in L^2$ and $b_{(i),\cdot} b_{(i'),\cdot} G \in L^2$. Given these conventions, the notation $|p_3|^{-1} a_{(i),\cdot} G \in L^2$ have an obvious meaning.

Our main results can be stated as follows.
\begin{Th}\label{thm:main}
$\quad$

\begin{itemize}
\item[(i)] Suppose that the masses of the neutrinos $m_{\nu_e}$, $m_{\nu_{\mu}}$, $m_{\nu_{\tau}}$ are positive and consider the Hamiltonian \eqref{total_hamiltonian_g} with $H_I$ given by \eqref{interaction_term}. Assume that
\begin{equation*}
G \in L^2 , \quad a_{(i),\cdot} G \in L^2, \quad i = 1,2,3, 
\end{equation*}
and that $G \in \mathbb{H}^{1+\mu}$ for some $\mu > 0$. Then the wave operators $\Omega^\pm$ exist and are asymptotically complete. Suppose in addition that
\begin{equation*}
 b_{(i),\cdot} G \in L^2, \quad i = 1,2,3, \quad b_{(i),\cdot} b_{(i'),\cdot} G \in L^2 , \quad i,i'=1,2,3.
\end{equation*}
Then there exists $g_0 > 0$, which does not depend on $m_{\nu_e}$, $m_{\nu_{\mu}}$, $m_{\nu_{\tau}}$, such that, for all $|g| \le g_0$, $H-E$ is unitarily equivalent to $H_0$.
\item[(ii)]Suppose that the masses of the neutrinos $m_{\nu_e}$, $m_{\nu_{\mu}}$, $m_{\nu_{\tau}}$ vanish and consider the Hamiltonian $H = H_0 + g ( H_I^{(1)} + H_I^{(2)} )$ with $H_I^{(1)}$ and $H_I^{(2)}$ given by \eqref{interaction_term}. Assume that
\begin{equation*}
G \in L^2, \quad a_{(i),\cdot} G \in L^2, \quad |p_3|^{-1} a_{(i),\cdot} G \in L^2, \quad i = 1,2,3 ,
\end{equation*}
and that $G \in \mathbb{H}^{1+\mu}$ for some $\mu > 0$. Then there exists $g_0>0$ such that, for all $|g| \le g_0$, the wave operators $\Omega^\pm$ exist and are asymptotically complete.
Suppose in addition that
\begin{equation*}
 b_{(i),\cdot} G \in L^2, \quad i = 1,2,3, \quad b_{(i),\cdot} b_{(i'),\cdot} G \in L^2 , \quad i,i'=1,2,3.
\end{equation*}
Then there exists $g'_0>0$ such that, for all $|g|\le g'_0$, $H-E$ is unitarily equivalent to $H_0$.
\end{itemize}
\end{Th}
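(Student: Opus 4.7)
The plan is to follow the standard three-step scheme used for QFT Hamiltonians in \cite{DeGe99_01,DeGe00_01,Ge02_01,GePa09_01,FaSi14_01}. First I would establish existence of the asymptotic creation and annihilation operators of \eqref{eq:def_asympt_a} by Cook's method: setting $h_t := e^{-\mathrm{i}t\omega^{(3)}}h$ for a boson test function $h$, the Heisenberg derivative
\begin{equation*}
\frac{d}{dt}\bigl(e^{\mathrm{i}tH}a_\epsilon^{\sharp}(h_t)e^{-\mathrm{i}tH}\bigr) = \mathrm{i}g\, e^{\mathrm{i}tH}\bigl[H_I, a_\epsilon^{\sharp}(h_t)\bigr]e^{-\mathrm{i}tH}
\end{equation*}
must be integrable in $t$ on bounded spectral subspaces $\mathds{1}_{[-R,R]}(H)\mathscr{H}$. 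The commutator reduces to a sum of field operators in which $h_t$ has been contracted against the boson factor of each kernel $G^{(j)}_{l,\epsilon}$; a stationary-phase estimate on $e^{-\mathrm{i}t\omega^{(3)}}$ together with $G\in\mathbb{H}^{1+\mu}$ yields the required $O(t^{-1-\mu})$ decay. The fermionic asymptotic operators $b^{\pm,\sharp}_{l,\epsilon}$ and $c^{\pm,\sharp}_{l,\epsilon}$ are treated analogously, the various hypotheses on $a_{(i),\cdot}G$ (and, in (ii), on $|p_3|^{-1}a_{(i),\cdot}G$) being what is needed to run the argument uniformly in $t$. Once the limits exist, the asymptotic CCR and CAR relations follow from similar commutator computations, and $\Omega^\pm$ are defined by extending from the total set of vectors $d^{\pm,\sharp}(h_1)\cdots d^{\pm,\sharp}(h_n)u$ with $u\in\mathscr{K}^\pm$.

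Asymptotic completeness is equivalent to the identity $\mathscr{K}^\pm = \mathscr{H}_{\mathrm{pp}}(H)$. To prove it I would derive, in the usual order, large-velocity, phase-space and minimal-velocity propagation estimates through commutator computations with the propagation observables $a_{(i),l}$, and then geometrically decompose the identity on $\mathscr{H}$ to enforce that on $\mathscr{H}_{\mathrm{c}}(H)$ at least one particle of some species asymptotically escapes to infinity. An induction on the total energy reduces the remaining part of the decomposition to the pure-point subspace. In the massive-neutrino case (i) a regular Mourre estimate on every compact subinterval of $(E,\infty)$ is available from $G\in\mathbb{H}^{1+\mu}$ and $a_{(i),\cdot}G\in L^2$, and the positive mass gap $m_\nu>0$ makes the induction work straightforwardly.

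The hard part is (ii). When $m_{\nu_e}=m_{\nu_\mu}=m_{\nu_\tau}=0$, the ground-state energy $E$ is embedded in $\sigma_{\mathrm{ess}}(H)$, the conjugate operator $a_{(2),l}$ coincides with $b_{(2),l}$ and degenerates at $p_2=0$, and no regular Mourre estimate can hold near $E$. The remedy is to use singular Mourre theory in the spirit of \cite{FaSi14_01,FaSi14_02}, combined with the crucial observation (already stressed in the introduction) that $H_I^{(1)}+H_I^{(2)}$ commutes with the lepton-number differences $N_{l^-}-N_{\bar\nu_l}$ and $N_{l^+}-N_{\nu_l}$. Restricting to joint eigenspaces of these operators effectively decouples each infrared-singular neutrino mode from its charged partner, which is what allows one to recover a usable Mourre-type estimate; the smallness of $|g|$ is used to absorb the residual error terms. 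The propagation machinery of (i) then applies within each such sector, with the extra weight $|p_3|^{-1}$ taking care of the soft contributions that arise from the infrared part of the Cook integrand.

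For the final assertion -- that $H-E$ is unitarily equivalent to $H_0$ for $|g|$ small enough -- it suffices, once asymptotic completeness is in hand, to show that $\mathscr{H}_{\mathrm{pp}}(H)$ reduces to the one-dimensional ground-state eigenspace. This would follow from a regular Mourre estimate on every compact subinterval of $(E,\infty)$ with conjugate operator $\sum_{i,l}b_{(i),l}$: the extra hypotheses $b_{(i),\cdot}G\in L^2$ and $b_{(i),\cdot}b_{(i'),\cdot}G\in L^2$ are designed to provide the $C^2$-regularity of $H$ with respect to this operator, and the smallness of $|g|$ yields positivity of the commutator. A virial-theorem argument then excludes excited eigenvalues, and the restriction of $\Omega^\pm$ to the one-dimensional $\mathscr{K}^\pm\otimes\mathscr{H}\simeq\mathscr{H}$ furnishes the desired unitary equivalence intertwining $H-E$ with $H_0$.
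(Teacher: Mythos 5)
Your overall skeleton (Cook-type existence of asymptotic fields under $G\in\mathbb{H}^{1+\mu}$, propagation estimates, Mourre theory, a geometric decomposition with induction on the energy, and a positive commutator with $\mathrm{d}\Gamma(b)$ plus the virial theorem for the unitary equivalence) matches the paper's strategy, and part (i) is essentially fine. The genuine gaps are in part (ii), which is the heart of the matter. First, the claim that ``the propagation machinery of (i) then applies within each such sector'' of fixed $N_{\mathrm{lept}}-N_{\mathrm{neut}}$ is not correct as stated: fixing the sector bounds $N_{\mathrm{neut}}$ by an $H$-bounded quantity, but it does nothing to the infrared singularity that appears in the commutator expansions, where second commutators like $[\,\mathrm{i}\nabla_{p_2},[\,\mathrm{i}\nabla_{p_2},\omega^{(2)}(p_2)]\,]$ are of size $|p_2|^{-1}$ and are neither $N$- nor $H$-bounded. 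The paper needs two additional ingredients that your sketch omits entirely: a sub-linear bound $\langle e^{-\mathrm{i}tH}u,\mathrm{d}\Gamma(|p_2|^{-1})e^{-\mathrm{i}tH}u\rangle\lesssim t^{\gamma}$, $\gamma<1$ (Lemma \ref{lem:evolp2-1}), and a time-dependent modified position observable $x_{t,\rho}$ in the propagation estimates (Theorem \ref{th:propag_massless_2}); without these the large- and minimal-velocity estimates, and hence the inverse wave operators, do not close. Moreover, the resulting estimates only hold on dense sets weighted by $N_{\mathrm{neut}}^{1/2}$ and $\mathrm{d}\Gamma(|p_2|^{-1})^{1/2}$, which is exactly why the construction of $\Gamma^{\pm}(q)$ and $W^{\pm}(J)$ requires extra care. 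Second, your ``decoupling'' mechanism for the Mourre estimate is not what makes it work: the estimate of Theorem \ref{MourreI_3} is obtained from singular Mourre theory with the maximal symmetric conjugate operator $A$, a partition of unity $\check{\Gamma}(j^R)$, the induction scheme of Derezi\'nski--G\'erard, and smallness of $g$ to absorb $[H_I,\mathrm{i}A]$ into $H_0'$; the conservation of $N_{\mathrm{lept}}-N_{\mathrm{neut}}$ enters in the propagation estimates and in surjectivity of the wave operators, not as a decoupling of neutrino modes. Relatedly, the hypothesis $|p_3|^{-1}a_{(i),\cdot}G\in L^2$ has nothing to do with infrared contributions of the Cook integrand (the $W$ boson is massive): it is needed because $|\nabla\omega^{(3)}(p_3)|^2=p_3^2(p_3^2+m_W^2)^{-1}$ vanishes at $p_3=0$, so relative bounds of $[H_I,\mathrm{i}A]$ with respect to $H_0'$ carry a constant proportional to $\||p_3|^{-1}G\|_2$.

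Two further points. Asymptotic completeness is not merely $\mathscr{K}^{\pm}=\mathscr{H}_{\mathrm{pp}}(H)$; it also requires $\Omega^{\pm}$ to be unitary, and in the massless case the classical surjectivity argument fails because $N$ is not $H$-bounded. The paper substitutes the argument of Derezi\'nski--G\'erard for $P(\varphi)_2$ Hamiltonians, bounding $n^{\pm}_{\mathfrak{f}_1,\mathfrak{f}_2}(u)$ by $\langle u,|H|u\rangle+\langle u,N_{\mathrm{neut}}u\rangle+\|u\|^2$ precisely through the conserved difference $N_{\mathrm{neut}}-N_{\mathrm{lept}}$; your proposal does not address this. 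Finally, for the unitary equivalence $H-E\simeq H_0$, a regular Mourre estimate ``on every compact subinterval of $(E,\infty)$'' with compact remainder would not exclude all excited eigenvalues, and it is unclear how it would yield a coupling threshold $g_0$ independent of the neutrino masses, which is an explicit claim of part (i). The paper instead proves a global estimate $[H,\mathrm{i}B]\ge \mathrm{c}\mathds{1}-\mathrm{d}\,\Pi_{\Omega}$ for the non-self-adjoint $B$, using $[H_0,\mathrm{i}B]=N$ exactly and mass-independent $N_\tau$ bounds on $[H_I,\mathrm{i}B]$; the virial theorem in the $\mathrm{C}^1(B;\mathscr{G};\mathscr{G}^*)$ framework (and, in the massless case, singular Mourre theory, since $N$ is not $H$-bounded) then gives $\sigma_{\mathrm{pp}}(H)=\{E\}$ with $E$ simple, after which your last step (restricting $\Omega^{\pm}$ to $\mathscr{K}^{\pm}\otimes\mathscr{H}\simeq\mathscr{H}$) is indeed the paper's conclusion.
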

\begin{Rk}
\begin{itemize}
\item[(i)]
As mentioned above, physically, the masses of the neutrinos are extremely small. We emphasize that the second part of Theorem \ref{thm:main} (i) holds for $|g|$ small enough, uniformly with respect to the masses of the neutrinos. As a consequence, from the observation that the Hamiltonian with massive neutrinos converges to that with massless neutrinos, in the norm resolvent sense, as the masses of the neutrinos go to $0$, one can deduce that, in the massless case, $H-E$ is approximately unitarily equivalent to $H_0$. This means that there exists a sequence of unitary operators $(U_n)$ such that $U_n H_0 U_n^* \to H-E$, as $n\to\infty$, in the norm resolvent sense. The conclusion of Theorem \ref{thm:main} (ii), which concerns also the massless case, is, of course, significantly stronger since it shows that $H-E$ and $H_0$ are unitarily equivalent (not only approximately unitarily equivalent). However, Theorem \ref{thm:main} (ii) only holds for the simplified Hamiltonian $H = H_0 + g ( H_I^{(1)} + H_I^{(2)} )$.
\item[(ii)] Suppose that the kernels $G^{(j)}_{l,\epsilon}$ are of the forms \eqref{eq:kernel1}--\eqref{eq:kernel4}, where, as explained above, the $\delta$-functions are regularized and ultraviolet cut-offs are introduced. Suppose in addition that the functions $f^{(j)}_{l,\epsilon,i}$ belong to $\mathrm{C}^\infty( \mathbb{R}^3 \setminus \{ 0 \} )$ and satisfy
\begin{equation*}
\big | \partial^\alpha_{p_{i,\ell} } f^{(j)}_{l,\epsilon,i} ( \xi_1 , \xi_2 , \xi_3 ) \big| \le \mathrm{C}_\alpha |p_i|^{\nu-\alpha} , \quad \alpha \in \mathbb{N} , \quad p_i = ( p_{i,1} , p_{i,2} , p_{i,3} ),
\end{equation*}
for some real parameter $\nu$. Then, in the massive case (Theorem \ref{thm:main} (i)), the conditions $G \in L^2$ and $a_{(i),\cdot} G \in L^2$ are satisfied for any $\nu > - 3/2$, and $G \in \mathbb{H}^{1+\mu}$ is satisfied provided that $\nu > -1/2 + \mu$. The conditions $ b_{(i),\cdot} G \in L^2$ are satisfied for $\nu > 1/2$, and $b_{(i),\cdot} b_{(i'),\cdot} G \in L^2$ for $\nu > 5/2$. The same holds in the massless case (Theorem \ref{thm:main} (ii)), except that $a_{(2),\cdot} G \in L^2$ is satisfied for $\nu > -1/2$ (recall that $a_{(2),l} G = b_{(2),l} G$ in the massless case), and, if $i=2$ or $i'=2$, $b_{(i),\cdot} b_{(i'),\cdot} G \in L^2$ is satisfied for $\nu > 3/2$. Besides, $|p_3|^{-1} a_{(i),\cdot} G \in L^2$ holds as soon as $\nu > -1/2$.
\item[(iii)] The assumption that $|p_3|^{-1} a_{(i),\cdot} G \in L^2$ in Theorem \ref{thm:main} (ii) can be replaced by $|p_1|^{-1} a_{(i),\cdot} G \in L^2$.
\end{itemize}
\end{Rk}
To prove Theorem \ref{thm:main}, we follow the general approach of \cite{DeGe99_01,DeGe00_01,Am04_01} that we adapt to the present context. In addition to the fact that the model we study involves both bosons and fermions, one of our main achievements, compared to \cite{DeGe99_01,DeGe00_01,Am04_01}, is that the second part of our results in (i) hold with a restriction on the coupling constant which is uniform w.r.t. the masses of the neutrinos, and our results in (ii) are proven for an Hamiltonian involving massless particles. As mentioned above, for Pauli-Fierz Hamiltonians, contributions to scattering theory involving massless particles include \cite{Ge02_01,DeKu13_01,FaSi14_01,FaSi14_02,DeKu15_01,DeGrKu15_01}. In particular, for the massless spin-boson model, asymptotic completeness has been established in \cite{FaSi14_01,DeGrKu15_01}, using the uniform bound on the number of emitted particles proven in \cite{DeKu13_01}. This property, however, has not been proven for more general massless Pauli-Fierz Hamiltonians, yet. In our setting, controlling the number of emitted particles is made possible thanks to the fact that $N_{\mathrm{lept}} - N_{\mathrm{neut}}$ commutes with $H$, with $N_{\mathrm{lept}}$ the number of leptons and antileptons and $N_{\mathrm{neut}}$ the number of neutrinos and antineutrinos.

In our proof of Theorem \ref{thm:main}, as in previous works, one of the main issues consists in finding a good choice of a ``conjugate operator'' $A$, such that the commutator $[H,\mathrm{i}A]$ is positive in the sense of Mourre \cite{Mo80_01}, or in a related weaker sense \cite{GeGeMo04_01,GeGeMo04_02}. From such a positive commutator estimate, one deduces spectral properties such that the absence of singular continuous spectrum for $H$, and suitable propagation estimates allowing one to establish asymptotic completeness.

In the case where all particles are massive ((i) in Theorem \ref{thm:main}), our conjugate operator is the sum of the second quantizations of the operators \eqref{eq:defa12}--\eqref{eq:defa3}. This is a natural generalization of the conjugate operator chosen for instance in \cite{DeGe99_01,DeGe00_01,Am04_01}. Our main achievement here is that we show that the operators $H-E$ and $H_0$ are unitary equivalent for small $|g|$, uniformly in the masses of the neutrinos. To prove this, we use, in an essential way, the fact that neutrinos are fermions, together with a suitable application of the $N_\tau$ estimates of Glimm and Jaffe \cite{GlJa77_01} and an extension of Mourre's theory allowing for non-self-adjoint conjugate operators \cite{GeGeMo04_01}.

The proof of Theorem \ref{thm:main} (ii) constitutes the main novelty compared to previous results in the literature. Here we cannot follow directly the approach of \cite{DeGe99_01,DeGe00_01,Am04_01} because of the presence of massless particles. To obtain a useful Mourre estimate, we combine singular Mourre's Theory  \cite{GeGeMo04_01} together with an induction argument of \cite{DeGe99_01} and smallness of the coupling constant $g$. Using this Mourre estimate and the fact that $N_{\mathrm{lept}} - N_{\mathrm{neut}}$ commutes with $H$, we then establish propagation estimates. Our propagation estimates resemble those proven in \cite{DeGe99_01}, but with a different time-dependent propagation observable. Our choice of the (one-particle) propagation observable is inspired in part by that used in \cite{FaSi14_01,FaSi14_02}: it is a time-dependent modification of the usual ``position operator'', especially fitted to handle singularities due to the presence of massless particles. We do not take the same propagation observable as in \cite{FaSi14_01,FaSi14_02} because we follow a different approach, closer to that of \cite{DeGe99_01}, to prove asymptotic completeness.

As in \cite{BoFaSi12_01,FaSi14_01,FaSi14_02}, an important ingredient to prove the propagation estimates is the control of the observable $\mathrm{d}\Gamma( |p_2|^{-1} )$ along the time evolution (where $p_2$ is the momentum of a neutrino). More precisely, we prove that, for suitable initial states, the expectation of this observable along the evolution grows slower than linearly in time $t$, which is crucial to estimate some remainder terms in the propagation estimates. 

Our paper is organized as follows. In Section \ref{sec:selfadjoint}, we show that the Hamiltonian \eqref{total_hamiltonian_g} defines a self-adjoint operator on a Hilbert space given as a tensor product of antisymmetric and symmetric Fock spaces. The result holds without any restriction on the coupling constant $g$. Section \ref{sec:selfadjoint} also contains the proof of some technical estimates that are used in the subsequent sections.

In Section \ref{sec:spectral}, we recall results giving the existence of a ground state and the location of the essential spectrum of $H$, and we study the essential spectrum by means of suitable versions of Mourre's conjugate operator method.

Section \ref{sec:propag} is devoted to the proof of several propagation estimates.

Finally, in Section \ref{sec:AC}, we prove some properties of the asymptotic fields and wave operators, and we use the results of Sections \ref{sec:spectral} and \ref{sec:propag} to prove Theorem \ref{thm:main}.

Technically, our main contributions, compared to previous works, are the proof of the Mourre estimate in Theorem \ref{MourreI_3}, and the proof of propagation estimates in Theorems \ref{th:propag_massless} and \ref{th:propag_massless_2}.

For the convenience of the reader, the complete expression of the formal interaction Hamiltonian \eqref{interaction_term} is given in Appendix \ref{app:interaction}, and the definitions and properties of some operators in Fock space are recalled in Appendices \ref{sec:standard_def} and \ref{sec:asympt_a}. Technical computations are gathered in Appendix \ref{app:technical}.

Throughout the paper, the notation $a \lesssim b$, for positive numbers $a$ and $b$, stands for $a \le \mathrm{C} b$ where $\mathrm{C}$ is a positive constant independent of the parameters involved. \\

\noindent \textbf{Acknowledgements}. We thank J.-C. Guillot for interesting remarks. J.F. is grateful to J.-M. Barbaroux and J.-C. Guillot for many discussions and fruitful collaborations.

\section{Self-adjointness of the Hamiltonian and technical estimates}\label{sec:selfadjoint}

In this section we show that the Hamiltonian of the model, formally defined in \eqref{total_hamiltonian_g}, identifies with a self-adjoint operator in an appropriate Hilbert space. The definition of the Hilbert space is given in Section \ref{2intro} and self-adjointness of the Hamiltonian is proven in Section \ref{sec:self-adj}. In Section \ref{subsec:estimates}, we prove some technical estimates that will be used in the next sections.

\subsection{Hilbert space}\label{2intro}

The Hilbert space of the model is a tensor product of Fock spaces for fermions and bosons. For fermions, we define $\Sigma_1:=\mathbb{R}^3 \times \{-\frac{1}{2},\frac{1}{2}\}$ and, for bosons,  $\Sigma_2:=\mathbb{R}^3 \times \{-1,0,1\}$. The one-particle Hilbert space for fermions is $\mathfrak{h}_1:=L^2(\Sigma_1)$ and for bosons $\mathfrak{h}_2:=L^2(\Sigma_2)$. The anti-symmetric Fock space for fermions is denoted by $\mathfrak{F}_a:=\oplus^{\infty}_{n=0}\otimes^n_{a}\mathfrak{h}_1$, where $\otimes_a$ stands for the antisymmetric tensor product and where we use the usual convention $\otimes^0_{a}\mathfrak{h}_1=\mathbb{C}$. The symmetric Fock space for bosons is $\mathfrak{F}_s:=\oplus^{\infty}_{n=0}\otimes^n_{s}\mathfrak{h}_2$, where $\otimes_s$ stands for the symmetric tensor product and $\otimes^0_{s}\mathfrak{h}_2=\mathbb{C}$. Every family $l$ of leptons contains either an electron, a muon or a tau, the associated antiparticle, and a neutrino and its antineutrino. Consequently, the Hilbert space for each lepton family is
\begin{equation*}
\mathfrak{F}_l:=\bigotimes^4 \mathfrak{F}_a ,
\end{equation*}
and we denote the full leptonic Hilbert space by 
\begin{equation*}
\mathfrak{F}_L:=\bigotimes^3 \mathfrak{F}_l.
\end{equation*}
Analogously, the bosonic Hilbert space is given by
\begin{equation*}
\mathfrak{F}_W:=\bigotimes^2 \mathfrak{F}_s.
\end{equation*}
The total Hilbert space is
\begin{equation*}
\mathscr{H}:=\mathfrak{F}_W \otimes \mathfrak{F}_L.
\end{equation*}
In other words, $\mathscr{H}$ is the tensor product of $14$ Fock spaces, $2$ symmetric Fock spaces for the bosons $W^\pm$ and $12$ anti-symmetric Fock spaces for the fermions.

The number operators for neutrinos and antineutrinos are defined by
\begin{equation*}
N_{\nu_l} := \sum_{\epsilon = \pm} \int c^*_{l,\epsilon}(\xi_2) c_{l,\epsilon}(\xi_2) d\xi_2 , \quad N_{\mathrm{neut}} := \sum_{l=1}^3 N_{\nu_l},
\end{equation*}
and likewise
\begin{equation*}
N_{l}:=\sum_{\epsilon = \pm} \int b^*_{l,\epsilon}(\xi_1) b_{l,\epsilon}(\xi_1) d\xi_1,  \quad N_{\mathrm{lept}} := \sum_{l=1}^3 N_l, \quad N_{W}:=\sum_{\epsilon = \pm} \int a^*_{\epsilon}(\xi_3) a_{\epsilon}(\xi_3) d\xi_3.
\end{equation*}
The total number operator is
\begin{align*}
N  = N_{\mathrm{lept}} + N_{\mathrm{neut}} + N_W.
\end{align*}

\subsection{Self-adjointness }\label{sec:self-adj}

Using the Kato-Rellich theorem together with $N_\tau$ estimates \cite{GlJa77_01}, it is proven in \cite{BaGu09_01} that the total Hamiltonian $H$ defined in \eqref{total_hamiltonian_g} is a self-adjoint operator in $\mathscr{H}$, with domain $\mathscr{D}\left( H\right)=\mathscr{D}\left( H_0\right) $, provided that the kernels $G_{l,\epsilon}^{(j)}$ are square integrable and $g \ll 1$. In this section, we extend this result to any value of $g$.

Recall that the notation ``$G \in L^2$'' means that, for all $j$, $l$ and $\epsilon$, $G_{l,\epsilon}^{(j)}$ is square integrable. We denote by $\|G\|_2$ the sum over  $j$, $l$ and $\epsilon$ of the $L^2$-norms of $G_{l,\epsilon}^{(j)}$,
\begin{equation*}
\| G \|_2 := \sum_{j,l,\epsilon} \big \| G_{l,\epsilon}^{(j)} \big \|_2.
\end{equation*}
\begin{Th}
\label{Th2M}
Suppose that $G \in L^2$. Then, for all $g \in \mathbb{R}$, the Hamiltonian $H$ in \eqref{total_hamiltonian_g} is self-adjoint with domain $\mathscr{D}\left( H\right)=\mathscr{D}\left( H_0\right) $.
\end{Th}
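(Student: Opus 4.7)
The plan is to show that $g H_I$ is $H_0$-bounded with relative bound $0$ for any fixed $g \in \mathbb{R}$, and then invoke the Kato–Rellich theorem. The reason such a strong relative bound can be obtained (in contrast to the standard $N_\tau$ argument of \cite{BaGu09_01}, which requires $g$ small) is that every monomial appearing in $H_I$ involves exactly one bosonic factor $a_{\epsilon}$ or $a_{\epsilon}^*$ together with two fermionic factors, and fermionic creation/annihilation operators are bounded with operator norm controlled by the $L^2$-norm of their test function.

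First I would fix $j \in \{1,\dots,4\}$, $l$, $\epsilon$ and write the relevant monomial of $H_I^{(j)}$ as a bosonic field integrated against an operator-valued kernel that is bilinear in the fermion operators. Concretely, for the piece $b^*_{l,\epsilon}(\xi_1) c^*_{l,-\epsilon}(\xi_2) a_{\epsilon}(\xi_3)$ I would set, for fixed $\xi_3$,
\begin{equation*}
B^*_{l,\epsilon,\xi_3} := \int G^{(1)}_{l,\epsilon}(\xi_1,\xi_2,\xi_3)\, b^*_{l,\epsilon}(\xi_1) c^*_{l,-\epsilon}(\xi_2)\, d\xi_1 d\xi_2,
\end{equation*}
and use the CAR bounds $\|b^{\sharp}(f)\|,\|c^{\sharp}(f)\| \le \|f\|_{\mathfrak{h}_1}$ together with the (anti)commutation between $b$ and $c$ to obtain $\|B^*_{l,\epsilon,\xi_3}\|_{\mathcal{B}(\mathscr{H})} \le C \|G^{(1)}_{l,\epsilon}(\cdot,\cdot,\xi_3)\|_{L^2(\Sigma_1^2)}$. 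A standard $N_\tau$-type computation (or a direct Cauchy–Schwarz argument using the pull-through formula for $a_\epsilon(\xi_3)$) then yields, for $\psi \in \mathscr{D}(N_W^{1/2})$,
\begin{equation*}
\bigl\| H_I \psi \bigr\| \le C\, \|G\|_2\, \bigl\| (N_W + 1)^{1/2} \psi \bigr\|,
\end{equation*}
with the $+1$ taking care of the $a^*$ terms from the h.c. pieces and the $H_I^{(2)}$ terms.

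Next, I would exploit the assumption $m_W > 0$. Since $\omega^{(3)}(\xi_3) \ge m_W$, one has $m_W N_W \le H_0$ on $\mathscr{D}(H_0)$, hence $(N_W + 1)^{1/2} \le C_W (H_0 + 1)^{1/2}$. Combining with the previous estimate and using the elementary inequality $x^{1/2} \le \epsilon x + (4\epsilon)^{-1}$ for $x \ge 0$ and $\epsilon > 0$, we get, for every $\epsilon > 0$,
\begin{equation*}
\| g H_I \psi \| \le \epsilon \|H_0 \psi\| + C_{g,\epsilon}\, \|\psi\|,
\qquad \psi \in \mathscr{D}(H_0),
\end{equation*}
for a constant $C_{g,\epsilon}$ depending on $g$, $\epsilon$, $\|G\|_2$ and $m_W$. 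Thus $g H_I$ is $H_0$-bounded with relative bound $0$, and the Kato–Rellich theorem yields self-adjointness of $H = H_0 + g H_I$ on $\mathscr{D}(H_0)$ for every $g \in \mathbb{R}$.

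The only step requiring care is the operator-norm bound on the bilinear fermion expression $B^*_{l,\epsilon,\xi_3}$: one must check that the bound $\|B^*_{l,\epsilon,\xi_3}\| \lesssim \|G^{(1)}_{l,\epsilon}(\cdot,\cdot,\xi_3)\|_{L^2}$ genuinely holds \emph{pointwise in $\xi_3$}, so that integration against $a_\epsilon(\xi_3)$ produces only the factor $(N_W+1)^{1/2}$. This can be verified either by viewing $G^{(1)}_{l,\epsilon}(\cdot,\cdot,\xi_3)$ as a Hilbert–Schmidt kernel on the fermion one-particle space and using the standard operator-norm bounds for Wick monomials on the antisymmetric Fock space, or by a direct CAR computation of $B_{l,\epsilon,\xi_3} B^*_{l,\epsilon,\xi_3}$ followed by Cauchy–Schwarz in $\xi_3$. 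Once this uniform-in-$\xi_3$ bound is in hand, the remaining inequalities are elementary.
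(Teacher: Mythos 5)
The gap is exactly at the step you flagged as ``the only step requiring care'', and it is fatal as stated: the bound $\|B^{*}_{l,\epsilon,\xi_3}\|\le C\,\|G^{(1)}_{l,\epsilon}(\cdot,\cdot,\xi_3)\|_{L^2}$ is false. A purely creative fermionic bilinear $B_K^{*}=\int K(\xi_1,\xi_2)\,b^{*}(\xi_1)c^{*}(\xi_2)\,d\xi_1 d\xi_2$ is bounded by the \emph{trace} norm of $K$ (the sum of its singular values), not by its $L^2$ (Hilbert--Schmidt) norm. Indeed, write $K=\sum_n\lambda_n\, f_n\otimes g_n$ in singular value decomposition and test on the BCS-type vector $\psi_N=\prod_{n\le N}2^{-1/2}\big(1+b^{*}(f_n)c^{*}(g_n)\big)\Omega$ (the bilinears $b^{*}(f_n)c^{*}(g_n)$ commute among themselves, whatever the commutation convention between $b$ and $c$, and each squares to zero); a direct computation gives $\|B_K^{*}\psi_N\|^2\ge\tfrac14\big(\sum_{n\le N}\lambda_n\big)^2$ while $\|\psi_N\|=1$, so $\|B_K^{*}\|\ge\tfrac12\sum_n\lambda_n$, which for a merely square-integrable kernel can be infinite. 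The boundedness $\|b^{\sharp}(f)\|,\|c^{\sharp}(f)\|\le\|f\|$ only yields the triangle-inequality bound $\sum_n\lambda_n$, and the example shows this is essentially sharp. Consequently your intermediate estimate $\|H_I\psi\|\le C\|G\|_2\|(N_W+1)^{1/2}\psi\|$ also fails: take $G^{(1)}(\xi_1,\xi_2,\xi_3)=K(\xi_1,\xi_2)h(\xi_3)$ and $\psi=\psi_N\otimes a^{*}(h)\Omega$; then the corresponding term of $H_I\psi$ has norm $\approx\|B_K^{*}\psi_N\|\gtrsim\|K\|_{\mathrm{tr}}$, while $\|(N_W+1)^{1/2}\psi\|=\sqrt2$ and $\|G\|_2=\|K\|_2\|h\|_2$. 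So relative boundedness with a constant proportional to $\|G\|_2$ and only a bosonic weight cannot hold, and the relative-bound-zero conclusion does not follow by this route.

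What is missing is the splitting/approximation argument that the paper uses. First, the Glimm--Jaffe $N_\tau$ estimate (Lemma \ref{lemSCW_1}) gives $\|H_I(G)(N_{\mathrm{lept}}+N_W+1)^{-1}\|\lesssim\|G\|_2$, i.e.\ $G\mapsto H_I(G)$ is $L^2$-continuous as an $(N_{\mathrm{lept}}+N_W)$-bounded operator; this is the estimate whose constant really is $\|G\|_2$, but it costs a full power of the number operator, not $(N_W+1)^{1/2}$. Second, for regular (Schwartz) kernels $G_\delta$ one proves $\|H_I(G_\delta)(N_W+1)^{-1/2}\|\le \mathrm{C}(G_\delta)$ (Lemma \ref{lemSCW}), where $\mathrm{C}(G_\delta)$ is an $\ell^1$-type (trace-like) quantity obtained by expanding in a harmonic-oscillator basis -- precisely the point where the trace-versus-Hilbert--Schmidt issue above is resolved, at the price of a constant that is \emph{not} controlled by $\|G_\delta\|_2$. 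Writing $G=G_\delta+(G-G_\delta)$ with $\|G-G_\delta\|_2$ arbitrarily small, the remainder is $(N_{\mathrm{lept}}+N_W)$-bounded with arbitrarily small bound, hence $H_0$-bounded with arbitrarily small bound since $N_{\mathrm{lept}}+N_W\lesssim H_0$ (all masses entering these number operators are positive), while the regular piece is $H_0$-bounded with relative bound $0$ (large $\mathrm{C}(G_\delta)$ only enters the constant in front of $\|\psi\|$). This makes $gH_I$ relatively $H_0$-bounded with bound $<1$ for every fixed $g$, and Kato--Rellich concludes. Your overall intuition (fermions bounded in operator norm, $(N_W+1)^{1/2}$ for the boson, $m_W>0$) is correct for factorized kernels, but for general $G\in L^2$ it must be combined with this two-step approximation.
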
 
The proof of Theorem \ref{Th2M} will be a consequence of the following two lemmas. The first one is a direct application of the $N_\tau$ estimates of Glimm and Jaffe (see \cite[Proposition 1.2.3(c)]{GlJa77_01}).
\begin{lem}
\label{lemSCW_1}
Suppose that $G \in L^2$. Then
\begin{equation*}
\|H_I (N_{\mathrm{lept}} + N_W + 1)^{-1} \| \lesssim \| G \|_2.
\end{equation*}
\end{lem}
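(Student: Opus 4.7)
The bound is essentially a direct application of the Glimm--Jaffe $N_\tau$ estimates \cite[Proposition~1.2.3(c)]{GlJa77_01}. The plan is first to decompose
\begin{equation*}
H_I = \sum_{j=1}^{4} \sum_{l=1}^{3} \sum_{\epsilon=\pm}\bigl(H_{I,l,\epsilon}^{(j)} + (H_{I,l,\epsilon}^{(j)})^*\bigr)
\end{equation*}
into the finitely many terms displayed in \eqref{interaction_term}. Each term $H_{I,l,\epsilon}^{(j)}$ is a smeared product of exactly three creation/annihilation operators: two fermionic ones ($b^\sharp$ for a lepton and $c^\sharp$ for a neutrino) and one bosonic one ($a^\sharp$ for the $W^\pm$), with $L^2$ kernel $G_{l,\epsilon}^{(j)}$.

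Next, I would apply the $N_\tau$ estimate with the weight function $\tau$ on the one-particle space chosen to equal $1$ on the $W^\pm$ sector and $0$ on the fermionic (lepton and neutrino) sectors, so that $N_\tau = N_W$. The key observation is that, owing to the canonical anticommutation relations, the smeared fermionic operators are bounded on the fermion Fock space by the $L^2$-norm of their kernels and therefore contribute no number-operator factor in the resulting bound; only the bosonic factor $a^\sharp$ produces a $(N_W+1)^{1/2}$ factor. This yields, for each $(j,l,\epsilon)$,
\begin{equation*}
\bigl\| H_{I,l,\epsilon}^{(j)} \psi \bigr\| \lesssim \bigl\| G_{l,\epsilon}^{(j)} \bigr\|_2 \, \bigl\|(N_W+1)^{1/2} \psi \bigr\|,
\end{equation*}
and the same bound holds for the Hermitian conjugate by symmetry.

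Finally, summing over the finite index set $\{(j,l,\epsilon)\}$ and using the elementary operator inequality $(N_W+1)^{1/2} \le N_{\mathrm{lept}} + N_W + 1$, valid on the natural dense domain, gives
\begin{equation*}
\|H_I \psi\| \lesssim \|G\|_2 \, \bigl\|(N_{\mathrm{lept}} + N_W + 1)\psi\bigr\|,
\end{equation*}
which is equivalent to the stated operator-norm bound. The argument contains no substantive obstacle: it is a standard appeal to Glimm--Jaffe's estimates, and the only point meriting attention is the recognition that the two fermionic factors in each term can be controlled without number-operator weights, which is precisely what accounts for the appearance of $N_{\mathrm{lept}} + N_W$ (rather than the full number operator including $N_{\mathrm{neut}}$) on the right-hand side.
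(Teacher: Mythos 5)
Your decomposition into the finitely many monomials of \eqref{interaction_term} and the final elementary inequality are fine, but the central claim --- that the two fermionic factors ``contribute no number-operator factor'', so that each term satisfies $\|H^{(j)}_{I,l,\epsilon}\psi\|\lesssim\|G^{(j)}_{l,\epsilon}\|_2\,\|(N_W+1)^{1/2}\psi\|$ --- is false for a general $L^2$ kernel. The bound $\|b^\sharp(h)\|\le\|h\|_2$ concerns smearing in a \emph{single} fermionic variable; once the kernel genuinely couples the two fermionic variables, the pair operator $\int K(\xi_1,\xi_2)\,b^*(\xi_1)c^*(\xi_2)\,d\xi_1 d\xi_2$ is not bounded by $\|K\|_2$. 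Concretely, take $K=J^{-1/2}\sum_{j=1}^{J}f_j\otimes g_j$ with $\{f_j\},\{g_j\}$ orthonormal, so $\|K\|_2=1$, set $B^*=J^{-1/2}\sum_j b^*(f_j)c^*(g_j)$ and test on the BCS-type vector $\Psi_J=\prod_{j=1}^{J}2^{-1/2}\bigl(1+b^*(f_j)c^*(g_j)\bigr)\Omega$: the pair creators commute and are nilpotent, and a direct computation gives $\|B^*\Psi_J\|^2=\tfrac12+\tfrac{J-1}{4}\gtrsim J$. Taking $G=K\otimes h$ with $\|h\|_2=1$ and applying your claimed estimate to $\Psi_J\otimes a^*_\epsilon(h)\Omega$ (for an $H_I^{(1)}$-type term) or to $\Psi_J\otimes\Omega$ (for its adjoint) would force $\sqrt{J}\lesssim 1$, a contradiction. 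This is also visible inside the paper: the bound $\|H_I(N_W+1)^{-1/2}\|\le \mathrm{C}(G)$ is only obtained in Lemma \ref{lemSCW} under a Schwartz-class assumption and with a constant involving weighted norms of $G$, not $\|G\|_2$; if your stronger estimate were true, Lemma \ref{lemSCW} and the approximation argument in Theorem \ref{Th2M} would be pointless.

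What the fermionic structure actually buys is that the pair $b^\sharp c^\sharp$ with a joint $L^2$ kernel costs a single factor $(N+1)^{1/2}$ rather than two, and this is exactly how the paper proceeds: by the Glimm--Jaffe estimate (\cite[Prop.~1.2.3(b)]{GlJa77_01}) one first shows $\bigl\|N_{\mathrm{lept}}^{-1/2}\,H^{(1)}_{I,1,+}\,N_W^{-1/2}\bigr\|\lesssim\|G^{(1)}_{1,+}\|_2$, then commutes the fermionic weight through using that the monomial creates exactly one lepton, $N_{\mathrm{lept}}^{-1/2}H^{(1)}_{I,1,+}=H^{(1)}_{I,1,+}(N_{\mathrm{lept}}+1)^{-1/2}$, and finally uses $(N_{\mathrm{lept}}+1)^{1/2}(N_W+1)^{1/2}\lesssim N_{\mathrm{lept}}+N_W+1$. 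To repair your argument you must retain one fermionic half-power in the intermediate estimate --- and it should be attached to the lepton variable (the Glimm--Jaffe bound lets you choose which of the two fermionic legs carries it), since a neutrino weight could not be absorbed into $N_{\mathrm{lept}}+N_W+1$ --- and then perform this commutation step term by term, keeping track of whether each monomial creates or annihilates a lepton and a $W$ boson.
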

\begin{proof}
Consider for instance the term
\begin{equation}
H_{I,1,+}^{(1)} := \int G^{(1)}_{1,+}(\xi_1, \xi_2, \xi_3) b^*_{1,+}(\xi_1)c^*_{1,-}(\xi_2)a_{+}(\xi_3) d \xi_1 d \xi_2 d \xi_3 , \label{H_Il1+-}
\end{equation}
occurring in $H_I$. Proceeding as in Proposition 1.2.3(b) of \cite{GlJa77_01}, one easily verifies that
\begin{equation*}
\Big\|N_{\mathrm{lept}}^{-\frac12} H_{I,1,+}^{(1)} N_W^{-\frac12} \Big \| \lesssim \| G^{(1)}_{1,+} \|_2. 
\end{equation*}
Using that $N_{\mathrm{lept}}^{-\frac12} H_{I,1,+}^{(1)} = H_{I,1,+}^{(1)} (N_{\mathrm{lept}}+1)^{-\frac12}$, we obtain that 
\begin{equation*}
\| H_{I,1,+}^{(1)} (N_{\mathrm{lept}} + N_W + 1)^{-1} \| \lesssim \| G \|_2.
\end{equation*}
The other terms occurring in $H_I$ can be treated in an analogous way.
\end{proof}
The second lemma is a slight generalization of \cite[Proposition 1.2.3(c)]{GlJa77_01}.
\begin{lem}
\label{lemSCW}
Suppose that for all $j$, $l$, $\epsilon$, $s_1$, $s_2$ and $\lambda$, $G_{l,\epsilon}^{(j)}( s_1 , \cdot , s_2 , \cdot , \lambda , \cdot )$ belongs to the Schwartz space $\mathscr{S}( \mathbb{R}^9 )$. Then
\begin{equation*}
\|H_I (N_W+1)^{-\frac{1}{2}}\| \le \mathrm{C}(G ) ,
\end{equation*}
where $\mathrm{C}(G)$ is a positive constant depending on $G_{l,\epsilon}^{(j)}$.
\end{lem}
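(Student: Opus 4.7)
The plan is to exploit the crucial asymmetry between fermionic and bosonic operators. Thanks to the canonical anticommutation relations, one has the uniform bound $\|b^\sharp_{l,\epsilon}(f)\|=\|c^\sharp_{l,\epsilon}(f)\|=\|f\|_{L^2(\Sigma_1)}$, whereas bosonic operators only satisfy the number estimate $\|a^\sharp_{\epsilon}(h)(N_W+1)^{-\frac12}\|\le\|h\|_{L^2(\Sigma_2)}$. Since every summand of \eqref{interaction_term} contains exactly two fermionic creation or annihilation operators and exactly one bosonic one, a single factor $(N_W+1)^{-\frac12}$ should be precisely enough to absorb the bosonic singularity, without any further control of the lepton number operator. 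This is the heuristic driving the estimate.

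Concretely, as in Lemma \ref{lemSCW_1}, it suffices to treat a single representative contribution, say $H^{(1)}_{I,1,+}$ from \eqref{H_Il1+-}; the fifteen remaining terms (including the three other channels $H^{(2)}$, $H^{(3)}$, $H^{(4)}$ and the hermitian conjugates) will be handled identically, after pushing the bosonic operator past the fermionic factors using the fact that the $a^\sharp_\epsilon$ commute with all $b^\sharp_{l',\epsilon'}$ and $c^\sharp_{l',\epsilon'}$. The main computational step is to fix orthonormal bases $(\phi_k)$, $(\psi_m)$, $(\chi_n)$ of $\mathfrak{h}_1$, $\mathfrak{h}_1$, $\mathfrak{h}_2$ consisting of Hermite-type functions (summed over the finitely many spin indices) and expand
\begin{equation*}
G^{(1)}_{1,+}(\xi_1,\xi_2,\xi_3)=\sum_{k,m,n}c_{k,m,n}\,\phi_k(\xi_1)\,\psi_m(\xi_2)\,\chi_n(\xi_3),
\end{equation*}
where the Schwartz hypothesis on $G^{(1)}_{1,+}(s_1,\cdot,s_2,\cdot,\lambda,\cdot)$ forces $|c_{k,m,n}|$ to decay faster than any polynomial in $(k,m,n)$. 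Substituting and commuting $a_+$ to the right yields
\begin{equation*}
H^{(1)}_{I,1,+}=\sum_{k,m,n}c_{k,m,n}\,b^*_{1,+}(\phi_k)\,c^*_{1,-}(\psi_m)\,a_+(\overline{\chi_n}),
\end{equation*}
after which the triangle inequality together with the fermionic and bosonic norm bounds will give
\begin{equation*}
\bigl\|H^{(1)}_{I,1,+}(N_W+1)^{-\frac12}\bigr\|\le\sum_{k,m,n}|c_{k,m,n}|.
\end{equation*}

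The only genuinely non-trivial point is the absolute summability of the triple series $\sum_{k,m,n}|c_{k,m,n}|$; this is where the Schwartz hypothesis, as opposed to mere $L^2$ integrability, enters in an essential way, and is settled by invoking the rapid decay of Hermite coefficients of Schwartz functions in several variables. Everything else is an essentially formal application of CCR/CAR, combined with the standard fact that $\|a^*_\epsilon(h)(N_W+1)^{-\frac12}\|\le\|h\|_{L^2(\Sigma_2)}$ (which handles both $a_\epsilon$ and $a^*_\epsilon$, and hence all hermitian conjugate terms). Summing the analogous bounds over the finitely many $(j,l,\epsilon)$ components will finally produce the constant $\mathrm{C}(G)$ announced in the statement.
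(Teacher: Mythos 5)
Your proposal is correct and follows essentially the same route as the paper: expand each kernel in a Hermite-type orthonormal basis of $L^2(\mathbb{R}^9)$, use the uniform bound on fermionic creation/annihilation operators plus $\|a^\sharp_\epsilon(h)(N_W+1)^{-\frac12}\|\le\|h\|$, and reduce everything to the absolute summability of the coefficient series, which the Schwartz hypothesis guarantees. The paper makes the last step explicit by writing each coefficient as $\prod_\ell(2i_\ell+1)^{-1}$ times a matrix element of $(\otimes_\ell h_{\mathrm{ho}})G$ and applying Cauchy--Schwarz, which is just a concrete implementation of the rapid-decay fact you invoke.
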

\begin{proof}
Let $\{e_i\}$ be an orthonormal basis of $L^2(\mathbb{R})$ composed of eigenvectors, corresponding to the eigenvalues $\lambda_{i}=(2i+1)$, of the one-dimensional harmonic oscillator $h_{\mathrm{ho}} :=  - \frac{d^2}{dx^2} + x^2$. We consider the orthonormal basis $\{ e_{i_1} \otimes \cdots \otimes e_{i_9} \}$ in $L^2( \mathbb{R}^9 )$. Below we use the notation $\mathrm{i}_1 = ( i_1 , i_2 , i_3)$, $\mathrm{i}_2 = ( i_3 , i_4 , i_5)$, $\mathrm{i}_3 = ( i_6 , i_7 , i_9)$ and a sum over $(\mathrm{i}_1,\mathrm{i}_2,\mathrm{i}_3)$ corresponds to a sum over $( i_1 , \dots , i_9 ) \in \mathbb{N}^9$. Moreover $e_{\mathrm{i}_1} := e_{i_1}\otimes e_{i_2} \otimes e_{i_3}$ and likewise for $e_{\mathrm{i}_2}$ and $e_{\mathrm{i}_3}$.

As in the proof of the previous lemma, we consider for instance the term $H_{I,1,+}^{(1)}$ occurring in $H_I$, see \eqref{H_Il1+-}. Let
\begin{equation*}
H_{I,1,+}^{(1)}(s_1,s_2,\lambda) := \int G^{(1)}_{1,+}(s_1, p_1 , s_2 , p_2, \lambda , p_3) b^*_{1,+}( p_1 , s_1 )c^*_{1,-}( p_2 , s_2 )a_{+}( p_3 , \lambda ) d p_1 d p_2 d p_3 , 
\end{equation*}
so that
\begin{equation*}
H_{I,1,+}^{(1)} = \sum_{s_1,s_2,\lambda} H_{I,1,+}^{(1)}(s_1,s_2,\lambda).
\end{equation*}
To prove the lemma, it suffices to verify that, for any fixed $s_1$, $s_2$ and $\lambda$,
\begin{equation*}
\| H_{I,1,+}^{(1)}(s_1,s_2,\lambda) (N_W+1)^{-\frac{1}{2}}\| \le \mathrm{C}(G ) .
\end{equation*}

Decomposing $G^{(1)}_{1,+}(\cdot,s_1,\cdot,s_2,\cdot,s_3)$ into the orthonormal basis $\{ e_{i_1} \otimes \cdots \otimes e_{i_9} \}$, we see that
\begin{equation*}
H_{I,1,+}^{(1)}(s_1,s_2,\lambda) = \sum_{\mathrm{i}_1,\mathrm{i}_2,\mathrm{i}_3} \alpha_{\mathrm{i}_1,\mathrm{i}_2,\mathrm{i}_3} b^*_{1,+,s_1}(e_{\mathrm{i}_1}) \otimes c^*_{1,-,s_2}(e_{\mathrm{i}_2}) \otimes a_{+,\lambda}(e_{\mathrm{i}_3}) ,
\end{equation*}
where we have set $\alpha_{\mathrm{i}_1,\mathrm{i}_2,\mathrm{i}_3} := \langle e_{\mathrm{i}_1} \otimes e_{\mathrm{i}_2} \otimes e_{\mathrm{i}_3} , G^{(1)}_{1,+}(\cdot,s_1,\cdot,s_2,\cdot,s_3) \rangle$, $b^*_{1,+,s_1}(h)= \int_{\mathbb{R}^3} h(p_1) b^*_{1,+}(s_1,p_1) dp_1$, for $h \in L^2(\mathbb{R}^3)$, and likewise for $c^*_{1,-,s_2}(h)$ and $a_{+,\lambda}(h)$. This yields
\begin{align*}
& \left\| H_{I,1,+}^{(1)}(s_1,s_2,\lambda) (N_W+1)^{-\frac{1}{2}}  \right\| \\
& = \sum_{\mathrm{i}_1,\mathrm{i}_2,\mathrm{i}_3} \left\| \alpha_{\mathrm{i}_1,\mathrm{i}_2,\mathrm{i}_3} b^*_{1,+,s_1}(e_{\mathrm{i}_1}) \otimes c^*_{1,-,s_2}(e_{\mathrm{i}_2}) \otimes a_{+,\lambda}(e_{\mathrm{i}_3}) (N_W+1)^{-\frac{1}{2}}  \right\|  \leq  \sum_{\mathrm{i}_1,\mathrm{i}_2,\mathrm{i}_3} | \alpha_{\mathrm{i}_1,\mathrm{i}_2,\mathrm{i}_3} | .
\end{align*}
Observe that
\begin{align*}
\alpha_{\mathrm{i}_1,\mathrm{i}_2,\mathrm{i}_3} = \Big ( \prod_{\ell=1}^9 \frac{1}{(2i_\ell+1)} \Big ) \big \langle e_{\mathrm{i}_1} \otimes e_{\mathrm{i}_2} \otimes e_{\mathrm{i}_3} , ( \otimes_{\ell=1}^9 h_{\mathrm{ho}} ) G^{(1)}_{1,+}(\cdot,s_1,\cdot,s_2,\cdot,s_3) \big \rangle.
\end{align*}
The Cauchy-Schwarz inequality then gives
\begin{align*}
\sum_{\mathrm{i}_1,\mathrm{i}_2,\mathrm{i}_3} | \alpha_{\mathrm{i}_1,\mathrm{i}_2,\mathrm{i}_3} | \le \Big ( \sum_{\mathrm{i}_1,\mathrm{i}_2,\mathrm{i}_3} \Big ( \prod_{\ell=1}^9 \frac{1}{(2i_\ell+1)^2} \Big ) \Big )^{\frac12} \big\|( \otimes_{\ell=1}^9 h_{\mathrm{ho}} ) G^{(1)}_{1,+}(\cdot,s_1,\cdot,s_2,\cdot,s_3) \big \|_{L^2(\mathbb{R}^9)} ,
\end{align*}
which concludes the proof.
\end{proof}
Now we are ready to prove Theorem \ref{Th2M}.
\begin{proof}[Proof of Theorem \ref{Th2M}.]
In this proof, we underline the dependence of the interaction Hamiltonian on the kernels $G_{l,\epsilon}^{(j)}$ by writing $H_I = H_I( G )$. According to Lemma \ref{lemSCW_1}, for $\Psi \in \mathscr{D} (N_{\mathrm{lept}} + N_W)$, there exists $a>0$ such that, for all $G \in L^2$,
\begin{equation*}
 \|H_I(G) \Psi\| \leq a \| G \|_2 \left( \| (N_{\mathrm{lept}} + N_W ) \Psi \| +  \|\Psi\| \right).
\end{equation*}
Let $\delta > 0$. There exist $G_\delta \in L^2$ such that for all $j$, $l$, $\epsilon$, $s_1$, $s_2$ and $\lambda$, $G_{\delta,l,\epsilon}^{(j)}( s_1 , \cdot , s_2 , \cdot , \lambda , \cdot )$ belongs to the Schwartz space $\mathscr{S}( \mathbb{R}^9 )$ and
\begin{equation*}
\| G - G_\delta \|_2 \leq \frac{\delta}{a}.
\end{equation*}
Hence
\begin{equation}
 \| ( H_I(G) - H_I(G_\delta) ) \Psi\| \leq \varepsilon  \left( \| (N_{\mathrm{lept}} + N_W ) \Psi \| +  \|\Psi\| \right). \label{a1}
\end{equation}
Moreover, by Lemma \ref{lemSCW},
\begin{equation}
\| H_I( G_\delta ) \Psi\| \leq \mathrm{C}( G_\delta ) \| (N_W+1)^{\frac{1}{2}} \Psi\| , \label{a2}
\end{equation}
with $\mathrm{C}( G_\delta )>0$. For $\mu > 0$, we have that
\begin{align*}
\| (N_W+1)^{\frac{1}{2}} \Psi\|^2 \le \mu \| N_W \Psi \|^2 + ( ( 4 \mu )^{-1} + 1 ) \| \Psi \|^2 \le \big ( \mu^{\frac12} \| N_W \Psi \| + ( ( 4 \mu )^{-1} + 1 )^{\frac12} \| \Psi \| \big )^2.
\end{align*}
Inserting this into \eqref{a2} and choosing $\mu^{\frac12} = \varepsilon \mathrm{C}( G_\delta )^{-1}$, this implies that
\begin{equation}
\| H_I( G_\delta ) \Psi\| \leq \delta \|N_W \Psi\| + \mathrm{c}_\delta \|\Psi\| , \label{a3}
\end{equation}
for some positive constant $\mathrm{c}_\delta$.

Equations \eqref{a1} and \eqref{a3} show that $H_I(G)$ is relatively $(N_{\mathrm{lept}} + N_W)$-bounded with relative bound $0$. Since the masses $m_l$ and $m_W$ are positive, it is not difficult to deduce that $N_{\mathrm{lept}} + N_W$ is relatively $H_0$-bounded. Therefore $H_I(G)$ is relatively $H_0$-bounded with relative bound $0$. Applying the Kato-Rellich theorem concludes the proof.
\end{proof}
\subsection{Technical estimates}\label{subsec:estimates}

This section is devoted to some technical lemmas and properties which will be used later. Proofs and notations in this section are close to those of \cite{DeGe99_01}.

\subsubsection{Number-energy estimates}

We begin with the following result in the case where all particles are supposed to be massive.
\begin{lem}\label{lemV1Numbestimates}
Suppose that the masses of the neutrinos $m_{\nu_e}$, $m_{\nu_{\mu}}$, $m_{\nu_{\tau}}$ are positive and consider the Hamiltonian \eqref{total_hamiltonian_g} with $H_I$ given by \eqref{interaction_term}. Assume that $G \in L^2$.
\begin{itemize}
\item[(i)] For all $m \in \mathbb{Z}$, uniformly for z in a compact set of $\left\{ z \in \mathbb{C}, \pm  |\Im{z}| >0 \right\}$, the operator $(N+1)^{-m}(H-z)^{-1}(N+1)^{m+1}$ extends to a bounded operator satisfying
\begin{equation*}
\big\|(N+1)^{-m}(H-z)^{-1}(N+1)^{m+1}\big\| = \mathcal{O}(|\Im(z)|^{ - \alpha_{m}}) ,
\end{equation*}
where $\alpha_{m}$ denotes an integer depending on $m$.
\item[(ii)] Let $\chi \in \mathrm{C}^{\infty}_0(\mathbb{R})$. Then, for all $m,p \in \mathbb{N}$,  $N^m \chi(H) N^p$ extends to a bounded operator.
\end{itemize}
\end{lem}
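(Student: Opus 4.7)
My strategy is to prove (i) by induction on $|m|$, exploiting the mass gap available in the massive case. The base case $m = 0$ follows from the observation that $H_0 \ge m_{\min} N$, where $m_{\min} := \min\{m_l, m_{\nu_l}, m_W\} > 0$: this is an operator inequality since each dispersion relation in \eqref{disprela} is bounded below by $m_{\min}$ and commutes with the corresponding number operator. Combined with the zero-relative-bound of $H_I$ with respect to $H_0$ established in the proof of Theorem \ref{Th2M}, this yields $\|(N+1)\psi\| \lesssim \|H\psi\| + \|\psi\|$, so that $(N+1)(H-z)^{-1}$ and, by duality, $(H-z)^{-1}(N+1)$ extend to bounded operators with norm $\mathcal{O}(1 + |z|/|\Im z|)$ on compact subsets of $\{z : \Im z \ne 0\}$.

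For the inductive step, set $N_1 := N + 1$ and $R(z) := (H-z)^{-1}$. Since $N$ commutes with $H_0$, we have
\[
R(z) N_1^{m+1} = N_1^{m+1} R(z) + R(z) [N_1^{m+1}, H_I] R(z),
\]
and $[N_1^{m+1}, H_I] = \sum_{k=0}^{m} N_1^{k} [N, H_I] N_1^{m-k}$. A direct computation using the canonical (anti)commutation relations shows that $[N, H_I]$ has the same creation/annihilation structure as $H_I$, with each monomial $T$ in $H_I$ picking up a factor in $\{\pm 1, \pm 3\}$ equal to its net particle shift; consequently Lemma \ref{lemSCW_1} applies to $[N, H_I]$ and gives $\|[N, H_I] (N+1)^{-1}\| \lesssim \|G\|_2$. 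Multiplying the identity on the left by $N_1^{-m}$ and rearranging via the pull-through property $f(N_1) T = T f(N_1 + s)$ (where $s \in \{\pm 1, \pm 3\}$ is the shift of the monomial $T$), each summand reduces to a product of bounded functions of $N_1$, the bounded operator $[N, H_I] (N+1)^{-1}$, and a lower-index factor of the form $N_1^{-m'} R(z) N_1^{m'+1}$ with $m' < m$ controlled by the inductive hypothesis. This yields the claimed bound $\mathcal{O}(|\Im z|^{-\alpha_m})$, and the cases $m \le -1$ follow by taking adjoints.

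For part (ii), I invoke the Helffer--Sj\"ostrand functional calculus with an almost analytic extension $\tilde\chi$ of $\chi$ satisfying $|\partial_{\bar z} \tilde\chi(z)| \le C_K |\Im z|^K$ for arbitrarily large $K \ge 1$, so that
\[
N^m \chi(H) N^p = \frac{1}{\pi} \int \partial_{\bar z} \tilde\chi(z) \, N^m R(z) N^p \, \mathrm{d}L(z).
\]
Decomposing $N^m R(z) N^p$ into products of factors of the form $N_1^{-m'} R(z) N_1^{m'+1}$ interleaved with bounded functions of $N_1$ (the latter produced by resolvent identities and the pull-through property), part (i) bounds the integrand by a polynomial in $|\Im z|^{-1}$; taking $K$ large enough makes the integral norm-convergent.

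The main obstacle lies in the bookkeeping of the induction step in (i): one must shift the appropriate powers of $N_1$ through the interaction monomials and carefully match each resulting factor with an instance of the inductive hypothesis, tracking the accumulated $|\Im z|^{-1}$ growth at every stage. The crucial enabling fact is that every iterated commutator $[N, [N, \ldots, [N, H_I] \ldots ]]$ is again of the form \eqref{interaction_term} with kernel-level integer coefficients, so the $N_\tau$-type bound of Lemma \ref{lemSCW_1} remains applicable throughout the recursion.
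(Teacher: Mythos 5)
Your part (i) is essentially the paper's argument in different packaging: both proofs rest on the two facts that, in the massive case, $N$ is relatively $H$-bounded, and that $[N,H_I]$ (and its iterated commutators) is again a Wick-type interaction with the same kernels, so that the $N_\tau$ bound of Lemma \ref{lemSCW_1} applies. The paper expands $(H-z)^{-1}N^m$ through iterated commutators $\mathrm{ad}_N^j(H)$, while you run an induction on $m$ using the pull-through property and treat $m\le -1$ by adjoints; this is a cosmetic difference and your induction closes (each summand factors as a bounded function of $N+1$, times $[N,H_I](N+1)^{-1}$, times the index-$(m-1)$ factor and the base factor $(N+1)(H-z)^{-1}$).

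Part (ii), however, has a genuine gap. The Helffer--Sj\"ostrand integrand $N^m(H-z)^{-1}N^p$ is not a bounded operator once $m+p\ge 2$, and no decomposition of the kind you describe can exist: each factor $(N+1)^{-m'}(H-z)^{-1}(N+1)^{m'+1}$ contains one resolvent and has net $N$-exponent $+1$, while bounded functions of $N+1$ have net exponent $0$; since $N^m(H-z)^{-1}N^p$ contains exactly one resolvent and has net exponent $m+p$, a product of one such factor with bounded functions of $N+1$ cannot reproduce it unless $m+p\le 1$. The obstruction is not a bookkeeping issue: a single resolvent absorbs at most one power of $N$ (already for the free Hamiltonian, $n$-particle states of total energy $\mathcal{O}(n)$ show that $N(H_0-z)^{-1}N$ is unbounded), and this is precisely what (i) encodes, the right exponent always exceeding the left one by one. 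Hence the bound on the integrand by a polynomial in $|\Im z|^{-1}$ that your argument requires is unavailable, and taking the order of vanishing of $\partial_{\bar z}\tilde\chi$ large does not help. The paper avoids Helffer--Sj\"ostrand here altogether and instead uses the telescoping identity
\begin{align*}
N^m \chi(H) N^p = \Big ( \prod_{k=0}^{m-1} N^{m-k} (H+\mathrm{i})^{-1} N^{k-m+1} \Big ) (H+\mathrm{i})^m \chi(H)(H+\mathrm{i})^p \Big ( \prod_{k=0}^{p-1} N^{k-p+1} (H+\mathrm{i})^{-1} N^{p-k} \Big ) ,
\end{align*}
in which the functional calculus supplies $m+p$ resolvents, each absorbing exactly one power of $N$ by part (i). If you wish to keep a Helffer--Sj\"ostrand flavour, you must first peel off the powers, e.g. write $\chi(H)=(H+\mathrm{i})^{-m}\chi_1(H)(H+\mathrm{i})^{-p}$ with $\chi_1(\lambda)=(\lambda+\mathrm{i})^{m+p}\chi(\lambda)$, or generate sufficiently many auxiliary resolvents on both sides of $(H-z)^{-1}$ by iterated resolvent identities; either route reduces to the telescoping above.
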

\begin{proof}
As in the proof of Theorem \ref{Th2M}, we underline the dependence of the interaction Hamiltonian on the kernels $G_{l,\epsilon,\epsilon'}^{(j)}$ by writing $H_I = H_I( G )$. First, observe that
\begin{equation*}
[H,N] = [H_I(G),N].
\end{equation*}
A direct computation gives
\begin{align}
& [H_I^{(1)}(G) , N_{\mathrm{lept}} ] = [H_I^{(1)}(G) , N_{\mathrm{neut}} ] = - [H_I^{(1)}(G) , N_{W} ] = \mathrm{i} H_I^{(1)}( \mathrm{i} G ), \label{commut_1} \\
& [H_I^{(2)}(G) , N_{\mathrm{lept}} ] = [H_I^{(2)}(G) , N_{\mathrm{neut}} ] = [H_I^{(2)}(G) , N_{W} ] = \mathrm{i} H_I^{(2)}( \mathrm{i} G ), \label{commut_2} \\
& [H_I^{(3)}(G) , N_{\mathrm{lept}} ] = - [H_I^{(3)}(G) , N_{\mathrm{neut}} ] = [H_I^{(3)}(G) , N_{W} ] = \mathrm{i} H_I^{(3)}( \mathrm{i} G ), \label{commut_3} \\
& [H_I^{(4)}(G) , N_{\mathrm{lept}} ] = - [H_I^{(4)}(G) , N_{\mathrm{neut}} ] = - [H_I^{(4)}(G) , N_{W} ] = \mathrm{i} H_I^{(4)}( \mathrm{i} G ). \label{commut_4}
\end{align}
In particular, since $G \in L^2$, Lemma \ref{lemSCW_1} together with the fact that $N_{\mathrm{lept}} + N_W$ is relatively $H$-bounded show that $\| [H,N](H+z)^{-1} \| = \mathcal{O}( |\Im{z}|^{-1} )$. Likewise,
\begin{align}
\| \mathrm{ad}_N^j(H) (H-z)^{-1} \| = \mathcal{O}( |\Im{z}|^{-1} ), \quad j \in \mathbb{N}. \label{est_commutator}
\end{align}

Next, for $m \in \mathbb{N}$, commuting $N^m$ through $(H-z)^{-1}$, we obtain that
\begin{equation*}
(H-z)^{-1}N^m = N^m (H-z)^{-1}+ \sum_{k=1}^m N^{m-l}(H-z)^{-1}B_{m,k}(z) ,
\end{equation*}
where, by \eqref{est_commutator}, the operator $B_{m,k}(z)$ satisfies $\|B_{m,k}(z)\| = \mathcal{O}( |\Im{z}|^{-c_{m,k}} )$, with $c_{m,k}$ a positive integer. Therefore
\begin{align*}
(N+1)^{-m} (H-z)^{-1} (N+1)^{m+1} &=  (N+1) (H-z)^{-1} + (N+1)^{-m} \sum_{i=1}^m N^{m-l} (H-z)^{-1} B_{m,k}(z) \\
&= \mathcal{O}( |\Im z|^{-\alpha_m}),
\end{align*}
where in the second equality we used that $N$ is $H$-bounded. This proves (i).

To prove (ii), let $\chi \in \mathrm{C}^{\infty}_0(\mathbb{R})$ and $m,p \in \mathbb{N}$. Then 
\begin{align*}
N^m \chi(H) N^p = \Big ( \prod_{k=0}^{m-1} N^{m-k} (H+\mathrm{i})^{-1} N^{k-m+1} \Big ) (H+\mathrm{i})^m \chi(H)(H+\mathrm{i})^p \Big ( \prod_{k=0}^{p-1} N^{k-p+1} (H+\mathrm{i})^{-1} N^{p-k} \Big ) ,
\end{align*}
where we used the convention $\prod_{k=0}^{m-1} A_k := A_0 A_1 \dots A_{m-1}$ for any operators $A_0,\dots,A_{m-1}$. Hence (i) implies (ii).
\end{proof}
In the case where the masses of the neutrinos vanish, the statement of the previous lemma can be modified as follows.
\begin{lem}\label{lemV1NumbestimatesV2}
Suppose that the masses of the neutrinos $m_{\nu_e}$, $m_{\nu_{\mu}}$, $m_{\nu_{\tau}}$ vanish and consider the Hamiltonian \eqref{total_hamiltonian_g} with $H_I$ given by \eqref{interaction_term}. 
Assume that $G \in L^2$.
\begin{itemize}
\item[(i)] For all $m \in \mathbb{Z}$, uniformly for z in a compact set of $\left\{ z \in \mathbb{C}, \pm  |\Im{z}| >0 \right\}$, the operator $(N+1)^{-m}(H-z)^{-1}(N+1)^{m}$ extends to a bounded operator satisfying
\begin{equation*}
\big\|(N+1)^{-m}(H-z)^{-1}(N+1)^{m}\big\| = \mathcal{O}(|\Im(z)|^{ - \alpha_{m}}) ,
\end{equation*}
where $\alpha_{m}$ denotes an integer depending on $m$. Moreover, the operator $(N_{\mathrm{lept}} + N_W+1)^{-m}(H-z)^{-1}(N_{\mathrm{lept}} + N_W+1)^{m+1}$ extends to a bounded operator satisfying
\begin{equation*}
\big\|(N_{\mathrm{lept}} + N_W +1)^{-m}(H-z)^{-1}( N_{\mathrm{lept}} + N_W + 1)^{m+1}\big\| = \mathcal{O}(|\Im(z)|^{ - \beta_{m}}) ,
\end{equation*}
where $\beta_{m}$ denotes an integer depending on $m$.
\item[(ii)] Let $\chi \in \mathrm{C}^{\infty}_0(\mathbb{R})$. Then, for all $m,p \in \mathbb{N}$,  $(N_{\mathrm{lept}} + N_W)^m \chi(H) (N_{\mathrm{lept}} + N_W)^p$ extends to a bounded operator.
\end{itemize}
\end{lem}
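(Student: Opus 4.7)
The plan is to follow the same strategy as in the proof of Lemma \ref{lemV1Numbestimates}, keeping track of the crucial difference that, in the massless case, the operator $N$ is no longer relatively $H$-bounded (because the number of neutrinos is not controlled by the energy), whereas $N_{\mathrm{lept}}+N_W$ still is (since the leptons and the bosons are massive). This explains the asymmetry between the two bounds in (i): for $N$ alone the powers on the right and left must coincide, while for $N_{\mathrm{lept}}+N_W$ one may still gain one extra power.

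For part (i), first statement, I would start from the commutator identities \eqref{commut_1}--\eqref{commut_4}, which give $[H,N]=[H_I,N]$ as a sum of interaction-type terms $H_I^{(j)}(\mathrm{i}G)$. By Lemma \ref{lemSCW_1}, each such term is $(N_{\mathrm{lept}}+N_W+1)$-bounded, and since the leptonic and bosonic masses are positive, $N_{\mathrm{lept}}+N_W$ is relatively $H_0$-bounded and hence, by Theorem \ref{Th2M}, relatively $H$-bounded. Thus $\|[H,N](H-z)^{-1}\|=\mathcal{O}(|\Im z|^{-1})$, and iterating yields $\|\mathrm{ad}_N^j(H)(H-z)^{-1}\|=\mathcal{O}(|\Im z|^{-1})$ for every $j\in\mathbb{N}$. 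Expanding
\begin{equation*}
(H-z)^{-1}N^m \;=\; N^m(H-z)^{-1}\;+\;\sum_{k=1}^m N^{m-k}(H-z)^{-1}B_{m,k}(z),
\end{equation*}
where each $B_{m,k}(z)$ is built from iterated commutators and satisfies $\|B_{m,k}(z)\|=\mathcal{O}(|\Im z|^{-c_{m,k}})$, and multiplying on the left by $(N+1)^{-m}$, one obtains the desired bound on $(N+1)^{-m}(H-z)^{-1}(N+1)^m$. Crucially, no extra factor of $N$ survives, since we cannot use relative $H$-boundedness of $N$ itself; this is exactly why the exponent on the right is $m$ and not $m+1$. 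The negative-$m$ case is then handled by taking adjoints.

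For the second statement of (i), I repeat the argument with $N_{\mathrm{lept}}+N_W$ in place of $N$. The commutator $[H,N_{\mathrm{lept}}+N_W]$ is again a finite sum of $H_I^{(j)}(\mathrm{i}G)$-type terms (obtained by adding the relevant lines of \eqref{commut_1}--\eqref{commut_4}), hence $(N_{\mathrm{lept}}+N_W+1)$-bounded by Lemma \ref{lemSCW_1}, and iterated commutators are controlled likewise. The commutator expansion now reads
\begin{equation*}
(H-z)^{-1}(N_{\mathrm{lept}}+N_W)^m \;=\; (N_{\mathrm{lept}}+N_W)^m (H-z)^{-1}\;+\;\sum_{k=1}^m (N_{\mathrm{lept}}+N_W)^{m-k}(H-z)^{-1}\tilde B_{m,k}(z),
\end{equation*}
and since $(N_{\mathrm{lept}}+N_W)(H-z)^{-1}$ is bounded we gain exactly one additional power of $N_{\mathrm{lept}}+N_W$, yielding the claimed bound with exponent $m+1$ on the right.

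Part (ii) is then a purely algebraic consequence of the second statement of (i), as in Lemma \ref{lemV1Numbestimates}(ii): writing
\begin{equation*}
(N_{\mathrm{lept}}+N_W)^m\chi(H)(N_{\mathrm{lept}}+N_W)^p \;=\; \Bigl(\prod_{k=0}^{m-1}(N_{\mathrm{lept}}+N_W)^{m-k}(H+\mathrm{i})^{-1}(N_{\mathrm{lept}}+N_W)^{k-m+1}\Bigr)
\end{equation*}
\begin{equation*}
\times(H+\mathrm{i})^m\chi(H)(H+\mathrm{i})^p\Bigl(\prod_{k=0}^{p-1}(N_{\mathrm{lept}}+N_W)^{k-p+1}(H+\mathrm{i})^{-1}(N_{\mathrm{lept}}+N_W)^{p-k}\Bigr),
\end{equation*}
each factor is bounded by (i) (the middle factor because $\chi\in\mathrm{C}_0^\infty(\mathbb{R})$), which finishes the proof. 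The only genuine subtlety, and the place where one must be most careful, is the bookkeeping in the first part of (i): one must resist the temptation to use $N$ as if it were $H$-bounded, and instead pay attention to the fact that the $+1$ exponent present in the massive case disappears here; the second statement of (i) then compensates by focussing on the massive subsystem alone.
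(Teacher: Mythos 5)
Your proposal is correct and follows essentially the same route as the paper: the paper's own proof simply states that the argument of Lemma \ref{lemV1Numbestimates} carries over, the only difference being that $N_{\mathrm{lept}}$ and $N_W$ remain relatively $H$-bounded while $N_{\mathrm{neut}}$ (hence $N$) does not, which is exactly the bookkeeping you spell out to explain the exponents $m$ versus $m+1$. Your detailed commutator expansion and the telescoping identity for (ii) are the intended (and correct) implementation of that remark.
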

\begin{proof}
The proof is analogous to that of Lemma \ref{lemV1Numbestimates}. The only difference is that $N_{\mathrm{lept}}$ and $N_W$ are still relatively $H$-bounded, but $N_{\mathrm{neut}}$ is not anymore.
\end{proof}
\subsubsection{Number energy estimates in the ``extended'' setting}

In the remainder of this section, we give results that concern an auxiliary ``extended Hamiltonian''. The latter is introduced similarly as in \cite{DeGe99_01}. It will be used in Section \ref{sec:AC}.

 The ``extended Hilbert space'' is 
 \begin{equation*}
 \mathscr{H}^{\mathrm{ext}} := \mathscr{H} \otimes \mathscr{H}
 \end{equation*}
 and the extended Hamiltonian, acting on $\mathscr{H}^{\mathrm{ext}}$, is defined by 
 \begin{equation*}
 H^{\mathrm{ext}} := H\otimes \mathds{1}+\mathds{1}\otimes H_0.
 \end{equation*}
 More details on the extended objects are given in Appendix \ref{sec:standard_def}. Roughly speaking, the idea is that the first component of $\mathscr{H}^{\mathrm{ext}}$ corresponds to bound states, while the second component corresponds to states localized near infinity. The number operators in the extended setting are defined by 
 \begin{equation*}
 N_{\mathrm{lept},0} := N_{\mathrm{lept}}\otimes\mathds{1}, \quad N_{\mathrm{lept},\infty}:=\mathds{1}\otimes N_{\mathrm{lept}}, 
 \end{equation*}
 and likewise for $N_{\mathrm{neut}}$, $N_W$ and the total number operator $N$.

The proof of Lemma \ref{lemV1Numbestimates} can be adapted in a straightforward way to obtain the following results. \pagebreak[2]
\begin{lem}
\label{lemV2Numbestimates}
$\quad$
\begin{itemize}
\item[(i)] Under the conditions of Lemma \ref{lemV1Numbestimates}, for all $m \in \mathbb{Z}$, we have that 
\begin{equation*}
\big\|(N_0+N_\infty+1)^{-m} (H^{\mathrm{ext}} - z )^{-1}  ( N_0 + N_\infty + 1 )^{m+1} \big \| = \mathcal{O}(|\Im(z)|^{-\alpha_{m}}) ,
\end{equation*}
uniformly for z in a compact set of $\left\{ z \in \mathbb{C}, \pm  |\Im{z}| >0 \right\}$, where $\alpha_{m}$ denotes an integer depending on $m$. Moreover, for all $\chi \in \mathrm{C}^{\infty}_0(\mathbb{R})$ and $m,p \in \mathbb{N}$, $(N_0 + N_\infty )^m \chi(H^{\mathrm{ext}}) (N_0 + N_\infty )^p$ extends to a bounded operator.
\item[(ii)] Under the conditions of Lemma \ref{lemV1NumbestimatesV2}, for all $m \in \mathbb{Z}$, we have that
\begin{equation*}
\big\|(N_0+N_\infty+1)^{-m} (H^{\mathrm{ext}} - z )^{-1}  ( N_0 + N_\infty + 1 )^{m} \big \| = \mathcal{O}(|\Im(z)|^{-\alpha_{m}}) ,
\end{equation*}
uniformly for z in a compact set of $\left\{ z \in \mathbb{C}, \pm  |\Im{z}| >0 \right\}$, where $\alpha_{m}$ denotes an integer depending on $m$. Moreover,
\begin{align*}
& \big\| ( N_{\mathrm{lept},0}+N_{W,0}  +  N_{\mathrm{lept},\infty}+N_{W,\infty} +1 )^{-m} (H^{\mathrm{ext}}-z)^{-1} \\
& \qquad \qquad \qquad (N_{\mathrm{lept},0}+N_{W,0} + N_{\mathrm{lept},\infty}+N_{W,\infty} +1)^{m+1} \big \| = \mathcal{O}(|\Im(z)|^{-\alpha_{m}}),
\end{align*}
and for all $\chi \in \mathrm{C}^{\infty}_0(\mathbb{R})$ and $m,p \in \mathbb{N}$, $( N_{\mathrm{lept},0}+N_{W,0}   +   N_{\mathrm{lept},\infty}+N_{W,\infty} )^m \chi(H^{\mathrm{ext}}) ( N_{\mathrm{lept},0}+N_{W,0}   +   N_{\mathrm{lept},\infty}+N_{W,\infty} )^p$ extends to a bounded operator.
\end{itemize}
\end{lem}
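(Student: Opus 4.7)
My plan is to reduce both statements to the corresponding estimates from Lemmas \ref{lemV1Numbestimates} and \ref{lemV1NumbestimatesV2}, using the fact that the ``infinite part'' $\mathds{1} \otimes H_0$ of $H^{\mathrm{ext}}$ is harmless: it commutes with every number operator on the second tensor slot, since $H_0$ preserves the number of particles of each species. Consequently,
\begin{equation*}
[H^{\mathrm{ext}}, N_0 + N_\infty] = [H,N] \otimes \mathds{1},
\end{equation*}
and likewise for every iterated commutator $\mathrm{ad}_{N_0+N_\infty}^{\,j}(H^{\mathrm{ext}})$; the same identity holds with $N_0+N_\infty$ replaced by $N_{\mathrm{lept},0} + N_{W,0} + N_{\mathrm{lept},\infty} + N_{W,\infty}$.

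First, I would combine the commutator identities \eqref{commut_1}--\eqref{commut_4} with Lemma \ref{lemSCW_1} to show that $[H,N] \otimes \mathds{1}$ is relatively $(N_{\mathrm{lept},0} + N_{W,0} + 1)$-bounded, hence relatively $H^{\mathrm{ext}}$-bounded: indeed, $H \otimes \mathds{1}$ commutes with $\mathds{1} \otimes H_0 \ge 0$, so $H \otimes \mathds{1}$ is $H^{\mathrm{ext}}$-bounded, and therefore so is $N_{\mathrm{lept},0} + N_{W,0}$. The analogue of \eqref{est_commutator} then reads
\begin{equation*}
\big\| \mathrm{ad}_{N_0+N_\infty}^{\,j}(H^{\mathrm{ext}}) (H^{\mathrm{ext}}-z)^{-1} \big\| = \mathcal{O}(|\Im z|^{-1}), \quad j \in \mathbb{N}.
\end{equation*}

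Second, the resolvent estimates in (i) and (ii) follow by commuting $(N_0+N_\infty)^m$ through $(H^{\mathrm{ext}}-z)^{-1}$ and iterating, exactly as in the proof of Lemma \ref{lemV1Numbestimates}. The only point that distinguishes (i) from (ii) is that, in the massive case, $N_0 + N_\infty$ is itself $H^{\mathrm{ext}}$-bounded (all particle masses being positive), which is what produces the extra factor $(N_0+N_\infty+1)$ on the right; in the massless case, $N_{\mathrm{neut},0} + N_{\mathrm{neut},\infty}$ is no longer $H^{\mathrm{ext}}$-bounded, so this extra factor is lost and we retain only $(N_0+N_\infty+1)^m$ on the right. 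Nevertheless, repeating the argument with $N_{\mathrm{lept},0} + N_{W,0} + N_{\mathrm{lept},\infty} + N_{W,\infty}$ in place of $N_0+N_\infty$ recovers the $(m+1)$-st power, since $N_{\mathrm{lept}} + N_W$ remains $H$-bounded in both cases.

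Finally, the functional calculus statements follow from the resolvent bounds by the same trick used at the end of the proof of Lemma \ref{lemV1Numbestimates}: one inserts resolvents $(H^{\mathrm{ext}}+\mathrm{i})^{\pm 1}$ between successive factors of $N_0+N_\infty$ (or of $N_{\mathrm{lept},0}+N_{W,0}+N_{\mathrm{lept},\infty}+N_{W,\infty}$) and applies the resolvent estimates just obtained, using that $(H^{\mathrm{ext}}+\mathrm{i})^{m+p}\chi(H^{\mathrm{ext}})$ is bounded. I do not foresee any serious obstacle here, since all the real work has already been done in Lemmas \ref{lemV1Numbestimates} and \ref{lemV1NumbestimatesV2}; the single new input is the trivial observation that $\mathds{1}\otimes H_0$ commutes with every number operator, which makes the extended resolvent behave, for commutator purposes, like a direct integral of the non-extended ones.
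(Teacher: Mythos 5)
Your proposal is correct and follows essentially the route the paper intends: the paper gives no separate argument for this lemma, stating only that the proof of Lemma \ref{lemV1Numbestimates} (and \ref{lemV1NumbestimatesV2}) ``can be adapted in a straightforward way'', and your observation that $\mathds{1}\otimes H_0$ commutes with all the relevant number operators, so that $\mathrm{ad}^{\,j}_{N_0+N_\infty}(H^{\mathrm{ext}})=\mathrm{ad}^{\,j}_N(H)\otimes\mathds{1}$ and the commutator/resolvent iteration of Lemma \ref{lemV1Numbestimates} goes through verbatim, is exactly that adaptation. The distinction you draw between the two cases (loss of the extra factor in (ii) because $N_{\mathrm{neut}}$ is not $H$-bounded, recovered by working with $N_{\mathrm{lept},0}+N_{W,0}+N_{\mathrm{lept},\infty}+N_{W,\infty}$) matches the paper's statements precisely.
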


Let $(j_{1,0}, \dots, j_{14,0}) \in \mathrm{C}^{\infty}_0(\mathbb{R}^3)$ and $(j_{1,\infty}, \dots, j_{14,\infty}) \in \mathrm{C}^{\infty}(\mathbb{R}^3)$ be families of function satisfying $j_{\ell,0} \ge 0$, $j_{\ell,\infty} \ge 0$, $j^2_{\ell,0}+j^2_{\ell,\infty} = 1$ and $ j_{\ell,0}=1 $ near $0$. Recall that the total Hilbert space $\mathscr{H}$ is a tensor product of $14$ Fock spaces. We set $j=\left((j_{1,0},j_{1,\infty}), \dots, (j_{14,0},j_{14,\infty}) \right)$ and, for any $R\geq 1$, $j^R = ((j^R_{1,0},j^R_{1,\infty}), \dots, (j^R_{14,0},j^R_{14,\infty}) )$, with $j^R_{\ell,\sharp}=j_{\ell,\sharp}(\frac{x}{R})$, $x = \mathrm{i} \nabla_{p}$, and $p$ is the momentum of the particle labeled by $\ell \in \{ 1 , \dots , 14 \}$. The definition of the partition of unity $\check{\Gamma}(j^R) : \mathscr{H} \to \mathscr{H}^{\mathrm{ext}} $, adapted from \cite{DeGe99_01} and \cite{Am04_01}, is recalled in Appendix \ref{sec:standard_def}.

Recall that we write ``$G \in \mathbb{H}^\mu$'' if, for all $j$, $l$, $\epsilon$, $s_1$, $s_2$ and $\lambda$, $G_{l,\epsilon}^{(j)}( s_1 , \cdot , s_2 , \cdot , \lambda , \cdot )$ belongs to the Sobolev space $\mathbb{H}^{\mu}( \mathbb{R}^9 )$. Note that it is equivalent to assume that, for all $j$, $l$, $\epsilon$, $s_1$, $s_2$ and $\lambda$,
\begin{equation*}
\forall R >1, \, \big \| \mathds{1}_{[R,\infty)}(|x_i|)G_{l,\epsilon}^{(j)}( s_1 , \cdot , s_2 , \cdot , \lambda , \cdot )\big\|_2 \lesssim  R^{-\mu},
\end{equation*}
where  $x_i = \mathrm{i} \nabla_{p_i}$, $i=1,2,3$.
\begin{lem}
\label{numberestimates3}
Suppose that the masses of the neutrinos $m_{\nu_e}$, $m_{\nu_{\mu}}$, $m_{\nu_{\tau}}$ are positive and consider the Hamiltonian \eqref{total_hamiltonian_g} with $H_I$ given by \eqref{interaction_term}. Assume that $G \in L^2$. Let $j^R$ be defined as above. Then
\begin{equation}\label{1st_estim}
 (H^{\mathrm{ext}}+\mathrm{i})^{-1} \check{\Gamma}(j^R)-\check{\Gamma}(j^R)(H+\mathrm{i})^{-1} = o(R^{0}) , \quad R \to \infty.
\end{equation}
In particular, for any $\chi, \chi' \in \mathrm{C}_0^\infty( \mathbb{R} )$, we have that
\begin{equation}\label{3rd_estim}
\big \{ \chi(H^{\mathrm{ext}}) \check{\Gamma}(j^R)-\check{\Gamma}(j^R)\chi(H) \big \} \chi'(H) = o(R^0), \quad R \to \infty.
 \end{equation}
If $G \in \mathbb{H}^\mu$ with $\mu > 0$, then
\begin{equation}\label{2nd_estim}
(H^{\mathrm{ext}}+\mathrm{i})^{-1} \check{\Gamma}(j^R)-\check{\Gamma}(j^R)(H+\mathrm{i})^{-1} = \mathcal{O}(R^{-\min(1,\mu)}) , \quad R \to \infty ,
\end{equation}
and in particular, for any $\chi, \chi' \in \mathrm{C}_0^\infty( \mathbb{R} )$ and $\mu \ge 1$, we have that
\begin{equation}\label{3rd_estim_2}
\big \{ \chi(H^{\mathrm{ext}}) \check{\Gamma}(j^R)-\check{\Gamma}(j^R)\chi(H) \big \} \chi'(H) \in \mathcal{O}(R^{-1}) , \quad R \to \infty.
 \end{equation}
\end{lem}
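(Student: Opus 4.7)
The natural starting point is the second-resolvent identity
\begin{equation*}
(H^{\mathrm{ext}}+\mathrm{i})^{-1}\check{\Gamma}(j^R) - \check{\Gamma}(j^R)(H+\mathrm{i})^{-1} = (H^{\mathrm{ext}}+\mathrm{i})^{-1}\, E_R\, (H+\mathrm{i})^{-1},
\end{equation*}
with $E_R := \check{\Gamma}(j^R) H - H^{\mathrm{ext}} \check{\Gamma}(j^R)$. Writing $H = H_0 + g H_I$ and $H^{\mathrm{ext}} = H_0 \otimes \mathds{1} + \mathds{1} \otimes H_0 + g(H_I \otimes \mathds{1})$, $E_R$ splits as $E_R = E_{R,0} + g E_{R,I}$, where $E_{R,0}$ is the commutator arising from $H_0$ and $E_{R,I} := \check{\Gamma}(j^R) H_I - (H_I \otimes \mathds{1})\check{\Gamma}(j^R)$. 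In view of Lemmas \ref{lemV1Numbestimates} and \ref{lemV2Numbestimates}, once we have $\|E_{R,\sharp}(N+1)^{-1}\| \to 0$ (respectively $\lesssim R^{-\min(1,\mu)}$), sandwiching between the two resolvents (which absorb at most one extra power of $N+1$) yields \eqref{1st_estim} (respectively \eqref{2nd_estim}).

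For the free part, the usual Wick-type identity on each factor of the tensor product $\mathscr{H} = \bigotimes^{14} \mathfrak{F}_\sharp$ gives $E_{R,0} = \sum_{\ell=1}^{14} d\check{\Gamma}(j^R;[\omega_\ell, j^R_\ell])$. In the massive case, every dispersion relation in \eqref{disprela} belongs to $\mathrm{C}^\infty(\mathbb{R}^3)$ with derivatives of all orders bounded on compacts, so a standard pseudodifferential commutator expansion produces $\|[\omega_\ell, j^R_\ell]\|_{\mathcal{B}(\mathfrak{h}_\ell)} = \mathcal{O}(R^{-1})$; the standard $N$-bound $\|d\check{\Gamma}(j;k)(N+1)^{-1}\| \lesssim \|k\|$ then gives $\|E_{R,0}(N+1)^{-1}\| = \mathcal{O}(R^{-1})$. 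For the interaction part, we use the intertwining relations between $\check{\Gamma}(j^R)$ and the creation/annihilation operators (cf.\ Appendix \ref{sec:standard_def}), of the schematic form
\begin{equation*}
\check{\Gamma}(j^R)\, d^\sharp(h) = \big(d^\sharp(j^R_{\ell,0}h) \otimes \mathds{1}\big) \check{\Gamma}(j^R) + \big(\Theta_\ell \otimes d^\sharp(j^R_{\ell,\infty}h)\big) \check{\Gamma}(j^R),
\end{equation*}
where $\Theta_\ell$ is a bounded operator collecting the fermionic signs. Inserting these in every trilinear monomial of \eqref{interaction_term}, the all-$j^R_0$ contributions exactly reconstruct $(H_I \otimes \mathds{1})\check{\Gamma}(j^R)$, so that $E_{R,I}$ is a finite sum of operators of the same form as $H_I$ but with at least one kernel factor $G^{(j)}_{l,\epsilon}$ replaced by $j^R_{\ell,\infty}\, G^{(j)}_{l,\epsilon}$. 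Applying Lemma \ref{lemSCW_1} monomially then gives
\begin{equation*}
\|E_{R,I}(N+1)^{-1}\| \lesssim \max_{\ell,j,l,\epsilon}\|j^R_{\ell,\infty}\, G^{(j)}_{l,\epsilon}\|_2.
\end{equation*}
Since $j_{\ell,\infty}$ vanishes on a neighbourhood of $0$, dominated convergence gives $o(R^0)$ when $G \in L^2$; when $G \in \mathbb{H}^\mu$, the tail characterisation of $\mathbb{H}^\mu$ recalled just before the lemma gives $\mathcal{O}(R^{-\mu})$, and so altogether $\mathcal{O}(R^{-\min(1,\mu)})$.

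Finally, \eqref{3rd_estim} and \eqref{3rd_estim_2} follow from \eqref{1st_estim} and \eqref{2nd_estim} by a Helffer-Sj\"ostrand representation
\begin{equation*}
\chi(H^{\mathrm{ext}})\check{\Gamma}(j^R) - \check{\Gamma}(j^R)\chi(H) = \frac{1}{\pi}\int_{\mathbb{C}} \bar{\partial}\tilde{\chi}(z) \big[(H^{\mathrm{ext}}-z)^{-1}\check{\Gamma}(j^R) - \check{\Gamma}(j^R)(H-z)^{-1}\big]\, dL(z),
\end{equation*}
with $\tilde{\chi}$ an almost-analytic extension of $\chi$. The main technical difficulty is that the intermediate resolvent bounds one obtains by repeating the first step with $\mathrm{i}$ replaced by $-z$ involve negative powers of $|\Im z|$ together with polynomial weights in $N$ (Lemmas \ref{lemV1Numbestimates}--\ref{lemV2Numbestimates}); multiplying by $\chi'(H)$ on the right absorbs these number-operator factors through Lemma \ref{lemV1Numbestimates}(ii), while the $|\Im z|^{-1}$ losses are compensated by choosing $\tilde{\chi}$ with sufficiently many vanishing $\bar{\partial}$-derivatives on the real axis. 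This final book-keeping is the most delicate part of the argument.
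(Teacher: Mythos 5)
Your proposal is correct and follows essentially the same route as the paper's proof: the second resolvent identity, the identification of the free part of the commutator as $\mathrm{d}\check{\Gamma}(j^R,[\omega,j^R])$ estimated via Lemma \ref{estimationN2}, the intertwining relations of Lemma \ref{lem:B6} reducing the interaction part to terms with a kernel cut off by $j^R_{\ell,\infty}$ (handled by Lemma \ref{lemSCW_1} and the $L^2$/$\mathbb{H}^\mu$ tail bounds), and the Helffer--Sj\"ostrand representation combined with Lemma \ref{lemV1Numbestimates} for the functional-calculus statements. No gaps worth noting.
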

\begin{proof}
We first note that
\begin{align*}
& (H^{\mathrm{ext}}+\mathrm{i})^{-1} \check{\Gamma}(j^R)-\check{\Gamma}(j^R)(H+\mathrm{i})^{-1} = (H^{\mathrm{ext}}+\mathrm{i})^{-1} \big ( \check{\Gamma}(j^R) H - H^{\mathrm{ext}} \check{\Gamma}(j^R) \big ) (H+\mathrm{i})^{-1} ,
\end{align*}
and a direct computation gives
\begin{equation*}
H^{\mathrm{ext}}_0 \check{\Gamma}(j^R) - \check{\Gamma}(j^R)H_0 = \mathrm{d} \check{\Gamma} (j^R,[\omega,j^R]) ,
\end{equation*}
where the operator $\mathrm{d} \check{\Gamma} (j^R,[\omega,j^R])$ is defined in Appendix \ref{sec:standard_def}. It follows from Lemma \ref{estimationN2} that
\begin{align*}
 \left\| \mathrm{d} \check{\Gamma} (j^R,[\omega,j^R]) (  N_{0} + N_{\infty} +1 )^{-1} \right\|  & = \left\| (  N_{0} + N_{\infty} +1 )^{-\frac12} \mathrm{d} \check{\Gamma} (j^R,[\omega,j^R]) ( N + 1 )^{-\frac12} \right\| \\
& \le \big \| [\omega,j^R]^*[\omega,j^R] \big \|^{\frac12} = \mathcal{O}( R^{-1} ).
\end{align*}
Moreover, we have that
\begin{equation*}
( H_I(G)\otimes\mathds{1}) \check{\Gamma}(j^R)-\check{\Gamma}(j^R) H_I(G) = \sum_{j=1}^4 ( H_I^{(j)}(G) \otimes\mathds{1}) \check{\Gamma}(j^R)-\check{\Gamma}(j^R) H_I^{(j)}(G).
\end{equation*}
Considering for instance the term
\begin{equation*}
H_{I,1,+}^{(1)}(G) = \int G^{(1)}_{1,+}(\xi_1, \xi_2, \xi_3) b^*_{1,+}(\xi_1)c^*_{1,-}(\xi_2)a_{+}(\xi_3) d \xi_1 d \xi_2 d \xi_3 , 
\end{equation*}
occurring in $H_I^{(1)}(G)$. Using the intertwining properties of Lemma \ref{lem:B6}, one verifies that
\begin{equation*}
( H_{I,1,+}^{(1)}(G) \otimes\mathds{1}) \check{\Gamma}(j^R)-\check{\Gamma}(j^R) H_{I,1,+}^{(1)}(G)  
\end{equation*}
can be expressed as a sum of operators of the form
\begin{equation*}
\int j^R( x_1 , x_2 , x_3 ) G^{(1)}_{1,+}(\xi_1, \xi_2, \xi_3) b^{*,\sharp}_{1,+}(\xi_1) c^{*,\sharp}_{1,-}(\xi_2) a^\sharp_{+}(\xi_3) d \xi_1 d \xi_2 d \xi_3 ,
\end{equation*}
where $j^R(x_1,x_2,x_3)$ stands for either $1-j^R_{\ell_1,0}(x_{1})j^R_{\ell_2,0}(x_2)j^R_{\ell_3,0}(x_3)$ or $j^R_{\ell_1,\sharp}(x_1)j^R_{\ell_2,\sharp}(x_2)j^R_{\ell_3,\sharp}(x_3)$, with at least one of the $j^R_{\ell,\sharp}$  equal to $j^R_{\ell,\infty}$. Moreover $a^\sharp_{+}(\xi_3)$ stands for either $a_{+}(\xi_3) \otimes \mathds{1}$ or $\mathds{1} \otimes a_{+}(\xi_3)$ and likewise for $b^{*,\sharp}_{1,+}(\xi_1)$ and $c^{*,\sharp}_{1,-}(\xi_2)$.

Therefore, proceeding as in Lemma \ref{lemSCW_1}, one deduces that if $G \in L^2$, then
\begin{equation*}
\big\|\big\{(H_I(G)\otimes\mathds{1}) \check{\Gamma}(j^R)-\check{\Gamma}(j^R) H_I(G) \big\} (N_0+N_\infty+1)^{-1}\big\| = o(R^{0}) , \quad R \to \infty.
\end{equation*} 
Similarly, if $G \in \mathbb{H}^\mu$, we obtain that 
\begin{equation*}
\big\|\big\{(H_I(G)\otimes\mathds{1}) \check{\Gamma}(j^R)-\check{\Gamma}(j^R) H_I(G) \big\} (N_0+N_\infty+1)^{-1}\big\| = \mathcal{O}(R^{-\mu}), \quad R \to \infty.
\end{equation*}
Putting together the previous estimates proves \eqref{1st_estim} and \eqref{2nd_estim}.

To prove \eqref{3rd_estim} and \eqref{3rd_estim_2}, let $\tilde{\chi}\in \mathrm{C}^{\infty}_{0}(\mathbb{\mathbb{C}})$ be an almost analytic extension of $\chi$ satisfying: 
\begin{align*}
& \tilde{\chi}|_{\mathbb{R}} =  \chi , \quad  |\partial_{\bar{z}}\tilde{\chi}(z)| \leq  \mathrm{C}_n |\Im(z)|^n,~~n\in \mathbb{N}.
\end{align*}
Using the Helffer-Sj{\"o}strand functional calculus, we have that 
\begin{align*}
&  \left(\chi(H^{\mathrm{ext}}) \check{\Gamma}(j^R)-\check{\Gamma}(j^R)\chi(H) \right)\chi'(H) \\
&  \int \partial_{\bar{z}}\tilde{\chi}(z)  \left((z-H^{\mathrm{ext}})^{-1} \check{\Gamma}(j^R)-\check{\Gamma}(j^R)(z-H)^{-1} \right)\chi'(H)dz \wedge d\bar{z}.
\end{align*}
Combining \eqref{1st_estim}, \eqref{2nd_estim} and Lemma \ref{lemV1Numbestimates}, we obtain \eqref{3rd_estim} and \eqref{3rd_estim_2}.
\end{proof}

In the case where the masses of the neutrinos $m_{\nu_e}$, $m_{\nu_{\mu}}$, $m_{\nu_{\tau}}$ vanish, the statement of Lemma \ref{numberestimates3} is modified as follows.
\begin{lem}
\label{numberestimates13}
Suppose that the masses of the neutrinos $m_{\nu_e}$, $m_{\nu_{\mu}}$, $m_{\nu_{\tau}}$ vanish and consider the Hamiltonian \eqref{total_hamiltonian_g} with $H_I$ given by \eqref{interaction_term}. Assume that $G \in L^2$. Let $j^R$ be defined as above. Then
\begin{equation}\label{1st_estim_massless}
 (H^{\mathrm{ext}}+\mathrm{i})^{-1} \check{\Gamma}(j^R)-\check{\Gamma}(j^R)(H+\mathrm{i})^{-1} (N_{\mathrm{neut}}+1)^{-1} \in o(R^{0}).
\end{equation}
If $G \in \mathbb{H}^\mu$ with $\mu > 0$, then
\begin{equation}\label{2nd_estim_massless}
 (H^{\mathrm{ext}}+\mathrm{i})^{-1} \check{\Gamma}(j^R)-\check{\Gamma}(j^R)(H+\mathrm{i})^{-1} (N_{\mathrm{neut}}+1)^{-1} \in \mathcal{O}(R^{-\min(1,\mu)}).
\end{equation}
In particular, for any $\chi, \chi' \in \mathrm{C}_0^\infty( \mathbb{R} )$ and $\mu \ge 1$, we have that
\begin{equation}\label{3rd_estim_massless}
 \left(\chi(H^{\mathrm{ext}}) \check{\Gamma}(j^R)-\check{\Gamma}(j^R)\chi(H) \right) \chi'(H) (N_{\mathrm{neut}}+1)^{-1}  \in \mathcal{O}(R^{-1}).
 \end{equation}
\end{lem}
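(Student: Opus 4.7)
The plan is to transcribe the proof of Lemma~\ref{numberestimates3} to the massless setting, inserting the factor $(N_{\mathrm{neut}}+1)^{-1}$ to compensate for the loss of $H$-boundedness of $N_{\mathrm{neut}}$ when $m_{\nu_\ell}=0$. I would start from the resolvent identity
\[
(H^{\mathrm{ext}}+\mathrm{i})^{-1}\check{\Gamma}(j^R) - \check{\Gamma}(j^R)(H+\mathrm{i})^{-1} = (H^{\mathrm{ext}}+\mathrm{i})^{-1}\bigl(\check{\Gamma}(j^R)H - H^{\mathrm{ext}}\check{\Gamma}(j^R)\bigr)(H+\mathrm{i})^{-1},
\]
and split the middle bracket into its $H_0$ and $gH_I$ contributions. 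The $H_0$ contribution equals $-\mathrm{d}\check{\Gamma}(j^R,[\omega,j^R])$, and the one-body estimate $\|\mathrm{d}\check{\Gamma}(j^R,[\omega,j^R])(N_0+N_\infty+1)^{-1}\| = \mathcal{O}(R^{-1})$ from Lemma~\ref{estimationN2} is insensitive to the neutrino mass. For the $H_I$ contribution, the intertwining properties expand $(H_I(G)\otimes\mathds{1})\check{\Gamma}(j^R)-\check{\Gamma}(j^R)H_I(G)$ into a sum of integrals of the form $\int \tilde G_R\, b^{*,\sharp}c^{*,\sharp}a^{\sharp}\,d\xi$, the effective kernels $\tilde G_R$ containing a factor $1-\prod j^R_{\ell_i,0}(x_i)$ or $\prod j^R_{\ell_i,\sharp}(x_i)$ with at least one $\infty$-component. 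The same $N_\tau$-estimate as in Lemma~\ref{lemSCW_1}, applied in the extended Fock space, bounds each such integral by $C\|\tilde G_R\|_2$ modulo the factor $(N_{\mathrm{lept},0}+N_{W,0}+N_{\mathrm{lept},\infty}+N_{W,\infty}+1)$ -- crucially, no $N_{\mathrm{neut}}$-factor is required since the fermionic smeared operators involving neutrinos are uniformly bounded by the $L^2$ norm of their symbols. Dominated convergence yields $\|\tilde G_R\|_2\to 0$ when $G\in L^2$, and the bound $\|\mathds{1}_{[R,\infty)}(|x_i|)G\|_2\lesssim R^{-\mu}$ recalled just before Lemma~\ref{numberestimates3} gives $\|\tilde G_R\|_2\lesssim R^{-\mu}$ when $G\in\mathbb{H}^\mu$.

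The only genuinely new step is to absorb the residual factor $(N+1)$ or $(N_{\mathrm{lept}}+N_W+1)$ on the right-hand side against $(H+\mathrm{i})^{-1}(N_{\mathrm{neut}}+1)^{-1}$. The second estimate of Lemma~\ref{lemV1NumbestimatesV2}(i) directly gives $(N_{\mathrm{lept}}+N_W+1)(H+\mathrm{i})^{-1}$ bounded, handling the non-neutrino part. For the neutrino part, I use the commutator identity
\[
(N_{\mathrm{neut}}+1)(H+\mathrm{i})^{-1}(N_{\mathrm{neut}}+1)^{-1} = (H+\mathrm{i})^{-1} - (H+\mathrm{i})^{-1}[N_{\mathrm{neut}},H](H+\mathrm{i})^{-1}(N_{\mathrm{neut}}+1)^{-1},
\]
and observe that, by \eqref{commut_1}--\eqref{commut_4}, $[N_{\mathrm{neut}},H]=[N_{\mathrm{neut}},H_I]$ is a sum of terms of the form $\pm\mathrm{i}H_I^{(j)}(\mathrm{i}G)$, each of which by Lemma~\ref{lemSCW_1} is bounded modulo $(N_{\mathrm{lept}}+N_W+1)$. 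Combined with the $H$-boundedness of $N_{\mathrm{lept}}+N_W$, this makes $[N_{\mathrm{neut}},H](H+\mathrm{i})^{-1}$ bounded and hence $(N_{\mathrm{neut}}+1)(H+\mathrm{i})^{-1}(N_{\mathrm{neut}}+1)^{-1}$ bounded. Summing over the two pieces gives $(N+1)(H+\mathrm{i})^{-1}(N_{\mathrm{neut}}+1)^{-1}$ bounded, which closes the proofs of \eqref{1st_estim_massless} and \eqref{2nd_estim_massless}.

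For \eqref{3rd_estim_massless}, I would apply the Helffer--Sj\"ostrand calculus to $\chi(H)$ and $\chi(H^{\mathrm{ext}})$, and invoke \eqref{2nd_estim_massless} with $(z-H)^{-1}$ and $(z-H^{\mathrm{ext}})^{-1}$ in place of $(H+\mathrm{i})^{-1}$ and $(H^{\mathrm{ext}}+\mathrm{i})^{-1}$ respectively: the integrand is then of order $R^{-1}|\Im z|^{-N}$ (the power of $|\Im z|$ coming from the extended version of Lemma~\ref{lemV1NumbestimatesV2} in Lemma~\ref{lemV2Numbestimates}), which is integrable thanks to the almost-analyticity of $\tilde\chi$. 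The trailing factor $\chi'(H)(N_{\mathrm{neut}}+1)^{-1}$ is moved past the resolvent difference using the boundedness of $(N_{\mathrm{neut}}+1)\chi'(H)(N_{\mathrm{neut}}+1)^{-1}$, which follows from the same commutator identity combined with a Helffer--Sj\"ostrand expansion of $\chi'$. Throughout, the main obstacle is the failure of $H$-boundedness of $N_{\mathrm{neut}}$ in the massless case, which dictates both the insertion of $(N_{\mathrm{neut}}+1)^{-1}$ in the statement and the reliance on the commutator trick of the second paragraph; all other steps are direct transcriptions of the massive argument.
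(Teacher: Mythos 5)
Your proof is correct and follows essentially the same route as the paper, whose own proof is simply the instruction to adapt the argument of Lemma~\ref{numberestimates3} using Lemma~\ref{lemV1NumbestimatesV2}. Your commutator identity for $(N_{\mathrm{neut}}+1)(H+\mathrm{i})^{-1}(N_{\mathrm{neut}}+1)^{-1}$, based on $[N_{\mathrm{neut}},H]=[N_{\mathrm{neut}},H_I]$ being relatively $H$-bounded, is precisely the mechanism behind Lemma~\ref{lemV1NumbestimatesV2}\,(i) (the case $m=-1$), so you have in effect re-derived the cited ingredient rather than introduced a different method.
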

\begin{proof}
It suffices to adapt the proof of Lemma \ref{numberestimates3}, using Lemma \ref{lemV1NumbestimatesV2}. 
\end{proof}
\section{Spectral theory}\label{sec:spectral}

In this section, we do the spectral analysis of the Hamiltonian $H$. Section \ref{subsec:spectral_massive} is devoted to the case where the masses of the neutrinos are supposed to be positive while Section \ref{subsec:spectral_massless} concerns the case where the neutrinos are supposed to be massless. In both cases, we begin with recalling results giving the existence of a ground state, next we study the structure of the essential spectrum by means of suitable versions of Mourre's conjugate operator method.

\subsection{Massive neutrinos}\label{subsec:spectral_massive}

In this section, we do the spectral analysis of $H$ under the assumption that
\begin{equation*}
m_\nu = \min ( m_{\nu_e} , m_{\nu_\mu} , m_{\nu_\tau} ) > 0.
\end{equation*}

\subsubsection{Existence of a ground state and location of the essential spectrum}

Recall the notation $E= \inf \sigma( H )$. The next theorem shows, in particular, that $H$ has a ground state, i.e., that $E$ is an eigenvalue of $H$.
\begin{Th}
\label{Thess2}
Suppose that the masses of the neutrinos $m_{\nu_e}$, $m_{\nu_{\mu}}$, $m_{\nu_{\tau}}$ are positive and consider the Hamiltonian \eqref{total_hamiltonian_g} with $H_I$ given by \eqref{interaction_term}. Assume that $G \in L^2$. Then
\begin{equation*}
\sigma_{\mathrm{ess}}(H)=[E + m_{\nu}, \infty) .
\end{equation*}
In particular, $E$ is a discrete eigenvalue of $H$. 
\end{Th}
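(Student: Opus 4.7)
The plan is to establish an HVZ-type characterization of $\sigma_{\mathrm{ess}}(H)$, combining the geometric partition-of-unity machinery of Section \ref{subsec:estimates} with an explicit Weyl-sequence construction at the threshold. The argument splits into two inclusions, treated in the order (I) $\sigma_{\mathrm{ess}}(H) \subseteq [E + m_\nu, \infty)$, which as a byproduct yields the ground state, and then (II) $[E + m_\nu, \infty) \subseteq \sigma_{\mathrm{ess}}(H)$.

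For (I), I would proceed by contradiction: suppose there exist $\lambda < E + m_\nu$ and a Weyl sequence $(u_n)$ for $H$ at $\lambda$. Pick $\chi, \chi' \in \mathrm{C}_0^\infty((-\infty, E + m_\nu))$ with $\chi = 1$ near $\lambda$ and $\chi' = 1$ on the support of $\chi$, so that $\chi(H) u_n = u_n + o(1)$ and $\chi(H)\chi'(H) = \chi(H)$. The intertwining estimate \eqref{3rd_estim} of Lemma \ref{numberestimates3} yields
\begin{equation*}
\check{\Gamma}(j^R) \chi(H) u_n = \chi(H^{\mathrm{ext}}) \check{\Gamma}(j^R) u_n + o_R(1),
\end{equation*}
uniformly in $n$. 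Because $m_\nu > 0$ and the dispersion relations \eqref{disprela} are bounded below by the corresponding masses, $H_0 \geq m_\nu$ on $\mathrm{Ran}(P_\Omega^\perp) \subset \mathscr{H}$, hence $H^{\mathrm{ext}} = H \otimes \mathds{1} + \mathds{1} \otimes H_0 \geq E + m_\nu$ on $\mathrm{Ran}(\mathds{1} \otimes P_\Omega^\perp)$, which together with $[H^{\mathrm{ext}}, \mathds{1} \otimes P_\Omega] = 0$ gives $\chi(H^{\mathrm{ext}})(\mathds{1} \otimes P_\Omega^\perp) = 0$. Consequently
\begin{equation*}
\chi(H^{\mathrm{ext}}) \check{\Gamma}(j^R) u_n = \big(\chi(H) \Gamma(j_0^R) u_n\big) \otimes \Omega .
\end{equation*}
Using the number-energy estimates of Lemma \ref{lemV1Numbestimates}, one may replace $\chi(H)$ by $\chi(H) \mathds{1}_{N \leq N_0}$ up to a remainder uniform in $n$ that vanishes as $N_0 \to \infty$; the operator $\chi(H) \mathds{1}_{N \leq N_0} \Gamma(j_0^R)$ is compact, because on each $n$-particle sector with $n \leq N_0$ its range consists of states with uniformly bounded individual momenta (by positivity of the masses) and position localization in $|x| \leq R$. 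Hence $\chi(H) \Gamma(j_0^R) u_n \to 0$ strongly in $n$, and combined with the isometry $\|\check{\Gamma}(j^R) v\| = \|v\|$, taking $n \to \infty$ then $R \to \infty$ contradicts $\|u_n\| = 1$. Thus $\sigma_{\mathrm{ess}}(H) \cap (-\infty, E + m_\nu) = \emptyset$, and $E = \inf \sigma(H)$ is a discrete eigenvalue, producing a ground state $\psi_{\mathrm{gs}}$.

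For (II), given $\lambda \geq 0$, choose a lepton family $l$ with $m_{\nu_l} = m_\nu$ and $p_0 \in \mathbb{R}^3$ with $\omega_l^{(2)}(p_0) = m_\nu + \lambda$, fix a unit vector $g_k \in L^2(\Sigma_1)$ concentrated in momentum near $p_0$ at scale $1/k$, and set $h_{n,k}(p, s) := e^{\mathrm{i} p \cdot x_n} g_k(p, s)$ with $x_n \to \infty$. Define $\phi_{n,k} := c^*_{l,-}(h_{n,k}) \psi_{\mathrm{gs}}$. The pull-through identity
\begin{equation*}
(H - E - m_\nu - \lambda) \phi_{n,k} = c^*_{l,-}\big((\omega_l^{(2)} - m_\nu - \lambda) h_{n,k}\big) \psi_{\mathrm{gs}} + [H_I, c^*_{l,-}(h_{n,k})] \psi_{\mathrm{gs}}
\end{equation*}
is handled term by term. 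The first summand has norm $O(1/k)$ by concentration of $g_k$; the commutator, computed via the (anti-)commutation relations of Section \ref{2intro}, reduces to operators of the schematic form $\int G(\xi_1, \xi_2, \xi_3) h_{n,k}(\xi_2) d\xi_2$ acting on $\psi_{\mathrm{gs}}$, and since $h_{n,k} \rightharpoonup 0$ in $L^2$ as $n \to \infty$, the $L^2(\xi_1, \xi_3)$-norm of these integrated kernels vanishes by dominated convergence, with $N_\tau$ estimates as in Lemma \ref{lemSCW_1} transferring this to the operator norm on $\psi_{\mathrm{gs}}$. The CAR also give $\|\phi_{n,k}\|^2 = \|\psi_{\mathrm{gs}}\|^2 - \|c_{l,-}(h_{n,k}) \psi_{\mathrm{gs}}\|^2 \to \|\psi_{\mathrm{gs}}\|^2$ (since the one-neutrino reduced density matrix of $\psi_{\mathrm{gs}}$ is trace class) and $\phi_{n,k} \rightharpoonup 0$. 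A diagonal extraction on $(n, k)$ then produces a Weyl sequence for $H$ at energy $E + m_\nu + \lambda$.

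The main technical obstacle is step (I), specifically the phase-space compactness of $\chi(H) \mathds{1}_{N \leq N_0} \Gamma(j_0^R)$ on the $14$-fold tensor product of Fock spaces. The argument requires keeping track of every species through $\check{\Gamma}(j^R)$ and a uniform control on the total number of particles, which is exactly where the hypothesis $m_\nu > 0$ enters via the number-energy estimates of Lemmas \ref{lemV1Numbestimates} and \ref{lemV2Numbestimates}.
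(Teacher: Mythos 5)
Your proposal is correct and follows essentially the same route as the paper: the inclusion $\sigma_{\mathrm{ess}}(H)\subset[E+m_\nu,\infty)$ rests on the same intertwining estimate \eqref{3rd_estim} of Lemma \ref{numberestimates3} combined with $H^{\mathrm{ext}}\ge E+m_\nu$ on $\mathrm{Ran}(\mathds{1}\otimes\Pi_\Omega^{\perp})$ and the compactness of $\Gamma(j_0^R)\chi(H)$-type operators (the paper concludes directly that $\chi(H)$ is compact rather than running a Weyl-sequence contradiction, but the ingredients are identical), while the reverse inclusion is the standard one-neutrino excitation of the ground state, which is exactly the Weyl-sequence construction the paper imports from \cite{DeGe99_01,Am04_01}. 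One cosmetic caveat: the compactness of $\chi(H)\mathds{1}_{N\le N_0}\Gamma(j_0^R)$ is obtained by pairing the position cut-off with momentum decay extracted from the energy cut-off via $(H_0+\mathrm{i})^{-1}$ and $\mathscr{D}(H)=\mathscr{D}(H_0)$, not literally from ``uniformly bounded individual momenta by positivity of the masses'' (positivity of the masses is what makes $N$ relatively $H$-bounded, so the cut-off $\mathds{1}_{N\le N_0}$ is harmless); this is standard and does not affect the argument.
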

\begin{proof}
The fact that
\begin{equation}\label{first_incl}
\sigma_{\mathrm{ess}} ( H ) \subset [ E + m_\nu , \infty )
\end{equation}
is a consequence of \eqref{3rd_estim} in Lemma \ref{numberestimates3}. Indeed, if $\chi$ belongs to $\mathrm{C}_0^\infty( ( - \infty , E + m_\nu ))$, using that the operator $\check{\Gamma}(j^R)$ defined in the previous section is isometric (see Appendix \ref{sec:standard_def}), we have that
\begin{align*}
\chi ( H ) = \check{\Gamma}(j^R)^* \check{\Gamma}(j^R) \chi ( H ) = \check{\Gamma}(j^R)^* \chi( H^{ \mathrm{ext} } ) \check{\Gamma}(j^R) + o(R^{0}) , \quad R \to \infty.
\end{align*}
Since $N_\infty = \mathds{1} \otimes N$ commutes with $H^{ \mathrm{ext} }$ in $\mathscr{H} = \mathscr{H}^{\mathrm{ext}} \otimes  \mathscr{H}^{\mathrm{ext}}$, and since
\begin{equation*}
H^{ \mathrm{ext} } \mathds{1}_{ [ 1 , \infty ) }( N_\infty ) \ge ( E + m_\nu ) \mathds{1}_{ [ 1 , \infty ) }( N_\infty ),
\end{equation*}
this yields
\begin{align*}
\chi ( H ) & = \check{\Gamma}(j^R)^* ( \mathds{1} \otimes \Pi_\Omega ) \chi ( H^{ \mathrm{ext} } ) \check{\Gamma}(j^R) + o(R^0) = \check{\Gamma}(j^R)^* ( \mathds{1} \otimes \Pi_\Omega ) \check{\Gamma}(j^R) \chi( H ) + o(R^0) , \quad R \to \infty ,
\end{align*}
where $\Pi_\Omega$ denotes the projection onto the vacuum in $\mathscr{H}$. The second equality in the previous equation is another consequence of  \eqref{3rd_estim}. The inclusion \eqref{first_incl} then follows from the observation that the operator $\check{\Gamma}(j^R)^* ( \mathds{1} \otimes \Pi_\Omega ) \check{\Gamma}(j^R) \chi( H )$ is compact (see e.g. \cite{DeGe99_01} for a similar argument).

The converse inclusion can be proven by constructing a Weyl sequence associated to $\lambda$ for any $\lambda \in [ E + m_\nu , \infty )$ in the same way as in \cite[Theorem 4.1]{DeGe99_01} or \cite[Theorem 4.3]{Am04_01}.  
\end{proof}
\subsubsection{Spectral analysis for any value of the coupling constant}\label{subsubsec:Mourre1}

In this section, we study the structure of the essential spectrum of $H$ using Mourre's conjugate operator theory \cite{Mo80_01,AmBoGe96_01}. We begin by recalling the main facts on which we will rely. We refer the reader to \cite{AmBoGe96_01} for more details.

Given two self-adjoint operators $A$, $B$ on a Hilbert space $\mathscr{H}$, $B$ is said to be of class $\mathrm{C}^n(A)$ if and only if the map
\begin{equation*}
s \mapsto e^{- \mathrm{i} s A} ( B - z )^{-1} e^{\mathrm{i} s A} \Phi,
\end{equation*}
is of class $\mathrm{C}^n(\mathbb{R})$ for all $\Phi \in \mathscr{H}$. One says that $B$ satisfies a Mourre estimate with respect to $A$ (with compact remainder) on an interval $[a,b]\subset\mathbb{R}$ if there exist a positive constant $\mathrm{c}_0$ and a compact operator $K$ such that
\begin{equation}
\mathds{1}_{[a,b]}(B)[B,\mathrm{i}A]\mathds{1}_{[a,b]}(B) \ge \mathrm{c}_0\mathds{1}_{[a,b]}(B) + K. \label{recall_Mourre}
\end{equation}
If $B \in \mathrm{C}^1( A )$ and $B$ satisfies a Mourre estimate w.r.t. $A$ on $[a,b]$, then $B$ has only finitely many eigenvalues, with finite multiplicities, in $[a,b]$. If the Mourre estimate is strict (i.e. if $K=0$ in \eqref{recall_Mourre}), then $B$ does not have eigenvalues in $[a,b]$. Moreover, one can prove that if $B$ satisfies a Mourre estimate with compact remainder on $[a,b]$, then, on any interval $[c,d]\subset[a,b]$ such that $[c,d]\cap\sigma_{\mathrm{pp}}(B)=\emptyset$, $B$ satisfies a strict Mourre estimate. If $B \in \mathrm{C}^2(A)$ and satisfies the Mourre estimate \eqref{recall_Mourre}, then $B$ has no singular continuous spectrum in $[a,b]$, i.e. $\sigma_{\mathrm{sc}}(B)\cap[a,b]=\emptyset$. More precisely, one can prove that $B$ satisfies a limiting absorption principle in any interval $[c,d]\subset[a,b]$ such that $[c,d]\cap\sigma_{\mathrm{pp}}(B)=\emptyset$. As for this last result, we mention that the condition $B\in\mathrm{C}^2(A)$ can be weakened to ``$B\in\mathrm{C}^{1,1}(A)$'', for instance. To simplify the presentation, we will not use such a weaker notion of regularity in this paper.

Recall that the operators $a_{(i),l}$, $i=1,2$, $l=1,2,3$, and $a_{(3)}$ have been defined in \eqref{eq:defa12}--\eqref{eq:defa3}. In particular, the operators $a_{(i),l}$ are self-adjoint and their domains are given by $\mathscr{D}(a_{(i),l})=\{ h \in \mathfrak{h}_{1} , a_{(i),l}h \in \mathfrak{h}_{1} \}$, where $a_{(i),l}h$ should be understood in the sense of distributions. Likewise, $a_{(3)}$ is self-adjoint with domain $\mathscr{D}(a_{(3)})=\{ h \in \mathfrak{h}_{2} , a_{(3)} h \in \mathfrak{h}_{2} \}$. Using the notation 
\begin{equation*}
\mathrm{d}\Gamma( q ) = \sum_{i=1}^{14} \mathrm{d}\Gamma( q_i )
\end{equation*}
as operators on $\mathscr{H}$, with $q = ( q_1 , \dots , q_{14} )$, $q_1, q_2$ operators on $\mathfrak{h}_2$ and $(q_3, \dots , q_{14})$ operators on $\mathfrak{h}_1$ (see Appendix \ref{sec:standard_def} for details regarding this notation and recall that the total Hilbert space $\mathscr{H}$ is the tensor product of $2$ symmetric Fock spaces for bosons and $12$ anti-symmetric Fock spaces for fermions), we set
\begin{align}
& A:=\mathrm{d}\Gamma(a) , \notag \\
& a:= (a_{(3)}, a_{(3)}, a_{(1),1}, a_{(1),1} , a_{(2),1} , a_{(2),1} , a_{(1),2} , a_{(1),2} , a_{(2),2} , a_{(2),2} , a_{(1),3} , a_{(1),3} , a_{(2),3} , a_{(2),3} ). \label{eq:conjugate_1}
\end{align}
Hence the following notations will be used: $a_1 = a_{(3)}$, $a_2=a_{(3)}$, $a_3=a_{(1),1}$, \dots, $a_{14} = a_{(2),3}$.

Now, a direct computation gives
\begin{equation}
[H_0,\mathrm{i} A] =  \mathrm{d}\Gamma(|\nabla\omega|^2),  \label{commut_H0}
\end{equation}
in the sense of quadratic forms, with (see \eqref{disprela})  $$\nabla\omega := (\nabla\omega^{(3)} , \nabla\omega^{(3)} , \nabla\omega^{(1)}_1 , \nabla\omega^{(1)}_1 , \nabla\omega^{(2)}_1 , \nabla\omega^{(2)}_1 , \nabla\omega^{(1)}_2 , \nabla\omega^{(1)}_2 , \nabla\omega^{(2)}_2 , \nabla\omega^{(2)}_2 , \nabla\omega^{(1)}_3 , \nabla\omega^{(1)}_3 , \nabla\omega^{(2)}_3 , \nabla\omega^{(2)}_3 ).$$ 
Moreover, writing $H_I=H_I(G)$, we have that
 \begin{align}
[H_I(G),\mathrm{i}A] = [H^{(1)}_I(G)+H^{(2)}_I(G)+H^{(3)}_I(G)+H^{(4)}_I(G),\mathrm{i}A], \label{commut_HI}
\end{align}
where
\begin{align}
& [H_I^{(1)}(G) , \mathrm{i}A ] =  \sum_{i=1}^{2} \mathrm{i} H_I^{(1)}( \mathrm{i} a_i G ) - \sum_{i=3}^{14} \mathrm{i} H_I^{(1)}( \mathrm{i} a_i G ) , \label{commut_11} \\
& [H_I^{(2)}(G) , \mathrm{i}A ] = -\sum_{i=1}^{14} \mathrm{i} H_I^{(2)}( \mathrm{i} a_i G ), \label{commut_21} \\
& [H_I^{(3)}(G) , \mathrm{i}A ] = -\sum_{i\in\{1,2,5,6,9,10,13,14\}}^{14} \mathrm{i} H_I^{(3)}( \mathrm{i} a_i G ) + \sum_{i\in\{3,4,7,8,11,12\}}^{14} \mathrm{i} H_I^{(3)}( \mathrm{i} a_i G ), \label{commut_31} \\
& [H_I^{(4)}(G) , \mathrm{i}A ] =- \sum_{i\in\{5,6,9,10,13,14\}}^{14} \mathrm{i} H_I^{(4)}( \mathrm{i} a_i G ) + \sum_{i\in\{1,2,3,4,7,8,11,12\}}^{14} \mathrm{i} H_I^{(4)}( \mathrm{i} a_i G ). \label{commut_41}
\end{align}

The main result of this section is Theorem \ref{MourreI} below, which proves that $H$ satisfies a Mourre estimate with respect to $A$ in any interval that does not intersect the set of thresholds (see \eqref{def_threshold} below for the definition of thresholds in our context). As recalled at the beginning of this section, in order to be able to deduce useful spectral properties of $H$, in addition to the Mourre estimate, one needs to establish that $H$ is regular enough w.r.t. $A$. This is the purpose of the following lemma.

Recall that the notation ``$a_{(i),\cdot} G \in L^2$'' has been introduce above the statement of Theorem \ref{thm:main}.

%
%
\begin{lem}
\label{C1}
Suppose that the masses of the neutrinos $m_{\nu_e}$, $m_{\nu_{\mu}}$, $m_{\nu_{\tau}}$ are positive and consider the Hamiltonian \eqref{total_hamiltonian_g} with $H_I$ given by \eqref{interaction_term}. Assume that
\begin{equation}
G \in L^2, \quad a_{(i),\cdot} G \in L^2, \quad i = 1,2,3. \label{eq:hypclassC1}
\end{equation}
Then $H$ is of class $\mathrm{C}^{1}(A)$. If in addition
\begin{equation}
a_{(i),\cdot} a_{(i'),\cdot} G \in L^2, \quad i,i' = 1,2,3, \label{eq:hypclassC2}
\end{equation}
then $H$ is of class $\mathrm{C}^2( A )$.
\end{lem}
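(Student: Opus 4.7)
The strategy is the classical Mourre-theoretic verification of regularity: exploit the fact that the relevant commutators, although computed formally on a dense subspace, can be recognized as the same type of operators (a second-quantized multiplication operator, and interaction Hamiltonians with modified kernels), to which the bounds of Section \ref{subsec:estimates} apply.

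\textbf{Step 1 (commutator as an $H$-bounded operator).} Starting from \eqref{commut_H0}--\eqref{commut_41}, observe that for each $i,l$ the function $\nabla\omega^{(i)}_l(p) = p/\sqrt{p^2+m^2}$ is bounded by $1$, so $\mathrm{d}\Gamma(|\nabla\omega|^2)$ is bounded by $N$. Since all particles are massive, $N$ is $H_0$-bounded and hence (by Theorem \ref{Th2M}) $H$-bounded. For the interaction part, each term on the right-hand side of \eqref{commut_11}--\eqref{commut_41} has exactly the form of an interaction Hamiltonian $H_I^{(j)}(\tilde G)$ with kernels $\tilde G = \mathrm{i} a_i G$. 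By the hypothesis $a_{(i),\cdot} G \in L^2$, all these kernels are square integrable, so Lemma \ref{lemSCW_1} yields
\begin{equation*}
\bigl\| [H_I(G),\mathrm{i}A](N_{\mathrm{lept}}+N_W+1)^{-1}\bigr\| \lesssim \sum_{i=1}^{14}\|a_i G\|_2 < \infty.
\end{equation*}
Combining with the bound on $[H_0,\mathrm{i}A]$, the form $[H,\mathrm{i}A]$, initially defined on $\mathscr{D}(A)\cap\mathscr{D}(H)$, extends to an operator which is $H$-bounded.

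\textbf{Step 2 (invariance condition / $\mathrm{C}^1(A)$).} To upgrade the $H$-boundedness of the commutator into the assertion $H\in\mathrm{C}^1(A)$ in the sense of \cite{AmBoGe96_01}, I would verify the standard invariance criterion, namely the existence of a core $\mathcal{C}\subset\mathscr{D}(H)\cap\mathscr{D}(A)$ for $H$ which is stable under $e^{\mathrm{i}sA}$. A convenient choice is the finite-particle subspace built from Schwartz one-particle wave functions (tensored appropriately over the $14$ Fock factors). This subspace is contained in $\mathscr{D}(H_0)\cap\mathscr{D}(A^k)$ for every $k$, is left invariant by the unitary group $\Gamma(e^{\mathrm{i}sa})$ generated by $A$ (since each one-particle operator $a_i$ preserves the Schwartz class), and it is a core for $H_0$; because $H_I(G)$ is $H_0$-bounded with relative bound zero (Theorem \ref{Th2M}), it is also a core for $H$. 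Applying \cite[Theorem 6.3.4]{AmBoGe96_01} (or equivalently \cite[Lemma 2]{GeGeMo04_01}) yields $H\in\mathrm{C}^1(A)$.

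\textbf{Step 3 ($\mathrm{C}^2(A)$).} To prove the second regularity statement, I would iterate Step 1: the first commutator $[H,\mathrm{i}A]$ is a sum of $\mathrm{d}\Gamma(|\nabla\omega|^2)$ and interaction Hamiltonians with kernels $a_iG$. Commuting once more with $\mathrm{i}A$, the first summand produces $\mathrm{d}\Gamma(a(|\nabla\omega|^2))$, which is controlled by $N$ since the classical symbols involved are bounded on $\mathbb{R}^3$; the second summand, by the same algebraic identities \eqref{commut_11}--\eqref{commut_41} applied with $G$ replaced by $a_iG$, yields a sum of interaction Hamiltonians with kernels $a_{i'}a_iG$, which are $L^2$ by the additional hypothesis \eqref{eq:hypclassC2}. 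Lemma \ref{lemSCW_1} again gives $H$-boundedness, and the core invariance argument of Step 2 (the Schwartz finite-particle space is contained in $\mathscr{D}(A^2)$) applies unchanged, giving $H\in\mathrm{C}^2(A)$.

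\textbf{Main obstacle.} The algebraic computation of the commutators and the boundedness step are mechanical given \eqref{commut_H0}--\eqref{commut_41} and Lemma \ref{lemSCW_1}. The only delicate point is the invariance of the core under $e^{\mathrm{i}sA}$: one must check that the one-particle unitary $e^{\mathrm{i}sa_{(i),l}}$ preserves a convenient class (e.g.\ Schwartz functions), which amounts to showing that $a_{(i),l}$ has a dense set of $\mathrm{C}^\infty$ vectors invariant under its flow. This follows from the fact that $a_{(i),l}$, being a first-order symmetric differential operator with smooth coefficients having derivatives of polynomial growth, generates a flow of diffeomorphisms on $\mathbb{R}^3$ acting on Schwartz functions.
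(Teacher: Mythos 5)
Your Steps 1 and 3 coincide with the paper's argument: the paper also computes $[H_0,\mathrm{i}A]=\mathrm{d}\Gamma(|\nabla\omega|^2)$, notes it is $N$-bounded hence $H_0$-bounded in the massive case, identifies $[H_I(G),\mathrm{i}A]$ via \eqref{commut_11}--\eqref{commut_41} as interaction Hamiltonians with kernels $a_iG$, and applies Lemma \ref{lemSCW_1}; the second commutator is treated the same way under \eqref{eq:hypclassC2}. The divergence is in your Step 2, and there you have a genuine (though repairable) gap. The criterion you invoke -- ``a core for $H$ contained in $\mathscr{D}(H)\cap\mathscr{D}(A)$ and stable under $e^{\mathrm{i}sA}$, plus an $H$-bounded commutator on it, implies $H\in\mathrm{C}^1(A)$'' -- is not what the cited results say: the domain-invariance theorem in \cite{AmBoGe96_01} requires $e^{\mathrm{i}sA}\mathscr{D}(H)\subset\mathscr{D}(H)$, i.e.\ invariance of the \emph{full} operator domain, and invariance of a mere core together with a commutator bound on a dense set is in general not sufficient for the $\mathrm{C}^1(A)$ class (this is exactly the kind of pitfall behind the known failures of the virial theorem). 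The paper avoids this by checking directly that $e^{-\mathrm{i}sA}\mathscr{D}(H)\subset\mathscr{D}(H)$, which is easy here because $\mathscr{D}(H)=\mathscr{D}(H_0)$ (Theorem \ref{Th2M}) and, all masses being positive, $|\nabla\omega^{(i)}_l|\le 1$ gives $\omega\circ\varphi_s\le\omega+|s|$ for the one-particle flows, so $\Gamma(e^{\mathrm{i}sa})$ preserves $\mathscr{D}(H_0)$ (this is the content of the reference to \cite{FrGrSi08_01}).

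Your route can be salvaged, but not by citation alone: from the invariance of your Schwartz finite-particle core $\mathcal{C}$ and the \emph{operator} bound $\|[H,\mathrm{i}A]u\|\le a\|Hu\|+b\|u\|$ established in Step 1, one writes, for $u\in\mathcal{C}$, the weak identity $e^{-\mathrm{i}tA}He^{\mathrm{i}tA}u=Hu+\int_0^t e^{-\mathrm{i}sA}[H,\mathrm{i}A]e^{\mathrm{i}sA}u\,ds$ (this uses that $\mathcal{C}$ is also invariant under $A$ and that $H\mathcal{C}\subset\mathscr{D}(A)$, which in turn uses $a_{(i),\cdot}G\in L^2$), applies Gr\"onwall to get $\|He^{\mathrm{i}tA}u\|\le e^{a|t|}\bigl(\|Hu\|+\mathrm{C}|t|\,\|u\|\bigr)$, and then uses that $\mathcal{C}$ is a core and $H$ is closed to conclude $e^{\mathrm{i}tA}\mathscr{D}(H)\subset\mathscr{D}(H)$; only then do the abstract theorems apply. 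Since in the end you must anyway reach full domain invariance, the paper's direct verification on $\mathscr{D}(H_0)$ is the shorter path, and your ``main obstacle'' (the flow of $a_{(i),l}$ preserving Schwartz functions) becomes unnecessary.
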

\begin{proof}
Proceeding as in \cite[Section 4]{FrGrSi08_01}, using that $\mathscr{D}( H ) = \mathscr{D}( H_0 )$, it is not difficult to verify that, for all $s \in \mathbb{R}$, $e^{-\mathrm{i}sA}\mathscr{D}(H)\subset\mathscr{D}(H)$. In order to prove that $H$ is of class $\mathrm{C}^1( A )$ if \eqref{eq:hypclassC1} holds, it then suffices to verify that $[ H , \mathrm{i} A ]$ extends to an $H$-bounded operator. From \eqref{commut_H0} and the fact that $|\nabla \omega|^2$ is bounded, it follows that $[H_0,\mathrm{i}A]$ is relatively $N$-bounded, and therefore relatively $H_0$-bounded since the masses of all the particles are positive. Moreover, it follows from Lemma \ref{lemSCW_1} and \eqref{commut_HI}--\eqref{commut_41} that $[H_I(G),\mathrm{i}A]$ is also $H_0$-bounded under assumption \eqref{eq:hypclassC1}.

Likewise, to prove that $H$ is of class $\mathrm{C}^2( A )$ if assumption \eqref{eq:hypclassC2} holds, it suffices to verify that the second commutator $[ [ H , \mathrm{i} A ] , \mathrm{i} A ]$ extends to a relatively $H$-bounded operator. The result then follows from computing $[ [ H , \mathrm{i} A ] , \mathrm{i} A ]$ in the same way as in \eqref{commut_H0}--\eqref{commut_41} and using the same arguments as before. Details are left to the reader.
\end{proof}
The set of thresholds is defined by
\begin{equation}\label{def_threshold}
\tau = \sigma_{\mathrm{pp}}(H) + \Big \{ \sum^{14}_{i=1} m_i  n_i, \, n_i \in \mathbb{N}, \, (n_1,\dots,n_{14}) \neq (0,\dots,0) \Big \},
\end{equation}
where $m_i$ denotes the mass of the particle $i$ (i.e. $m_1=m_2=m_W$, $m_3=m_4=m_e$, $m_5=m_6=m_{\nu_e}$, \emph{etc}).
\begin{Th}\label{MourreI}
Suppose that the masses of the neutrinos $m_{\nu_e}$, $m_{\nu_{\mu}}$, $m_{\nu_{\tau}}$ are positive and consider the Hamiltonian \eqref{total_hamiltonian_g} with $H_I$ given by \eqref{interaction_term}. Assume that \eqref{eq:hypclassC1} holds. Let $\lambda \in \mathbb{R}\setminus \tau$. There exists $\varepsilon>0$, $\mathrm{c}_0>0$ and a compact operator $K$ such that 
\begin{equation}
\mathds{1}_{[\lambda-\varepsilon,\lambda+\varepsilon]}(H) [H,\mathrm{i}A] \mathds{1}_{[\lambda-\varepsilon,\lambda+\varepsilon]}(H) \geq \mathrm{c}_0 \mathds{1}_{[\lambda-\varepsilon,\lambda+\varepsilon]}(H)+K. \label{eq:Mourre_estimate_1}
\end{equation}
In particular, for all interval $[\lambda_1,\lambda_2]$ such that $[\lambda_1,\lambda_2]\cap \tau = \emptyset$, $H$ has at most finitely many eigenvalues with finite multiplicities in $[\lambda_1,\lambda_2]$ and, as a consequence, $\sigma_{\mathrm{pp}}(H)$ can accumulate only at $\tau$, which is a countable set.

If in addition \eqref{eq:hypclassC2} holds,  then $\sigma_{\mathrm{sc}}(H)=\emptyset$.
\end{Th}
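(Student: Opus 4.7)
By Lemma \ref{C1}, assumption \eqref{eq:hypclassC1} gives $H\in \mathrm{C}^1(A)$ and \eqref{eq:hypclassC2} gives $H\in \mathrm{C}^2(A)$. Once the Mourre estimate \eqref{eq:Mourre_estimate_1} is proved, the local finiteness of eigenvalues in any compact subset of $\mathbb{R}\setminus\tau$ follows from the virial theorem, and since $\tau$ is a countable closed set, $\sigma_{\mathrm{pp}}(H)$ can accumulate only on $\tau$; under \eqref{eq:hypclassC2}, the absence of singular continuous spectrum is the standard output of Mourre theory in the $\mathrm{C}^2(A)$ setting \cite{AmBoGe96_01}. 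The substantive task is therefore to establish \eqref{eq:Mourre_estimate_1}, and the plan is to follow the Derezi\'nski--G\'erard strategy of \cite{DeGe99_01,Am04_01} adapted to the present multi-species boson--fermion setting.

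Decompose $[H,\mathrm{i}A] = \mathrm{d}\Gamma(|\nabla\omega|^2) + [H_I,\mathrm{i}A]$, where $[H_I,\mathrm{i}A]$ is given by \eqref{commut_11}--\eqref{commut_41} as a sum of interaction-type terms with kernels $a_{i,\cdot}G$, all controlled by Lemma \ref{lemSCW_1} under \eqref{eq:hypclassC1}. The free piece $\mathrm{d}\Gamma(|\nabla\omega|^2)$ is non-negative but degenerates at zero momentum, so strict positivity must be extracted through energy and spatial localization. For this I would introduce $A^{\mathrm{ext}} = A\otimes\mathds{1}+\mathds{1}\otimes A$ and use the partition of unity $\check\Gamma(j^R)$ from Section \ref{subsec:estimates}. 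Using Lemma \ref{numberestimates3} (applied to $G$ and to $a_{i,\cdot}G$) and a computation analogous to the one leading to \eqref{3rd_estim}, one gets, for $\chi\in \mathrm{C}_0^\infty(\mathbb{R})$ supported in $(\lambda-\varepsilon,\lambda+\varepsilon)$,
\begin{equation*}
\chi(H)[H,\mathrm{i}A]\chi(H) = \chi(H)\,\check\Gamma(j^R)^*\bigl([H,\mathrm{i}A]\otimes\mathds{1} + \mathds{1}\otimes\mathrm{d}\Gamma(|\nabla\omega|^2)\bigr)\check\Gamma(j^R)\,\chi(H) + o_R(1).
\end{equation*}
Splitting the identity on $\mathscr{H}^{\mathrm{ext}}$ as $\mathds{1}_{\{N_\infty=0\}} + \mathds{1}_{\{N_\infty\ge 1\}}$, the first sector contributes, after sandwiching by $\chi(H)\check\Gamma(j^R)^*$ and its adjoint, an operator of the form $\check\Gamma(j^R)^*(\mathds{1}\otimes\Pi_\Omega)(\cdots)\chi(H)$, which is compact by the same argument used in the proof of Theorem \ref{Thess2}; this will feed into the remainder $K$ in \eqref{eq:Mourre_estimate_1}.

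The Mourre estimate itself is then obtained by induction on an energy parameter. The base case $\lambda<E+m_\nu$ uses that, on $(-\infty,E+m_\nu)$, $\sigma(H)$ consists of finitely many eigenvalues (Theorem \ref{Thess2}); for $\lambda\notin\tau$ below $E+m_\nu$ one can shrink $\varepsilon$ so that $\mathds{1}_{[\lambda-\varepsilon,\lambda+\varepsilon]}(H)$ is either $0$ or a finite-rank eigenprojection, which is absorbed into $K$. For the inductive step, on the sector $N_\infty\ge 1$ with extended energy in $[\lambda-\varepsilon,\lambda+\varepsilon]$, the operator $H\otimes\mathds{1}$ has spectral support in $[E,\lambda-m_\nu+\mathcal{O}(\varepsilon)]$, where by induction \eqref{eq:Mourre_estimate_1} holds; combined with the non-negative and, by energy localization of the second factor, effectively strictly positive contribution of $\mathds{1}\otimes\mathrm{d}\Gamma(|\nabla\omega|^2)$, this yields the required lower bound at level $\lambda$. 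The main obstacle is the careful bookkeeping of error terms: one must verify that the commutator errors between $\check\Gamma(j^R)$ and $[H,\mathrm{i}A]$ vanish as $R\to\infty$ modulo the energy cutoff, and that the strict positivity of $|\nabla\omega|^2$ on states localized away from $p=0$ (which is what the avoidance of $\tau$ provides at each inductive level) propagates through the surgery correctly so that the resulting Mourre constant $\mathrm{c}_0$ remains bounded below by a positive quantity depending only on $\mathrm{dist}(\lambda,\tau)$.
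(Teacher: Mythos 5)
Your overall strategy is the one the paper itself follows: Theorem \ref{MourreI} is proved by adapting the detailed induction argument written out for the massless case (proof of Theorem \ref{MourreI_3}), namely conjugation of the localized commutator by the partition of unity $\check{\Gamma}(j^R)$, compactness of $\check{\Gamma}(j^R)^*(\mathds{1}\otimes\Pi_\Omega)\check{\Gamma}(j^R)\chi(H)$ to produce the remainder $K$ from the $N_\infty=0$ sector, and an induction over energy intervals between thresholds, with the interaction commutator absorbed into the quantity to which the induction hypothesis is applied (so no smallness of $g$ is needed, as you correctly arrange). The consequences on eigenvalues and on $\sigma_{\mathrm{sc}}(H)$ indeed follow from Lemma \ref{C1} and the abstract Mourre theory, as in the paper, and your treatment of the base case below $E+m_\nu$ via the discreteness of the spectrum there is fine.

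The genuine gap is in your inductive step. On the sector $N_\infty\ge 1$ you apply ``by induction \eqref{eq:Mourre_estimate_1}'' to $H\otimes\mathds{1}$, but \eqref{eq:Mourre_estimate_1} cannot be used in that form for two reasons. First, it carries a compact remainder $K$, and $K\otimes\mathds{1}$ is not compact on $\mathscr{H}^{\mathrm{ext}}$, so it cannot be absorbed at the next level. Second, on that sector the spectral parameter of the first factor is not a fixed number but the operator $\lambda-\mathds{1}\otimes H_0$, which ranges over the whole interval $[E,\lambda-m_\nu+\mathcal{O}(\varepsilon)]$; what is needed is a lower bound for the full commutator that is free of compact remainder and uniform in the spectral parameter over compact energy intervals, with a constant quantified by the threshold-distance function (the paper's $d$, $\tilde d$, $d^{\kappa}$, $\tilde d^{\kappa}$ — note the constant is governed by $d(\lambda)$, not merely $\mathrm{dist}(\lambda,\tau)$). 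Passing from \eqref{eq:Mourre_estimate_1} to that uniform estimate is itself a substantive part of the scheme: it uses the virial theorem and the finiteness of $\sigma_{\mathrm{pp}}$ on each interval, so these ``consequences'' cannot be postponed to the end but must be interleaved with the induction, exactly as in the paper's chain of statements $H_1(n)$, $H_2(n)$, $H_3(n)$, $S_1(n)$, $S_2(n)$ and the implications \eqref{Statement4}--\eqref{Statement3} (in particular $S_2(n-1)\Rightarrow S_1(n)$ is needed to know $\tau$ is closed and countable on the next interval, so that $d(\lambda)=\sup_{\kappa>0}d^{\kappa}(\lambda)$ away from $\tau$). With these intermediate statements inserted, your plan coincides with the paper's proof; without them, the inductive step as written does not deliver the claimed lower bound.
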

The proof of \eqref{eq:Mourre_estimate_1} uses arguments developed in \cite{DeGe99_01} and \cite{Am04_01} (see also \cite{DeGe97_01}). In Theorem \ref{MourreII_1} below, we will give a complete proof of an analogous Mourre estimate in the more difficult case where neutrinos are supposed to be massless. It is not difficult to adapt the proof of Theorem \ref{MourreII_1} to obtain \eqref{eq:Mourre_estimate_1}, under the hypotheses of Theorem \ref{MourreI}. We do not give the details. The fact that \eqref{eq:Mourre_estimate_1} implies that $H$ has at most finitely many eigenvalues with finite multiplicities in any compact interval disjoint from $\tau$ is a consequence of Lemma \ref{C1} together with the abstract results of Mourre's theory recalled at the beginning of this section. The same holds for the absence of singular continuous spectrum assuming \eqref{eq:hypclassC2}.

\subsubsection{Spectral analysis for small coupling constant and regularized kernels}\label{subsubsec:massive_smallg}

In this section, we improve the results of Theorem \ref{MourreI} by imposing stronger conditions on the kernels and treating the coupling constant $g$ as a small parameter. The main idea consists in considering a different, non-self-adjoint conjugate operator, in order to obtain a global Mourre estimate without compact remainder. Extensions of the original  Mourre's theory \cite{Mo80_01} to settings with non-self-adjoint conjugate operators have been considered by different authors (see, in particular, \cite{HuSp95_01,AmBoGe96_01,GeGeMo04_01,FaMoSk11_01,BaFaGu16_01}). Yet a further extension, sometimes called singular Mourre's theory, concerns the case where the commutator of the Hamiltonian with the conjugate operator is not relatively bounded w.r.t. the Hamiltonian itself (see \cite{GeGeMo04_01,GeGeMo04_02,FaMoSk11_01}). Singular Mourre's theory will not be used in this section, but it will turn out to be an important tool in order to treat the situation where neutrinos are supposed to be massless, see Section \ref{subsec:spectral_massless}.

Before stating the results, we briefly recall the definitions and main elements that will be used in the context of Mourre's theory with a non-self-adjoint conjugate operator. For more details, we refer the reader to \cite[Appendix B]{BaFaGu16_01} for a short presentation, and to \cite{GeGeMo04_01} for a detailed and complete theory.

Let $H$ be a self-adjoint operator on a Hilbert space $\mathscr{H}$ and let $A$ be a closed and maximal symmetric operator on $\mathscr{H}$. Assuming that $\mathrm{dim}( \mathrm{Ker}( A^* - \mathrm{i} ) )= 0$, it is known that $A$ generates a semigroup of isometries $\{ W_t \}_{ t \ge 0 }$. Let $\mathscr{G} := \mathscr{D}( | H |^{\frac{1}{2}} )$ be equipped with the norm defined by $\| \phi \|^2_{ \mathscr{G} } :=  \| |H|^{\frac{1}{2}} \phi \|^2 + \| \phi \|^2$, for all $\phi \in \mathscr{G}$, and let $\| \phi \|^2_{ \mathscr{G}^* } := \| ( |H| + 1 )^{-\frac{1}{2}} \phi \big \|^2$. Observe that the dual space $\mathscr{G}^*$ of $\mathscr{G}$ identifies with the completion of $\mathscr{H}$ with respect to the norm $\| \cdot \|_{\mathscr{G}^*}$. In particular, $H$ identifies with an element of $\mathscr{L} ( \mathscr{G} ; \mathscr{G}^* )$. Suppose that, for all $t > 0$, $W_t$ and $W_t^*$ preserve $\mathscr{G}$ and that, for all $\phi \in \mathscr{G}$,
\begin{equation*}
\sup_{0<t<1} \| W_t \phi \|_\mathscr{G} < \infty, \quad \sup_{0<t<1} \| W_t^* \phi \|_\mathscr{G} < \infty.
\end{equation*}
This condition implies, in particular, that the restriction of $W_t$ to $\mathscr{G}$ is a $\mathrm{C}_0$-semigroup and that $W_t$ extends to a $\mathrm{C}_0$-semigroup on $\mathscr{G}^*$ which is denoted by the same symbol. One says that $H$ is of class $\mathrm{C}^1( A ; \mathcal{G} ; \mathcal{G}^* )$ if there exists a positive constant $\mathrm{c}$ such that, for all $0 \le t \le 1$,
\begin{equation}
\| W_{t} H - H W_{t} \|_{ \mathscr{L} ( \mathscr{G} ; \mathscr{G}^* ) } \le \mathrm{c} t . \label{def_C1GG*}
\end{equation}
In this case there is an operator $H' \in \mathscr{L} ( \mathscr{G} ; \mathscr{G}^* )$ such that, for all $\phi \in \mathscr{D}( H )$,
\begin{equation}\label{defH'}
\lim_{ t \to 0^+ } \frac{1}{t} \big ( \langle \phi , W_t H \phi \rangle - \langle H \varphi , W_t \phi \rangle \big ) = \langle \phi , H' \phi \rangle ,
\end{equation}
and one can verify that, in a suitable sense, $H'$ identifies with the quadratic form $[ H , \mathrm{i} A ]$. One says that $H \in \mathrm{C}^2( A ; \mathscr{G} ; \mathscr{G}^* )$ if $H$ belongs to $\mathrm{C}^1( A ; \mathscr{G} ; \mathscr{G}^* )$ and $H'$ belongs to $\mathrm{C}^1( A ; \mathscr{G} ; \mathscr{G}^* )$. If $H$ is of class $\mathrm{C}^1( A ; \mathscr{G} ; \mathscr{G}^* )$, one says that $H$ satisfies a (strict) Mourre estimate on an open interval $I$ if there exist constants $\mathrm{c}_0>0$ and $\mathrm{c} \in \mathbb{R}$, such that, in the sense of quadratic forms on $\mathscr{D}( H )$,
\begin{equation*}
 H' \ge \mathrm{c}_0 \mathds{1} - \mathrm{c} ( \mathds{1} - \mathds{1}_I(H) ) ( \mathds{1} + H^2 )^{\frac12}.
\end{equation*}
As in the setting of Section \ref{subsubsec:Mourre1}, if $H$ is of class $\mathrm{C}^1( A ; \mathscr{G} ; \mathscr{G}^* )$ and satisfies a strict Mourre estimate on an interval $I$, then $H$ does not have eigenvalues in $I$. If $H$ is of class $\mathrm{C}^2( A ; \mathscr{G} ; \mathscr{G}^* )$, then a limiting absorption principle holds in any interval where $H$ satisfies a strict Mourre estimate. In particular $H$ does not have singular continuous spectrum in such an interval.

Recall that the operators $b_{(i),l}$, $i=1,2$, $l=1,2,3$, acting on $\mathfrak{h}_1$, and $b_{(3)}$ acting on $\mathfrak{h}_2$ have been defined in \eqref{eq:defb12}--\eqref{eq:defb3}.
In particular, the operators $b_{(i),l}$ with domains $\mathrm{C}_0^{\infty}(\mathbb{R}^3 \setminus \{ 0 \}) \times \{-\frac12,\frac12\}$ are symmetric, and likewise $b_{(3)}$ with domain $\mathrm{C}_0^{\infty}(\mathbb{R}^3 \setminus \{ 0 \}) \times \{-1,0,1 \}$ is symmetric. Their closures are denoted by the same symbols. Using the notations of Appendix \ref{sec:standard_def} as before, we set
\begin{align}\label{conjugate_2}
&B:=\mathrm{d}\Gamma(b) , \notag \\
& b:= (b_{(3)}, b_{(3)}, b_{(1),1}, b_{(1),1} , b_{(2),1} , b_{(2),1} , b_{(1),2} , b_{(1),2} , b_{(2),2} , b_{(2),2} , b_{(1),3} , b_{(1),3} , b_{(2),3} , b_{(2),3} ) ,
\end{align}
(hence $b_1 = b_{(3)}$, $b_2=b_{(3)}$, $b_3=b_{(1),1}$, \dots, $b_{14} = b_{(2),3}$.). Then $\mathrm{dim}( \mathrm{Ker}( B^* - \mathrm{i} ) )= 0$, $B$ generates a $\mathrm{C}_0$-semigroup of isometries $\{W_t\}_{t\ge0}$ and a direct computation gives
\begin{equation}
[H_0,\mathrm{i} B] = N,  \label{commut_H0_2}
\end{equation}
in the sense of quadratic forms. Moreover, the commutators $[H_I^{(j)}(G) , \mathrm{i}B]$, $j=1,\dots,4$ are given by \eqref{commut_11}--\eqref{commut_41} with $b_i$ instead of $a_i$.

A straightforward modification of \cite[Section 5.1]{BaFaGu16_01} shows that $W_t$ and $W^*_t$ preserve $\mathscr{D}(|H|^{\frac{1}{2}})$, and that for all $\phi \in \mathscr{D}(|H|^{\frac{1}{2}})$,
\begin{align}
\sup_{0<t<1} \|W_t \phi\|_{ \mathscr{D}(|H|^{\frac{1}{2}})} < \infty, \quad \sup_{0<t<1} \|W^*_t \phi\|_{\mathscr{D}(|H|^{\frac{1}{2}})} < \infty. \label{eq:bpreserve}
\end{align}
Before proving a Mourre estimate and deducing from it spectral properties of $H$, we must show, as in the previous section, that $H$ is regular enough with respect to the conjugate operator.

%
%
\begin{lem}
\label{lem:C2}
Suppose that the masses of the neutrinos $m_{\nu_e}$, $m_{\nu_{\mu}}$, $m_{\nu_{\tau}}$ are positive and consider the Hamiltonian \eqref{total_hamiltonian_g} with $H_I$ given by \eqref{interaction_term}. Assume that
\begin{equation}
G \in L^2, \quad b_{(i),\cdot} G \in L^2, \quad i = 1,2,3. \label{eq:hypclassC1_3}
\end{equation}
Then $H$ is of class $\mathrm{C}^{1}(B ; \mathscr{D}(|H|^{\frac{1}{2}}) ; \mathscr{D}(|H|^{\frac{1}{2}})^*)$. If in addition
\begin{equation}
b_{(i),\cdot} b_{(i'),\cdot} G \in L^2, \quad i,i' = 1,2,3, \label{eq:hypclassC2_4}
\end{equation}
then $H$ is of class $\mathrm{C}^2( B ; \mathscr{D}(|H|^{\frac{1}{2}}) ; \mathscr{D}(|H|^{\frac{1}{2}})^* )$.
\end{lem}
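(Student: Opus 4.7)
The plan is to follow the strategy of \cite[Section 5.1]{BaFaGu16_01}, adapted to the present context. First, the invariance bounds \eqref{eq:bpreserve} already place us in the abstract framework recalled above: the restrictions of $\{W_t\}_{t\ge 0}$ and $\{W_t^*\}_{t\ge 0}$ to $\mathscr{G} := \mathscr{D}(|H|^{\frac{1}{2}})$ are $\mathrm{C}_0$-semigroups, and they extend by duality to $\mathrm{C}_0$-semigroups on $\mathscr{G}^*$. It then remains to establish the Lipschitz bound \eqref{def_C1GG*} for $H$, and, for the $\mathrm{C}^2$ conclusion, the analogous bound for the first commutator.

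To obtain the $\mathrm{C}^1$ property, the key step is to compute $[H, \mathrm{i}B]$ as a quadratic form on the natural domain and check that it extends to a bounded form on $\mathscr{G}$. By \eqref{commut_H0_2}, $[H_0, \mathrm{i}B] = N$, and a direct calculation analogous to \eqref{commut_11}--\eqref{commut_41} (with $b_i$ replacing $a_i$) writes $[H_I(G), \mathrm{i}B]$ as a linear combination of interaction terms $\pm \mathrm{i} H_I^{(j)}(\mathrm{i}b_i G)$ whose kernels $b_i G$ are in $L^2$ by \eqref{eq:hypclassC1_3}. Lemma \ref{lemSCW_1} then yields
\begin{equation*}
\big\| H_I^{(j)}( \mathrm{i} b_i G ) (N_{\mathrm{lept}} + N_W + 1)^{-1} \big\| \lesssim \|b_i G\|_2.
\end{equation*}
In the massive case $m_\nu > 0$ one has the form bound $N \lesssim H_0 + \mathrm{c}$, and since $H_I$ is $H_0$-form bounded with relative bound zero (by Theorem \ref{Th2M} and its proof), $N_{\mathrm{lept}} + N_W + 1 \lesssim H + \mathrm{c}'$ as forms. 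Combining the two shows that $[H, \mathrm{i}B]$ extends to an element $H' \in \mathscr{L}(\mathscr{G}; \mathscr{G}^*)$ with an explicit formula.

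The passage from this bounded form to the Lipschitz estimate \eqref{def_C1GG*} will then proceed by a standard regularization argument: I would approximate $H$ by $H_\varepsilon := \chi_\varepsilon(H) H$ for cutoffs $\chi_\varepsilon \in \mathrm{C}_0^\infty(\mathbb{R})$ converging pointwise to $1$, justify the Duhamel identity
\begin{equation*}
W_t H_\varepsilon - H_\varepsilon W_t = \int_0^t W_{t-s} [H_\varepsilon, \mathrm{i}B] W_s \, ds
\end{equation*}
in $\mathscr{L}(\mathscr{G}; \mathscr{G}^*)$, and pass to the limit $\varepsilon \to 0$ using the uniform semigroup bounds on $\mathscr{G}$ and $\mathscr{G}^*$ together with the identification of $H'$ as the form limit of $[H_\varepsilon, \mathrm{i}B]$. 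For the $\mathrm{C}^2$ assertion under \eqref{eq:hypclassC2_4}, the same strategy applies to $H'$ in place of $H$: the second commutator $[[H, \mathrm{i}B], \mathrm{i}B]$ has vanishing contribution from $[N, \mathrm{i}B] = \mathrm{d}\Gamma([\mathds{1}, b]) = 0$, and the interaction part consists of terms with kernels $b_i b_{i'} G \in L^2$, to which the same form-boundedness argument applies.

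The main obstacle is the form-boundedness of $N$, and hence of the entire commutator $[H, \mathrm{i}B]$, by $H$ itself; this is precisely what requires the positivity assumption on the neutrino masses. The estimate $N \lesssim H_0 + \mathrm{c}$ breaks down for $m_\nu = 0$, so in the massless setting of Section \ref{subsec:spectral_massless} a different conjugate operator, together with the full machinery of singular Mourre theory, will be needed.
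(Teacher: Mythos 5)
Your core argument coincides with the paper's: using the invariance bounds \eqref{eq:bpreserve}, you reduce matters to showing that the quadratic form $[H,\mathrm{i}B]$ extends to an element of $\mathscr{L}(\mathscr{D}(|H|^{\frac12});\mathscr{D}(|H|^{\frac12})^*)$, and you establish this exactly as the paper does, via $[H_0,\mathrm{i}B]=N$, the expressions \eqref{commut_11}--\eqref{commut_41} with $b_i$ in place of $a_i$, Lemma \ref{lemSCW_1} applied to the kernels $b_iG$ (resp. $b_ib_{i'}G$ for the second commutator, with $[N,\mathrm{i}B]=0$), and the relative $H$-boundedness of $N$ available because all masses are positive. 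The one place where you diverge is the passage from this form-boundedness to the Lipschitz estimate \eqref{def_C1GG*}: the paper simply invokes the abstract criterion of \cite[Proposition 5.2]{GeGeMo04_01}, which states that, under \eqref{eq:bpreserve}, form-boundedness of the commutator is precisely equivalent to membership in $\mathrm{C}^1(B;\mathscr{G};\mathscr{G}^*)$, whereas you propose to re-derive this by a regularization $H_\varepsilon=\chi_\varepsilon(H)H$ and a Duhamel identity. As sketched, this step is the weak point of your write-up: to justify the Duhamel formula and pass to the limit you need uniform control of $[H_\varepsilon,\mathrm{i}B]$, hence of $[\chi_\varepsilon(H),\mathrm{i}B]$, and commutator expansions of functions of $H$ with $B$ are exactly the kind of regularity information one is trying to prove (and are further complicated by the fact that $B$ is only maximal symmetric, not self-adjoint). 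So either carry out that regularization carefully (essentially reproving the cited proposition) or, more economically, quote \cite[Proposition 5.2]{GeGeMo04_01} as the paper does; with that substitution your proof is complete and matches the paper's. Your closing remark about the failure of $N\lesssim H_0+\mathrm{c}$ when $m_\nu=0$ correctly identifies why the massless case is deferred to the singular Mourre framework.
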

\begin{proof}
Suppose that \eqref{eq:hypclassC1_3} holds. In order to verify that $H$ is of class $\mathrm{C}^{1}(B ; \mathscr{D}(|H|^{\frac{1}{2}}) ; \mathscr{D}(|H|^{\frac{1}{2}})^*)$, since \eqref{eq:bpreserve} holds, it suffices to prove (see \cite[Proposition 5.2]{GeGeMo04_01}) that the quadratic form $[H,B]$ defined on $\mathscr{D}( |H|^{\frac12} )\cap\mathscr{D}(B)$ by $\langle \phi, [H,B] \phi \rangle = \langle \phi , H B \phi \rangle - \langle B^* \phi , H \phi \rangle$ extends to an element of $\mathscr{L} ( \mathscr{D}(|H|^{\frac{1}{2}}) ; \mathscr{D}(|H|^{\frac{1}{2}})^* )$. By \eqref{commut_H0_2} and the fact that $N$ is relatively $H$-bounded, it is clear that $[H_0,B]$ extends to an element of $\mathscr{L} ( \mathscr{D}(|H|^{\frac{1}{2}}) ; \mathscr{D}(|H|^{\frac{1}{2}})^* )$. That $[H_I,B]$ also extends to an element of $\mathscr{L} ( \mathscr{D}(|H|^{\frac{1}{2}}) ; \mathscr{D}(|H|^{\frac{1}{2}})^* )$ follows from the expression of the commutator (given by \eqref{commut_11}--\eqref{commut_41} with $b_{i}$ instead of $a_{i}$, as mentioned above) together with Lemma \ref{lemSCW_1}, which can be applied under the hypothesis \eqref{eq:hypclassC1_3}.

To prove that $H$ is of class $\mathrm{C}^2( B ; \mathscr{D}(|H|^{\frac{1}{2}}) ; \mathscr{D}(|H|^{\frac{1}{2}})^* )$ under the further assumption \eqref{eq:hypclassC2_4}, it suffices to verify similarly that $[[H,B],B]$ extends to an element of $\mathscr{L} ( \mathscr{D}(|H|^{\frac{1}{2}}) ; \mathscr{D}(|H|^{\frac{1}{2}})^* )$. But we have that $[[H_0,B],B]=0$ and a similar computation as before shows that $[[H_I,B],B]$ extends to an element of $\mathscr{L} ( \mathscr{D}(|H|^{\frac{1}{2}}) ; \mathscr{D}(|H|^{\frac{1}{2}})^* )$ by Lemma \ref{lemSCW_1}.
\end{proof}

The next theorem establishes a global Mourre estimate for $H$, from which, using the regularity properties proven in the previous lemma, we can deduce the desired spectral properties of $H$.
\begin{Th}
\label{MourreII}
Consider the Hamiltonian \eqref{total_hamiltonian_g} with $H_I$ given by \eqref{interaction_term} and assume that \eqref{eq:hypclassC1_3} holds. There exist $g_0>0$, $\mathrm{c}>0$ and $\mathrm{d}>0$ such that, for all values of the masses of the neutrinos $m_{\nu_e}>0$, $m_{\nu_{\mu}}>0$, $m_{\nu_{\tau}}>0$,
\begin{equation}
\label{eq:Mourreunif_B2}
[H,\mathrm{i}B]\ge \mathrm{c}\mathds{1}-\mathrm{d} \Pi_\Omega ,
\end{equation}
where $\Pi_\Omega$ denotes the projection onto the vacuum in $\mathscr{H}$. In particular, $E = \inf \sigma ( H )$ is the only eigenvalue of $H$, and $E$ is non-degenerate. If, in addition, \eqref{eq:hypclassC2_4} holds,  then the spectrum of $H$ in $[E+m_\nu,\infty)$ is purely absolutely continuous.
\end{Th}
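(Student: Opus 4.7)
\medskip
\noindent
\emph{Proof plan.}
The plan is to combine the identity $[H_0, \mathrm{i}B] = N$ from \eqref{commut_H0_2} with the $N_\tau$-type estimate of Lemma \ref{lemSCW_1} applied to the commutator $[H_I, \mathrm{i}B]$. As noted right after \eqref{commut_H0_2}, the latter is itself an interaction of the same form as $H_I$ but with kernels $b_i G$; hence, under hypothesis \eqref{eq:hypclassC1_3}, Lemma \ref{lemSCW_1} gives
\begin{equation*}
\bigl\| [H_I, \mathrm{i}B] \, (N_{\mathrm{lept}} + N_W + 1)^{-1} \bigr\| \lesssim \|bG\|_2.
\end{equation*}
The decisive observation is that this bound does not involve $N_{\mathrm{neut}}$, which reflects the fermionic nature of the neutrinos (the operators $c_{l,\epsilon}^\sharp(h)$ are bounded) and is the mechanism producing constants uniform in $m_{\nu_e}, m_{\nu_\mu}, m_{\nu_\tau}$.

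To establish \eqref{eq:Mourreunif_B2}, I would decompose $\psi = \alpha \Omega + \psi_\perp$ with $\psi_\perp \perp \Omega$ and use three inputs: (a) on $\Omega^\perp$ we have $N \ge \mathds{1}$, hence $N_{\mathrm{lept}} + N_W + 1 \le 2 N$ as a quadratic form; (b) $\langle \Omega, [H, \mathrm{i}B] \Omega \rangle = 0$, since $B \Omega = B^* \Omega = 0$ forces both terms in the commutator to vanish at the vacuum; (c) the bound $\| [H_I, \mathrm{i}B]^* \Omega \| \lesssim \|bG\|_2$, which follows because $[H_I, \mathrm{i}B]^*$ is an interaction of the same structural type acting on the vacuum and $(N_{\mathrm{lept}} + N_W + 1) \Omega = \Omega$. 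A Cauchy--Schwarz inequality with a free parameter then partitions the cross term $2\,\mathrm{Re}\,(\bar\alpha\,g\langle \Omega, [H_I, \mathrm{i}B] \psi_\perp \rangle)$ between $|\alpha|^2$ and $\|\psi_\perp\|^2$, the latter being absorbed into a fraction of $\langle \psi_\perp, N \psi_\perp\rangle$ thanks to (a). This yields a quadratic-form inequality
\begin{equation*}
\langle \psi, [H, \mathrm{i}B] \psi \rangle \,\ge\, \mathrm{c}\,\|\psi\|^2 \,-\, \mathrm{d}\,\langle \psi, \Pi_\Omega \psi\rangle,
\end{equation*}
with $\mathrm{c} = \tfrac{1}{2}(1 - 2|g|\mathrm{C}\|bG\|_2)$ positive for $|g| \le g_0 := 1/(4\mathrm{C}\|bG\|_2)$ and $\mathrm{d} = \mathrm{c} + O(g^2)$; both constants are manifestly independent of the neutrino masses. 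This is exactly \eqref{eq:Mourreunif_B2}.

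To extract the spectral consequences, I invoke the non-self-adjoint Mourre framework of \cite{GeGeMo04_01}. By Lemma \ref{lem:C2}, $H$ is of class $\mathrm{C}^1(B; \mathscr{D}(|H|^{\frac12}); \mathscr{D}(|H|^{\frac12})^*)$, so the associated virial theorem yields $\langle \psi, [H, \mathrm{i}B] \psi \rangle = 0$ for every eigenvector $\psi$ of $H$. Combined with \eqref{eq:Mourreunif_B2}, this forces $|\langle \Omega, \psi\rangle|^2 \ge (\mathrm{c}/\mathrm{d})\,\|\psi\|^2$. For any orthonormal family $(\psi_k)$ of eigenvectors of $H$, Bessel's inequality then gives $\sum_k (\mathrm{c}/\mathrm{d}) \le \sum_k |\langle \Omega, \psi_k\rangle|^2 \le 1$, so the family has at most $\lfloor \mathrm{d}/\mathrm{c} \rfloor$ members. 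Since $\mathrm{d}/\mathrm{c} = 1 + O(g^2) < 2$ for $|g|$ small enough, the total dimension of $\mathscr{H}_{\mathrm{pp}}(H)$ is at most one; together with the existence of a ground state from Theorem \ref{Thess2}, this forces $\sigma_{\mathrm{pp}}(H) = \{E\}$ with $E$ simple. Finally, under the stronger hypothesis \eqref{eq:hypclassC2_4} Lemma \ref{lem:C2} upgrades the regularity to $\mathrm{C}^2(B; \mathscr{D}(|H|^{\frac12}); \mathscr{D}(|H|^{\frac12})^*)$, and the limiting absorption principle of \cite{GeGeMo04_01} applies on any closed interval disjoint from $\sigma_{\mathrm{pp}}(H) = \{E\}$, ruling out singular continuous spectrum there; combined with $\sigma_{\mathrm{ess}}(H) = [E + m_\nu, \infty)$, this gives the purely absolutely continuous spectrum on $[E + m_\nu, \infty)$. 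The main obstacle in the whole argument is obtaining the Mourre estimate uniformly in the neutrino masses: the naive approach of dominating $[H_I, \mathrm{i}B]$ by a fraction of $N = [H_0, \mathrm{i}B]$ would lose control on pure-neutrino sectors where $N$ is large but $N_{\mathrm{lept}} + N_W$ vanishes, so the Schur-type partition isolating the vacuum sector is essential.
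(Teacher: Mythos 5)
Your argument is correct and rests on the same ingredients as the paper's proof: the identity $[H_0,\mathrm{i}B]=N$, the $N_\tau$ bound of Lemma \ref{lemSCW_1} applied to $[H_I,\mathrm{i}B]$ (whose kernels are $b_iG$, square integrable under \eqref{eq:hypclassC1_3}), smallness of $g$, the regularity supplied by Lemma \ref{lem:C2}, the virial theorem, and the abstract results of \cite{GeGeMo04_01}. The difference is in execution. For \eqref{eq:Mourreunif_B2} the paper does not decompose along the vacuum: since $N_{\mathrm{lept}}+N_W\le N$, Lemma \ref{lemSCW_1} gives $\pm[H_I,\mathrm{i}B]\le \mathrm{c}_1 N+\mathrm{c}_2$ in the form sense with mass-independent constants, and then the single operator inequality $N\ge \mathds{1}-\Pi_\Omega$ yields $[H,\mathrm{i}B]\ge (1-(\mathrm{c}_1+\mathrm{c}_2)|g|)\mathds{1}-(1-\mathrm{c}_1|g|)\Pi_\Omega$ in two lines; so your closing claim that the ``naive'' domination of $[H_I,\mathrm{i}B]$ by a fraction of $N$ would lose control on pure-neutrino sectors is mistaken (on those sectors the bound degenerates to the harmless constant $\mathrm{c}_2$), and the vacuum-sector decomposition with Cauchy--Schwarz, while correct, is not needed. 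For the eigenvalue count the paper invokes \cite[Lemma 10]{HuSp95_02} together with $\mathrm{rank}\,\Pi_\Omega=1$, while you give a self-contained Bessel argument; note that yours uses the extra quantitative input $\mathrm{d}/\mathrm{c}<2$, which your constants (and the paper's, for which $\mathrm{d}/\mathrm{c}=1+\mathcal{O}(g)$) do provide for $|g|\le g_0$, so this is fine, just slightly less robust than counting by the rank of the remainder alone. Finally, for the absolutely continuous spectrum, both you and the paper pass from the global estimate \eqref{eq:Mourreunif_B2} to the localized strict Mourre estimate required in \cite{GeGeMo04_01} without comment; the implicit (standard) step is that for intervals $\Delta$ shrinking around any $\lambda\neq E$ one has $\|\mathds{1}_\Delta(H)\Omega\|\to 0$ because $\lambda\notin\sigma_{\mathrm{pp}}(H)$, so the rank-one remainder can be absorbed on such intervals; your level of detail here matches the paper's.
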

\begin{proof}
Recall that, in the sense of quadratic forms on $\mathscr{D}(B) \cap \mathscr{D}(H_0)$ we have that $[H_0,\mathrm{i}B]=N$ and that $[H_I,\mathrm{i}B]$ is relatively $N$-bounded by Lemma \ref{lemSCW_1}. Therefore there exists $\mathrm{c}_1>0$ and $\mathrm{c}_2>0$, which do not depend on the masses of the neutrinos, such that
\begin{equation*}
\langle \psi , [H_I, \mathrm{i} B] \psi \rangle \leq \mathrm{c}_1 \langle \psi , N \psi \rangle + \mathrm{c}_2 \|\psi\|^2.
\end{equation*}
This yields
\begin{align*}
[H, \mathrm{i}B]  = [H_0,\mathrm{i}B] + g [H_I, \mathrm{i}B] & \geq N - \mathrm{c}_1 g N - \mathrm{c}_2 g  \\
& \geq (1 - \mathrm{c}_1 g)( N + \Pi_{\Omega} - \Pi_{\Omega} ) - \mathrm{c}_2 g  \\
& \geq (1 - \mathrm{c}_1 g) \mathds{1} - (1 - \mathrm{c}_1 g)\Pi_{\Omega}  - \mathrm{c}_2 g   \\
& \geq (1 - (\mathrm{c}_1+\mathrm{c}_2) g ) \mathds{1} - (1 - \mathrm{c}_1 g)\Pi_{\Omega}  ,
\end{align*}
which proves \eqref{eq:Mourreunif_B2}.

The fact that \eqref{eq:Mourreunif_B2} implies that $H$ has at most one eigenvalue is a consequence of the virial theorem (see Lemma 10 of \cite{HuSp95_02}), which holds since $H \in \mathrm{C}^{1}(B ; \mathscr{D}(|H|^{\frac{1}{2}}) ; \mathscr{D}(|H|^{\frac{1}{2}})^*)$ by Lemma \ref{lem:C2}, together with the fact that $\mathrm{dim}( \mathrm{Ran} ( \Pi_\Omega ) ) = 1$. See, e.g., \cite[Lemma 10]{HuSp95_01}.

That the spectrum of $H$ in $[E+m_\nu,\infty)$ is purely absolutely continuous if \eqref{eq:hypclassC2_4} is satisfied is a consequence of the abstract results recalled at the beginning of this section together with the fact that $H \in \mathrm{C}^{2}(B ; \mathscr{D}(|H|^{\frac{1}{2}}) ; \mathscr{D}(|H|^{\frac{1}{2}})^*)$, by Lemma \ref{lem:C2}.
\end{proof}
\subsection{Massless neutrinos}\label{subsec:spectral_massless}

In this section, we suppose that
\begin{equation*}
m_{\nu_e} = m_{\nu_\mu} = m_{\nu_\tau} = 0.
\end{equation*}
As in the previous section, we first recall suitable assumptions implying existence of a ground state for $H$, next we study the structure of the essential spectrum using a suitable version of Mourre's theory.

\subsubsection{Existence of a ground state and location of the essential spectrum}

We recall here, without proof, a result due to \cite{BaGu09_01,AsBaFaGu11_01}.
\begin{Th}
\label{Thess}
Suppose that the masses of the neutrinos $m_{\nu_e}$, $m_{\nu_{\mu}}$, $m_{\nu_{\tau}}$ vanish and consider the Hamiltonian \eqref{total_hamiltonian_g} with $H_I$ given by \eqref{interaction_term}. Assume that $G \in L^2$ and that $|p_2|^{-1} G \in L^2$. Then, there exists $g_0 > 0$ such that, for all $|g| \le g_0$, $H$ has a unique ground state, i.e., $E=\inf \sigma( H )$ is a non-degenerate eigenvalue of $H$. Moreover,
\begin{equation*}
\sigma(H)=\sigma_{\mathrm{ac}}(H)=[E, \infty).
\end{equation*}
\end{Th}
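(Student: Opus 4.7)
The plan is to follow the standard three-step strategy for QFT Hamiltonians with massless modes: infrared regularization for the construction of the ground state, HVZ-type arguments for the essential spectrum, and Mourre theory for absolute continuity.

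First, I would introduce the infrared cut-off kernels $G^\sigma := \mathds{1}_{|p_2|\ge\sigma}G$ and the regularized Hamiltonian $H_\sigma = H_0 + g H_I(G^\sigma)$. Since the modes with $|p_2|<\sigma$ are decoupled, $H_\sigma$ fits the effectively-massive framework of Theorem \ref{MourreII} and admits, for $|g|\le g_0$ uniform in $\sigma$, a non-degenerate ground state $\Psi_\sigma$ with eigenvalue $E_\sigma$. A pull-through formula gives
\begin{equation*}
c_{l,\epsilon}(\xi_2)\Psi_\sigma = -g\,(H_\sigma - E_\sigma + |p_2|)^{-1}\,[c_{l,\epsilon}(\xi_2),\, H_I(G^\sigma)]\,\Psi_\sigma,
\end{equation*}
and the hypothesis $|p_2|^{-1}G\in L^2$ yields, after squaring and integrating in $\xi_2$, the uniform infrared bound $\langle\Psi_\sigma, N_{\mathrm{neut}}\Psi_\sigma\rangle \lesssim g^2$; analogous (easier) bounds hold for $N_{\mathrm{lept}}$ and $N_W$ thanks to the genuine mass gaps. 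Since $N\ge \mathds{1} - \Pi_\Omega$, this produces $\langle\Psi_\sigma,\Pi_\Omega\Psi_\sigma\rangle \ge 1 - \mathrm{C}g^2 > 0$ for small $|g|$. Extracting a weak limit $\Psi_{\sigma_n}\rightharpoonup \Psi$ along a subsequence $\sigma_n\to 0$, the lower bound on the vacuum component is preserved and ensures $\Psi\neq 0$; a standard argument then shows that $\Psi$ is a ground state of $H$ with energy $E=\lim E_{\sigma_n}$, and non-degeneracy will follow from a Perron-Frobenius type argument in an appropriate positivity-preserving representation.

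Next, for the essential spectrum, the inclusion $[E,\infty)\subset\sigma_{\mathrm{ess}}(H)$ is obtained by explicit Weyl sequences of the form $c^*_{l,+}(h_n)\Psi$, with $h_n$ an $L^2$-normalized neutrino wavepacket concentrated on the momentum shell $|p_2|=\lambda-E$ and of vanishing spread, for any $\lambda\ge E$. The converse inclusion will follow by adapting the partition-of-unity argument of Theorem \ref{Thess2}, substituting Lemma \ref{numberestimates13} for Lemma \ref{numberestimates3} so as to absorb the factor $(N_{\mathrm{neut}}+1)^{-1}$ appearing in the massless estimates, in the spirit of \cite{DeGe99_01,Am04_01}.

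Finally, the identity $\sigma(H)=\sigma_{\mathrm{ac}}(H)=[E,\infty)$ amounts to excluding singular spectrum in $(E,\infty)$. I would obtain this through a Mourre estimate with respect to the second quantization $A=\mathrm{d}\Gamma(a)$ of the dilation generators \eqref{eq:defa12}--\eqref{eq:defa3}, supplemented by the virial theorem (absence of embedded eigenvalues) and a limiting absorption principle (absence of singular continuous spectrum). Because of the massless neutrinos the standard Mourre machinery does not apply directly, so one must invoke the singular Mourre framework of \cite{GeGeMo04_01} --- precisely the path developed in Section \ref{subsec:spectral_massless} below. The main obstacle is the uniform infrared bound $\langle\Psi_\sigma,N_{\mathrm{neut}}\Psi_\sigma\rangle\lesssim g^2$ and its transfer to the weak limit $\Psi$, where the hypothesis $|p_2|^{-1}G\in L^2$ plays the decisive role, as in the classical infrared analyses of massless QFT models.
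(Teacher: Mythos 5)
The paper itself states Theorem \ref{Thess} without proof, citing \cite{BaGu09_01,AsBaFaGu11_01}, so I compare your plan with the standard argument those references use. Your first block (infrared cut-off $G^\sigma$, pull-through identity, the uniform bound $\langle\Psi_\sigma,N\Psi_\sigma\rangle\lesssim g^2$ from $|p_2|^{-1}G\in L^2$, lower bound on the vacuum overlap, weak limit) is indeed that construction and is sound in outline, but with two caveats. You cannot invoke Theorem \ref{MourreII} for $H_\sigma$: its hypothesis \eqref{eq:hypclassC1_3} requires $b_{(i),\cdot}G\in L^2$, a derivative condition on the kernels that does not follow from $G\in L^2$ and $|p_2|^{-1}G\in L^2$. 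You do not need it: after factoring out the decoupled soft neutrino modes, an HVZ argument as in Theorem \ref{Thess2} (which needs only $G\in L^2$) gives the reduced Hamiltonian a gap of size $\min(\sigma,m_e,m_\mu,m_\tau,m_W)$ above its ground state energy, hence a ground state $\Psi_\sigma$, with no smallness of $g$ nor non-degeneracy required at this stage. Similarly, the Perron--Frobenius route to uniqueness is not available here (no positivity-preserving structure is exhibited for this fermion--boson interaction); instead, apply your pull-through bound directly to an arbitrary normalized ground state of $H$ (with resolvent $(H-E+|p_2|)^{-1}$), which forces every ground state to have vacuum overlap close to $1$ for small $|g|$, and two orthonormal such vectors cannot exist.

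The genuine gap is the last step. First, the identity $\sigma(H)=\sigma_{\mathrm{ac}}(H)=[E,\infty)$ does \emph{not} amount to excluding singular spectrum: as the paper remarks right after the statement, Theorem \ref{Thess} does not exclude embedded eigenvalues or singular continuous spectrum; what has to be proved is that the absolutely continuous spectrum \emph{fills} $[E,\infty)$ (your Weyl sequences only give $[E,\infty)\subset\sigma(H)$, and the ``converse inclusion'' you mention is vacuous since $E=\inf\sigma(H)$). Second, the route you propose for this --- the singular Mourre theory of Section \ref{subsec:spectral_massless} --- cannot be run under the stated hypotheses: Theorems \ref{MourreII_1} and \ref{MourreI_3} need \eqref{eq:hypclassC1_3}, respectively \eqref{eq:hypclassC1_3_1_1_1}, and the second-order conditions \eqref{eq:hypclassC2_4}, \eqref{eq:hypclassC2_2_4} to exclude singular continuous spectrum; none of the conditions on $a_{(i),\cdot}G$, $b_{(i),\cdot}G$, $|p_3|^{-1}a_{(i),\cdot}G$ are implied by $G\in L^2$ and $|p_2|^{-1}G\in L^2$, and those Mourre results are exactly the \emph{additional} improvements of the present paper, not ingredients of Theorem \ref{Thess}. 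Under the weak hypotheses of the theorem one must instead produce absolutely continuous spectrum directly, in the spirit of H{\o}egh-Krohn: dress the ground state $\Psi$ with massless neutrino wave packets and use the intertwining with the free neutrino dynamics together with a Riemann--Lebesgue argument to show that the corresponding spectral measures are absolutely continuous in any prescribed subinterval of $(E,\infty)$. Without such a direct construction (or the stronger kernel regularity needed for the Mourre route), your proposal does not establish $\sigma_{\mathrm{ac}}(H)=[E,\infty)$.
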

Let us mention that the conclusion of Theorem \ref{Thess} does not exclude the presence of eigenvalues or singular continuous spectrum in the interval $[E,\infty)$. Proving absence of embedded eigenvalues and of singular continuous spectrum will be one of the main purposes of the next section. Theorem \ref{Thess} is proven in \cite{BaGu09_01,AsBaFaGu11_01} for $H = H_0 + g ( H_I^{(1)} + H_I^{(2)} )$, but the proof goes through without any substantial modification for the full Hamiltonian $H = H_0 + g ( H_I^{(1)} + H_I^{(2)} + H_I^{(3)} + H_I^{(4)} )$.

\subsubsection{Spectral analysis}

Now, we turn to the study of the essential spectrum $[E,\infty)$ of $H$. Let us mention that, in \cite{AsBaFaGu11_01}, it is proven that the spectrum of $H$ in $[E,m_e)$ is purely absolutely continuous (except for the ground state energy, which is an eigenvalue as recalled in Theorem \ref{Thess}). This result is proven using Mourre's theory, with a conjugate operator given as the generator of dilatations ``restricted to low-energies''. The idea of employing such a conjugate operator originated in \cite{FrGrSi08_01}, see also \cite{ChFaFrSi12_01}. The method of \cite{FrGrSi08_01} is particularly efficient in that it only requires that the kernels of the interaction Hamiltonian belong to the domain of the generator of dilatations, which does not require much regularity of the kernels in the low-energy regime. However, it is presently not known how to extend this approach to prove the absence of singular continuous spectrum in the whole interval $[E,\infty)$ and not only in $[E,m_e)$.

On the other hand, if one assumes that the kernels belong to the domain of the operator $b_{i}$ in \eqref{eq:defb12}--\eqref{eq:defb3}, one can proceed as in Section \ref{subsubsec:massive_smallg}. The conjugate operator is still given by \eqref{conjugate_2}. Note that $b_{(2),l} = a_{(2),l}$ when the masses of the neutrinos vanish. As in Section \ref{subsubsec:massive_smallg}, one verifies that $B$ is the generator of a $\mathrm{C}_0$-semigroup of isometries and that $[H_0,\mathrm{i}B]=N$. In particular, $[H_0,\mathrm{i}B]$ is not relatively $H$-bounded anymore and therefore one cannot apply the abstract setting recalled at the beginning of Section \ref{subsubsec:massive_smallg}. Nevertheless, one can use singular Mourre's theory developed in \cite{GeGeMo04_01} (see also \cite{FaMoSk11_01}). Before stating our results, we briefly explain the setting of this theory, focusing on the differences with that of Section \ref{subsubsec:massive_smallg}.

Let $\mathscr{H}$ be a complex Hilbert space and consider two self-adjoint operators $H$ and $M$ on $\mathscr{H}$, with $M \ge 0$, and a symmetric operator $R$ relatively $H$-bounded. We suppose that $H \in \mathrm{C}^1( M )$ and that $[ H , \mathrm{i} M ]$ is relatively $H$-bounded. Let $H' := M+R$ on $\mathscr{D}(M) \cap \mathscr{D}(H)$. We set $\mathscr{G} = \mathscr{D}( M^{\frac{1}{2}} )\cap \mathscr{D}( |H|^{\frac{1}{2}} )$, equipped with the norm defined by  $\| \phi \|^2_{ \mathscr{G} } := \| M^{\frac{1}{2}} \phi \|^2_{ \mathscr{H} } + \| |H|^{\frac{1}{2}} \phi \|^2_{ \mathscr{H} } + \| \phi \|^2_{ \mathscr{H} }$, for all $\phi \in \mathscr{G}$. As in Section \ref{subsubsec:massive_smallg}, letting $\| \phi \|_{ \mathscr{G}^* } := \| ( M + | H | + 1 )^{-\frac{1}{2}} \phi \|_{ \mathscr{H} }$,  one verifies that $\mathscr{G}^*$ identifies with the completion of $\mathscr{H}$ with respect to $\| \cdot \|_{ \mathscr{G}^* }$. Hence $H$ and $M$ identify with elements of $\mathscr{L}( \mathscr{G} ; \mathscr{G}^* )$. As in Section \ref{subsubsec:massive_smallg}, the conjugate operator $A$ is supposed to be closed, maximal symmetric and such that $\dim \mathrm{Ker} (A^* - \mathrm{i}) = 0$. The $\mathrm{C}_0$-semigroup of isometries generated by $A$ is denoted by $\{ W_t \}_{t \ge 0}$ and we suppose that, for all $\phi \in \mathscr{G}$,
\begin{equation*}
\sup_{0<t<1} \| W_t \phi \|_\mathscr{G} < \infty, \quad \sup_{0<t<1} \| W_t^* \phi \|_\mathscr{G} < \infty.
\end{equation*}
As before, $H$ is said to belong to $\mathrm{C}^1( A ; \mathscr{G} ; \mathscr{G}^* )$ if \eqref{def_C1GG*} holds. In this case there is an operator $H' \in \mathscr{L} ( \mathscr{G} ; \mathscr{G}^* )$ satisfying \eqref{defH'} and one requires that $H' = M + R$. 

Given these conditions, assuming that $H$ belongs to $\mathrm{C}^1( A ; \mathscr{G} ; \mathscr{G}^* )$ (or to $\mathrm{C}^2( A ; \mathscr{G} ; \mathscr{G}^* )$) and that the Mourre estimate holds, the conclusions concerning the spectrum of $H$ are the same as in Section \ref{subsubsec:massive_smallg}. Therefore, choosing $M = [ H_0 , \mathrm{i} B ] = N$ and $R = [ H_I , \mathrm{i} B ]$ and proceeding exactly as in the proofs of Lemma \ref{lem:C2} and Theorem \ref{MourreII}, we obtain the following result.
\begin{Th}
\label{MourreII_1}
Suppose that the masses of the neutrinos $m_{\nu_e}$, $m_{\nu_{\mu}}$, $m_{\nu_{\tau}}$ vanish and consider the Hamiltonian \eqref{total_hamiltonian_g} with $H_I$ given by \eqref{interaction_term}. Assume that \eqref{eq:hypclassC1_3} holds.  There exist $g_0>0$, $\mathrm{c}>0$ and $\mathrm{d}>0$ such that, 
\begin{equation*}
[H,\mathrm{i}B]\ge \mathrm{c}\mathds{1}-\mathrm{d} \Pi_\Omega ,
\end{equation*}
where $\Pi_\Omega$ denotes the projection onto the vacuum in $\mathscr{H}$. In particular, $H$ has at most one eigenvalue, which is non-degenerate. If, in addition, \eqref{eq:hypclassC2_4} holds,  then, except for the ground state energy $E$, the spectrum of $H$ in $[E,\infty)$ is purely absolutely continuous.
\end{Th}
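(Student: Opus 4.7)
The plan is to mimic the argument of Lemma \ref{lem:C2} and Theorem \ref{MourreII} step by step, but now within the singular Mourre framework of \cite{GeGeMo04_01} sketched just before the statement. The point is that the computation of the commutator $[H,\mathrm{i}B]$ and the rank-one Mourre bound depend only on Lemma \ref{lemSCW_1}, which is insensitive to whether the neutrino masses are positive; what really changes is that $M:=[H_0,\mathrm{i}B]=N$ is no longer relatively $H$-bounded, because $N_{\mathrm{neut}}$ is not controlled by $H$ when $m_\nu=0$. This forces us to work in the enlarged form domain
\begin{equation*}
\mathscr{G}:=\mathscr{D}(N^{\frac12})\cap \mathscr{D}(|H|^{\frac12}),\qquad \|\phi\|_\mathscr{G}^2=\|N^{\frac12}\phi\|^2+\||H|^{\frac12}\phi\|^2+\|\phi\|^2,
\end{equation*}
rather than just $\mathscr{D}(|H|^{\frac12})$.

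First I would set $M=N$, $R=g[H_I,\mathrm{i}B]$ and check the abstract hypotheses: $M\ge 0$, and $R$ is symmetric and relatively $N$-bounded by Lemma \ref{lemSCW_1} under assumption \eqref{eq:hypclassC1_3}, so that $H'=M+R$ makes sense as an element of $\mathscr{L}(\mathscr{G};\mathscr{G}^*)$. Next I would verify that the semigroup $\{W_t\}_{t\ge 0}$ generated by the maximal symmetric operator $B$ preserves $\mathscr{G}$ together with the uniform bound $\sup_{0<t<1}\|W_t\phi\|_\mathscr{G}+\|W_t^*\phi\|_\mathscr{G}<\infty$. Preservation of $\mathscr{D}(|H|^{\frac12})$ is obtained exactly as in \eqref{eq:bpreserve} via the argument of \cite[Section 5.1]{BaFaGu16_01}; preservation of $\mathscr{D}(N^{\frac12})$ is direct since $B=\mathrm{d}\Gamma(b)$ acts fibre-wise on $n$-particle sectors and hence commutes with the spectral decomposition of $N$. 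Once this is in place, $H\in \mathrm{C}^1(B;\mathscr{G};\mathscr{G}^*)$ follows from the identity $[H_0,\mathrm{i}B]=N\in\mathscr{L}(\mathscr{G};\mathscr{G}^*)$ and from Lemma \ref{lemSCW_1} applied to the commutator $[H_I,\mathrm{i}B]$, which under \eqref{eq:hypclassC1_3} has the form \eqref{commut_11}--\eqref{commut_41} with $a_i$ replaced by $b_i$ and is therefore $N$-bounded. Under the additional hypothesis \eqref{eq:hypclassC2_4}, the same computation applied to the iterated commutator (noting that $[[H_0,\mathrm{i}B],\mathrm{i}B]=0$ since $[N,B]=0$) gives $H\in \mathrm{C}^2(B;\mathscr{G};\mathscr{G}^*)$.

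Second, I would copy verbatim the chain of inequalities in the proof of Theorem \ref{MourreII}: writing $\langle\psi,[H_I,\mathrm{i}B]\psi\rangle\le \mathrm{c}_1\langle\psi,N\psi\rangle+\mathrm{c}_2\|\psi\|^2$, one has
\begin{equation*}
[H,\mathrm{i}B]=N+g[H_I,\mathrm{i}B]\ge (1-\mathrm{c}_1 g)N-\mathrm{c}_2 g\ge \bigl(1-(\mathrm{c}_1+\mathrm{c}_2)g\bigr)\mathds{1}-(1-\mathrm{c}_1 g)\Pi_\Omega,
\end{equation*}
using $N\ge \mathds{1}-\Pi_\Omega$. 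Choosing $g_0$ so small that $1-(\mathrm{c}_1+\mathrm{c}_2)g_0>0$ yields the required estimate $[H,\mathrm{i}B]\ge \mathrm{c}\mathds{1}-\mathrm{d}\Pi_\Omega$, uniformly in the (vanishing) neutrino masses since the constants $\mathrm{c}_1,\mathrm{c}_2$ come only from $\|G\|_2$ and $\|b_{(i),\cdot}G\|_2$.

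Finally, I would invoke the singular virial theorem of \cite{GeGeMo04_01} (applicable here because $H\in\mathrm{C}^1(B;\mathscr{G};\mathscr{G}^*)$): if $\psi$ is an eigenvector of $H$ then $\langle\psi,H'\psi\rangle=0$, and combined with the above estimate this forces $\langle\psi,\mathds{1}\psi\rangle\le (\mathrm{d}/\mathrm{c})\langle\psi,\Pi_\Omega\psi\rangle$, so all eigenvectors lie in the one-dimensional range of $\Pi_\Omega$; in particular $H$ has at most one eigenvalue and it is simple. Under \eqref{eq:hypclassC2_4}, the $\mathrm{C}^2$ regularity gives a limiting absorption principle on every compact interval disjoint from $\sigma_{\mathrm{pp}}(H)$ by the general theorem of \cite{GeGeMo04_01}, which rules out singular continuous spectrum and yields absolute continuity of the spectrum in $[E,\infty)\setminus\{E\}$. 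The one genuinely delicate point I expect is step (2): checking that the abstract framework of \cite{GeGeMo04_01} really applies with $M=N$ in the massless regime, i.e.\ that the semigroup $W_t$ interacts properly with the $N$-factor of $\|\cdot\|_\mathscr{G}$ and that the identification $H'=M+R$ as elements of $\mathscr{L}(\mathscr{G};\mathscr{G}^*)$ is valid; everything else is a routine transcription of the massive-case proof.
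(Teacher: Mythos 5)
Your proposal follows essentially the same route as the paper: the paper proves Theorem \ref{MourreII_1} precisely by placing itself in the singular Mourre framework of \cite{GeGeMo04_01} with $M=[H_0,\mathrm{i}B]=N$, $R=[H_I,\mathrm{i}B]$ and $\mathscr{G}=\mathscr{D}(N^{\frac12})\cap\mathscr{D}(|H|^{\frac12})$, and then repeating the arguments of Lemma \ref{lem:C2} (regularity via Lemma \ref{lemSCW_1} and the semigroup bounds) and Theorem \ref{MourreII} (the inequality $N\ge\mathds{1}-\Pi_\Omega$ plus relative $N$-boundedness of $[H_I,\mathrm{i}B]$ for small $|g|$), concluding with the virial theorem and the limiting absorption principle exactly as you do. Your sketch is a faithful and correct rendering of that argument, including the one genuinely new verification in the massless case (the interplay of the semigroup $W_t$ with the $N$-part of the $\mathscr{G}$-norm).
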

Theorem \ref{MourreII_1} provides a complete description of the spectrum of $H$ for small enough values of $g$ and under strong assumptions on the kernels $G$. However, in view of applications to scattering theory in Section \ref{sec:AC} -- more precisely, in order to prove the propagation estimates in Section \ref{sec:propag} that will be subsequently used in Section \ref{sec:AC} -- the conjugate operator $B$ chosen in Theorem \ref{MourreII_1} is too singular. The singularity here comes from the presence of massive particles. Indeed, for massive particles, the operators $b_i$ are strongly singular near the origin $p_i=0$ because of the factor $( p_i \cdot \nabla \omega^{(i)}_l (p_i ) )^{-1}$ in the definitions \eqref{eq:defb12}--\eqref{eq:defb3}.

For this reason, we need to prove a Mourre estimate with another conjugate operator, namely the operator $A$ defined in \eqref{eq:conjugate_1}. Notice that this operator is not self-adjoint when neutrinos are supposed to be massless, because in this case the operators $a_{(2),l}$ are not self-adjoint, only maximal symmetric. The domains of the operators $a_{(2),l}$ are explicitly given as follows: let $T:L^2(\mathbb{R}^3 \times \{-\frac12 , \frac12\}) \to L^2( \mathbb{R}_+ ) \otimes L^2( \mathrm{S}^2 ) \otimes \mathbb{C}^2$ be the unitary operator, going from cartesian coordinate to polar coordinate,  defined by $(Tu)(r,\theta,s) := ru(r\theta,s)$. Then
\begin{equation}\label{eq:domain_conjugate}
\mathscr{D}( a_{(2),l} ) = T^{-1} \big ( \mathbb{H}_0^1( \mathbb{R}_+ ) \otimes L^2( \mathrm{S}^2 ) \otimes \mathbb{C}^2 \big ) , \quad a_{(2),l} = T^{-1} ( \mathrm{i} \partial_r ) T ,
\end{equation}
where $\mathbb{H}_0^1( \mathbb{R}_+ )$ is the usual Sobolev space with Dirichlet boundary condition at $0$.

As in Section \ref{subsubsec:massive_smallg}, one verifies that $\mathrm{dim}( \mathrm{Ker}( A^* - \mathrm{i} ) )= 0$, and that $A$ generates a $\mathrm{C}_0$-semigroup of isometries $\{ \tilde{W}_t \}_{ t \ge 0 }$ such that $\tilde{W}_t$ and $\tilde{W}^*_t$ preserve $\mathscr{G} := \mathscr{D}(|H|^{\frac{1}{2}}) \cap \mathscr{D}( N_{\mathrm{neut}}^{\frac12} )$. Moreover, for all $\phi \in \mathscr{G}$,
\begin{align*}
\sup_{0<t<1} \| \tilde{W}_t \phi\|_{ \mathscr{G} } < \infty, \quad \sup_{0<t<1} \| \tilde{W}^*_t \phi\|_{\mathscr{G} } < \infty . 
\end{align*}
A direct computation gives
\begin{equation}
[H_0,\mathrm{i} A] = N_{\mathrm{neut}} + \sum_{l=1}^3 \mathrm{d} \Gamma( | \nabla \omega^{(1)}_l( p_1 ) |^2 ) + \mathrm{d} \Gamma( | \nabla \omega^{(3)}( p_3 ) |^2 ) ,  \label{commut_H0_8}
\end{equation}
in the sense of quadratic forms, where
\begin{align*}
& \mathrm{d} \Gamma( | \nabla \omega^{(1)}_l( p_1 ) |^2 ) = \sum_{\epsilon=\pm} \int | \nabla \omega^{(1)}_l( p_1 ) |^2 {b^*_{l,\epsilon}}(\xi_1){b_{l,\epsilon}}(\xi_1) d\xi_1 , \\
& \mathrm{d} \Gamma( | \nabla \omega^{(3)}( p_3 ) |^2 ) =  \sum_{\epsilon=\pm} \int | \nabla \omega^{(3)}( p_3 ) |^2 {a^*_{\epsilon}}(\xi_3){a_{\epsilon}}(\xi_3) d\xi_3 .
\end{align*}
Moreover, the commutators $[H_I^{(j)}(G) , \mathrm{i} A]$, $j=1,\dots,4$, are given by \eqref{commut_11}--\eqref{commut_41}. In particular, the commutator $[H,\mathrm{i}A]$ is not relatively $H$-bounded. For this reason, we work in the setting of singular Mourre's theory.  

The following lemma can be proven in the same way as Lemma \ref{lem:C2}.
\begin{lem}\label{lem:C8}
Suppose that the masses of the neutrinos $m_{\nu_e}$, $m_{\nu_{\mu}}$, $m_{\nu_{\tau}}$ vanish and consider the Hamiltonian \eqref{total_hamiltonian_g} with $H_I$ given by \eqref{interaction_term}. Assume that \eqref{eq:hypclassC1} holds. Then $H$ is of class $\mathrm{C}^{1}( A ; \mathscr{G} ; \mathscr{G}^* )$. If in addition \eqref{eq:hypclassC2} holds,  then $H$ is of class $\mathrm{C}^2( A ; \mathscr{G} ; \mathscr{G}^* )$.
\end{lem}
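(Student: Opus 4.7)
The strategy is to mimic the proof of Lemma \ref{lem:C2}, now working inside the singular Mourre framework recalled just above the statement. Accordingly, I first identify the data $(M,R)$ required by that framework. By \eqref{commut_H0_8} I set
\begin{equation*}
M := N_{\mathrm{neut}} + \sum_{l=1}^{3} \mathrm{d}\Gamma \bigl( | \nabla \omega^{(1)}_l(p_1) |^2 \bigr) + \mathrm{d}\Gamma \bigl( | \nabla \omega^{(3)}(p_3) |^2 \bigr), \qquad R := g\,[H_I,\mathrm{i}A] ,
\end{equation*}
where, formally, $R$ is given by \eqref{commut_11}--\eqref{commut_41} with all the $a_i G$ now lying in $L^2$ under assumption \eqref{eq:hypclassC1}. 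Then $M$ is self-adjoint and nonnegative, and Lemma \ref{lemSCW_1} applied to the kernels $a_i G$ shows that $R$ is symmetric and relatively $(N_{\mathrm{lept}}+N_W+1)$-bounded; since leptons and $W^{\pm}$ are massive, $N_{\mathrm{lept}}+N_W$ is relatively $H$-bounded, so $R$ is $H$-bounded as required. Moreover, a direct computation shows that $[H,M]$ is again a sum of $H_I$-type operators with $L^2$ kernels (times bounded multipliers) plus $H_I$-type operators coming from $[H_I,N_{\mathrm{neut}}]$ via \eqref{commut_1}--\eqref{commut_4}; Lemma \ref{lemSCW_1} again gives $H$-boundedness, hence $H \in \mathrm{C}^1(M)$. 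Finally, since the two non-$N_{\mathrm{neut}}$ terms in $M$ are dominated by $N_{\mathrm{lept}}+N_W$, hence by $|H|$, one has $\mathscr{D}(M^{\frac12})\cap \mathscr{D}(|H|^{\frac12})=\mathscr{D}(N_{\mathrm{neut}}^{\frac12})\cap \mathscr{D}(|H|^{\frac12})=\mathscr{G}$, matching the framework.

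To conclude $H \in \mathrm{C}^1(A;\mathscr{G};\mathscr{G}^*)$ I invoke the same criterion used in Lemma \ref{lem:C2} (Proposition 5.2 of \cite{GeGeMo04_01}): given the uniform bounds on $\tilde{W}_t,\tilde{W}_t^*$ in $\mathscr{G}$ stated above the lemma, it is enough to show that the quadratic form $[H,A]$ on $\mathscr{D}(A)\cap\mathscr{G}$ extends to an element of $\mathscr{L}(\mathscr{G};\mathscr{G}^*)$ and equals $M+R$. By \eqref{commut_H0_8} and \eqref{commut_11}--\eqref{commut_41} the identity $[H,\mathrm{i}A]=M+R$ holds as a form. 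The term $M$ satisfies $\langle \phi,M\phi\rangle \le \|M^{\frac12}\phi\|^2 \le \|\phi\|_{\mathscr{G}}^2$ directly from the definition of $\mathscr{G}$. For $R$, Lemma \ref{lemSCW_1} applied to each kernel $a_i G$ yields
\begin{equation*}
|\langle \phi, R \phi\rangle| \lesssim \|(N_{\mathrm{lept}}+N_W+1)^{\frac12}\phi\|^2 \lesssim \||H|^{\frac12}\phi\|^2 + \|\phi\|^2 \lesssim \|\phi\|_\mathscr{G}^2 ,
\end{equation*}
again using that $N_{\mathrm{lept}}+N_W$ is $H$-bounded. Polarization then produces the desired bounded extension $R \in \mathscr{L}(\mathscr{G};\mathscr{G}^*)$. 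This proves the first statement.

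For the second statement I repeat the procedure for the iterated commutator $[[H,\mathrm{i}A],\mathrm{i}A]$ under the additional assumption \eqref{eq:hypclassC2}. The contribution from $M$ is harmless: $N_{\mathrm{neut}}$ commutes with $A$, and $[\mathrm{d}\Gamma(|\nabla\omega|^2),\mathrm{i}A]$ produces a $\mathrm{d}\Gamma$ with a bounded symbol (since $|\nabla\omega_l^{(1)}|$, $|\nabla\omega^{(3)}|$ and their derivatives along the vector fields defining $a_{(i),l}$, $a_{(3)}$ are bounded on the relevant momentum support), hence is dominated by $N_{\mathrm{lept}}+N_W$ and thus belongs to $\mathscr{L}(\mathscr{G};\mathscr{G}^*)$. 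The contribution from $R$ is an $H_I$-type operator with kernel $a_i a_{i'} G$, which is $L^2$ by \eqref{eq:hypclassC2}; by Lemma \ref{lemSCW_1} it is $(N_{\mathrm{lept}}+N_W+1)$-bounded and the same quadratic-form estimate as above places it in $\mathscr{L}(\mathscr{G};\mathscr{G}^*)$. The main obstacle to watch is purely formal: because $A$ is only maximal symmetric, the identification of the abstract quadratic form $\lim_{t\to 0^+} t^{-1}(\langle \phi, \tilde{W}_t H\phi\rangle - \langle H\phi, \tilde{W}_t\phi\rangle)$ with $M+R$ cannot be made by adjointness as in the self-adjoint case and must instead be justified by approximating elements of $\mathscr{G}$ by vectors in $\mathscr{D}(A)\cap \mathscr{D}(H_0)\cap \mathscr{D}(N)$, where the commutator is computed explicitly from the CCR/CAR; this is the only point where massless neutrinos force an argument genuinely beyond that of Lemma \ref{lem:C2}.
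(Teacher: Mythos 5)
Your proof is correct and follows essentially the paper's route: the paper disposes of this lemma by observing that it "can be proven in the same way as Lemma \ref{lem:C2}", i.e.\ by using the $N_\tau$ estimates of Lemma \ref{lemSCW_1} (applied to the kernels $a_i G$, resp.\ $a_i a_{i'} G$) together with \eqref{commut_H0_8} and \eqref{commut_11}--\eqref{commut_41} to check that the quadratic forms $[H,\mathrm{i}A]$ and $[[H,\mathrm{i}A],\mathrm{i}A]$ extend to elements of $\mathscr{L}(\mathscr{G};\mathscr{G}^*)$, which is exactly what you do with $M=[H_0,\mathrm{i}A]$ and $R=g[H_I,\mathrm{i}A]$. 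Your closing concern about $A$ being only maximal symmetric is not in fact specific to the massless case: the conjugate operator $B$ of Lemma \ref{lem:C2} is likewise only maximal symmetric, and the criterion of \cite[Proposition 5.2]{GeGeMo04_01}, with the form defined using $A^*$ (resp.\ $B^*$) on one side as in the paper, is designed precisely for this situation, so no argument genuinely beyond that of Lemma \ref{lem:C2} is required at that point.
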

Recall that the set of thresholds, $\tau$, is defined in \eqref{def_threshold}. We are now ready to prove the main result of this section.
\begin{Th}\label{MourreI_3}
Suppose that the masses of the neutrinos $m_{\nu_e}$, $m_{\nu_{\mu}}$, $m_{\nu_{\tau}}$ vanish and consider the Hamiltonian \eqref{total_hamiltonian_g} with $H_I$ given by \eqref{interaction_term}. Assume that
\begin{equation}
G \in L^2, \quad a_{(i),\cdot} G \in L^2, \quad |p_3|^{-1} a_{(i),\cdot} G \in L^2, \quad i = 1,2,3. \label{eq:hypclassC1_3_1_1_1}
\end{equation}
There exists $g_0>0$ such that, for all $|g| \le g_0$ and $\lambda \in \mathbb{R}\setminus \tau$, there exist $\varepsilon>0$, $\mathrm{c}_0>0$, $\mathrm{d}>0$ and a compact operator $K$ such that
\begin{equation}
[H,\mathrm{i} A ] \geq \mathrm{c}_0 ( N_{\mathrm{neut}} + \mathds{1} ) - \mathrm{d} ( \mathds{1} - \mathds{1}_{[\lambda-\varepsilon,\lambda+\varepsilon]}(H) ) ( 1 + H^2 )^{\frac12} + K. \label{eq:Mourre_estimate_1_4}
\end{equation}
In particular, for all interval $[\lambda_1,\lambda_2]$ such that $[\lambda_1,\lambda_2]\cap \tau = \emptyset$, $H$ has at most finitely many eigenvalues with finite multiplicities in $[\lambda_1,\lambda_2]$ and, as a consequence, $\sigma_{\mathrm{pp}}(H)$ can accumulate only at $\tau$, which is a countable set. If in addition
\begin{equation}
a_{(i),\cdot} a_{(i'),\cdot} G \in L^2, \quad i,i' = 1,2,3, \label{eq:hypclassC2_2_4}
\end{equation}
then $\sigma_{\mathrm{sc}}(H)=\emptyset$.
\end{Th}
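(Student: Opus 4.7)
The plan is to combine singular Mourre's theory \cite{GeGeMo04_01} with an induction on the energy $\lambda$ in the spirit of \cite{DeGe99_01,Am04_01}, carried out via the geometric partition of unity $\check{\Gamma}(j^R)$ of Section \ref{sec:selfadjoint}. The singular framework is forced on us because, in contrast with the situation of Theorem \ref{MourreI}, the operator $A$ is only maximal symmetric (see \eqref{eq:domain_conjugate}) and $[H,\mathrm{i}A]$ is no longer relatively $H$-bounded: the free part contains $\mathrm{d}\Gamma(|\nabla\omega_l^{(1)}|^2)$ and $\mathrm{d}\Gamma(|\nabla\omega^{(3)}|^2)$ but no kinetic term for the massless neutrinos, whose contribution collapses to the unbounded operator $N_{\mathrm{neut}}$.

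First, I would write $[H,\mathrm{i}A]$ out explicitly. From \eqref{commut_H0_8} and the obvious analogues of \eqref{commut_11}--\eqref{commut_41} for $A$, the free part is the non-negative operator
\[
[H_0,\mathrm{i}A] = N_{\mathrm{neut}} + \sum_{l=1}^3 \mathrm{d}\Gamma(|\nabla\omega_l^{(1)}|^2) + \mathrm{d}\Gamma(|\nabla\omega^{(3)}|^2),
\]
while $g[H_I,\mathrm{i}A]$ is bounded by $\mathrm{c}|g|(N_{\mathrm{lept}} + N_W + \mathds{1})$ via Lemma \ref{lemSCW_1} applied to the kernels $\mathrm{i}a_iG$; the hypothesis $|p_3|^{-1}a_{(i),\cdot}G \in L^2$ enters through a weighted $N_\tau$ estimate needed to absorb certain boson-related error terms produced later at the level of the extended Hamiltonian.

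Next, I would fix $\lambda \notin \tau$, choose $\varepsilon>0$ with $[\lambda-2\varepsilon,\lambda+2\varepsilon]\cap\tau=\emptyset$, and pick $\chi\in\mathrm{C}_0^\infty(\mathbb{R})$ equal to $1$ on $[\lambda-\varepsilon,\lambda+\varepsilon]$ and supported in $[\lambda-2\varepsilon,\lambda+2\varepsilon]$. Inserting $\mathds{1} = \check{\Gamma}(j^R)^*\check{\Gamma}(j^R)$ and using Lemma \ref{numberestimates13} to commute $\chi(H)$ past $\check{\Gamma}(j^R)$ reduces the proof to a lower bound for $\chi(H^{\mathrm{ext}})[H^{\mathrm{ext}},\mathrm{i}A^{\mathrm{ext}}]\chi(H^{\mathrm{ext}})$ on the image of $\check{\Gamma}(j^R)$, where $A^{\mathrm{ext}}:=A\otimes\mathds{1}+\mathds{1}\otimes A$. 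An induction on $\lambda$ then concludes: on the ``at infinity'' factor, the term $\mathds{1}\otimes[H_0,\mathrm{i}A]$ produces a direct gain of $N_{\mathrm{neut},\infty}\ge 1$ per emitted neutrino and, for emitted massive particles, a positive lower bound coming from the fact that $\chi(H^{\mathrm{ext}})$ confines their momenta to a compact annulus (this uses $\lambda\notin\tau$); on the ``bound'' factor, emission of any massive particle reduces the residual energy by at least $\mathrm{m}_0:=\min(m_e,m_\mu,m_\tau,m_W)>0$ and the inductive hypothesis applies, while the no-emission contribution reduces to a compact operator by the argument used in the proof of Theorem \ref{Thess2}. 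The base case $\lambda < E+\mathrm{m}_0$ is handled by the ground-state projector together with smallness of $|g|\le g_0$, which also absorbs $g[H_I,\mathrm{i}A]$ at each step.

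The spectral consequences are then routine: Lemma \ref{lem:C8} yields $H\in\mathrm{C}^1(A;\mathscr{G};\mathscr{G}^*)$, so the virial theorem of singular Mourre's theory applied to \eqref{eq:Mourre_estimate_1_4} gives the finiteness of point spectrum in every compact subset of $\mathbb{R}\setminus\tau$; under \eqref{eq:hypclassC2_2_4}, Lemma \ref{lem:C8} gives $H\in\mathrm{C}^2(A;\mathscr{G};\mathscr{G}^*)$ and the limiting absorption principle of \cite{GeGeMo04_01} yields $\sigma_{\mathrm{sc}}(H)=\emptyset$. I expect the main obstacle to be the induction step, and more precisely the bookkeeping required to preserve a lower bound of the form $c_0(N_{\mathrm{neut}}+\mathds{1})$ (rather than merely $c_0\mathds{1}$) at each iteration: this precise shape of the right-hand side of \eqref{eq:Mourre_estimate_1_4} is crucial because, unlike in the massive setting of \cite{DeGe99_01,Am04_01}, an arbitrary number of low-energy neutrinos can be emitted at infinity and only the $N_{\mathrm{neut}}$ term on the right-hand side can absorb them.
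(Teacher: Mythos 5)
Your overall architecture is the one the paper uses: the singular Mourre framework of \cite{GeGeMo04_01}, the partition of unity $\check{\Gamma}(j^R)$ and an induction on the energy in the spirit of \cite{DeGe99_01}, with the sector having at least one neutrino at infinity giving a fixed positive gain, the massive-emission sector handled by the inductive hypothesis at the shifted energy $\lambda-\mathds{1}\otimes H_0$, the no-emission piece $\check{\Gamma}(j^R)^*(\mathds{1}\otimes\Pi_\Omega)\check{\Gamma}(j^R)\chi(H)$ giving the compact $K$, and the conclusions drawn from Lemma \ref{lem:C8} and the abstract results. However, there is a genuine gap at the single most important technical point: the absorption of $g[H_I,\mathrm{i}A]$. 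You propose to control it by $\mathrm{c}|g|(N_{\mathrm{lept}}+N_W+\mathds{1})$ via Lemma \ref{lemSCW_1} and to absorb it ``by smallness of $g$ at each step''. But $N_{\mathrm{lept}}+N_W$ is only relatively $H$-bounded, so after localizing in energy the price you pay is a constant of size $|g|\sup_{\Delta}(1+\lambda^2)^{1/2}$, while the positive gain in the crucial sector $N_{\mathrm{neut},\infty}\ge 1$ is the fixed constant coming from $\mathds{1}\otimes[H_0,\mathrm{i}A]\ge\frac12$. Hence your smallness condition on $g$ necessarily depends on the energy window, i.e.\ on $\lambda$, whereas Theorem \ref{MourreI_3} asserts a single $g_0$ valid for all $\lambda\in\mathbb{R}\setminus\tau$ (and this uniformity is what is used later).

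What is missing is the relative form bound of the interaction commutator with respect to the free one, namely \eqref{estimH'I}: $|\langle\Psi,H'_I\Psi\rangle|\le \mathrm{c}_1\langle\Psi,H'_0\Psi\rangle+\mathrm{c}_2\|\Psi\|^2$ with $H'_0=\frac12 N_{\mathrm{neut}}+\sum_l\mathrm{d}\Gamma(|\nabla\omega^{(1)}_l|^2)+\mathrm{d}\Gamma(|\nabla\omega^{(3)}|^2)$, obtained from the weighted $N_\tau$ estimates of Glimm--Jaffe. This is exactly where the hypothesis $|p_3|^{-1}a_{(i),\cdot}G\in L^2$ in \eqref{eq:hypclassC1_3_1_1_1} is used: since $|\nabla\omega^{(3)}(p_3)|^2=p_3^2(p_3^2+m_W^2)^{-1}$ degenerates at $p_3=0$, bounding the boson legs of $[H_I,\mathrm{i}A]$ by $\mathrm{d}\Gamma(|\nabla\omega^{(3)}|^2)$ forces a factor $|p_3|^{-1}$ on the kernels, and $\mathrm{c}_1$ is proportional to $\||p_3|^{-1}a_{\cdot}G\|_2+\|a_{\cdot}G\|_2$. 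With this bound the absorption reads $H'_0+gH'_I\ge(1-\mathrm{c}_1|g|)H'_0-\mathrm{c}_2|g|$, uniformly in the energy, as in \eqref{eq:comput_Mourre_1}, which is what yields a $\lambda$-independent $g_0$ and leaves the reserved $\frac12 N_{\mathrm{neut}}$ intact to produce the term $\mathrm{c}_0(N_{\mathrm{neut}}+\mathds{1})$ in \eqref{eq:Mourre_estimate_1_4}. Your attribution of the weighted hypothesis to ``boson-related error terms at the level of the extended Hamiltonian'' therefore misidentifies its role, and without the bound \eqref{estimH'I} the induction as you set it up does not prove the theorem as stated. (A minor further point: the gain for massive particles emitted at infinity is encoded in the threshold distance function $d^\kappa$ and the inductive hypothesis at energy $\lambda-\mathds{1}\otimes H_0$, not in a compactness-of-momenta argument from $\chi(H^{\mathrm{ext}})$, but this is a matter of bookkeeping rather than a flaw.)
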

\begin{Rk}
The following weaker version of the Mourre estimate,
\begin{equation*}
[H,\mathrm{i} A ] \geq \mathrm{c}_0  \mathds{1} - \mathrm{d} ( \mathds{1} - \mathds{1}_{[\lambda-\varepsilon,\lambda+\varepsilon]}(H) ) ( 1 + H^2 )^{\frac12} + K,
\end{equation*}
would be sufficient for the conclusions of Theorem \ref{MourreI_3} to hold. But the Mourre estimate with the operator $N_{\mathrm{neut}}$ in right-hand-side of \eqref{eq:Mourre_estimate_1_4} will be important in our proof of some propagation estimates in Section \ref{sec:propag}.
\end{Rk}
\begin{proof}[Proof of Theorem \ref{MourreI_3}]
In this proof, for any interval $\Delta$, $\chi_{\Delta}$ will refer to a function in $\mathrm{C}_{0}^{\infty}(\mathbb{R})$ such that $\Delta \subset \mathrm{supp}\left(\chi_{\Delta} \right)$. For all $\lambda \in \mathbb{R}$, we set 
\begin{align*}
d(\lambda) = \inf \Big \{ & \mu \in \mathbb{R} \, \text{ such that } \, \mu = \sum_{ \substack{ l=1,2,3 \\ \epsilon =\pm }} \sum_{i=1}^{n_{l,\epsilon}} | \nabla \omega^{(1)}_l( p_{1,i,l,\epsilon} ) |^2 + \sum_{\epsilon=\pm} \sum_{i=1}^{n_{\epsilon}} | \nabla \omega^{(3)}( p_{3,i,\epsilon} ) |^2 ,  \\
& \lambda_1 + \sum_{ \substack{ l=1,2,3 \\ \epsilon =\pm }} \sum_{i=1}^{n_{l,\epsilon}} \omega^{(1)}_l( p_{1,i,l,\epsilon} ) + \sum_{\epsilon=\pm} \sum_{i=1}^{n_{\epsilon}} \omega^{(3)}_l( p_{3,i,\epsilon} )  =\lambda, \, \lambda_1 \in \sigma_{\mathrm{pp}}(H) ,  \\
& n_{l,\epsilon} , n_\epsilon \in \mathbb{N} \text{ and at least one of the } n_{l,\epsilon} \text{ or } n_\epsilon \text{ is} \neq 0, \, p_{1,i,l,\epsilon} , p_{3,i,\epsilon} \in \mathbb{R}^3 \Big \} ,
\end{align*}
with the convention that $\inf \emptyset = 0$.
The definition of $\tilde{d}(\lambda)$ is the same, except that we do not impose the restriction that at least one of the $n_{l,\epsilon}$ or $n_\epsilon$ is $\neq0$. One can then verify that  $\tilde{d}(\lambda) = d(\lambda)$, if $\lambda \notin \sigma_{\mathrm{pp}}(H)$, and $\tilde d(\lambda)=0$ if $\lambda \in \sigma_{\mathrm{pp}}(H)$.
We also introduce, for $\kappa > 0$, 
\begin{equation*}
\Delta^{\kappa}_{\lambda} = [\lambda-\kappa,\lambda+\kappa], \quad d^{\kappa}(\lambda) = \inf_{\mu \in \Delta^{\kappa}_{\lambda}} d(\mu), \quad \tilde{d}^{\kappa}(\lambda) = \inf_{\mu \in \Delta^{\kappa}_{\lambda}} \tilde{d}(\mu).
\end{equation*}

Recalling \eqref{commut_H0_8}, we set
\begin{equation*}
H'_0 := \frac12 N_{\mathrm{neut}} + \sum_{l=1}^3 \mathrm{d} \Gamma( | \nabla \omega^{(1)}_l( p_1 ) |^2 ) + \mathrm{d} \Gamma( | \nabla \omega^{(3)}( p_3 ) |^2 ) ,
\end{equation*}
so that
\begin{equation*}
[H_0,\mathrm{i}A] = \frac12 N_{\mathrm{neut}} + H'_0.
\end{equation*}
We also set $H'_I := [ H_I , \mathrm{i} A ]$.

We follow the general strategy of the proof of \cite[Theorem 4.3]{DeGe99_01}. Let $m_1 := \inf ( m_e , m_\mu , m_\tau , m_W )>0$. We will prove by induction that the following properties hold for any $n \in \mathbb{N}^*$.  
\begin{itemize}
\item[$H_1(n):$] Let $\varepsilon>0$ and $\lambda \in [E,E+n m_1 )$. There exist a constant $\mathrm{c}$, a compact operator $K_0$ and an interval $\Delta$ containing $\lambda$ such that
\begin{equation*}
H'_0 + g H'_I \ge \Big (\min \big ( \frac14 , d(\lambda) \big ) - \varepsilon \Big ) \mathds{1} - \mathrm{c} \big ( \mathds{1} - \chi_{\Delta} (H) \big ) ( 1 + H^2 )^{\frac12} + K_0.
\end{equation*}

\item[$H_2(n):$] Let $\varepsilon > 0 $ and $\lambda \in [E,E+n m_1 )$. There exist a constant $\mathrm{c}$ and an interval $\Delta$ containing $\lambda$ such that
\begin{equation*}
H'_0 + g H'_I \ge \Big (\min \big ( \frac14 , \tilde{d}(\lambda) \big ) -\varepsilon \Big )\mathds{1} - \mathrm{c} \big ( \mathds{1} - \chi_{\Delta} (H) \big ) ( 1 + H^2 )^{\frac12}.
\end{equation*}

\item[$H_3(n):$] Let $\kappa > 0$, $\varepsilon_0>0$ and $\varepsilon > 0$. There exist a constant $\mathrm{c}$ and $\delta>0$ such that, for all $\lambda  \in [E,E+n m_1 - \varepsilon_0] $, one has
\begin{equation*}
H'_0 + g H'_I \ge \Big (\min \big ( \frac14 , \tilde{d}^{\kappa}(\lambda) \big ) -\varepsilon \Big ) \mathds{1} - \mathrm{c} \big ( \mathds{1} - \chi_{\Delta^{\delta}_{\lambda}} (H) \big )( 1 + H^2 )^{\frac12}.
\end{equation*}

\item[$S_1(n):$] $\tau$ is a closed countable set in $[E,E+n m_1 ]$.

\item[$S_2(n):$] for all $\lambda_1$, $\lambda_2$ such that $\lambda_1 < \lambda_2 \leq E + n m_1 $ and $[\lambda_1,\lambda_2]\cap \tau = \emptyset$, $H$ has finitely many eigenvalues, with finite multiplicities, in $[\lambda_1,\lambda_2]$. 
\end{itemize}
We claim that, for all $n \in \mathbb{N}^*$,
\begin{align}
& S_2(n-1) \Rightarrow S_1(n) \label{Statement4} \\
& S_1(n) ~~\text{and} ~~ H_3(n-1) \Rightarrow H_1(n). \label{Statement5} \\
& H_1(n) \Rightarrow H_2(n) \label{Statement1} \\
& H_2(n) \Rightarrow H_3(n) \label{Statement2} \\
& H_1(n) \Rightarrow S_2(n) \label{Statement3} 
\end{align} 
By definition, $\tau \cap [E , E+ m_1 ) = \emptyset$ and hence $S_1(1)$ is obviously satisfied. We refer to \cite{Mo80_01,AmBoGe96_01,DeGe97_01} for the proofs of \eqref{Statement4}, \eqref{Statement1}, \eqref{Statement2} and \eqref{Statement3}. Note that \eqref{Statement1} uses the compactness of $K_0$ and finite rank operator estimates, and that \eqref{Statement3} is a consequence of the virial theorem. It remains to prove $H_1(1)$ and \eqref{Statement5}.

Using that the commutators $[H_I^{(j)}(G) , \mathrm{i} A]$, $j=1,\dots,4$, are given by \eqref{commut_11}--\eqref{commut_41}, a direct application of the $N_\tau$ estimates of Glimm and Jaffe (see \cite[Proposition 1.2.3(b)]{GlJa77_01}) shows that there exist $\mathrm{c}_1>0$ and $\mathrm{c}_2>0$ such that
\begin{equation}\label{estimH'I}
\big|\langle \Psi , H'_I  \Psi \rangle \big | \leq \mathrm{c}_1 \langle \Psi , H'_0 \Psi \rangle + \mathrm{c}_2 \|\Psi\|^2,
\end{equation}
for all $\Psi \in \mathscr{D}( H'_0 )$. Here it should be noticed that $| \nabla \omega^{(3)}( p_3 ) |^2 = p_3^2 ( p_3^2 + m_W^2 )^{-1}$. Hence, according to \cite[Proposition 1.2.3(b)]{GlJa77_01}, the constant $\mathrm{c}_1$ can be chosen to be proportional to $\|| p_3 |^{-1} G \|_2 + \| G \|_2$, which is finite by \eqref{eq:hypclassC1_3_1_1_1}.

Recall that the operator $\check{\Gamma}(j^R) : \mathscr{H} \to \mathscr{H}^{\mathrm{ext}}$ has been defined in Section \ref{subsec:estimates} (see the paragraph after the statement of Lemma \ref{lemV2Numbestimates} and see also Appendix \ref{sec:standard_def}). Using that $\check{\Gamma}(j^R)$ is an isometry, we can write
\begin{align*}
 H'_0 + g H'_I &= \check{\Gamma}(j^R)^*\check{\Gamma}(j^R) (H'_0 + g H'_I ) \\
&= \check{\Gamma}(j^R)^* (H'_0\otimes\mathds{1}+\mathds{1}\otimes H'_0 + g H'_I \otimes \mathds{1} + \mathrm{Rem}_R ) \check{\Gamma}(j^R) ,
\end{align*}
with $\mathrm{Rem}_R ( N_0 + N_\infty + 1 )^{-1} = o( R^0 )$. The last equality can be proven in the same way as in the proofs of Lemmas \ref{numberestimates3}--\ref{numberestimates13}. We decompose
\begin{align}
[H, \mathrm{i} A] &= \check{\Gamma}(j^R)^* (H'_0\otimes\mathds{1}+\mathds{1}\otimes H'_0 + g H'_I \otimes \mathds{1} + \mathrm{Rem}_R ) \mathds{1}_{\{0\}}( N_{\mathrm{neut},\infty} ) \check{\Gamma}(j^R) \notag \\
&\quad + \check{\Gamma}(j^R)^* (H'_0\otimes\mathds{1}+\mathds{1}\otimes H'_0 + g H'_I \otimes \mathds{1} + \mathrm{Rem}_R ) \mathds{1}_{ [1,\infty) }( N_{\mathrm{neut},\infty} ) \check{\Gamma}(j^R) , \label{eq:decomp_1}
\end{align}
and estimate the two terms separately. For the second one, we notice that 
\begin{equation}
( \mathds{1}\otimes H'_0 ) \mathds{1}_{ [1,\infty) }( N_{\mathrm{neut},\infty} ) \ge \frac12 ( \mathds{1}\otimes N_{\mathrm{neut},\infty} ) \mathds{1}_{ [1,\infty) }( N_{\mathrm{neut},\infty} ) \ge \frac12 \mathds{1}_{ [1,\infty) }( N_{\mathrm{neut},\infty} ) . \label{eq:lsif1}
\end{equation}
By \eqref{estimH'I}, we have that
\begin{align*}
& \check{\Gamma}(j^R)^* (H'_0\otimes\mathds{1}+\mathds{1}\otimes H'_0 + g H'_I \otimes \mathds{1} + \mathrm{Rem}_R ) \mathds{1}_{ [1,\infty) }( N_{\mathrm{neut},\infty} ) \check{\Gamma}(j^R) \\
&\ge \check{\Gamma}(j^R)^* ((1 - \mathrm{c}_1 g) H'_0 \otimes\mathds{1} - \mathrm{c}_2 g + \mathds{1}\otimes H'_0 + \mathrm{Rem}_R ) \mathds{1}_{ [1,\infty) }( N_{\mathrm{neut},\infty} ) \check{\Gamma}(j^R) \\
&\ge \check{\Gamma}(j^R)^* ((1 - \mathrm{c}_1 g) H'_0 \otimes\mathds{1} - \mathrm{c}_2 g + \mathds{1}\otimes H'_0 \\
&\quad + \mathrm{Rem}_R (N_0+N_{\infty}+1)^{-1}(N_0+N_{\infty}+1)      ) \mathds{1}_{ [1,\infty) }( N_{\mathrm{neut},\infty} ) \check{\Gamma}(j^R) \\
&\ge \check{\Gamma}(j^R)^* ((1 - \mathrm{c}_1 g) H'_0 \otimes\mathds{1} - \mathrm{c}_2 g + \mathds{1}\otimes H'_0 + o(R^{0})(N_0+N_{\infty}+1)      ) \mathds{1}_{ [1,\infty) }( N_{\mathrm{neut},\infty} ) \check{\Gamma}(j^R). 
\end{align*}
For $g$ small enough and $R$ large enough, since $N_{\mathrm{lept}}$ and $N_W$ are relatively $H$-bounded, this yields
\begin{align}
& \check{\Gamma}(j^R)^* (H'_0\otimes\mathds{1}+\mathds{1}\otimes H'_0 + g H'_I \otimes \mathds{1} + \mathrm{Rem}_R ) \mathds{1}_{ [1,\infty) }( N_{\mathrm{neut},\infty} ) \check{\Gamma}(j^R) \notag \\
& \ge  \check{\Gamma}(j^R)^*((1-\mathrm{c}_1 g)H'_0\otimes\mathds{1}-\mathrm{c}_2g+\mathds{1}\otimes H'_0+o(R^0)(N_{\mathrm{neut},0}+N_{\mathrm{neut},\infty})) \mathds{1}_{ [1,\infty) }( N_{\mathrm{neut},\infty} ) \check{\Gamma}(j^R) \notag  \\
& +  o( R^0 ) \check{\Gamma}(j^R)^* \mathds{1}_{ [1,\infty) }( N_{\mathrm{neut},\infty} ) \check{\Gamma}(j^R) ( 1 + H^2 )^{\frac12} \notag \\
& \ge \frac14 \check{\Gamma}(j^R)^* \mathds{1}_{ [1,\infty) }( N_{\mathrm{neut},\infty} ) \check{\Gamma}(j^R) + o( R^0 ) \check{\Gamma}(j^R)^* \mathds{1}_{ [1,\infty) }( N_{\mathrm{neut},\infty} ) \check{\Gamma}(j^R) ( 1 + H^2 )^{\frac12}. \label{eq:comput_Mourre_1}
\end{align}
In the last inequality, we used \eqref{eq:lsif1} and the fact that $H'_0\ge0$.

Now, we consider the first term in \eqref{eq:decomp_1}. As in the proof of Theorem \ref{Thess2}, we use the fact that, for all bounded interval $I$ and $R>0$, $\check{\Gamma}(j^R)^* ( \mathds{1} \otimes \Pi_\Omega ) \check{\Gamma}(j^R) \chi_I(H)$ is compact. Thus, using that $N_{\mathrm{neut}}\le H'_0$, that $H'_I$ is relatively $(N_{\mathrm{lept}}+N_W)$-bounded and that $N_{\mathrm{lept}}$ and $N_W$ are relatively $H$-bounded, we can write
\begin{align}
& \check{\Gamma}(j^R)^* (H'_0\otimes\mathds{1}+\mathds{1}\otimes H'_0 + g H'_I \otimes \mathds{1} + \mathrm{Rem}_R ) \mathds{1}_{\{0\}}( N_{\mathrm{neut},\infty} ) \mathds{1}_{\{0\}}( N_{\mathrm{lept},\infty} + N_{W,\infty} )\check{\Gamma}(j^R)  \notag \\
&= \check{\Gamma}(j^R)^* (H'_0\otimes\mathds{1} + g H'_I \otimes \mathds{1} + \mathrm{Rem}_R ) (\mathds{1} \otimes \Pi_\Omega ) )\check{\Gamma}(j^R) \notag \\
&\ge \check{\Gamma}(j^R)^* (  H'_0 \otimes \mathds{1} + g H'_I \otimes \mathds{1} + o( R^0 )( N_{\mathrm{neut}} \otimes \mathds{1} + N_{\mathrm{lept}} \otimes \mathds{1} + N_W \otimes \mathds{1} ) ) (\mathds{1} \otimes \Pi_\Omega ) )\check{\Gamma}(j^R) \notag \\
&\ge \check{\Gamma}(j^R)^* ( (1+o( R^0 )) H'_0 \otimes \mathds{1} + g H'_I \otimes \mathds{1} + o( R^0 )( N_{\mathrm{lept}} \otimes \mathds{1} + N_W \otimes \mathds{1} ) ) (\mathds{1} \otimes \Pi_\Omega ) )\check{\Gamma}(j^R) \notag \\
&\ge \check{\Gamma}(j^R)^* ( g H'_I \otimes \mathds{1} + o( R^0 )( N_{\mathrm{lept}} \otimes \mathds{1} + N_W \otimes \mathds{1} ) ) (\mathds{1} \otimes \Pi_\Omega ) )\check{\Gamma}(j^R) \notag \\
& \ge K_{R,I} - \mathrm{c}_3 \check{\Gamma}(j^R)^* (\mathds{1} \otimes \Pi_\Omega ) \check{\Gamma}(j^R) ( \mathds{1} - \chi_I( H ) ) ( 1 + H^2 )^{\frac12} , \label{eq:comput_Mourre_2}
\end{align}
for any bounded interval $I$ and $R>0$ large enough, where $K_{R,I}$ is compact and $\mathrm{c}_3$ is a positive constant.

It remains to consider
\begin{align*}
& \check{\Gamma}(j^R)^* (H'_0\otimes\mathds{1}+\mathds{1}\otimes H'_0 + g H'_I \otimes \mathds{1} + \mathrm{Rem}_R ) \mathds{1}_{\{0\}}( N_{\mathrm{neut},\infty} ) \mathds{1}_{[1,\infty)}( N_{\mathrm{lept},\infty} + N_{W,\infty} )\check{\Gamma}(j^R)  \\
&\ge \check{\Gamma}(j^R)^* ( H'_0\otimes\mathds{1} + g H'_I \otimes \mathds{1} + o(R^0)(N_0+N_{\infty}+1) ) \mathds{1}_{\{0\}}( N_{\mathrm{neut},\infty} ) \mathds{1}_{[1,\infty)}( N_{\mathrm{lept},\infty} + N_{W,\infty} )\check{\Gamma}(j^R).
\end{align*}
 Here we used that $\mathds{1} \otimes H'_0 \ge 0$. We introduce $\mathds{1} = \chi_I(H) + ( \mathds{1} - \chi_I(H) )$ on the right. Using that $H'_0 \otimes \mathds{1} \ge 0$, we can write
\begin{align}
& \check{\Gamma}(j^R)^* (H'_0\otimes\mathds{1}+\mathds{1}\otimes H'_0 + g H'_I \otimes \mathds{1} + \mathrm{Rem}_R ) \mathds{1}_{\{0\}}( N_{\mathrm{neut},\infty} ) \mathds{1}_{[1,\infty)}( N_{\mathrm{lept},\infty} + N_{W,\infty} )\check{\Gamma}(j^R) \notag \\
&\ge \check{\Gamma}(j^R)^* (( 1 - o( R^0 ) ) ( H'_0 + g H'_I ) \otimes\mathds{1} ) \mathds{1}_{\{0\}}( N_{\mathrm{neut},\infty} ) \mathds{1}_{[1,\infty)}( N_{\mathrm{lept},\infty} + N_{W,\infty} )\check{\Gamma}(j^R) \chi_I(H) \notag \\
&\quad +  \check{\Gamma}(j^R)^*  (o(R^0)g H'_I \otimes \mathds{1} + o(R^0)(N_0+N_{\infty}+1) ) \mathds{1}_{\{0\}}( N_{\mathrm{neut},\infty} ) \mathds{1}_{[1,\infty)}( N_{\mathrm{lept},\infty} + N_{W,\infty} )\check{\Gamma}(j^R)\chi_I( H )   \notag \\
&\quad  +
\check{\Gamma}(j^R)^* ( H'_0\otimes\mathds{1} + g H'_I \otimes \mathds{1} + o(R^0)(N_0+N_{\infty}+1) ) \notag \\
& \qquad \qquad \qquad \qquad \qquad \mathds{1}_{\{0\}}( N_{\mathrm{neut},\infty} ) \mathds{1}_{[1,\infty)}( N_{\mathrm{lept},\infty} + N_{W,\infty} )\check{\Gamma}(j^R) ( \mathds{1} - \chi_I( H ) ) . \notag
\end{align} 
Hence, using again that $H'_I$ is relatively $(N_{\mathrm{lept}}+N_W)$-bounded and that $N_{\mathrm{lept}}$ and $N_W$ are relatively $H$-bounded, we obtain that
\begin{align}
& \check{\Gamma}(j^R)^* (H'_0\otimes\mathds{1}+\mathds{1}\otimes H'_0 + g H'_I \otimes \mathds{1} + \mathrm{Rem}_R ) \mathds{1}_{\{0\}}( N_{\mathrm{neut},\infty} ) \mathds{1}_{[1,\infty)}( N_{\mathrm{lept},\infty} + N_{W,\infty} )\check{\Gamma}(j^R) \notag \\
&\ge \check{\Gamma}(j^R)^* (( 1 - o( R^0 ) ) ( H'_0 + g H'_I ) \otimes\mathds{1} ) \mathds{1}_{\{0\}}( N_{\mathrm{neut},\infty} ) \mathds{1}_{[1,\infty)}( N_{\mathrm{lept},\infty} + N_{W,\infty} )\check{\Gamma}(j^R) \chi_I(H) \notag \\
&\quad + o(R^0) \check{\Gamma}(j^R)^*  \mathds{1}_{\{0\}}( N_{\mathrm{neut},\infty} ) \mathds{1}_{[1,\infty)}( N_{\mathrm{lept},\infty} + N_{W,\infty} )\check{\Gamma}(j^R) \chi_I(H)  \notag \\
&\quad - \mathrm{c}_4 \check{\Gamma}(j^R)^* \mathds{1}_{\{0\}}( N_{\mathrm{neut},\infty} ) \mathds{1}_{[1,\infty)}( N_{\mathrm{lept},\infty} + N_{W,\infty} ) \check{\Gamma}(j^R) ( \mathds{1} - \chi_I( H ) ) ( 1 + H^2 )^{\frac12} , \label{eq:comput_Mourre_9}
\end{align}
for any bounded interval $I$ and $R>0$, where $\mathrm{c}_4$ is a positive constant.

Now, we can prove $H_1(1)$ and \eqref{Statement5}. To prove $H_1(1)$, let $\varepsilon>0$, $\lambda \in [E,E+ m_1)$ and $\Delta$ be an interval containing $\lambda$ and supported in $[E,E+m_1)$. The function $\chi_\Delta \in \mathrm{C}_0^\infty( \mathbb{R} )$ is chosen such that $\mathrm{supp} (\chi_\Delta) \subset ( - \infty , E+m_1)$ and we consider in addition $\tilde\chi_\Delta \in \mathrm{C}_0^\infty( \mathbb{R} )$ such that $\mathrm{supp} (\tilde\chi_\Delta) \subset ( - \infty , E+m_1)$ and $\tilde\chi_\Delta \chi_\Delta = \chi_\Delta$. Then, using that $H'_0 \ge 0$, the fact that $H'_I$ is $(N_{\mathrm{lept}}+N_W)$-bounded as before, and Lemma \ref{numberestimates13}, the first term in the right-hand-side of \eqref{eq:comput_Mourre_9} (with $I=\Delta$) can be estimated, for $R$ large enough, in the following way:
\begin{align}
&\check{\Gamma}(j^R)^* (( 1 - o( R^0 ) ) ( H'_0 + g H'_I ) \otimes\mathds{1} ) \mathds{1}_{\{0\}}( N_{\mathrm{neut},\infty} ) \mathds{1}_{[1,\infty)}( N_{\mathrm{lept},\infty} + N_{W,\infty} )\check{\Gamma}(j^R) \chi_\Delta(H) \notag \\
&\ge\check{\Gamma}(j^R)^* (( 1 - o( R^0 ) ) H'_0 \otimes\mathds{1} ) \mathds{1}_{\{0\}}( N_{\mathrm{neut},\infty} ) \mathds{1}_{[1,\infty)}( N_{\mathrm{lept},\infty} + N_{W,\infty} )\check{\Gamma}(j^R) \chi_\Delta(H) \notag \\
&\quad - \mathrm{c}_5 \check{\Gamma}(j^R)^* \mathds{1}_{\{0\}}( N_{\mathrm{neut},\infty} ) \mathds{1}_{[1,\infty)}( N_{\mathrm{lept},\infty} + N_{W,\infty} )\check{\Gamma}(j^R) \chi_\Delta(H) \notag \\
&\ge\check{\Gamma}(j^R)^* (( 1 - o( R^0 ) ) H'_0 \otimes\mathds{1} ) \mathds{1}_{\{0\}}( N_{\mathrm{neut},\infty} ) \mathds{1}_{[1,\infty)}( N_{\mathrm{lept},\infty} + N_{W,\infty} )\check{\Gamma}(j^R) \chi_\Delta(H) \notag \\
& - \mathrm{c}_5 \check{\Gamma}(j^R)^* \mathds{1}_{\{0\}}( N_{\mathrm{neut},\infty} ) \mathds{1}_{[1,\infty)}( N_{\mathrm{lept},\infty} + N_{W,\infty} ) \tilde \chi_\Delta(H^{\mathrm{ext}})  \check{\Gamma}(j^R) \chi_\Delta(H) \notag \\
&\ge - \mathrm{c}_5 \check{\Gamma}(j^R)^* \mathds{1}_{\{0\}}( N_{\mathrm{neut},\infty} ) \mathds{1}_{[1,\infty)}( N_{\mathrm{lept},\infty} + N_{W,\infty} ) \tilde \chi_\Delta(H^{\mathrm{ext}}) \check{\Gamma}(j^R) \chi_\Delta(H) \notag \\
&=0. \label{eq:comput_Mourre_10}
\end{align} 
The last equality comes from the fact that $H^{\mathrm{ext}} = H \otimes \mathds{1}+\mathds{1} \otimes H_0 \ge E + m_1$ on the range of the operator $\mathds{1}_{[1,\infty)}( N_{\mathrm{lept},\infty} + N_{W,\infty} )$, and hence that $\tilde \chi_\Delta(H^{\mathrm{ext}})\mathds{1}_{[1,\infty)}( N_{\mathrm{lept},\infty} + N_{W,\infty} )=0$, since $\tilde\chi_\Delta$ is supported in $(-\infty,E+m_1)$. Since $d(\lambda)=0$ for $\lambda \in (-\infty,E+m_1)$, fixing $R$ large enough, $I=\Delta$ in \eqref{eq:comput_Mourre_2} and \eqref{eq:comput_Mourre_9}, and combining Equations \eqref{eq:decomp_1}--\eqref{eq:comput_Mourre_10}, we obtain $H_1(1)$.

To prove \eqref{Statement5}, let $\varepsilon>0$ and $\lambda \in [E,E+ n m_1 )$. We go back to \eqref{eq:comput_Mourre_9}, with $I=\Delta^{\delta}_\mu$, and consider again the first term in the right-hand-side. By $S_1(n)$, $\tau$, where $d$ vanishes, is a closed countable set in $[E,E+n m_1 ]$, which implies that there exists $\kappa$ such that $\tau \cap \Delta^{\kappa}_{\lambda} \neq \emptyset$ and then $d(\lambda) = \sup_{\kappa > 0} d^{\kappa}(\lambda)$. Hence there exists $\kappa > 0$ such that $d^{\kappa}(\lambda)> d(\lambda)-\varepsilon/3$. Let $\varepsilon_0>0$ be such that $\lambda \in [E,E+ n m_1 -\varepsilon_0 ]$. By $H_3(n-1)$, we know that there exist $\mathrm{c}_6\in\mathbb{R}$ and $\delta >0$ such that
\begin{equation}
H'_0 + gH'_I \geq ( \min ( \frac14 , \tilde{d}^{\kappa}(\mu) )- \frac{\varepsilon}{3}) \mathds{1} - \mathrm{c}_6 ( \mathds{1} - \tilde\chi_{\Delta^{\delta}_{\mu}} (H)  )( 1 + H^2 )^{\frac12} , \label{eq:H3(n)}
\end{equation}
for all $\mu \in [E , E+(n-1)m_1 - \varepsilon_0 ]$. Here $\tilde\chi_{\Delta^{\delta}_{\lambda}}$ is chosen such that $\tilde\chi_{\Delta^{\delta}_{\lambda}}\chi_{\Delta^{\delta}_{\lambda}}=\tilde\chi_{\Delta^{\delta}_{\lambda}}$, where $\chi_{\Delta^{\delta}_{\lambda}}$ is the function appearing in \eqref{eq:comput_Mourre_9}. We begin by estimating the first term in the right-hand-side of \eqref{eq:comput_Mourre_9} as
\begin{align}
& \check{\Gamma}(j^R)^* (( 1 - o( R^0 ) ) ( H'_0 + g H'_I ) \otimes\mathds{1} ) \mathds{1}_{\{0\}}( N_{\mathrm{neut},\infty} ) \mathds{1}_{[1,\infty)}( N_{\mathrm{lept},\infty} + N_{W,\infty} )\check{\Gamma}(j^R) \chi_{\Delta_\lambda^\delta}(H) \notag \\
&\geq \check{\Gamma}(j^R)^* (( 1 - o( R^0 ) ) ( \min ( \frac14 , \tilde{d}^{\kappa}(\mu) )- \frac{\varepsilon}{3}) \mathds{1} \otimes\mathds{1} ) \mathds{1}_{\{0\}}( N_{\mathrm{neut},\infty} ) \mathds{1}_{[1,\infty)}( N_{\mathrm{lept},\infty} + N_{W,\infty} )\check{\Gamma}(j^R) \chi_{\Delta_\lambda^\delta}(H) \notag \\
&  - \mathrm{c}_6 \check{\Gamma}(j^R)^* (( 1 - o( R^0 ) )  (( \mathds{1} - \tilde\chi_{\Delta^{\delta}_{\mu}} (H)  )( 1 + H^2 )^{\frac12} \otimes\mathds{1} ) \mathds{1}_{\{0\}}( N_{\mathrm{neut},\infty} ) \mathds{1}_{[1,\infty)}( N_{\mathrm{lept},\infty} + N_{W,\infty} )\check{\Gamma}(j^R) \chi_{\Delta_\lambda^\delta}(H) .\notag
\end{align}
On the range of $\mathds{1}_{[1,\infty)}( N_{\mathrm{lept},\infty} + N_{W,\infty} )$, we have that $\mathds{1} \otimes H_0 \ge m_1$. Therefore, by \eqref{eq:H3(n)} and the functional calculus, with $\mu = \lambda - 1 \otimes H_0$, we obtain that
\begin{align}
&  ( 1 - o(R^0)) \check{\Gamma}(j^R)^* ( \min ( \frac14 , \tilde{d}^\kappa( \lambda - \mathds{1} \otimes H_0 ) ) - \frac{\varepsilon}{3} ) \mathds{1}_{\{0\}}( N_{\mathrm{neut},\infty} ) \mathds{1}_{[1,\infty)}( N_{\mathrm{lept},\infty} + N_{W,\infty} )\check{\Gamma}(j^R) \chi_{\Delta_\lambda^\delta}(H)  \notag \\
& = ( 1 - o(R^0)) \check{\Gamma}(j^R)^* ( \min ( \frac14 , d^\kappa( \lambda ) ) - \frac{\varepsilon}{3} ) \mathds{1}_{\{0\}}( N_{\mathrm{neut},\infty} ) \mathds{1}_{[1,\infty)}( N_{\mathrm{lept},\infty} + N_{W,\infty} )\check{\Gamma}(j^R) \chi_{\Delta_\lambda^\delta}(H) \notag \\
& \ge ( 1 - o(R^0)) ( \min ( \frac14 , d(\lambda) ) - \frac{2\varepsilon}{3} ) \check{\Gamma}(j^R)^* \mathds{1}_{\{0\}}( N_{\mathrm{neut},\infty} ) \mathds{1}_{[1,\infty)}( N_{\mathrm{lept},\infty} + N_{W,\infty} )\check{\Gamma}(j^R) \chi_{\Delta_\lambda^\delta}(H) \notag \\
& \ge ( \min ( \frac14 , d(\lambda) ) - \varepsilon ) \check{\Gamma}(j^R)^* \mathds{1}_{\{0\}}( N_{\mathrm{neut},\infty} ) \mathds{1}_{[1,\infty)}( N_{\mathrm{lept},\infty} + N_{W,\infty} )\check{\Gamma}(j^R) \chi_{\Delta_\lambda^\delta}(H) , \label{eq:comput_Mourre_12}
\end{align}
where the equality comes from the definitions of $\tilde{d}^\kappa$ and $d^\kappa$ and the fact that $\tilde{d}^\kappa( \lambda - \mathds{1} \otimes H_0 )$ acts on the range of $\mathds{1}_{[1,\infty)}( N_{\mathrm{lept},\infty} + N_{W,\infty} )$. 
Moreover, let $\tilde{\tilde{\chi}}_{\Delta_\lambda^\delta} \in \mathrm{C}_0^{\infty}(\mathbb{R})$ be such that $\tilde{\tilde{\chi}}_{\Delta_\lambda^\delta}\chi_{\Delta_\lambda^\delta}= \chi_{\Delta_\lambda^\delta}$. Using Lemma \ref{numberestimates3}, we write
\begin{align}
&  \mathrm{c}_6 \check{\Gamma}(j^R)^* (( 1 - o( R^0 ) ) ( ( \mathds{1} - \tilde\chi_{\Delta^{\delta}_{\mu}} (H)  )( 1 + H^2 )^{\frac12} \otimes\mathds{1} ) \notag \\
&\qquad \qquad \mathds{1}_{\{0\}}( N_{\mathrm{neut},\infty} ) \mathds{1}_{[1,\infty)}( N_{\mathrm{lept},\infty} + N_{W,\infty} )\check{\Gamma}(j^R) \chi_{\Delta_\lambda^\delta}(H) \tilde{\tilde{\chi}}_{\Delta_\lambda^\delta}(H) \notag \\
& = \mathrm{c}_6(( 1 - o( R^0 ) )  \check{\Gamma}(j^R)^*  \mathds{1}_{\{0\}}( N_{\mathrm{neut},\infty} ) \mathds{1}_{[1,\infty)}( N_{\mathrm{lept},\infty} + N_{W,\infty} ) \notag \\
&\qquad \qquad ((( \mathds{1} - \tilde{\chi}_{\Delta^{\delta}_{\mu}} (H)  )( 1 + H^2 )^{\frac12} \otimes\mathds{1} )\chi_{\Delta_\lambda^\delta}(H^{\mathrm{ext}})\check{\Gamma}(j^R) \tilde{\tilde{\chi}}_{\Delta_\lambda^\delta}(H)  + o(R^0) \notag \\
& =  o(R^0). \label{eq:comput_Mourre_13}
\end{align}
The last equality comes from $\mu = \lambda - \mathds{1}\otimes H_0$ and $\tilde\chi_{\Delta^{\delta}_{\lambda}}\chi_{\Delta^{\delta}_{\lambda}}=\tilde\chi_{\Delta^{\delta}_{\lambda}}$. Fixing $R$ large enough and $I = \Delta_\lambda^\delta$, Equations \eqref{eq:decomp_1}, \eqref{eq:comput_Mourre_1}, \eqref{eq:comput_Mourre_2},  \eqref{eq:comput_Mourre_9}, \eqref{eq:comput_Mourre_12} and \eqref{eq:comput_Mourre_13} prove \eqref{Statement5}. This concludes the proof of \eqref{eq:Mourre_estimate_1_4}.

The fact that $\sigma_{\mathrm{sc}}(H) = \emptyset$ if \eqref{eq:hypclassC2_2_4} holds follows from Lemma \ref{lem:C8} and the abstract results recalled above.
\end{proof}
\section{Propagation estimates}\label{sec:propag}

In this section, we use the method of propagation observables that was developed in $N$-body scattering theory (see e.g. \cite{SiSo87_01,HuSiSo99_01,Gr90_01,De93_01,DeGe97_01} and references therein). This method was adapted to the context of Pauli-Fierz or $P(\varphi)_2$ Hamiltonians in several papers (see, in particular, \cite{DeGe99_01,DeGe00_01,FrGrSc02_01,FrGrSc04_01,Am04_01,FrGrSc07_01,BoFaSi12_01,FaSi14_01}). In the case where all particles are massive, the propagation estimates that we prove in this section are straightforwardly adapted from \cite{Gr90_01,DeGe99_01}. In the case where neutrinos are supposed to be massless, however, the propagation estimates have to be substantially modified. The main idea consists in replacing the usual one-particle position operator by a suitably modified, time-dependent one-particle ``position'' operator. A related trick was used in \cite{FaSi14_01} for Pauli-Fierz Hamiltonians, but the details of the analysis here are different, in particular because we rely heavily on the Mourre estimate proven in Theorem \ref{MourreI_3}.

The basic approach that we follow to prove our propagation estimates is the following: let $H$ be a self-adjoint operator on a Hilbert space $\mathscr{H}$ and let $\Phi(t)$ be a time-dependent family of self-adjoint operators. Suppose that for some $u \in \mathcal{H}$, 
\begin{align*}
& \big \langle e^{ - \mathrm{i} t H } u , \Phi( t ) e^{ - \mathrm{i} t H } u \big \rangle \le \mathrm{C}_u , \quad \text{ uniformly in } t \ge 1 ,  
\end{align*}
and that one of the following to conditions holds,
\begin{align}
& \partial_t \big \langle e^{ - \mathrm{i} t H } u , \Phi( t ) e^{ - \mathrm{i} t H } u \big \rangle \ge \big \langle e^{ - \mathrm{i} t H } u , \Psi(t) e^{ - \mathrm{i} t H } u \big \rangle - \sum_{j=1}^n \big \langle e^{ - \mathrm{i} t H } u , B^*_j(t) B_j(t) e^{ - \mathrm{i} t H } u \big \rangle, \label{eq:condPO} \\
& \partial_t \big \langle e^{ - \mathrm{i} t H } u , \Phi( t ) e^{ - \mathrm{i} t H } u \big \rangle \le - \big \langle e^{ - \mathrm{i} t H } u , \Psi(t) e^{ - \mathrm{i} t H } u \big \rangle + \sum_{j=1}^n \big \langle e^{ - \mathrm{i} t H } u , B^*_j(t) B_j(t) e^{ - \mathrm{i} t H } u \big \rangle, \label{eq:condPO_1}
\end{align}
where $\Psi(t)$ are positive operators and $B_i(t)$ are families of time-dependent operators such that
\begin{equation}
\int_1^\infty \big \| B_i(t) e^{-\mathrm{i}tH}u\big\|^2dt \le\mathrm{C}_u. \label{eq:cond2PO}
\end{equation}
Then, integrating with respect to $t$, one obtains that
\begin{align}
\int_1^\infty \big \langle e^{ - \mathrm{i} t H } u , \Psi( t ) e^{ - \mathrm{i} t H } u \big \rangle dt \lesssim  \mathrm{C}_u , \label{eq:concPO}
\end{align}
which is sometimes called a weak propagation estimate for the family of observables $\Psi(t)$. Observe that the left-hand-sides of \eqref{eq:condPO}--\eqref{eq:condPO_1} can be rewritten as
\begin{equation*}
\big \langle e^{ - \mathrm{i} t H } u , \mathbf{D} \Phi( t ) e^{ - \mathrm{i} t H } u \big \rangle ,
\end{equation*}
where $\mathbf{D}$ stands for the Heisenberg derivative $\mathbf{D} \Phi ( t ) = \partial_t \Phi( t ) + [ H , \mathrm{i} \Phi( t ) ]$. Therefore, to prove the propagation observable \eqref{eq:concPO}, it suffices to find a family of operators $\Phi(t)$ whose Heisenberg derivative ``dominates'' $\Psi(t)$, in the sense that $\mathbf{D} \Phi ( t ) \ge \Psi(t)$ or $\mathbf{D} \Phi ( t ) \le - \Psi(t)$, up to remainder terms that are integrable in the sense of \eqref{eq:cond2PO}. The strategy usually consists in comparing the time derivative $\partial_t \Phi( t )$ and the commutator $[ H , \mathrm{i} \Phi( t ) ]$, possibly by means of a ``commutator expansion" of $[ H , \mathrm{i} \Phi( t ) ]$. We refer the reader to e.g. \cite{DeGe97_01} for details on the method of propagation observables, see also \cite[Section 2]{FaSi14_01} for a description of the method of propagation observables for Pauli-Fierz Hamiltonians, closely related to the approach that we will follow here.

It is useful to introduce the following notations
\begin{align*}
\mathbf{d}^{(i)}_{0l} b(t) =  \frac{\partial b}{\partial t}(t) + [\omega^{(i)}_l(p_i), \mathrm{i} b(t) ] , \quad i=1,2 , \quad \mathbf{d}^{(3)}_{0} b(t) =  \frac{\partial b}{\partial t}(t) + [\omega^{(3)}(p_i), \mathrm{i} b(t) ] ,
\end{align*}  
if $b(t)$ is a family of operators, acting on $\mathfrak{h}_1$ if $i=1,2$, or acting on $\mathfrak{h}_2$ if $i=3$. Likewise, we set
\begin{align*}
\mathbf{D}_0 B(t) = \frac{\partial B}{\partial t}(t) + [H_0, \mathrm{i} B(t) ] , \quad \mathbf{D} B(t) = \frac{\partial B}{\partial t}(t) + [H, \mathrm{i} B(t) ] ,
\end{align*}
if $B(t)$ is a family of operators acting on $\mathscr{H}$.
Note that if $B(t) = (b_1(t) , \dots , b_{14}(t) )$, with the notations of Appendix \ref{sec:standard_def} (remembering that the total Hilbert space $\mathscr{H}$ is the tensor product of $14$ Fock spaces), then, as functions of $t$,
\begin{align*}
 \mathbf{D}_0 \mathrm{d}\Gamma(B) =: \mathrm{d}\Gamma( \mathbf{d}_0 b ) := \mathrm{d}\Gamma( & \mathbf{d}_{0,1} b_1 , \dots , \mathbf{d}_{0,14} b_{14} ) \\
:=  \mathrm{d}\Gamma\big( & \mathbf{d}^{(3)}_{0} b_1 , \mathbf{d}^{(3)}_{0} b_2 , \mathbf{d}^{(1)}_{01} b_3 , \mathbf{d}^{(1)}_{01} b_4 , \mathbf{d}^{(2)}_{01} b_5 , \mathbf{d}^{(2)}_{01} b_6 , \mathbf{d}^{(1)}_{02} b_7, \\
 &\mathbf{d}^{(1)}_{02} b_8 , \mathbf{d}^{(2)}_{02} b_9 , \mathbf{d}^{(2)}_{02} b_{10} , \mathbf{d}^{(1)}_{03} b_{11} ,\mathbf{d}^{(1)}_{03} b_{12} , \mathbf{d}^{(2)}_{01} b_{13} , \mathbf{d}^{(2)}_{01} b_{14} \big).
\end{align*}

In the remainder of this section, we prove the propagation estimates that will be used in Section \ref{sec:AC}. We begin with the case where the masses of the neutrinos are supposed to be positive in Section \ref{subsec:massive_propag}, next we turn to the much more difficult case where neutrinos are supposed to be massless (Section \ref{subsec:massless_propag}).

\subsection{Massive neutrinos}\label{subsec:massive_propag}
As mentioned before, the proofs of the propagation estimates of this section are almost straightforward adaptations of the ones in \cite{DeGe99_01,DeGe00_01}. We state the results and emphasize the differences with \cite{DeGe99_01,DeGe00_01} but the details of the proofs are left to the reader.

To shorten expressions below, we set $x_i = (x_i^{(1)},x_i^{(2)},x_i^{(3)}) = \mathrm{i} \nabla$, $i=1,\dots,14$, and, for $R=(R_1,\dots,R_{14})$ and $R'=(R'_1,\dots,R'_{14})$,
\begin{equation}\label{eq:1Rx}
\mathds{1}_{[R,R']}(|x|)=\big(\mathds{1}_{[R_1,R'_1]}(|x_1|),\dots,\mathds{1}_{[R_{14},R'_{14}]}(|x_{14}|)\big)=\big(\mathds{1}_{[R_1,R'_1]}(\sqrt{-\Delta}),\dots,\mathds{1}_{[R_{14},R'_{14}]}(\sqrt{-\Delta})\big),
\end{equation}
and likewise for other functions of $x$. The operators $\mathrm{d}\Gamma(\mathds{1}_{[R,R']}(|x|))$, $\Gamma(\mathds{1}_{[R,R']}(|x|))$ are then defined, as in the previous sections, following the conventions of Appendix \ref{sec:standard_def}. We also set 
\begin{equation}\label{eq:not_omega}
\omega := (\omega^{(3)} , \omega^{(3)} , \omega^{(1)}_1 , \omega^{(1)}_1 , \omega^{(2)}_1 , \omega^{(2)}_1 , \omega^{(1)}_2 , \omega^{(1)}_2 , \omega^{(2)}_2 , \omega^{(2)}_2 , \omega^{(1)}_3 , \omega^{(1)}_3 , \omega^{(2)}_3 , \omega^{(2)}_3 ). 
\end{equation}

Recall that, for $i=1,2$, the notation ``$a_{(i),\cdot} G \in L^2$'' means that, for all $j$, $l$ and $\epsilon$, $a_{(i),l} G_{l,\epsilon}^{(j)}$ is square integrable, and that ``$a_{(3),\cdot} G \in L^2$'' means that for all $j$, $l$ and $\epsilon$, $a_{(3)} G_{l,\epsilon}^{(j)}$ is square integrable.
\begin{Th}\label{th:propag_massive}
Suppose that the masses of the neutrinos $m_{\nu_e}$, $m_{\nu_{\mu}}$, $m_{\nu_{\tau}}$ are positive and consider the Hamiltonian \eqref{total_hamiltonian_g} with $H_I$ given by \eqref{interaction_term}. Assume that
\begin{equation*}
G \in L^2, \quad a_{(i),\cdot} G \in L^2, \quad i = 1,2,3. 
\end{equation*}
and that
\begin{equation*}
G \in \mathbb{H}^{1+\mu} \text{ for some } \mu > 0.
\end{equation*}
\begin{itemize}
\item[(i)] Let $\chi \in \mathrm{C}_0^{\infty}(\mathbb{R})$, $R=(R_1,\dots,R_{14})$ and $R'=(R'_1,\dots,R'_{14})$ be such that $R'_{i}>R_i>1$. There exists $\mathrm{C} > 0$ such that, for all $u \in \mathscr{H}$, 
\begin{equation*}
\int^{\infty}_{1} \Big \| \mathrm{d} \Gamma \Big ( \mathds{1}_{[R,R']} \Big (\frac{|x|}{t} \Big ) \Big)^{\frac{1}{2}}\chi(H) e^{-\mathrm{i}tH} u \Big \|^2 \frac{dt}{t} \leq \mathrm{C} \|u\|^{2}.
\end{equation*}
\item[(ii)] Let $0<\mathrm{v}_0<\mathrm{v}_1$ and $\chi \in \mathrm{C}_0^{\infty}(\mathbb{R})$. There exists $\mathrm{C}>0$ such that, for all $\ell \in \{ 1,2,3\} $ and $u \in \mathscr{H}$,
\begin{align*}
& \int^{\infty}_{1} \Big \| \mathrm{d}\Gamma \Big( \Big\langle \big( \frac{x}{t} - \nabla \omega \big ) , \mathds{1}_{ [ \mathrm{v}_0 , \mathrm{v}_1 ] }( \frac{x}{t} ) \big( \frac{x}{t} - \nabla \omega \big ) \Big\rangle \Big)^{\frac12} \chi(H)e^{-\mathrm{i} t H} u \Big\|^{2} \frac{dt}{t}  \le \mathrm{C} \| u \|^2.
\end{align*}
\item[(iii)] Let $0<\mathrm{v}_0<\mathrm{v}_1$, $J \in \mathrm{C}_0^{\infty}( \{ x \in \mathbb{R}^3 , \mathrm{v}_0 < |x| < \mathrm{v}_1\})$ and $\chi \in \mathrm{C}_0^{\infty}(\mathbb{R})$. There exists $\mathrm{C}>0$ such that, for all $\ell \in \{ 1,2,3\} $ and $u \in \mathscr{H}$,
\begin{equation*}
\int^{\infty}_{1} \Big \| \mathrm{d}\Gamma \Big( \Big| J\Big(\frac{x}{t}\Big) \Big( \frac{x^{(\ell)}}{t} - \partial_{(\ell)} \omega \Big) + \mathrm{h.c }\Big|\Big)^{\frac{1}{2}}\chi(H)e^{-\mathrm{i} t H} u \Big\|^{2} \frac{dt}{t} \le \mathrm{C} \| u \|^2.
\end{equation*}
\item[(iv)]  Let $\chi \in \mathrm{C}_{0}^{\infty}(\mathbb{R})$ be supported in $\mathbb{R}\backslash(\tau \cup \sigma_{\mathrm{pp}}(H))$. There exists $\epsilon>0$ and $\mathrm{C} > 0$ such that, for all $u \in \mathscr{H}$,
\begin{equation*}
\int^{\infty}_{1} \Big\| \Gamma \Big ( \mathds{1}_{[0,\epsilon]} \Big(\frac{|x|}{t}\Big)\Big) \chi(H) e^{-\mathrm{i} t H} u \Big\|^2 \frac{dt}{t} \leq \mathrm{C}\| u \|^2.
 \end{equation*}
\end{itemize}
\end{Th}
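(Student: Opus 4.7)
The plan is to apply the propagation observable method recalled at the start of Section \ref{sec:propag}, following closely the adaptations carried out in \cite{DeGe99_01,Am04_01}. In each case I would choose a family $\Phi(t)=\mathrm{d}\Gamma(\phi(t))$ with $\phi(t)$ a smooth one-particle observable localized via an appropriate function of $x/t$, compute the Heisenberg derivative $\mathbf{D}\Phi(t)=\mathbf{D}_0\Phi(t)+[gH_I,\mathrm{i}\Phi(t)]$, and verify that $\mathbf{D}_0\Phi(t)$ dominates the target quadratic form while the interaction commutator is square-integrable in $t$. The interaction error is controlled by Lemma \ref{lemSCW_1} together with the intertwining identities of Appendix \ref{sec:standard_def}: these rewrite $[H_I,\mathrm{i}\Phi(t)]$ as a finite sum of terms in which the cutoff $\phi(t)$ acts directly on the kernels $G^{(j)}_{l,\epsilon}$, so that the hypothesis $G\in\mathbb{H}^{1+\mu}$ produces a decay of order $t^{-(1+\mu)}$, which is $L^1$ in $t$.

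For part (i), I would take $\Phi(t)=-\mathrm{d}\Gamma(F(|x|/t))$ with $F=(F_1,\dots,F_{14})$, each $F_i$ smooth, bounded, non-increasing, and satisfying $-F'_i(s)\ge c\,\mathds{1}_{[R_i,R'_i]}(s)$. The explicit time derivative contributes $t^{-1}\mathrm{d}\Gamma(-(|x|/t)F'(|x|/t))$, which dominates $(R/t)\,\mathrm{d}\Gamma(\mathds{1}_{[R,R']}(|x|/t))$. A standard pseudodifferential commutator expansion for $[\omega,F(|x|/t)]$ produces the leading term $-t^{-1}\mathrm{d}\Gamma(F'(|x|/t)\,\nabla\omega\cdot\hat x)$ plus $\mathcal{O}(t^{-2})$ remainders; since $|\nabla\omega_i|<1$ for massive species and $R_i>1$, combining the two yields positivity. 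Parts (ii) and (iii) rest on the Graf-type construction of \cite{DeGe99_01}: one uses a convex one-particle function $R(x)$ whose Hessian dominates $\mathds{1}_{[\mathrm{v}_0,\mathrm{v}_1]}$, forming $\phi(t)=tR(x/t)$, and computes that $\mathbf{D}_0\mathrm{d}\Gamma(\phi(t))\ge t^{-1}\mathrm{d}\Gamma(\langle x/t-\nabla\omega,\mathds{1}_{[\mathrm{v}_0,\mathrm{v}_1]}(x/t)(x/t-\nabla\omega)\rangle)$ modulo integrable remainders; (iii) follows from (ii) via $|J(x/t)(x^{(\ell)}/t-\partial_{(\ell)}\omega)+\mathrm{h.c.}|^2\lesssim\langle x/t-\nabla\omega,\mathds{1}_{[\mathrm{v}_0,\mathrm{v}_1]}(x/t)(x/t-\nabla\omega)\rangle$ whenever $\mathrm{supp}\,J\subset\{\mathrm{v}_0<|x|<\mathrm{v}_1\}$.

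The main obstacle is part (iv), the low-velocity estimate, since the needed positivity cannot come from the kinematical inequality $|\nabla\omega|<1$ but must instead be extracted from the Mourre inequality supplied by Theorem \ref{MourreI}. For $\chi\in\mathrm{C}_0^\infty(\mathbb{R}\setminus(\tau\cup\sigma_{\mathrm{pp}}(H)))$ a strict Mourre estimate $\chi(H)[H,\mathrm{i}A]\chi(H)\ge c_0\chi(H)^2$ holds, with $A=\mathrm{d}\Gamma(a)$ as in \eqref{eq:conjugate_1}. The plan is to choose $\Phi(t)=\mathrm{d}\Gamma(F(|x|/t))$ with $F$ smooth, non-increasing, equal to $1$ near $0$ and vanishing outside $[0,\epsilon]$, so that $\Phi(t)$ dominates $\Gamma(\mathds{1}_{[0,\epsilon]}(|x|/t))$ in quadratic form. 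The key observation, adapted from \cite{DeGe99_01}, is that on states localized in $|x|/t\le\epsilon$ the operator $A/t$ acts essentially like $\mathrm{d}\Gamma(\nabla\omega\cdot\hat x)/2$ times a bounded factor, so that inserting the Mourre estimate into the Heisenberg derivative produces the missing positive contribution, while the difference $\mathbf{D}_0\Phi(t)-t^{-1}[H,\mathrm{i}A]$ is controlled by iterated commutator expansions whose errors are integrable thanks to the $\mathrm{C}^1(A)$ regularity guaranteed by Lemma \ref{C1} and, once more, the $\mathbb{H}^{1+\mu}$ hypothesis on the kernels.
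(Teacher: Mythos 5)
Your parts (i) and (ii) follow the same route as the paper, which itself only details (i) (the observable $\chi(H)\mathrm{d}\Gamma_i(F(|x_i|/t))\chi(H)$ with $F$ monotone, $F'$ supported in $(1,\infty)$, positivity from $|\nabla\omega_i|<1<R_i$, interaction error $\mathcal{O}(t^{-1-\mu})$ via Lemma \ref{lemSCW_1} and $G\in\mathbb{H}^{1+\mu}$) and refers to \cite{DeGe99_01} for the rest. One caveat in (ii): with $\phi(t)=tR(x/t)$ the observable is not uniformly bounded and its Heisenberg derivative is only first order in $x/t-\nabla\omega$; the Graf-type argument of \cite{DeGe99_01} uses the bounded observable $R(x/t)-\tfrac12\big(\langle x/t-\nabla\omega,\nabla R(x/t)\rangle+\mathrm{h.c.}\big)$, whose derivative produces the Hessian term. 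This is repairable, but as written the inequality $\mathbf{D}_0\mathrm{d}\Gamma(\phi(t))\ge t^{-1}\mathrm{d}\Gamma(\langle\cdot,\mathds{1}\,\cdot\rangle)$ is not what that choice of $\phi$ gives.

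The genuine gap is in your derivation of (iii) from (ii). Set $B_\ell(t)=J(x/t)(x^{(\ell)}/t-\partial_{(\ell)}\omega)+\mathrm{h.c.}$ and $M(t)=\langle(x/t-\nabla\omega),\mathds{1}_{[\mathrm{v}_0,\mathrm{v}_1]}(x/t)(x/t-\nabla\omega)\rangle$. What is true (and what you wrote) is $|B_\ell(t)|^2\lesssim M(t)$ up to $\mathcal{O}(t^{-1})$ errors; but to deduce (iii) from (ii) by monotonicity of $\mathrm{d}\Gamma$ you would need $|B_\ell(t)|\lesssim M(t)+(\text{integrable})$, which is false: on the region where $x/t\in\mathrm{supp}\,J$ and $|x/t-\nabla\omega|=\varepsilon$ is small, $|B_\ell|$ is of order $\varepsilon$ while $M$ is of order $\varepsilon^2$, and $\nabla\omega$ does take values in $\{\mathrm{v}_0<|x|<\mathrm{v}_1\}$. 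Nor does the squared inequality pass through the time integral: $\int\langle\mathrm{d}\Gamma(M)\rangle_t\,dt/t<\infty$ gives no control of $\int\langle\mathrm{d}\Gamma(M^{1/2})\rangle_t\,dt/t$ (Cauchy--Schwarz only gives $\langle\mathrm{d}\Gamma(M^{1/2})\rangle\le\langle\mathrm{d}\Gamma(M)\rangle^{1/2}\langle N\rangle^{1/2}$, and $a(t)=(\log t)^{-2}$ shows that integrability of $a$ w.r.t. $dt/t$ does not imply integrability of $a^{1/2}$). The estimate (iii) must be proved by its own propagation observable, of the form $\mathrm{d}\Gamma\big(J(x/t)(x^{(\ell)}/t-\partial_{(\ell)}\omega)+\mathrm{h.c.}\big)$, with the $\nabla J$ error terms controlled by (ii); this is the argument of \cite{DeGe99_01}, Section 6, that the paper invokes, and (iii) is genuinely needed later (e.g.\ in Proposition \ref{Thvelocity}), so it cannot be reduced away.

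Part (iv) as proposed would also fail. For $\Phi(t)=\mathrm{d}\Gamma(F(|x|/t))$ the Heisenberg derivative is $-t^{-1}\mathrm{d}\Gamma\big((|x|/t-\nabla\omega\cdot\hat x)F'(|x|/t)\big)_{\mathrm{sym}}+\mathcal{O}(t^{-2})$ plus the interaction term, which is nowhere near $t^{-1}[H,\mathrm{i}A]=t^{-1}\mathrm{d}\Gamma(|\nabla\omega|^2)+t^{-1}g[H_I,\mathrm{i}A]$: the difference contains the non-decaying, unlocalized term $t^{-1}\mathrm{d}\Gamma(|\nabla\omega|^2)$, of size $N/t$, so no commutator expansion makes it integrable and the Mourre positivity never enters the derivative of your $\Phi(t)$. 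The argument that works (and is the one the paper means by ``adapting \cite{DeGe99_01}''; see its own detailed massless version in Theorem \ref{th:propag_massless} (iv)) uses the observable $\Gamma(q^t)\frac{A}{t}\Gamma(q^t)$ with $q^t=q(x/t)$ localized in $|x|\le 2\delta t$: its derivative produces $t^{-1}\Gamma(q^t)[H,\mathrm{i}A]\Gamma(q^t)$, bounded below via the strict Mourre estimate on $\mathrm{supp}\,\chi$ by $\mathrm{c}_0 t^{-1}\Gamma(q^t)^2$ up to errors, while $-t^{-2}\Gamma(q^t)A\Gamma(q^t)$ is $\mathcal{O}(\delta/t)(N+1)$ because $A/t$ is $\mathcal{O}(\delta)(N+1)$-bounded on $\mathrm{Ran}\,\Gamma(q^t)$, and the $\mathrm{d}\Gamma(q^t,\mathbf{d}_0q^t)$ terms are absorbed using (ii)--(iii). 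Your heuristic that $A/t$ behaves like a bounded second-quantized operator on low-velocity states is exactly the ingredient needed for those error bounds, but it does not substitute for taking $\Gamma(q^t)\frac{A}{t}\Gamma(q^t)$ (rather than $\mathrm{d}\Gamma(F(|x|/t))$) as the propagation observable; note also that the target in (iv) involves $\Gamma$, not $\mathrm{d}\Gamma$, of the low-velocity cutoff.
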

\begin{proof}
We give a sketch of the proof of (i), underlying the differences with \cite{DeGe99_01}. The proofs of (ii)--(iv) can be achieved by similar arguments, adapting \cite[Section 6]{DeGe99_01} in a straightforward way.

By definition of the operator $\mathrm{d}\Gamma$, we have that
\begin{align}
\int^{\infty}_{1} \Big \| \mathrm{d}\Gamma \Big( \mathds{1}_{[R,R']} \Big(\frac{|x|}{t}\Big)\Big)^{\frac{1}{2}}\chi(H)e^{-\mathrm{i}tH} u \Big\|^2 \frac{dt}{t} & = \int^{\infty}_{1} \Big\| \sum^{14}_{i=1} \mathrm{d}\Gamma_i \Big(\mathds{1}_{[R_i,{R}'_i]}\Big(\frac{\sqrt{-\Delta}}{t}\Big)\Big)^{\frac{1}{2}}\chi(H) e^{-\mathrm{i}tH} u \Big\|^2 \frac{dt}{t} \notag \\
& \lesssim \sum^{14}_{i=1} \int^{\infty}_{1} \Big\|\mathrm{d}\Gamma_i\Big(\mathds{1}_{[R_i,{R}'_i]}\Big(\frac{\sqrt{-\Delta}}{t}\Big)\Big)^{\frac{1}{2}}\chi(H) e^{-\mathrm{i}tH} u \Big\|^2 \frac{dt}{t}, \label{eq:hdt1}
\end{align}
where we have set
\begin{equation}\label{eq:defdgammaj}
\mathrm{d}\Gamma_{i}(B) = \underset{i-1}{\underbrace{\mathds{1}\otimes \dots \otimes \mathds{1}}}\otimes \mathrm{d}\Gamma(B) \otimes \underset{14-i}{\underbrace{\mathds{1}\otimes \dots \otimes\mathds{1}}}.
\end{equation}
Next we proceed as in \cite{DeGe99_01,Am04_01}. Let $F \in \mathrm{C}^{\infty}(\mathbb{R})$ be equal to 1 near $\infty$, to 0 on a compact set near the origin, and such that $F'(s) \geq \mathds{1}_{[R_i,R'_i]}(s) $. We define in addition
\begin{align*}
\Phi(t) & =  \chi(H) \mathrm{d}\Gamma_i\Big(F\Big(\frac{|x_i|}{t}\Big)\Big)\chi(H) , \qquad b(t)  = \mathbf{d}_{0, i} F\left(\frac{|x_i|}{t}\right).
\end{align*}
A direct computation then shows that
\begin{equation*}
\mathbf{D} \Phi(t) = \chi(H) b(t) \chi(H) + \chi(H) \, \mathrm{i} \Big[ H_I,\mathrm{d}\Gamma_i\Big(F\Big(\frac{|x_i|}{t}\Big)\Big) \Big ]\chi(H) .
\end{equation*}
Using pseudo-differential calculus (or a commutator expansion at second order, see, e.g., \cite{DeGe97_01}) gives
\begin{align*}
b(t)  \leq  \frac{ - \mathrm{C}_0}{t} \mathds{1}_{[R_i,R'_i]}(t) + \mathcal{O}(t^{-2}) .
\end{align*}
Besides, since $F_i$ vanishes near $0$ and since $G \in \mathbb{H}^{1+\mu}$, we have that
\begin{align*}
\Big\|F_i\Big(\frac{|x_i|}{t}\Big)G\Big\|_2=\mathcal{O}(t^{-1-\mu}).
\end{align*}
Therefore one can deduce from Lemma \ref{lemSCW_1} together with the fact that $N$ is relatively $H$-bounded that
\begin{align*}
\chi( H ) \Big [ H_I , \mathrm{d}\Gamma_i\Big(F\Big(\frac{|x_i|}{t}\Big)\Big) \Big] \chi( H ) =  \mathcal{O}(t^{-1-\mu}).
\end{align*} 
Hence the condition \eqref{eq:condPO} is satisfied and it suffices to apply the abstract method recalled at the beginning of Section \ref{sec:propag}.
\end{proof}
\subsection{Massless neutrinos}\label{subsec:massless_propag}

In this section we prove the counterpart of Theorem \ref{th:propag_massive} in the case where the masses of the neutrinos vanish. As mentioned at the beginning of Section \ref{sec:propag}, for massless neutrinos we cannot directly adapt \cite{DeGe99_01,DeGe00_01}, hence some parts of the proof require substantial modifications. A first issue comes from the fact that, in order to control remainder terms in some commutator expansions, one needs to control the second quantization of expressions of the form $[ \mathrm{i} \nabla_{p_2} , [ \mathrm{i} \nabla_{p_2} , \omega^{(2)}_l(p_2) ] ]$, where $\omega^{(2)}_l$ are the dispersion relations of neutrinos. In the massive case, such commutators are bounded. Their second quantizations are therefore relatively $N$-bounded and, hence, relatively $H$-bounded. In the massless case, however, $[ \mathrm{i} \nabla_{p_2} , [ \mathrm{i} \nabla_{p_2} , \omega^{(2)}_l(p_2) ] ]$ is  of order $\mathcal{O}(|p_2|^{-1})$ near $0$. To overcome this difficulty, the idea is to show that $\langle e^{ - \mathrm{i} t H } u , \mathrm{d} \Gamma( |p_2|^{-1} ) e^{ - \mathrm{i} t H } u \rangle \lesssim t^\gamma$ for some $\gamma<1$ and for suitable states $u \in \mathscr{H}$. This idea was used in \cite{BoFaSi12_01,FaSi14_01,FaSi14_02} for Pauli-Fierz Hamiltonians, modifying an argument of \cite{Ge02_01}. In Lemma \ref{lem:evolp2-1}, we adapt \cite{BoFaSi12_01,FaSi14_01,FaSi14_02} to our context.

A second issue that has to be dealt with, in the massless case, to prove a suitable minimal velocity estimate, is that we have to use the Mourre estimate stated in Theorem \ref{MourreI_3}, with a positive term proportional to the number operator $N_{\mathrm{neut}}$. This is again required in order to be able to control some remainder terms. Some care must also be taken because of the fact that the considered conjugate operator (see \eqref{eq:conjugate_1}) is not self-adjoint. Propagation estimates analogous to those of Theorem \ref{th:propag_massive} are established in Theorem \ref{th:propag_massless}. An important difference with the massive case is that the propagation estimates hold only for states in a dense subset of the total Hilbert space.

In Theorem \ref{th:propag_massless_2}, we prove a second version of propagation estimates involving a time-dependent modified one-particle ``position'' operator. Theorem \ref{th:propag_massless_2} will be a crucial input in our proof of asymptotic completeness of the wave operators in Section \ref{sec:AC}.

For shortness, we set $\mathrm{d} \Gamma ( | p_2 |^{-\alpha} ) = \mathrm{d} \Gamma ( h^\alpha )$ where $h^\alpha=(h^\alpha_1,\dots,h^\alpha_{14})$, according to the notations of Appendix \ref{sec:standard_def}, with $h^\alpha_i=0$ if $i\in\{1,2,3,4,7,8,11,12\}$ (corresponding to the label of a massive particle), and $h^\alpha_i$ is the operator of multiplication by $|p_2|^{-\alpha}$ if $i \in \{5,6,9,10,13,14\}$ (corresponding to the label of a neutrino). The first part of the following lemma (see \eqref{eq:estimp2-1}) is adapted from \cite[Lemma 4.1]{BoFaSi12_01}. For technical reasons, that will appear in the proofs of Theorems \ref{th:propag_massless} and \ref{th:propag_massless_2} below, we also need a new, related estimate, see \eqref{eq:estimp2-1_a}.
\begin{lem}\label{lem:evolp2-1}
Suppose that the masses of the neutrinos $m_{\nu_e}$, $m_{\nu_{\mu}}$, $m_{\nu_{\tau}}$ vanish and consider the Hamiltonian \eqref{total_hamiltonian_g} with $H_I$ given by \eqref{interaction_term}. Let $0 \le \alpha \le 1$ and assume that
\begin{equation*}
G \in L^2 , \quad |p_2|^{-1-\mu} G \in L^2 \text{ for some } \mu > 0.
\end{equation*}
Let $\chi \in \mathrm{C}_0^\infty ( \mathbb{R} )$. There exists $\mathrm{C}>0$ such that, for all $u \in \mathscr{D}( \mathrm{d}\Gamma( |p_2|^{-\alpha} )^{\frac12} )$, 
\begin{equation}
\Big \langle e^{ - \mathrm{i} t H } \chi ( H ) u , \mathrm{d} \Gamma ( | p_2 |^{-\alpha} ) e^{ - \mathrm{i} t H } \chi( H ) u \Big \rangle \le \mathrm{C} t^{\frac{\alpha}{1+\mu}} \big ( \| \mathrm{d} \Gamma( |p_2|^{-\alpha} )^{\frac12} u \|^2 + \| N_{\mathrm{neut}}^{\frac12} u \|^2 + \| u \|^2 \big ) , \label{eq:estimp2-1}
\end{equation}
and for all $u \in \mathscr{D}( \mathrm{d}\Gamma( |p_2|^{-\alpha} ) )$, 
\begin{equation}
\Big \| \mathrm{d} \Gamma ( | p_2 |^{-\alpha} ) e^{ - \mathrm{i} t H } \chi( H ) u \Big \| \le \mathrm{C} t^{\frac{\alpha}{1+\mu}} \big ( \| \mathrm{d} \Gamma( |p_2|^{-\alpha} ) u \| + \| N_{\mathrm{neut}} u \| + \| u \| \big ) .  \label{eq:estimp2-1_a}
\end{equation}
\end{lem}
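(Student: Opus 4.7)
The plan is to adapt the strategy of \cite[Lemma 4.1]{BoFaSi12_01} and related arguments in \cite{FaSi14_01,FaSi14_02}. The idea is to approximate $\mathrm{d}\Gamma(|p_2|^{-\alpha})$ by a family of bounded operators depending on an infrared cutoff $\sigma > 0$, control their Heisenberg evolution by a differential inequality, and optimize $\sigma$ as a function of $t$ before passing to the monotone limit $\sigma \to 0$.

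Concretely, I would introduce, for $\sigma > 0$, the bounded operator $N_{\alpha,\sigma} := \mathrm{d}\Gamma(\phi_{\alpha,\sigma})$ with $\phi_{\alpha,\sigma}(p_2) := \min(|p_2|^{-\alpha},\sigma^{-\alpha})$, acting only on the neutrino Fock spaces. By construction $0 \le N_{\alpha,\sigma} \le \sigma^{-\alpha}N_{\mathrm{neut}}$ and $N_{\alpha,\sigma} \le \mathrm{d}\Gamma(|p_2|^{-\alpha})$, and since $\phi_{\alpha,\sigma}$ is a function of $|p_2|$ only, it commutes with the neutrino dispersion relation $\omega^{(2)}_l(p_2)=|p_2|$, so that $[H_0, N_{\alpha,\sigma}]=0$. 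The commutator $[H_I, N_{\alpha,\sigma}]$, computed as in \eqref{commut_1}--\eqref{commut_4}, is a sum of interaction-like operators whose neutrino kernels $G^{(j)}_{l,\epsilon}$ are multiplied by $\pm \phi_{\alpha,\sigma}(|p_2|)$. A direct interpolation yields the key bound
\begin{equation*}
\|\phi_{\alpha,\sigma}G\|_2 \lesssim \sigma^{1+\mu-\alpha}\||p_2|^{-1-\mu}G\|_2,
\end{equation*}
obtained by splitting the $L^2$-integral at $|p_2|=\sigma$ and using $\phi_{\alpha,\sigma}(p_2) \le \sigma^{1+\mu-\alpha}|p_2|^{-(1+\mu)}$ on each region (valid since $0 \le \alpha \le 1 \le 1+\mu$).

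Combining this with the $N_\tau$ estimate of Lemma \ref{lemSCW_1} and the boundedness of $(N_{\mathrm{lept}}+N_W+1)\chi(H)$ granted by Lemma \ref{lemV1NumbestimatesV2}, one obtains, setting $\psi_t := e^{-\mathrm{i}tH}\chi(H)u$,
\begin{equation*}
\bigl|\tfrac{d}{dt}\langle \psi_t, N_{\alpha,\sigma}\psi_t\rangle\bigr| \le \mathrm{C}\,\sigma^{1+\mu-\alpha}\|u\|^2.
\end{equation*}
Integrating gives $\langle \psi_t, N_{\alpha,\sigma}\psi_t\rangle \le \langle u, N_{\alpha,\sigma}u\rangle + \mathrm{C}\, t\, \sigma^{1+\mu-\alpha}\|u\|^2$. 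Using the two operator inequalities $\langle u, N_{\alpha,\sigma}u\rangle \le \|\mathrm{d}\Gamma(|p_2|^{-\alpha})^{1/2}u\|^2$ and $\langle u, N_{\alpha,\sigma}u\rangle \le \sigma^{-\alpha}\|N_{\mathrm{neut}}^{1/2}u\|^2$, and optimizing with $\sigma = t^{-1/(1+\mu)}$ so that $t\sigma^{1+\mu-\alpha} = \sigma^{-\alpha} = t^{\alpha/(1+\mu)}$, produces the desired bound on $\langle \psi_t, N_{\alpha,\sigma(t)}\psi_t\rangle$. Estimate \eqref{eq:estimp2-1} then follows by the monotone convergence $N_{\alpha,\sigma}\nearrow \mathrm{d}\Gamma(|p_2|^{-\alpha})$ in the quadratic-form sense as $\sigma \to 0$.

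The operator-norm estimate \eqref{eq:estimp2-1_a} is obtained by the same scheme applied to $\|N_{\alpha,\sigma}\psi_t\|^2 = \langle\psi_t, N_{\alpha,\sigma}^2\psi_t\rangle$: expanding its Heisenberg derivative as $\langle\psi_t,[\mathrm{i}H,N_{\alpha,\sigma}]N_{\alpha,\sigma}\psi_t\rangle + \mathrm{c.c.}$ and applying Cauchy-Schwarz produces a Gronwall-type inequality $|\tfrac{d}{dt}\|N_{\alpha,\sigma}\psi_t\|| \lesssim \sigma^{1+\mu-\alpha}\|u\|$, which, after integration and the same optimization $\sigma(t)=t^{-1/(1+\mu)}$, gives the norm bound; passage to the strong limit $N_{\alpha,\sigma}\psi_t \to \mathrm{d}\Gamma(|p_2|^{-\alpha})\psi_t$ concludes the argument. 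The main technical obstacle throughout is justifying the passage $\sigma \to 0$ with time-uniform control (ensuring that $\psi_t$ remains in the form, resp.\ strong, domain of $\mathrm{d}\Gamma(|p_2|^{-\alpha})$), which is guaranteed by the $\sigma$-uniform commutator bound above together with the monotone convergence of $\phi_{\alpha,\sigma}$ to $|p_2|^{-\alpha}$.
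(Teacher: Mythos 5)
Your overall strategy---regularize $\mathrm{d}\Gamma(|p_2|^{-\alpha})$ in the infrared, control the Heisenberg evolution of the regularized observable through its commutator with $H_I$, and optimize the cutoff scale in $t$---is the right one and close in spirit to the paper's proof. However, the key estimate on which your argument rests is false. You claim $\|\phi_{\alpha,\sigma}G\|_2 \lesssim \sigma^{1+\mu-\alpha}\||p_2|^{-1-\mu}G\|_2$ for $\phi_{\alpha,\sigma}=\min(|p_2|^{-\alpha},\sigma^{-\alpha})$, invoking the pointwise bound $\phi_{\alpha,\sigma}\le\sigma^{1+\mu-\alpha}|p_2|^{-(1+\mu)}$ ``on each region''. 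That pointwise bound holds only on $\{|p_2|\le\sigma\}$; on $\{|p_2|\ge\sigma\}$ it amounts to $|p_2|^{1+\mu-\alpha}\le\sigma^{1+\mu-\alpha}$, which is the reverse of what is true there. In fact $\phi_{\alpha,\sigma}$ increases to $|p_2|^{-\alpha}$ as $\sigma\to0$, so $\|\phi_{\alpha,\sigma}G\|_2$ does not decay in $\sigma$ at all (take $G$ supported in $\{|p_2|\ge1\}$: your left-hand side equals the fixed positive number $\||p_2|^{-\alpha}G\|_2$ while your right-hand side tends to $0$). Consequently the commutator $[H_I,\mathrm{d}\Gamma(\phi_{\alpha,\sigma})]$, whose bound relative to $N_{\mathrm{lept}}+N_W+1$ is $\lesssim\|\phi_{\alpha,\sigma}G\|_2$ by Lemma \ref{lemSCW_1}, is only bounded uniformly in $\sigma$, not $O(\sigma^{1+\mu-\alpha})$; your differential inequality then yields only $\langle\psi_t,N_{\alpha,\sigma}\psi_t\rangle\lesssim \langle\psi_0,N_{\alpha,\sigma}\psi_0\rangle + t\,\|u\|^2$, i.e.\ linear growth in $t$, which is precisely what the lemma is designed to beat (the sublinear growth $t^{\alpha/(1+\mu)}$ is what makes the remainders in the propagation estimates integrable). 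The same defect invalidates the Gronwall argument you propose for \eqref{eq:estimp2-1_a}.

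The repair is the asymmetric decomposition the paper actually uses: write $|p_2|^{-\alpha} = |p_2|^{-\alpha}f_0(t^\nu|p_2|) + |p_2|^{-\alpha}f_\infty(t^\nu|p_2|)$ with a smooth cutoff $f_0$ supported near $0$. Only the low-momentum piece is differentiated in time; there the smallness of the commutator, $\||p_2|^{-\alpha}f_0(t^\nu|p_2|)G\|_2\lesssim t^{-(1+\mu-\alpha)\nu}\||p_2|^{-1-\mu}G\|_2$, comes from the shrinking support $\{|p_2|\lesssim t^{-\nu}\}$ (this is exactly where your pointwise bound is valid), while the explicit $\partial_t$ term has a sign ($f_0'\le0$) and can be dropped. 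The high-momentum piece is not evolved at all: it is bounded directly at time $t$ by $t^{\alpha\nu}N_{\mathrm{neut}}$, and its expectation along the flow is controlled using that $N_{\mathrm{neut}}-N_{\mathrm{lept}}$ commutes with $H$ and that $N_{\mathrm{lept}}$ is relatively $H$-bounded---an ingredient your proposal never invokes. Choosing $\nu=(1+\mu)^{-1}$ balances the two contributions. Your single monotone family $\phi_{\alpha,\sigma}$ cannot reproduce this, because the commutator of the full regularized observable with $H_I$ inherits the non-decaying high-momentum part $|p_2|^{-\alpha}\mathds{1}_{\{|p_2|\ge\sigma\}}$ of the weight, which must be handled by conservation of $N_{\mathrm{neut}}-N_{\mathrm{lept}}$ rather than by differentiation.
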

\begin{proof}
To prove \eqref{eq:estimp2-1}, we consider a function $f_0 \in \mathrm{C}^\infty( [0,\infty) ; \mathbb{R} )$ such that $f_0$ is decreasing, $f_0(r) = 1$ on $[ 0 , 1 ]$ and $f_0(r) = 0$ on $[ 2 , \infty )$. Let $f_\infty = 1 - f$. For $\nu>0$, we decompose
\begin{equation} 
\mathrm{d} \Gamma ( | p_2 |^{- \alpha} ) = \mathrm{d} \Gamma \big( | p_2 |^{-\alpha} f_0( t^{\nu} | p_2 | ) \big) +  \mathrm{d} \Gamma \big( | p_2 |^{-\alpha} f_\infty (t^{\nu} | p_2 | ) \big) , \label{eq:decompp-1}
\end{equation}
and insert this into the left-hand-side of \eqref{eq:estimp2-1}. Since $f_\infty$ is supported in $[1,\infty)$, the second term can be estimated as
\begin{align} 
\Big \langle e^{ - \mathrm{i} t H } \chi ( H ) u , \mathrm{d} \Gamma \big( | p_2 |^{-\alpha} f_\infty (t^{\nu} | p_2 | ) \big) e^{ - \mathrm{i} t H } \chi ( H ) u \Big \rangle &\leq t^{\alpha\nu} \Big \langle e^{ - \mathrm{i} t H } \chi ( H ) u , \mathrm{d} \Gamma \big( f_\infty ( t^{\nu} | p_2 | ) \big) e^{ - \mathrm{i} t H } \chi ( H ) u \Big \rangle \notag \\
&\leq t^{\alpha\nu} \Big \langle e^{ - \mathrm{i} t H } \chi ( H ) u , N_{\mathrm{neut}} e^{ - \mathrm{i} t H } \chi ( H ) u \Big \rangle \notag \\
&\lesssim t^{\alpha\nu} \big( \| N_{\mathrm{neut}}^{\frac12} u \|^2 + \| u \|^2 \big) . \label{eq:1st_p2-1}
\end{align}
The second equality comes from the facts that $N_{\mathrm{neut}}-N_{\mathrm{lept}}$ commutes with $H$ and that $N_{\mathrm{lept}}$ is relatively $H$-bounded. To estimate the evolution of the first term in \eqref{eq:decompp-1}, we differentiate
\begin{align*}
& \partial_t \Big \langle e^{ - \mathrm{i} t H } \chi ( H ) u , \mathrm{d} \Gamma \big( | p_2 |^{-\alpha} f_0 (t^{\nu} | p_2 | ) \big) e^{ - \mathrm{i} t H } \chi ( H ) u \Big \rangle \\
&= \nu t^{\nu-1} \Big \langle e^{ - \mathrm{i} t H } \chi ( H ) u , \mathrm{d} \Gamma \big( | p_2 |^{1-\alpha} f'_0 (t^{\nu} | p_2 | ) \big) e^{ - \mathrm{i} t H } \chi ( H ) u \Big \rangle \\
&\quad + \Big \langle e^{ - \mathrm{i} t H } \chi ( H ) u , \big [ H , \mathrm{i} \mathrm{d} \Gamma \big( | p_2 |^{-\alpha} f_0 (t^{\nu} | p_2 | ) \big) \big ] e^{ - \mathrm{i} t H } \chi ( H ) u \Big \rangle .
\end{align*}
Since $f_0' \le 0$ and since $H_0$ commutes with $\mathrm{d} \Gamma \big( | p_2 |^{-\alpha} f_0 (t^{\nu} | p_2 | ) \big)$, this implies that
\begin{align}
& \partial_t \Big \langle e^{ - \mathrm{i} t H } \chi ( H ) u , \mathrm{d} \Gamma \big( | p_2 |^{-\alpha} f_0 (t^{\nu} | p_2 | ) \big) e^{ - \mathrm{i} t H } \chi ( H ) u \Big \rangle \notag \\
&\le \Big \langle e^{ - \mathrm{i} t H } \chi ( H ) u , \big [ H_I(G) , \mathrm{i} \mathrm{d} \Gamma \big( | p_2 |^{-\alpha} f_0 (t^{\nu} | p_2 | ) \big) \big ] e^{ - \mathrm{i} t H } \chi ( H ) u \Big \rangle . \label{eq:estimp2-1_1}
\end{align}
Similarly as in \eqref{commut_1}--\eqref{commut_4}, a direct computation gives
\begin{align*}
& \big [H_I^{(1)}(G) , \mathrm{i} \mathrm{d} \Gamma \big( | p_2 |^{-\alpha} f_0 (t^{\nu} | p_2 | ) \big ) \big ] = - H_I^{(1)}( \mathrm{i}  | p_2 |^{-\alpha} f_0 (t^{\nu} | p_2 | ) G ),  \\
& \big [H_I^{(2)}(G) , \mathrm{i} \mathrm{d} \Gamma \big( | p_2 |^{-\alpha} f_0 (t^{\nu} | p_2 | ) \big ) \big ] = - H_I^{(2)}( \mathrm{i} | p_2 |^{-\alpha} f_0 (t^{\nu} | p_2 | ) G ),  \\
& \big [H_I^{(3)}(G) , \mathrm{i} \mathrm{d} \Gamma \big( | p_2 |^{-\alpha} f_0 (t^{\nu} | p_2 | ) \big ) \big ] = H_I^{(3)}( \mathrm{i} | p_2 |^{-\alpha} f_0 (t^{\nu} | p_2 | ) G ),  \\
& \big [H_I^{(4)}(G) , \mathrm{i} \mathrm{d} \Gamma \big( | p_2 |^{-\alpha} f_0 (t^{\nu} | p_2 | ) \big ) \big ] = H_I^{(4)}( \mathrm{i} | p_2 |^{-\alpha} f_0 (t^{\nu} | p_2 | ) G ). 
\end{align*}
By Lemma \ref{lemSCW_1}, the support of $f_0$ and the assumption that $|p_2|^{-1-\mu} G \in L^2$, we obtain that
\begin{align}
\big \| \big [ H_I(G) , \mathrm{i} \mathrm{d} \Gamma \big( | p_2 |^{-\alpha} f_0 (t^{\nu} | p_2 | ) \big) \big ] (N_{\mathrm{lept}} + N_W + 1)^{-1} \big \| & \lesssim \| (1 + | p_2 |^{-\alpha} f_0 (t^{\nu} | p_2 | ) ) G \|_2 \notag \\
& \lesssim \| (1 + | p_2 |^{1+\mu-\alpha} f_0 (t^{\nu} | p_2 | )  |p_2|^{-1-\mu}) G \|_2 \notag \\
& \lesssim t^{-(1+\mu-\alpha) \nu } \| (1 + |p_2|^{-(1+\mu)} ) G \|_2. \label{eq:ruqb1}
\end{align}
Since $N_{\mathrm{lept}} + N_W$ is relatively $H$-bounded, integrating \eqref{eq:estimp2-1_1} shows that
\begin{align}
& \Big \langle e^{ - \mathrm{i} t H } \chi ( H ) u , \mathrm{d} \Gamma \big( | p_2 |^{-\alpha} f_0 (t^{\nu} | p_2 | ) \big) e^{ - \mathrm{i} t H } \chi ( H ) u \Big \rangle \lesssim t^{1-(1+\mu-\alpha)\nu} \big ( \| \mathrm{d} \Gamma( |p_2|^{-\alpha} )^{\frac12} u \|^2 + \| u \|^2 \big ). \label{eq:1st_p2-2}
\end{align}
Equations \eqref{eq:1st_p2-1} and \eqref{eq:1st_p2-2} yield
\begin{align*}
& \Big \langle e^{ - \mathrm{i} t H } \chi ( H ) u , \mathrm{d} \Gamma \big( | p_2 |^{-\alpha} \big) e^{ - \mathrm{i} t H } \chi ( H ) u \Big \rangle \lesssim \max( t^{1-(1+\mu-\alpha)\nu} , t^{\alpha \nu} ) \big ( \| \mathrm{d} \Gamma( |p_2|^{-1} )^{\frac12} u \|^2 + \| N_{\mathrm{neut}}^{\frac12} u \|^2 + \| u \|^2 \big ). 
\end{align*}
Choosing $\nu =( 1+\mu)^{-1}$ concludes the proof of \eqref{eq:estimp2-1}.

To establish \eqref{eq:estimp2-1_a}, we modify the proof as follows. On one hand, we have that
\begin{equation}
\Big\| \mathrm{d} \Gamma \big( | p_2 |^{-\alpha} f_\infty (t^{\nu} | p_2 | ) \big) e^{ - \mathrm{i} t H } \chi ( H ) u  \Big\| \leq t^{-\nu\alpha} \Big\| N_{\mathrm{neut}} e^{ - \mathrm{i} t H } \chi ( H ) u  \Big\| \lesssim  t^{-\nu\alpha} \Big( \big\| N_{\mathrm{neut}}  u  \big\| +\| u  \|\Big) , \label{eq:jske1}
\end{equation}
and on the other hand
\begin{align}
& \Big\| \mathrm{d} \Gamma \big( | p_2 |^{-\alpha} f_0 (t^{\nu} | p_2 | ) \big) e^{ - \mathrm{i} t H } \chi ( H ) u  \Big\| \notag \\
&=\Big\| e^{\mathrm{i}t H } \mathrm{d} \Gamma \big( | p_2 |^{-\alpha} f_0 (t^{\nu} | p_2 | ) \big) e^{ - \mathrm{i} t H } \chi ( H ) u  \Big\| \notag \\
&\le \Big\| \mathrm{d} \Gamma \big( | p_2 |^{-\alpha} f_0 ( | p_2 | ) \big) e^{ - \mathrm{i} H } \chi ( H ) u  \Big\| + \int_1^t \Big\| e^{\mathrm{i}s H } \Big [ \mathrm{d} \Gamma \big( | p_2 |^{-\alpha} f_0 (s^{\nu} | p_2 | ) \big) , \mathrm{i} H \Big ] e^{ - \mathrm{i} s H } \chi ( H ) u  \Big\| ds \notag \\
& \lesssim \big\| \mathrm{d} \Gamma \big( | p_2 |^{-\alpha} \big) u \big\|+\|u\| + \int_1^t s^{-(1+\mu-\alpha) \nu } \big \| (1 + |p_2|^{-(1+\mu)} ) G \big \|_2 \| u \| ds , \label{eq:jske2}
\end{align}
where we used \eqref{eq:ruqb1} and the fact that $N_{\mathrm{lept}} + N_W$ is relatively $H$-bounded in the last inequality. Combining \eqref{eq:jske1} and \eqref{eq:jske2}, we deduce \eqref{eq:estimp2-1_a} similarly as above.
\end{proof}
Now we are ready to prove a first version of the propagation estimates in the massless case.  The general strategy is the same as in Theorem \ref{th:propag_massive}, with some important technical modifications. In particular, we have to use Lemma \ref{lem:evolp2-1}, the particular form of the Mourre estimate stated in Theorem \ref{MourreI_3} and the fact that $N_{\mathrm{lept}} - N_{\mathrm{neut}}$ commutes with $H$.

Recall that the notation $\mathds{1}_{[R,R']}(|x|)$ has been introduced in \eqref{eq:1Rx}.
\begin{Th}\label{th:propag_massless}
Suppose that the masses of the neutrinos $m_{\nu_e}$, $m_{\nu_{\mu}}$, $m_{\nu_{\tau}}$ vanish and consider the Hamiltonian $H = H_0 + g ( H_I^{(1)} + H_I^{(2)} )$ with $H_I^{(1)}$ and $H_I^{(2)}$ given by \eqref{interaction_term}. Assume that
\begin{equation*}
G \in L^2, \quad a_{(i),\cdot} G \in L^2, \quad |p_3|^{-1} a_{(i),\cdot} G \in L^2, \quad i = 1,2,3 ,
\end{equation*}
and that
\begin{equation*}
G \in \mathbb{H}^{1+\mu} \text{ for some } \mu > 0.
\end{equation*}
\begin{itemize}
\item[(i)] Let $\chi \in \mathrm{C}_0^{\infty}(\mathbb{R})$, $R=(R_1,\dots,R_{14})$ and $R'=(R'_1,\dots,R'_{14})$ be such that $R'_{i}>R_i>1$. There exists $\mathrm{C} > 0$ such that, for all $u \in \mathscr{D}(N_{\mathrm{neut}}^{\frac12})\cap\mathscr{D}( \mathrm{d}\Gamma( |p_2|^{-1} )^{\frac12} )$, 
\begin{align*}
& \int^{\infty}_{1} \Big \| \mathrm{d} \Gamma \Big ( \mathds{1}_{[R,R']} \Big (\frac{|x|}{t} \Big ) \Big)^{\frac{1}{2}}\chi(H) e^{-\mathrm{i}tH} u \Big \|^2 \frac{dt}{t} \leq \mathrm{C} \big ( \| N_{\mathrm{neut}}^{\frac12} u \|^2 + \| \mathrm{d}\Gamma( |p_2|^{-1} )^{\frac12} u \|^2 + \|u\|^{2} \big ).
\end{align*}
\item[(ii)] Let $0<\mathrm{v}_0<\mathrm{v}_1$ and $\chi \in \mathrm{C}_0^{\infty}(\mathbb{R})$. There exists $\mathrm{C}>0$ such that, for all $u \in \mathscr{D}(N_{\mathrm{neut}}^{\frac12})\cap\mathscr{D}( \mathrm{d}\Gamma( |p_2|^{-1} )^{\frac12} )$,
\begin{align*}
& \int^{\infty}_{1} \Big \| \mathrm{d}\Gamma \Big( \Big\langle \big( \frac{x}{t} - \nabla \omega \big ) , \mathds{1}_{ [ \mathrm{v}_0 , \mathrm{v}_1 ] }( \frac{x}{t} ) \big( \frac{x}{t} - \nabla \omega \big ) \Big\rangle \Big)^{\frac12} \chi(H)e^{-\mathrm{i} t H} u \Big\|^{2} \frac{dt}{t} \\
& \le \mathrm{C} \big ( \| N_{\mathrm{neut}}^{\frac12} u \|^2 + \| \mathrm{d}\Gamma( |p_2|^{-1} )^{\frac12} u \|^2 + \|u\|^{2} \big ).
\end{align*}
\item[(iii)] Let $0< \mathrm{v}_0<\mathrm{v}_1$, $J \in \mathrm{C}_0^{\infty}( \{ x \in \mathbb{R}^3 , \mathrm{v}_0 < |x| < \mathrm{v}_1\})$ and $\chi \in \mathrm{C}_0^{\infty}(\mathbb{R})$. There exists $\mathrm{C}>0$ such that, for all $\ell \in \{ 1,2,3\} $ and $u \in \mathscr{D}(N_{\mathrm{neut}}^{\frac12})\cap\mathscr{D}( \mathrm{d}\Gamma( |p_2|^{-1} )^{\frac12} )$,
\begin{align*}
& \int^{\infty}_{1} \Big \| \mathrm{d}\Gamma \Big( \Big| J\Big(\frac{x}{t}\Big) \Big( \frac{x^{(\ell)}}{t} - \partial_{(\ell)} \omega \Big) + \mathrm{h.c }\Big|\Big)^{\frac{1}{2}}\chi(H)e^{-\mathrm{i} t H} u \Big\|^{2} \frac{dt}{t} \\
& \le \mathrm{C} \big ( \| N_{\mathrm{neut}}^{\frac12} u \|^2 + \| \mathrm{d}\Gamma( |p_2|^{-1} )^{\frac12} u \|^2 + \|u\|^{2} \big ).
\end{align*}
\item[(iv)] There exists $g_0>0$ such that, for all $|g|\le g_0$, the following holds: let $\chi \in \mathrm{C}_{0}^{\infty}(\mathbb{R})$ be supported in $\mathbb{R}\backslash(\tau \cup \sigma_{\mathrm{pp}}(H))$. There exist $\delta>0$ and $\mathrm{C} > 0$ such that
\begin{equation*}
\int^{\infty}_{1} \Big\| \Gamma \Big ( \mathds{1}_{[0,\delta]} \Big(\frac{|x|}{t}\Big)\Big) \chi(H) e^{-\mathrm{i} t H} u \Big\|^2 \frac{dt}{t} \le \mathrm{C} \big\| ( \mathrm{d}\Gamma(|p_2|^{-1}) + N_{\mathrm{neut}} + 1 )^{\frac12} (N_{\mathrm{neut}}+1)^{\frac32} u \big\|^2 ,
 \end{equation*}
 for all $u \in \mathscr{D}( ( \mathrm{d}\Gamma(|p_2|^{-1}) + N_{\mathrm{neut}} + 1 )^{\frac12} (N_{\mathrm{neut}}+1)^{\frac32} )$.
\end{itemize}
\end{Th}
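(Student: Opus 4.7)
I would follow the method of propagation observables recalled at the beginning of Section \ref{sec:propag}, as in Theorem \ref{th:propag_massive}. The new difficulty compared with the massive case is that the second-order terms arising in the pseudo-differential expansion of $\mathbf{d}_{0,i} F(|x_i|/t)$ for a neutrino index $i \in \{5,6,9,10,13,14\}$ involve $\nabla^2 \omega^{(2)}_l(p_2)$, which is of order $|p_2|^{-1}$ near $p_2 = 0$. The crucial tool to handle such remainders is Lemma \ref{lem:evolp2-1}, which guarantees that the expectation of $\mathrm{d}\Gamma(|p_2|^{-\alpha})$ along the evolution grows only like $t^{\alpha/(1+\mu)}$, a sub-linear bound for $0\le\alpha\le1$ thanks to $\mu>0$.

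For parts (i)--(iii), I would choose the same propagation observables as in the proof of Theorem \ref{th:propag_massive}: for (i), $\Phi(t) = \chi(H) \mathrm{d}\Gamma_i(F_i(|x_i|/t)) \chi(H)$ with $F'_i \geq \mathds{1}_{[R_i, R'_i]}$; for (ii) and (iii), observables of the same $\mathrm{d}\Gamma$ form but with functions localized in the velocity shell $\{\mathrm{v}_0 < |x|/t < \mathrm{v}_1\}$. The Heisenberg derivative $\mathbf{D}\Phi(t)$ then decomposes into a principal term giving $-(C_0/t)\mathds{1}_{[R_i,R'_i]}(|x_i|/t)$ (whose integration against the evolved state yields the desired bound via the abstract method), an interaction commutator of order $\mathcal{O}(t^{-1-\mu})$ controlled by Lemma \ref{lemSCW_1} together with the regularity $G \in \mathbb{H}^{1+\mu}$, and a semi-classical remainder of the form $\mathcal{O}(t^{-2}) \, \mathrm{d}\Gamma_i(|p_2|^{-1})$ appearing only for neutrino indices. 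Evaluating the latter on $e^{-\mathrm{i}tH}\chi(H) u$ and invoking \eqref{eq:estimp2-1} with $\alpha = 1$ gives the integrable bound $\mathcal{O}(t^{-2+1/(1+\mu)}) \big(\|N_{\mathrm{neut}}^{1/2} u\|^2 + \|\mathrm{d}\Gamma(|p_2|^{-1})^{1/2} u\|^2 + \|u\|^2\big)$, which precisely dictates the right-hand-side of the estimates and the domain restriction on $u$.

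For part (iv), the plan is to adapt the minimal velocity argument of \cite{DeGe99_01} using a Graf-type propagation observable of the form $\Phi(t) = -\chi(H) \mathrm{d}\Gamma(m(|x|/t)) \chi(H)$, or a suitably regularized $\chi(H) \Gamma(\mathds{1}_{[0,\delta]}(|x|/t)) \chi(H)$, with $\chi \in \mathrm{C}_0^\infty(\mathbb{R}\setminus(\tau\cup\sigma_{\mathrm{pp}}(H)))$. The Mourre estimate of Theorem \ref{MourreI_3} is invoked to extract a positive contribution in $\mathbf{D}\Phi(t)$ proportional to $N_{\mathrm{neut}}+\mathds{1}$, the $N_{\mathrm{neut}}$-term providing just enough positivity to dominate the singular remainder $\mathcal{O}(t^{-1})\mathrm{d}\Gamma(|p_2|^{-1})$ produced by the commutator expansion. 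The restriction to $H = H_0 + g(H_I^{(1)}+H_I^{(2)})$ is essential here because it ensures that $N_{\mathrm{lept}}-N_{\mathrm{neut}}$ commutes with $H$, which allows us to control the number of neutrinos along the evolution in terms of $N_{\mathrm{lept}}$ (hence of $H$) and the initial data. The weight $(N_{\mathrm{neut}}+1)^{3/2}$ in the right-hand-side arises from repeated $N_\tau$-type bounds and from the use of the operator-norm version \eqref{eq:estimp2-1_a} of Lemma \ref{lem:evolp2-1} needed to estimate quantities like $\|\mathrm{d}\Gamma(|p_2|^{-1}) e^{-\mathrm{i}tH} \chi(H) u\|$ rather than just their expectation value.

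The main obstacle is part (iv): one must simultaneously juggle the non-self-adjointness of the conjugate operator $A$ (hence the use of singular Mourre theory), the $N_{\mathrm{neut}}$-weighted form of the Mourre estimate, the singular remainders coming from $\nabla^2 \omega^{(2)}_l(p_2)$, and the control of $N_{\mathrm{neut}}$ along the evolution, all within a single propagation-observable argument. Secondary technical points include carrying the pseudo-differential expansions to second order and commuting functions of $H$ through number operators, the latter controlled via Lemmas \ref{lemV1Numbestimates}--\ref{lemV1NumbestimatesV2} in the present massless setting.
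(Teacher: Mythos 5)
Your treatment of (i)--(iii) matches the paper's: same observable $\chi(H)\,\mathrm{d}\Gamma_i(F_i(|x_i|/t))\,\chi(H)$ with $F_i'\ge\mathds{1}_{[R_i,R_i']}$, same $\mathcal{O}(t^{-1-\mu})$ bound on the interaction commutator from Lemma \ref{lemSCW_1} and $G\in\mathbb{H}^{1+\mu}$, and the same use of Lemma \ref{lem:evolp2-1} to render the neutrino remainder $\mathcal{O}(t^{-2})\,\mathrm{d}\Gamma(|p_2|^{-1})$ integrable along the evolution; you only leave implicit that the uniform bound on the observable's expectation (which also feeds the $\|N_{\mathrm{neut}}^{1/2}u\|^2$ term on the right-hand side) comes from writing $N=2N_{\mathrm{lept}}+N_W+(N_{\mathrm{neut}}-N_{\mathrm{lept}})$ and using that $N_{\mathrm{lept}}-N_{\mathrm{neut}}$ commutes with $H$.

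For (iv), however, there is a concrete gap: neither of the observables you name, $-\chi(H)\,\mathrm{d}\Gamma(m(|x|/t))\,\chi(H)$ nor $\chi(H)\,\Gamma(\mathds{1}_{[0,\delta]}(|x|/t))\,\chi(H)$, contains the conjugate operator, so their Heisenberg derivatives only produce $\mathbf{d}_0$-type terms in the cutoff and never the commutator $[H,\mathrm{i}A]$; the step ``the Mourre estimate of Theorem \ref{MourreI_3} is invoked to extract a positive contribution in $\mathbf{D}\Phi(t)$'' therefore does not go through as stated. The paper's observable is $h(t)=\langle e^{-\mathrm{i}tH}\chi(H)u,\Gamma(q^t)\tfrac{A}{t}\Gamma(q^t)e^{-\mathrm{i}tH}\chi(H)u\rangle$, i.e.\ the rescaled conjugate operator sandwiched between low-velocity cutoffs; it is precisely the term $\tfrac1t\Gamma(q^t)[H,\mathrm{i}A]\Gamma(q^t)$ in $\partial_t h$ that yields the positivity $\tfrac{\mathrm{c}_0}{t}\Gamma(q^t)(N_{\mathrm{neut}}+\mathds{1})\Gamma(q^t)$, while the a priori bound $\pm\Gamma(q^t)\tfrac{A}{t}\Gamma(q^t)\le \mathrm{C}\delta(N+1)$ (which requires checking $\mathrm{Ran}(\Gamma(q^t))\subset\mathscr{D}(A)$ and Hardy's inequality, since $A$ is only maximal symmetric) controls both $h(t)$ itself and the term $-\tfrac1t h(t)$, with $\delta$ small so that this loss is absorbed by the Mourre positivity. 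Your remaining ingredients for (iv) — restriction to $H_0+g(H_I^{(1)}+H_I^{(2)})$ so that $N_{\mathrm{lept}}-N_{\mathrm{neut}}$ is conserved, Lemma \ref{lem:evolp2-1} for the singular remainders, Lemma \ref{lemV1NumbestimatesV2} for commuting $\chi(H)$ through number operators, and the $(N_{\mathrm{neut}}+1)^{3/2}$ weight — do correspond to the paper's proof (the $3/2$ power comes from $N_\tau$-type splittings such as $\mathrm{d}\Gamma(q^t,g^t)(N+1)^{-1/2}\cdot(N+1)^{1/2}\tfrac{A}{t}\Gamma(q^t)$ together with the estimate (ii) applied to $(|N_{\mathrm{neut}}-N_{\mathrm{lept}}|+1)^{3/2}u$, rather than from the norm version \eqref{eq:estimp2-1_a}, which is needed only for Theorem \ref{th:propag_massless_2}), but without inserting $A/t$ into the observable the central positivity mechanism is missing.
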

\begin{proof}
We prove (i) and (iv), underlying the differences with \cite{DeGe99_01}. The proofs of (ii) and (iii) can be deduced by adapting \cite[Section 6]{DeGe99_01}, using furthermore arguments similar to those used to prove (i).

(i) Let $F_i \in \mathrm{C}^\infty ( \mathbb{R} ; \mathbb{R} )$ be a non-decreasing function such that $F_i=0$ near $0$, $F_i=\mathrm{const}$ near $+\infty$ and $F'_i \ge \mathds{1}_{[R_i,R'_i]}$. As in the proof of Theorem \ref{th:propag_massive}, \eqref{eq:hdt1} holds and therefore it suffices to prove that
\begin{equation*}
\int^{\infty}_{1} \Big\|\mathrm{d}\Gamma_i\Big(\mathds{1}_{[R_i,R'_i]}\Big(\frac{ |x_i| }{t}\Big)\Big)^{\frac{1}{2}}\chi(H) e^{-\mathrm{i}tH} u \Big\|^2 \frac{dt}{t} \lesssim \big ( \| N_{\mathrm{neut}}^{\frac12} u \|^2 + \| \mathrm{d}\Gamma( |p_2|^{-1} )^{\frac12} u \|^2 + \|u\|^{2} \big )
\end{equation*}
for all $u \in \mathscr{D}(N_{\mathrm{neut}}^{\frac12})\cap\mathscr{D}( \mathrm{d}\Gamma( |p_2|^{-1} )^{\frac12} )$ and $i \in \{1,\dots,14\}$, where $\mathrm{d}\Gamma_i$ is defined in \eqref{eq:defdgammaj}.
Let
\begin{align}
f_i(t) := \Big \langle e^{ - \mathrm{i} t H } \chi ( H ) u , \mathrm{d}\Gamma_{i} \Big( F_i \Big(\frac{ |x_i| }{t}\Big)\Big) e^{ - \mathrm{i} t H } \chi(H) u \Big \rangle . \label{eq:def_fj}
\end{align}
Note that, since $\mathrm{d}\Gamma_{i} ( F_i (\frac{ |x_i| }{t})) \le \mathrm{C} N$, for some positive constant $\mathrm{C}$, since $N_{\mathrm{lept}}$ and $N_W$ are relatively $H$-bounded and since $N_{\mathrm{lept}} - N_{\mathrm{neut}}$ commutes with $H$, we can write
\begin{align*}
f_i(t) & \le \Big \langle e^{ - \mathrm{i} t H } \chi ( H ) u , \mathrm{C} N e^{ - \mathrm{i} t H } \chi(H) u \Big \rangle \\
& = \mathrm{C} \Big \langle e^{ - \mathrm{i} t H } \chi ( H ) u , \big ( 2 N_{\mathrm{lept}} + N_W + N_{\mathrm{neut}} - N_{\mathrm{lept}} \big ) e^{ - \mathrm{i} t H } \chi(H) u \Big \rangle \\
& \lesssim \| u \|^2 +  \Big \langle \big ( N_{\mathrm{neut}} - N_{\mathrm{lept}} \big )^{\frac12} u , \chi(H)^2 \big ( N_{\mathrm{neut}} - N_{\mathrm{lept}} \big )^{\frac12} u \Big \rangle \\
& \lesssim \| N_{\mathrm{neut}}^{\frac12} u \|^2 + \| u \|^2.
\end{align*}
Hence $f_i(t)$ is uniformly bounded.

Differentiating $f_i$ gives
\begin{align}
\partial_t f_i(t) &= \Big \langle e^{ - \mathrm{i} t H } \chi ( H ) u , \mathbf{D} \mathrm{d}\Gamma_{i} \Big( F_i \Big( \frac{ |x_i| }{t}\Big)\Big) e^{ - \mathrm{i} t H } \chi(H) u \Big \rangle \notag \\
&= \Big \langle e^{ - \mathrm{i} t H } \chi ( H ) u , \mathrm{d}\Gamma_{i} \Big( \mathbf{d}_0 F_i \Big(\frac{ |x_i| }{t}\Big)\Big) e^{ - \mathrm{i} t H } \chi(H) u \Big \rangle \notag \\
&\quad + \Big \langle e^{ - \mathrm{i} t H } \chi ( H ) u , \Big [ H_I^{(1)}(G) + H_I^{(2)}(G) , \mathrm{i} \mathrm{d}\Gamma_{i} \Big( F_i \Big(\frac{|x_i| }{t}\Big) \Big) \Big ] e^{ - \mathrm{i} t H } \chi(H) u \Big \rangle , \label{eq:propagestim1}
\end{align}
where $\mathbf{D}$ and $\mathbf{d}_0$ denotes the Heisenberg derivatives defined at the beginning of Section \ref{sec:propag}.

Using that $F'_i \ge \mathds{1}_{[R_i,R'_i]}$ and $R_i>1$, a commutator expansion at second order (see, e.g., \cite{DeGe97_01} and \cite[Lemma 5.2]{BoFaSi12_01}) gives
\begin{align*}
\mathbf{d}_0 F_i \Big(\frac{|x_i|}{t}\Big) \le  \frac{-\mathrm{C}_0}{t}\mathds{1}_{[R_i,R'_i]} \Big(\frac{|x_i|}{t}\Big) + \mathcal{O}(t^{-2}) , 
\end{align*}
if $i\in\{1,2,3,4,7,8,11,12\}$ (corresponding to the label of a massive particle), and
\begin{align*}
\mathbf{d}_0 F_i \Big(\frac{|x_i|}{t}\Big) \le  \frac{-\mathrm{C}_0}{t} \mathds{1}_{[R_i,R'_i]} \Big(\frac{|x_i|}{t}\Big) + \mathcal{O}(t^{-2}) |p_2|^{-1} , 
\end{align*}
if $i \in \{5,6,9,10,13,14\}$ (corresponding to the label of a neutrino). Here $\mathrm{C}_0>0$. In the first case, using as above that $N_{\mathrm{lept}}-N_{\mathrm{neut}}$ commutes with $H$, this gives
\begin{align}
& \Big\langle e^{-\mathrm{i}tH}\chi(H)u,\mathrm{d}\Gamma_{i}\Big(\mathbf{d}_0F_i\Big(\frac{|x_i|}{t}\Big)\Big)e^{-\mathrm{i}tH}\chi(H)u\Big\rangle \notag \\
&\le - \frac{ \mathrm{C}_0 }{ t } \Big \langle e^{ - \mathrm{i} t H } \chi ( H ) u , \mathrm{d}\Gamma_{i} \Big ( \mathds{1}_{[R_i,R'_i]} \Big(\frac{|x_i|}{t}\Big) \Big ) e^{ - \mathrm{i} t H } \chi(H) u \Big \rangle + \mathcal{O}(t^{-2}) \Big \langle e^{ - \mathrm{i} t H } \chi ( H ) u , N e^{ - \mathrm{i} t H } \chi(H) u \Big \rangle \notag \\
&\le - \frac{ \mathrm{C}_0 }{ t } \Big \langle e^{ - \mathrm{i} t H } \chi ( H ) u , \mathrm{d}\Gamma_{i} \Big ( \mathds{1}_{[R_i,R'_i]} \Big(\frac{|x_i|}{t}\Big) \Big ) e^{ - \mathrm{i} t H } \chi(H) u \Big \rangle + \mathcal{O}(t^{-2}) \big ( \| N_{\mathrm{neut}}^{\frac12} u \|^2 + \| u \|^2 \big ) , \label{eq:propagestim2}
\end{align}
if $i\in\{1,2,3,4,7,8,11,12\}$. In the second case (if $i$ corresponds to the label of a neutrino), we use in addition Lemma \ref{lem:evolp2-1}, yielding
\begin{align}
& \Big\langle e^{-\mathrm{i}tH}\chi(H)u,\mathrm{d}\Gamma_{i}\Big(\mathbf{d}_0F_i\Big(\frac{|x_i|}{t}\Big)\Big)e^{-\mathrm{i}tH}\chi(H)u\Big\rangle \notag \\
&\le - \frac{ \mathrm{C}_0 }{ t } \Big \langle e^{ - \mathrm{i} t H } \chi ( H ) u , \mathrm{d}\Gamma_{i} \Big ( \mathds{1}_{[R_i,R'_i]} \Big(\frac{|x_i|}{t}\Big) \Big ) e^{ - \mathrm{i} t H } \chi(H) u \Big \rangle + \mathcal{O}(t^{-2}) \Big \langle e^{ - \mathrm{i} t H } \chi ( H ) u , \mathrm{d}\Gamma(|p_2|^{-1}) e^{ - \mathrm{i} t H } \chi(H) u \Big \rangle \notag \\
&\le - \frac{ \mathrm{C}_0 }{ t } \Big \langle e^{ - \mathrm{i} t H } \chi ( H ) u , \mathrm{d}\Gamma_{i} \Big ( \mathds{1}_{[R_i,R'_i]} \Big(\frac{|x_i|}{t}\Big) \Big ) e^{ - \mathrm{i} t H } \chi(H) u \Big \rangle + \mathcal{O}(t^{-2 + \frac{2}{2+\mu} }) \big (  \| \mathrm{d} \Gamma( |p_2|^{-1} )^{\frac12} u \|^2 + \| u \|^2 \big ) . \label{eq:propagestim3}
\end{align}
Note that Hardy's inequality implies that, if $G \in \mathbb{H}^{1+\mu}$ for some $\mu>0$, then $|p_2|^{-1-\mu}G \in L^2$ (if $\mu<1/2$), and hence Lemma \ref{lem:evolp2-1} can indeed by applied.

The commutators $ [ H_I^{(j)}(G) , \mathrm{d}\Gamma_{i} ( F_i (\frac{|x_i|}{t}) )  ]$, $j=1,2$, are given by expressions similar to \eqref{commut_11}--\eqref{commut_21}, with the operator $F_i (\frac{|x_i|}{t})$ instead of $a_i$. Since $F_i$ vanishes near $0$ and $G \in \mathbb{H}^{1+\mu}$, we have that
\begin{align*}
\Big\|F_i\Big(\frac{|x_i|}{t}\Big)G\Big\|_2=\mathcal{O}(t^{-1-\mu}).
\end{align*}
Therefore we deduce from Lemma \ref{lemSCW_1} that
\begin{align}
& \Big \langle e^{ - \mathrm{i} t H } \chi ( H ) u , \Big [ H_I^{(1)}(G) + H_I^{(2)}(G) , \mathrm{i} \mathrm{d}\Gamma_{i} \Big( F_i \Big(\frac{|x_i|}{t}\Big) \Big) \Big ] e^{ - \mathrm{i} t H } \chi(H) u \Big \rangle \notag \\
& \le \mathrm{C} t^{-1-\mu} \Big \langle e^{ - \mathrm{i} t H } \chi ( H ) u , ( N_W + N_{\mathrm{lept}} ) e^{ - \mathrm{i} t H } \chi(H) u \Big \rangle = \mathcal{O}( t^{-1-\mu} ) \| u \|^2. \label{eq:propagestim4}
\end{align}

Integrating \eqref{eq:propagestim1} over $[1,\infty)$, using that $t^{-2+\frac{2}{2+\mu}}$ and $t^{-1-\mu}$ are integrable, we obtain the statement of (i) by combining \eqref{eq:propagestim2}--\eqref{eq:propagestim3} and \eqref{eq:propagestim4}.

\vspace{0,2cm}

(iv) Let $\lambda \in \mathbb{R}\setminus\{ \tau \cup \sigma_{\mathrm{pp}}(H) \}$. Clearly, since $N_{\mathrm{lept}}-N_{\mathrm{neut}}$ commutes with $H$, it suffices to prove (iv) for all $u \in \mathrm{Ran}( \mathds{1}_{\{n\}}(N_{\mathrm{lept}}-N_{\mathrm{neut}}))$, $n\in\mathbb{Z}$. Hence we fix $n\in\mathbb{Z}$ and  $u \in \mathrm{Ran}( \mathds{1}_{\{n\}}(N_{\mathrm{lept}}-N_{\mathrm{neut}}))$. Recall that, by Theorem \ref{MourreI_3}, there exist $\varepsilon>0$, $\mathrm{c}_0>0$ and $\mathrm{C}_\lambda > 0$ such that
\begin{equation*}
[H,\mathrm{i} A ] \geq \mathrm{c}_0 ( N_{\mathrm{neut}} + \mathds{1} ) - \mathrm{C}_\lambda ( \mathds{1} - \mathds{1}_{[\lambda-\varepsilon,\lambda+\varepsilon]}(H) ) ( 1 + H^2 )^{\frac12} . 
\end{equation*}
For $\tilde\chi \in \mathrm{C}_0^\infty( ( \lambda-\varepsilon,\lambda+\varepsilon) )$, this implies that
\begin{equation}
[H,\mathrm{i} A ] \geq \mathrm{c}_0 ( N_{\mathrm{neut}} + \mathds{1} ) - \mathrm{C}_\lambda ( \mathds{1} - \tilde{\chi}^2(H) ) ( 1 + H^2 )^{\frac12} . \label{eq:Mourre_for_propag}
\end{equation}

Let $\chi \in \mathrm{C}_0^\infty( ( \lambda-\varepsilon,\lambda+\varepsilon) )$ be such that $\chi \tilde\chi = \chi$. Let $\delta > 0 $ and $q_i$, $i \in \{1, \dots, 14\}$ be functions in $\mathrm{C}_0^{\infty} (\{ x \in \mathbb{R}^3 , |x| \le 2 \delta \})$ such that $0 \le q_i \le 1$, $q_i=1$ on $\{ x \in \mathbb{R}^3 , |x| \leq \delta \}$ and let $q^t=(q_1(\frac{\mathrm{i}\nabla}{t}),\dots,q_{14}(\frac{\mathrm{i}\nabla}{t}))$. Let
\begin{equation*}
h(t):=\Big\langle e^{-\mathrm{i}tH}\chi(H)u,\Gamma(q^t)\frac{A}{t}\Gamma(q^t)e^{-\mathrm{i}tH}\chi(H)u\Big\rangle.
\end{equation*}
Since $\nabla \omega^{(i)}_l$, $i=1,2$, $l=1,2,3$ and $\nabla \omega^{(3)}$ are bounded, it is not difficult to observe that
\begin{equation}
\label{Step1}
\pm \Gamma(q^t) \frac{A}{t} \Gamma(q^t) \le \mathrm{C} \delta (N+1) \quad \text{ and } \quad \big \| \frac{A}{t} \Gamma(q^t) ( N+1 )^{-1}\big\|\le \mathrm{C} \delta.
\end{equation}
Here it should be noticed that $\Gamma(q^t)$ maps $\mathscr{H}$ to $\mathscr{D}(A)$. Indeed a vector in $\mathrm{Ran}(\Gamma(q^t))$ belongs to the Sobolev space $\mathbb{H}^s(\mathbb{R}^3)$, for any $s>0$, as a function of any of the momentum variable $p_i$. It is then regular in a neighborhood of zero and using, in particular, \eqref{eq:domain_conjugate}, it is not difficult to verify that $\mathrm{Ran}( \Gamma(q^t) ) \subset \mathscr{D}(A)$. In the second estimate of \eqref{Step1}, Hardy's inequality in $\mathbb{R}^3$, $\| |p_2|^{-1} u \|_2 \lesssim \| \nabla_{p_2} u \|_2$, is used. 

From \eqref{Step1} and using as before that $N_{\mathrm{lept}} - N_{\mathrm{neut}}$ commutes with $H$, we obtain that
\begin{align}
| h(t) | \le  \Big\langle e^{-\mathrm{i}tH}\chi(H)u,\Gamma(q^t)\frac{A}{t}\Gamma(q^t)e^{-\mathrm{i}tH}\chi(H)u\Big\rangle \lesssim \|N_{\mathrm{neut}}^{\frac12}u\|^2 + \|u\|^2. \label{eq:estimh(t)}
\end{align}
In particular, $h(t)$ is uniformly bounded. Furthermore, one can compute
\begin{align}
\partial_t h(t) &= \Big\langle e^{-\mathrm{i}tH} u, \Big ( \chi(H) \mathrm{d}\Gamma(q^t, \mathbf{d}_0 q^t) \frac{A}{t}\Gamma(q^t) \chi(H) + \mathrm{h.c.} \Big ) e^{-\mathrm{i}tH}u\Big\rangle \notag \\
& \quad + \Big\langle e^{-\mathrm{i}tH}u, \Big ( \chi(H) [ H_I^{(1)} (G) + H_I^{(2)} (G)  , \mathrm{i} \Gamma(q^t) ] \frac{A}{t}\Gamma(q^t) \chi(H) + \mathrm{h.c.} \Big ) e^{-\mathrm{i}tH}u \Big\rangle \notag \\
& \quad + t^{-1} \Big\langle e^{-\mathrm{i}tH}u, \chi(H) \Gamma(q^t) [ H , \mathrm{i}A ] \Gamma(q^t) \chi(H) e^{-\mathrm{i}tH}u\Big\rangle \notag \\ 
& \quad - t^{-1} \Big\langle e^{-\mathrm{i}tH}u, \chi(H) \Gamma(q^t)\frac{A}{t}\Gamma(q^t) \chi(H) e^{-\mathrm{i}tH}u\Big\rangle \notag \\
& =: R_1(t) + R_2(t) + R_3(t) + R_4(t). \label{eq:R1-R4}
\end{align}
Observe that $A$ being symmetric, $A^*$ is an extension of $A$ so that the second terms in $R_2(t)$ and $R_3(t)$ are indeed the hermitian conjugates of the first ones. In what follows we consider each term $R_\sharp(t)$ separately.

We begin with $R_2(t)$. From the assumption that $G\in\mathbb{H}^{1+\mu}$, the commutation relations of Appendix \ref{sec:standard_def} and Lemma \ref{lemSCW_1}, one can show that
\begin{align*}
\big \| [H_I^{(1)} (G) + H_I^{(2)} (G) , \Gamma(q^t)] (N_{\mathrm{lept}}+N_W)^{-1} \big \|  =  \mathcal{O}(t^{-1-\mu}) .
\end{align*}
Together with \eqref{Step1} this implies that
\begin{equation*}
|R_2(t)|\lesssim \mathcal{O}(t^{-1-\mu})\big\|(N+1)\chi(H)e^{-\mathrm{i}tH}u\big\|^2,
\end{equation*}
since $N_{\mathrm{lept}}+N_W$ is relatively $H$-bounded. Since $N_{\mathrm{lept}}-N_{\mathrm{neut}}$ commutes with $H$, we deduce as before that
\begin{equation}
|R_2(t)|\lesssim \mathcal{O}(t^{-1-\mu})\|(N_{\mathrm{neut}}+1)u\|^2. \label{eq:estim_R2}
\end{equation}

 To evaluate $R_1(t)$, we compute, by means of a commutator expansion (see \cite{DeGe97_01} and \cite[Section 5]{BoFaSi12_01}),
 \begin{align}
\mathbf{d}_0 q_i^t(x_i) =  -\frac{1}{2 t} \Big \langle \frac{x_i}{t} - \nabla\omega_i ,  \nabla q_i \big( \frac{x_i}{t} \big) \Big \rangle + \mathrm{h.c.}   + \mathcal{O}(t^{-2}) , \label{eq:comput_rem_0}
\end{align}
if $i\in\{1,2,3,4,7,8,11,12\}$ (corresponding to the label of a massive particle), and
\begin{align}
\mathbf{d}_0 q_i^t(x_i) =  -\frac{1}{2 t} \Big \langle \frac{x_i}{t} - \nabla\omega_i ,  \nabla q_i \big( \frac{x_i}{t} \big) \Big \rangle + \mathrm{h.c.}   + \mathcal{O}(t^{-\frac32}) |p_2|^{-\frac12} , \label{eq:comput_rem}
\end{align}
if $i \in \{5,6,9,10,13,14\}$ (corresponding to the label of a neutrino). Here it should be noted that that the remainder term in \eqref{eq:comput_rem} can be computed to be of order $|p_2|^{-1+\gamma} \mathcal{O}(t^{-2+\gamma})$ for any $0\le\gamma\le1$ (see \cite[Lemma 5.2]{BoFaSi12_01}.  Here we choose $\gamma=\frac12$ for convenience, see below.

Let $g^t_i :=  -\frac{1}{2} \langle \frac{x_i}{t} - \nabla\omega_i ,  \nabla q_i \big( \frac{x_i}{t} \big)  \rangle + \mathrm{h.c.}$ and let $r^t_i$ be the remainder in \eqref{eq:comput_rem_0} or \eqref{eq:comput_rem}. For $i\in\{1,2,3,4,7,8,11,12\}$, by \eqref{Step1}, we deduce that
\begin{align}
\Big|\Big\langle e^{-\mathrm{i}tH} u, \Big ( \chi(H) \mathrm{d}\Gamma(q^t_i, r^t_i) \frac{A}{t}\Gamma(q^t) \chi(H) + \mathrm{h.c.} \Big ) e^{-\mathrm{i}tH}u\Big\rangle\Big| & \lesssim \mathcal{O}(t^{-2})\big\|N\chi(H)e^{-\mathrm{i}tH}u\big\|^2 \notag \\
& \lesssim \mathcal{O}(t^{-2}) \|(N_{\mathrm{neut}}+1)u\|^2 , \label{eq:dd0}
\end{align}
where we used again that $N_{\mathrm{lept}}-N_{\mathrm{neut}}$ commutes with $H$ in the second inequality. The case $i \in \{5,6,9,10,13,14\}$ corresponding to massless particles is more difficult. Using \eqref{eq:comput_rem} and Lemma \ref{estimationN2}, we write
\begin{align*}
& \Big|\Big\langle e^{-\mathrm{i}tH} u, \Big ( \chi(H) \mathrm{d}\Gamma(q^t_i, r^t_i) \frac{A}{t}\Gamma(q^t) \chi(H) + \mathrm{h.c.} \Big ) e^{-\mathrm{i}tH}u\Big\rangle\Big| \\
& \Big|\Big\langle e^{-\mathrm{i}tH} u, \Big ( \chi(H) \mathrm{d}\Gamma(q^t_i , r^t_i) (N+1)^{-\frac12}(N+1)^{\frac12}\frac{A}{t}\Gamma(q^t) \chi(H) + \mathrm{h.c.} \Big ) e^{-\mathrm{i}tH}u\Big\rangle\Big| \\
& \lesssim \mathcal{O}(t^{-\frac32}) \Big \| \mathrm{d}\Gamma(|p_2|^{-1})^{\frac12} \chi(H) e^{-\mathrm{i}tH} u \Big \| \big \| (N+1)^{\frac32} \chi(H) e^{-\mathrm{i}tH} u \big \| .
\end{align*}
Applying Lemma \ref{lem:evolp2-1}, and using that $N_{\mathrm{lept}}-N_{\mathrm{neut}}$ commutes with $H$, it thus follows that
\begin{align}
& \Big|\Big\langle e^{-\mathrm{i}tH} u, \Big ( \chi(H) \mathrm{d}\Gamma(q^t_i , r^t_i) \frac{A}{t}\Gamma(q^t) \chi(H) + \mathrm{h.c.} \Big ) e^{-\mathrm{i}tH}u\Big\rangle\Big| \notag \\
& \lesssim \mathcal{O}(t^{-\frac32+\frac{1}{2+\mu}}) \big ( \| \mathrm{d}\Gamma(|p_2|^{-1})^{\frac12} u \| + \| N_{\mathrm{neut}}^{\frac12}u \| + \|u\| \big ) \big ( \big \| (N+1)^{\frac32} u \big \| +  \| u \| \big ). \label{eq:dd1}
\end{align}
Summing over $i$, we obtain from \eqref{eq:dd0} and \eqref{eq:dd1} that
\begin{align}
& \Big|\Big\langle e^{-\mathrm{i}tH} u, \Big ( \chi(H) \mathrm{d}\Gamma(q^t , r^t) \frac{A}{t}\Gamma(q^t) \chi(H) + \mathrm{h.c.} \Big ) e^{-\mathrm{i}tH}u\Big\rangle\Big| \notag \notag \\
& \lesssim \mathcal{O}(t^{-2})\|(N_{\mathrm{neut}}+1)u\|^2 +\mathcal{O}(t^{-\frac32+\frac{1}{2+\mu}}) \big ( \| \mathrm{d}\Gamma(|p_2|^{-1})^{\frac12} u \| + \|u\| \big ) \big ( \big \| (N+1)^{\frac32} u \big \| + \| u \| \big ). \label{eq:estim_rt}
\end{align}

To estimate the term corresponding to $\frac1t\mathrm{d}\Gamma(q^t, g^t)$, we proceed similarly but use (ii) instead of Lemma \ref{lem:evolp2-1}. We introduce $\tilde{q}^t$, defined as $q^t$, such that $\tilde{q}^t q^t = q^t$ and hence $\Gamma(q^t)=\Gamma(\tilde{q}^t)\Gamma(q^t)$. This yields
\begin{align*}
& \frac{1}{t} \Big|\Big\langle e^{-\mathrm{i}tH} u, \Big ( \chi(H) \mathrm{d}\Gamma(q^t_i , g^t_i ) \frac{A}{t}\Gamma(q^t) \chi(H) + \mathrm{h.c.} \Big ) e^{-\mathrm{i}tH}u\Big\rangle\Big| \\
& = \frac{1}{t} \Big|\Big\langle e^{-\mathrm{i}tH} u, \Big ( \chi(H) \mathrm{d}\Gamma(q^t_i , g^t_i) (N+1)^{-\frac12}(N+1)^{\frac12}\frac{A}{t}\Gamma(\tilde{q}^t)\Gamma(q^t) \chi(H) + \mathrm{h.c.} \Big ) e^{-\mathrm{i}tH}u\Big\rangle\Big| \\
& \lesssim \frac{1}{t} \Big \| \mathrm{d}\Gamma((g^t_i)^*g^t_i)^{\frac12} \chi(H) e^{-\mathrm{i}tH} u \Big \| \big \| (N+1)^{\frac32} \Gamma(q^t) \chi(H) e^{-\mathrm{i}tH} u \big \| \\
& \lesssim \frac{\alpha(|n|+1)^3}{t} \Big \| \mathrm{d}\Gamma((g^t_i)^*g^t_i)^{\frac12} \chi(H) e^{-\mathrm{i}tH} u \Big \|^2 + \frac{1}{\alpha(|n|+1)^3t} \big \| (N+1)^{\frac32} \Gamma(q^t) \chi(H) e^{-\mathrm{i}tH} u \big \|^2 ,
\end{align*}
where we recall that $n\in\mathbb{Z}$ has been fixed such that $u \in \mathrm{Ran}( \mathds{1}_{\{n\}}(N_{\mathrm{lept}}-N_{\mathrm{neut}}))$. The parameter $\alpha>0$ will be determined later. Summing over $i$, since $N_{\mathrm{lept}}-N_{\mathrm{neut}}$ commutes with $H$, this implies that
\begin{align}
& \frac{1}{t} \Big|\Big\langle e^{-\mathrm{i}tH} u, \Big ( \chi(H) \mathrm{d}\Gamma(q^t, g^t) \frac{A}{t}\Gamma(q^t) \chi(H) + \mathrm{h.c.} \Big ) e^{-\mathrm{i}tH}u\Big\rangle\Big| \notag \\
& \lesssim \frac{\alpha}{t} \Big \| \mathrm{d}\Gamma((g^t)^*g^t)^{\frac12} \chi(H) e^{-\mathrm{i}tH} (|N_{\mathrm{neut}}-N_{\mathrm{lept}}|+1)^{\frac32} u \Big \|^2 + \frac{1}{\alpha t} \big \| \Gamma(q^t) \chi(H) e^{-\mathrm{i}tH} u \big \|^2 \notag \\
& \qquad + \frac{1}{\alpha t} \big \| (N_{\mathrm{lept}} + N_W +1)^{\frac32} \Gamma(q^t) \chi(H) e^{-\mathrm{i}tH} u \big \|^2 . \label{eq:dd2}
\end{align}
Remembering that $\tilde \chi \chi = \chi$, the last term of \eqref{eq:dd2} can be decomposed into
\begin{align*}
& \frac{1}{\alpha t} \big \| (N_{\mathrm{lept}} + N_W +1)^{\frac32} \Gamma(q^t) \chi(H) e^{-\mathrm{i}tH} u \big \|^2 \\
& = \frac{1}{\alpha t} \big \| (N_{\mathrm{lept}} + N_W +1)^{\frac32} \big [ \Gamma(q^t) , \tilde \chi(H) \big ] \chi ( H ) e^{-\mathrm{i}tH} u \big \|^2 + \frac{1}{\alpha t} \big \| (N_{\mathrm{lept}} + N_W +1)^{\frac32} \tilde \chi ( H ) \Gamma(q^t) \chi( H ) e^{-\mathrm{i}tH} u \big \|^2.
\end{align*}
Using the Helffer-Sj{\"o}strand functional calculus and similar arguments as before, one verifies that
\begin{align}
\big\|\big[\Gamma(q^t),\chi(H)\big](N_{\mathrm{neut}}+1)^{-1}\big\|=\mathcal{O}(t^{-1}). \label{eq:kha_1}
\end{align}
By (ii) in Lemma \ref{lemV1NumbestimatesV2}, it is not difficult to deduce from \eqref{eq:kha_1} and the previous equality that
\begin{align}
& \frac{1}{\alpha t} \big \| (N_{\mathrm{lept}} + N_W +1)^{\frac32} \Gamma(q^t) \chi(H) e^{-\mathrm{i}tH} u \big \|^2 \lesssim \frac{1}{\alpha t^2} \big \| ( N_{\mathrm{neut}}^2 + 1) u \big \|^2 + \frac{1}{\alpha t} \big \| \Gamma(q^t) \chi( H ) e^{-\mathrm{i}tH} u \big \|^2. \label{eq:lart_1}
\end{align}

Combining \eqref{eq:dd1}, \eqref{eq:dd2} and \eqref{eq:lart_1} gives
\begin{align}
 |R_1(t)| & \lesssim \frac{\alpha}{t} \Big \| \mathrm{d}\Gamma((g^t)^*g^t)^{\frac12} \chi(H) e^{-\mathrm{i}tH} ( |N_{\mathrm{neut}}-N_{\mathrm{lept}}|+1)^{\frac32} u \Big \|^2 + \frac{1}{\alpha t} \big \| \Gamma(q^t) \chi(H) e^{-\mathrm{i}tH} u \big \|^2 \notag \\
&\quad + \mathcal{O}(t^{-2}) \|(N_{\mathrm{neut}}^2 + N_{\mathrm{neut}}+1)u\|^2 \notag \\
&\quad+ \mathcal{O}(t^{-\frac32+\frac{1}{2+\mu}}) \big ( \| \mathrm{d}\Gamma(|p_2|^{-1})^{\frac12} u \| +\| N_{\mathrm{neut}}^{\frac12}u \|+ \|u\| \big ) \big ( \big \| (N_{\mathrm{neut}}+1)^{\frac32} u \big \| + \| u \| \big ). \label{eq:estim_R1}
\end{align}

Next we consider $R_3(t)$. It follows from \eqref{eq:Mourre_for_propag} that 
\begin{align*}
R_3(t)  \ge \frac{1}{t} \Big\langle e^{-\mathrm{i}tH}u, \chi(H) \Gamma(q^t) \big ( \mathrm{c}_0 ( N_{\mathrm{neut}} + \mathds{1} ) - \mathrm{C}_\lambda ( \mathds{1} - \tilde{\chi}^2(H) ) \big ) \Gamma(q^t) \chi(H) e^{-\mathrm{i}tH}u\Big\rangle.
\end{align*}
Using \eqref{eq:kha_1} together with the previous equation, the facts that $N_{\mathrm{lept}}-N_{\mathrm{neut}}$ commutes with $H$, that $N_W$ and $N_{\mathrm{lept}}$ are relatively $H$-bounded and that $\chi \tilde\chi = \chi$, this gives
\begin{align}
R_3(t) \ge \frac{\mathrm{c}_0}{t} \Big\langle e^{-\mathrm{i}tH}u, \chi(H) \Gamma(q^t) ( N_{\mathrm{neut}} + \mathds{1} ) \Gamma(q^t) \chi(H) e^{-\mathrm{i}tH}u\Big\rangle + \mathcal{O}(t^{-2})\|(N_{\mathrm{neut}}+1)u\|^2. \label{eq:estim_R3}
\end{align}

To estimate $R_4(t)$, it suffices to apply \eqref{Step1} together with the fact that $N_W$ and $N_{\mathrm{lept}}$ are relatively $H$-bounded. This yields
\begin{align}
| R_4(t) | \lesssim \frac{\delta}{t} \Big\langle e^{-\mathrm{i}tH}u, \chi(H) \Gamma(q^t) (N_{\mathrm{neut}}+1 )\Gamma(q^t) \chi(H) e^{-\mathrm{i}tH}u\Big\rangle . \label{eq:estim_R4}
\end{align}

Putting together \eqref{eq:estim_R2}, \eqref{eq:estim_R1}, \eqref{eq:estim_R3} and \eqref{eq:estim_R4}, we finally arrive at
\begin{align*}
\partial_t h(t) \ge & \Big ( \frac{\mathrm{c}_0}{t} - \frac{1}{\alpha t} - \frac{ \delta }{ t } \Big ) \Big\langle e^{-\mathrm{i}tH}u, \chi(H) \Gamma(q^t) ( N_{\mathrm{neut}} + \mathds{1} ) \Gamma(q^t) \chi(H) e^{-\mathrm{i}tH}u\Big\rangle \\
& + \frac{\alpha}{t} \Big \| \mathrm{d}\Gamma(g_t^*g_t)^{\frac12} \chi(H) e^{-\mathrm{i}tH} (|N_{\mathrm{neut}}-N_{\mathrm{lept}}|+1)^{\frac32} u \Big \|^2  + \mathcal{O}(t^{-\min(2,1+\mu)})\|(N_{\mathrm{neut}}^2+N_{\mathrm{neut}}+1)u\|^2 \\
& + \mathcal{O}(t^{-\frac32+\frac{1}{2+\mu}}) \big ( \| \mathrm{d}\Gamma(|p_2|^{-1})^{\frac12} u \| + \|u\| \big ) \big ( \big \| (N_{\mathrm{neut}}+1)^{\frac32} u \big \| + \| u \| \big ).
\end{align*}
Fixing $\alpha$ large enough and $\delta$ small enough and integrating over $t$ from $1$ to $\infty$, we obtain from \eqref{eq:estimh(t)} and (ii) that
\begin{align*}
& \int_1^\infty \Big\langle e^{-\mathrm{i}tH}u, \chi(H) \Gamma(q^t) ( N_{\mathrm{neut}} + 1 ) \Gamma(q^t) \chi(H) e^{-\mathrm{i}tH}u\Big\rangle \frac{dt}{t} \\
&\lesssim \big\| ( \mathrm{d}\Gamma(|p_2|^{-1}) + N_{\mathrm{neut}} + 1 )^{\frac12} (N_{\mathrm{neut}}+1)^{\frac32} u \big\|^2.
\end{align*}
This proves (iv) for any $\chi \in \mathrm{C}_0^\infty( ( \lambda-\varepsilon,\lambda+\varepsilon) )$. The extension of the result to any $\chi \in \mathrm{C}_0^\infty( \mathbb{R} \setminus (\sigma_{\mathrm{pp}}(H) \cup \tau ) )$ follows from standard arguments (see, e.g., \cite[Proposition 4.4.7]{DeGe97_01}).
\end{proof}
As mentioned before, the remainder of this section is devoted to the proof of a different version of propagation estimates, involving a time-dependent modified position operator. In accordance with the notations previously introduced in this section, we set
\begin{equation*}
x_{t,\rho} := \Big ( \frac{ |p_2| }{ |p_2| + t^{-\rho} } \Big )^{\frac12} x \Big ( \frac{ |p_2| }{ |p_2| + t^{-\rho} } \Big )^{\frac12} ,
\end{equation*}
meaning that $x_{t,\rho,i} = x_i = \mathrm{i}\nabla$ if $i\in\{1,2,3,4,7,8,11,12\}$ (corresponding to the label of a massive particle), and
\begin{equation*}
x_{t,\rho,i} = \Big ( \frac{ |p_2| }{ |p_2| + t^{-\rho} } \Big )^{\frac12} x_i \Big ( \frac{ |p_2| }{ |p_2| + t^{-\rho} } \Big )^{\frac12} = \Big ( \frac{ |p_2| }{ |p_2| + t^{-\rho} } \Big )^{\frac12} (\mathrm{i}\nabla_{p_2}) \Big ( \frac{ |p_2| }{ |p_2| + t^{-\rho} } \Big )^{\frac12} ,
\end{equation*}
if $i \in \{5,6,9,10,13,14\}$ (corresponding to the label of a neutrino). Likewise, 
\begin{equation*}
\omega_{t,\rho} := \Big ( \frac{ |p_2| }{ |p_2| + t^{-\rho} } \Big )^{\frac12} \omega \Big ( \frac{ |p_2| }{ |p_2| + t^{-\rho} } \Big )^{\frac12} ,
\end{equation*}
where the notation $\omega$ has been introduced in \eqref{eq:not_omega}. Similarly as above, for $R=(R_1,\dots,R_{14})$ and $R'=(R'_1,\dots,R'_{14})$, we have that
\begin{equation*}
\mathds{1}_{[R,R']}(|x_{t,\rho}|)=\big(\mathds{1}_{[R_1,R'_1]}( | x_{t,\rho,1} | ),\dots,\mathds{1}_{[R_{14},R'_{14}]}( | x_{t,\rho,14} | )\big).
\end{equation*}
\begin{Th}\label{th:propag_massless_2}
Suppose that the masses of the neutrinos $m_{\nu_e}$, $m_{\nu_{\mu}}$, $m_{\nu_{\tau}}$ vanish and consider the Hamiltonian $H = H_0 + g ( H_I^{(1)} + H_I^{(2)} )$ with $H_I^{(1)}$ and $H_I^{(2)}$ given by \eqref{interaction_term}. Assume that
\begin{equation*}
G \in L^2, \quad a_{(i),\cdot} G \in L^2, \quad |p_3|^{-1} a_{(i),\cdot} G \in L^2, \quad i = 1,2,3 ,
\end{equation*}
and that
\begin{equation*}
G \in \mathbb{H}^{1+\mu} \text{ for some } \mu > 0.
\end{equation*}
\begin{itemize}
\item[(i)] Let $\rho>0$ be such that $(1+\mu)^{-1} < \rho < 1$ and let $c>\rho^{-1}$. Let $\chi \in \mathrm{C}_0^{\infty}(\mathbb{R})$, $R=(R_1,\dots,R_{14})$ and $R'=(R'_1,\dots,R'_{14})$ be such that $R'_{i}>R_i>c$. There exists $\mathrm{C} > 0$ such that, for all $u \in \mathscr{D}(N_{\mathrm{neut}}^{\frac12})$, 
\begin{align*}
& \int^{\infty}_{1} \Big \| \mathrm{d} \Gamma \Big ( \mathds{1}_{[R,R']} \Big ( \frac{ | x_{t,\rho} | }{ t } \Big ) \Big)^{\frac{1}{2}}\chi(H) e^{-\mathrm{i}tH} u \Big \|^2 \frac{dt}{t} \leq \mathrm{C} \big ( \| N_{\mathrm{neut}}^{\frac12} u \|^2 + \|u\|^{2} \big ).
\end{align*}
\item[(ii)] Let $\rho>0$ be such that $(1+\mu)^{-1} < \rho < 1$. Let $0<\mathrm{v}_0<\mathrm{v}_1$ and $\chi \in \mathrm{C}_0^{\infty}(\mathbb{R})$. There exists $\mathrm{C}>0$ such that, for all $u \in \mathscr{D}(N_{\mathrm{neut}}^{\frac12})$,
\begin{align*}
& \int^{\infty}_{1} \Big \| \mathrm{d}\Gamma \Big( \Big\langle \big( \frac{x_{t,\rho}}{t} - \nabla \omega_{t,\rho} \big ) , \mathds{1}_{ [ \mathrm{v}_0 , \mathrm{v}_1 ] }( \frac{x_{t,\rho}}{t} ) \big( \frac{x_{t,\rho}}{t} - \nabla \omega_{t,\rho} \big ) \Big\rangle \Big)^{\frac12} \chi(H)e^{-\mathrm{i} t H} u \Big\|^{2} \frac{dt}{t} \\
& \le \mathrm{C} \big ( \| N_{\mathrm{neut}}^{\frac12} u \|^2  + \|u\|^{2} \big ).
\end{align*}
\item[(iii)] Let $\rho>0$ be such that $(1+\mu)^{-1} < \rho < 1$. Let $0< \mathrm{v}_0<\mathrm{v}_1$, $J \in \mathrm{C}_0^{\infty}( \{ x \in \mathbb{R}^3 , \mathrm{v}_0 < |x| < \mathrm{v}_1\})$ and $\chi \in \mathrm{C}_0^{\infty}(\mathbb{R})$. There exists $\mathrm{C}>0$ such that, for all $\ell \in \{ 1,2,3\} $ and $u \in \mathscr{D}(N_{\mathrm{neut}}^{\frac12})$,
\begin{align*}
& \int^{\infty}_{1} \Big \| \mathrm{d}\Gamma \Big( \Big| J\Big(\frac{x_{t,\rho}}{t}\Big) \Big( \frac{x_{t,\rho}^{(\ell)}}{t} - \partial_{(\ell)} \omega_{t,\rho} \Big) + \mathrm{h.c }\Big|\Big)^{\frac{1}{2}}\chi(H)e^{-\mathrm{i} t H} u \Big\|^{2} \frac{dt}{t} \le \mathrm{C} \big ( \| N_{\mathrm{neut}}^{\frac12} u \|^2 + \|u\|^{2} \big ).
\end{align*}
\item[(iv)] There exists $g_0>0$ such that, for all $|g|\le g_0$, the following holds: let $\rho>0$ be such that $(1+\mu)^{-1} < \rho < 1$. Let $\chi \in \mathrm{C}_{0}^{\infty}(\mathbb{R})$ be supported in $\mathbb{R}\backslash(\tau \cup \sigma_{\mathrm{pp}}(H))$. There exist $\delta>0$ and $\mathrm{C} > 0$ such that, for all $u \in \mathscr{D}( ( \mathrm{d}\Gamma(|p_2|^{-1}) + N_{\mathrm{neut}} + 1 ) (N_{\mathrm{neut}}+1)^{\frac32} )$,
\begin{equation*}
\int^{\infty}_{1} \Big\| \Gamma \Big ( \mathds{1}_{[0,\delta]} \Big(\frac{|x_{t,\rho}|}{t}\Big)\Big) \chi(H) e^{-\mathrm{i} t H} u \Big\|^2 \frac{dt}{t} \le \mathrm{C} \big\| ( \mathrm{d}\Gamma(|p_2|^{-1}) + N_{\mathrm{neut}} + 1 ) (N_{\mathrm{neut}}+1)^{\frac32} u \big\|^2.
 \end{equation*}
\end{itemize}
\end{Th}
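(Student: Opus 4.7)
The plan is to adapt the proof of Theorem \ref{th:propag_massless} to the propagation observable involving $x_{t,\rho}$ in place of $x$, using the method of propagation observables reviewed at the start of Section \ref{sec:propag}. For the neutrino coordinates $i\in\{5,6,9,10,13,14\}$, the regularization factor $(|p_2|/(|p_2|+t^{-\rho}))^{1/2}$ in the definition of $x_{t,\rho,i}$ is designed precisely to cut off the singularity at $p_2=0$ which, in the proof of Theorem \ref{th:propag_massless}, forced $\mathrm{d}\Gamma(|p_2|^{-1})^{1/2}u$ to appear on the right-hand side. By introducing the $t^{-\rho}$-cutoff, the second-order remainders from the commutator expansion of $[\omega^{(2)}_l,\cdot]$ acquire factors $|p_2|(|p_2|+t^{-\rho})^{-1}$ that bound $|p_2|^{-1}$ in operator norm by $t^{\rho}$, transforming a $|p_2|^{-1}$ singularity into an $\mathcal{O}(t^{\rho-2})$ numerical factor and eliminating the need to invoke Lemma \ref{lem:evolp2-1} for parts (i)--(iii).

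For parts (i)--(iii) I would use the propagation observables $\Phi_i(t) := \chi(H)\mathrm{d}\Gamma_i(F_i(|x_{t,\rho,i}|/t))\chi(H)$ (and the obvious analogues with velocity-type symbols for (ii)--(iii)), following Theorem \ref{th:propag_massive} and Theorem \ref{th:propag_massless}. Uniform boundedness of $\langle e^{-\mathrm{i}tH}u,\Phi_i(t)e^{-\mathrm{i}tH}u\rangle$ follows from $\mathrm{d}\Gamma_i(F_i)\leq\mathrm{C} N$, the commutation of $N_{\mathrm{lept}}-N_{\mathrm{neut}}$ with $H$, and the $H$-boundedness of $N_{\mathrm{lept}},N_W$. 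The key step is the one-particle Heisenberg derivative $\mathbf{d}_{0,i}F_i(|x_{t,\rho,i}|/t)$: a second-order commutator expansion should yield the negative leading contribution $-\mathrm{C}_0 t^{-1}\mathds{1}_{[R_i,R'_i]}(|x_{t,\rho,i}|/t)$, plus a remainder from the explicit time derivative $\partial_t$ of the regularization factor of order $\mathcal{O}(t^{-\rho-1}|x_i|(|p_2|+t^{-\rho})^{-1})$, plus pseudo-differential remainders of order $\mathcal{O}(t^{\rho-2})$. The condition $R_i>c>\rho^{-1}$ is what makes the first remainder controlled by the leading term on the support of $\mathds{1}_{[R_i,R'_i]}$, and the upper bound $\rho<1$ makes the second remainder integrable. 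The interaction commutators $[H_I^{(1)}+H_I^{(2)},\mathrm{d}\Gamma_i(\cdot)]$ give contributions of order $t^{-1-\mu}$ by Lemma \ref{lemSCW_1} and the $\mathbb{H}^{1+\mu}$ hypothesis. Integration then yields the claimed propagation estimates, with right-hand side controlled by $\|N_{\mathrm{neut}}^{1/2}u\|^2+\|u\|^2$ alone.

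Part (iv) follows the scheme of the proof of Theorem \ref{th:propag_massless}(iv), with the localization $\Gamma(q^t)$ where $q^t_i=q_i(x_{t,\rho,i}/t)$, applying the Mourre estimate of Theorem \ref{MourreI_3} and the $\mathrm{C}^1(A;\mathscr{G};\mathscr{G}^*)$-regularity provided by Lemma \ref{lem:C8}. The bounds $\pm\Gamma(q^t)(A/t)\Gamma(q^t)\leq\mathrm{C}\delta(N+1)$ and $\|(A/t)\Gamma(q^t)(N+1)^{-1}\|\leq\mathrm{C}\delta$ remain valid because $|x_{t,\rho,i}|\leq|x_i|$ so that Hardy's inequality applies unchanged. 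Differentiating the observable $\langle e^{-\mathrm{i}tH}\chi(H)u,\Gamma(q^t)(A/t)\Gamma(q^t)e^{-\mathrm{i}tH}\chi(H)u\rangle$ produces four terms $R_1(t),\dots,R_4(t)$ as in \eqref{eq:R1-R4}; $R_3(t)$ provides the positive Mourre contribution and the remaining three are integrable error terms. A new contribution to $R_1(t)$ comes from $\partial_t q^t$ via the time-dependent regularization; its control requires one extra power of $\mathrm{d}\Gamma(|p_2|^{-1})$ and the application of Lemma \ref{lem:evolp2-1}, which accounts for the stronger hypothesis on $u$ compared to parts (i)--(iii).

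The principal obstacle will be the delicate pseudo-differential commutator expansions of the non-smooth symbol $|x_{t,\rho,i}|/t$ and of functions of it for the neutrino coordinates, and the bookkeeping needed to ensure that the leading positive contribution dominates while all remainders either (for parts (i)--(iii)) reduce to quantities controlled by $N_{\mathrm{neut}}^{1/2}u$ with time-integrable prefactors, or (for part (iv)) admit a growth rate in $t$ which, combined with Lemma \ref{lem:evolp2-1}, remains integrable. The choice $(1+\mu)^{-1}<\rho<1$ is exactly what balances the growth rate $t^{1/(1+\mu)}$ of $\mathrm{d}\Gamma(|p_2|^{-1})$ along the evolution against the $t^{-\rho}$ gain furnished by the regularization, and the lower bound $R_i>\rho^{-1}$ in (i) encodes the same balance at the level of the dominating positive term.
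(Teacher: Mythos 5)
Your treatment of parts (i)--(iii) is essentially the paper's own argument: a propagation observable built from $F_i(|x_{t,\rho,i}|/t)$ with $F_i$ supported far out, a second-order commutator expansion whose remainders become numerical $\mathcal{O}(t^{-2+\rho})$ factors precisely because of the $t^{-\rho}$-regularization, an interaction commutator of size $\mathcal{O}(t^{-1-\mu})$ via Lemma \ref{lemSCW_1} and $G\in\mathbb{H}^{1+\mu}$, and no recourse to Lemma \ref{lem:evolp2-1}. One misattribution of the constants, though it does not affect the scheme: in the computation behind \eqref{eq:new_propag2}, the condition $\rho<1$ is what absorbs the positive term generated by $\partial_t$ acting on the regularization factor (it leaves the coefficient $-2(1-\rho)/t$), while $c>\rho^{-1}$ together with the support condition $R_i>c$ is what absorbs the group-velocity term carrying the factor $|p_2|(|p_2|+t^{-\rho})^{-1}$; you swap these roles, and $\rho<1$ is of course also needed for the integrability of $t^{-2+\rho}$.

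Part (iv), however, has a genuine gap. You assert that the a priori bounds $\pm\Gamma(\tilde q^t)\frac{A}{t}\Gamma(\tilde q^t)\le\mathrm{C}\delta(N+1)$ and $\|\frac{A}{t}\Gamma(\tilde q^t)(N+1)^{-1}\|\le\mathrm{C}\delta$ ``remain valid because $|x_{t,\rho,i}|\le|x_i|$ so that Hardy's inequality applies unchanged.'' This step fails: the conjugate operator $A$ is built from the true operators $x_i$ (for neutrinos $a_i=\frac{\mathrm{i}}{2}(\frac{p_2}{|p_2|}\cdot\nabla_{p_2}+\nabla_{p_2}\cdot\frac{p_2}{|p_2|})$), and localization in $x_{t,\rho,i}$ does not localize $x_i$ --- on the range of $\Gamma(\tilde q^t)$ the operator $x_i$ is not bounded by $\mathcal{O}(\delta t)$ when $|p_2|\lesssim t^{-\rho}$, and the inequality $|x_{t,\rho,i}|\le|x_i|$ is itself unjustified since $x_{t,\rho,i}$ is a noncommutative sandwich $m^{1/2}x_im^{1/2}$; also the argument in Theorem \ref{th:propag_massless}(iv) used smoothness in $p_2$ of vectors in $\mathrm{Ran}(\Gamma(q^t))$ to get $\mathrm{Ran}(\Gamma(q^t))\subset\mathscr{D}(A)$, which is no longer automatic. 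The paper's proof replaces \eqref{Step1} by the identity \eqref{eq:rewrite_ai}, $a_i=\sum_\ell x^{(\ell)}_{t,\rho,i}\big(1+\mathcal{O}(t^{-\rho})|p_2|^{-1}\big)+\mathcal{O}(1)|p_2|^{-1}$, together with $[\frac{a_i}{t},q_i(\frac{x_{t,\rho,i}}{t})]=\mathcal{O}(t^{-1})|p_2|^{-1}$, so that $\frac{A}{t}\Gamma(\tilde q^t)$ is only controlled up to $\mathrm{d}\Gamma(|p_2|^{-1})$-terms with prefactors $\mathcal{O}(t^{-\rho})$ and $\mathcal{O}(t^{-1})$; these are then estimated along the flow by \emph{both} parts of Lemma \ref{lem:evolp2-1}, in particular the norm estimate \eqref{eq:estimp2-1_a} (this is exactly why the hypothesis on $u$ involves the full power $\mathrm{d}\Gamma(|p_2|^{-1})$ rather than $\mathrm{d}\Gamma(|p_2|^{-1})^{\frac12}$), and the condition $\rho>(1+\mu)^{-1}$ enters at this stage, not merely in a ``$\partial_t q^t$'' contribution to $R_1$. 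This machinery is already needed to show that the observable $\tilde h(t)$ is uniformly bounded and to treat $\tilde R_2$ and $\tilde R_4$, so without the rewriting of $a_i$ in terms of $x_{t,\rho,i}$ and the evolution bounds on $\mathrm{d}\Gamma(|p_2|^{-1})$, your scheme for (iv) does not close.
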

\begin{proof}
(i) The proof is similar to that of Theorem \ref{th:propag_massless} (i), with the following differences. Let $F_i: \mathbb{R} \to \mathbb{R}$ be a bounded non-decreasing function supported in $(1,\infty)$. Instead of the function $f_i$ in \eqref{eq:def_fj}, we consider
\begin{align*}
\tilde f_i(t) := \Big \langle e^{ - \mathrm{i} t H } \chi ( H ) u , \mathrm{d}\Gamma_{i} \Big( F_i \Big(\frac{ | x_{t,\rho,i} | }{ct}\Big) \Big) e^{ - \mathrm{i} t H } \chi(H) u \Big \rangle ,
\end{align*}
with $c>1$. Obviously, in the massive case ($i\in\{1,2,3,4,7,8,11,12\}$) the proof is the same than in Theorem \ref{th:propag_massless}. Let $i \in \{5,6,9,10,13,14\}$. In the same way as for $f_i$, since $F_i$ is bounded, we have that
\begin{equation}
\tilde f_i(t) \lesssim \| N_{\mathrm{neut}}^{\frac12} u \|^2 + \| u \|^2. \label{eq:new_propag1}
\end{equation}
Write $F_i ( r ) = \tilde F_i( r^2 )$. Note that
\begin{align*}
x_{t,\rho,i}^2 = \sum_{\ell=1}^3 \Big ( \frac{ |p_2| }{ |p_2| + t^{-\rho} } \Big )^{\frac12} x_i^{(\ell)} \Big ( \frac{ |p_2| }{ |p_2| + t^{-\rho} } \Big ) x_i^{(\ell)} \Big ( \frac{ |p_2| }{ |p_2| + t^{-\rho} } \Big )^{\frac12}.
\end{align*}
We compute
\begin{align}
& \mathbf{d}_0 \tilde F_i \Big ( \frac{ x_{t,\rho,i}^2 }{ c^2 t^2 } \Big ) \le - 2 \frac{1-\rho}{t} \tilde F_i' \Big ( \frac{ x_{t,\rho,i}^2 }{ c^2 t^2 } \Big )  - 2 \frac{ \rho - c^{-1} }{ t } \Big ( \frac{ |p_2| }{ |p_2| + t^{-\rho} } \Big )^{\frac12} \tilde F_i' \Big ( \frac{ x_{t,\rho}^2 }{ c^2 t^2 } \Big ) \Big ( \frac{ |p_2| }{ |p_2| + t^{-\rho} } \Big )^{\frac12} + \mathcal{O}( t^{-2+\rho} ) . \label{eq:new_propag2}
\end{align}
Details of the estimate \eqref{eq:new_propag2} are provided in Appendix \ref{app:technical}. The first two terms are non-positive since $\rho < 1$ and $c > \rho^{-1}$.

Moreover, the commutators 
\begin{equation*}
\Big [ H_I^{(j)}(G) , \mathrm{d}\Gamma_{i} \Big ( F_i \Big(\frac{ | x_{t,\rho,i} | }{ ct } \Big) \Big)  \Big ], \quad j=1,2,
\end{equation*}
are given by expressions similar to \eqref{commut_11}--\eqref{commut_21}, with the operator $F_i (\frac{ | x_{t,\rho,i} | }{ ct })$ instead of $a_i$. Since $F_i$ vanishes near $0$ and $G \in \mathbb{H}^{1+\mu}$, one can verify using interpolation that $G$ belongs to the domain of $| x_{t,\rho,i} |^{1+\mu}$, and hence
\begin{align*}
\Big\|F_i\Big(\frac{ | x_{t,\rho,i} | }{ ct }\Big)G\Big\|_2=\mathcal{O}(t^{-1-\mu}).
\end{align*}
Therefore, in the same way as for Equation \eqref{eq:propagestim4}, Lemma \ref{lemSCW_1} yields
\begin{align}
& \Big \langle e^{ - \mathrm{i} t H } \chi ( H ) u , \Big [ H_I^{(1)}(G) + H_I^{(2)}(G) , \mathrm{i} \mathrm{d}\Gamma_{j} \Big( F_i \Big( \frac{ | x_{t,\rho,i} | }{ ct }\Big) \Big) \Big ] e^{ - \mathrm{i} t H } \chi(H) u \Big \rangle = \mathcal{O}( t^{-1-\mu} ) \| u \|^2. \label{eq:new_propag3}
\end{align}

Using \eqref{eq:new_propag1}, \eqref{eq:new_propag2} and \eqref{eq:new_propag3}, one can conclude that (i) holds by arguing in the same way as in the proof of Theorem \ref{th:propag_massless} (i).

\vspace{0,2cm}

(iv) Again, the proof resembles that of Theorem \ref{th:propag_massless} (iv), we focus on the differences. We consider $\chi \in \mathrm{C}_0^\infty( ( \lambda-\varepsilon,\lambda+\varepsilon) )$, $\delta > 0 $ and $q_i$, $i \in \{1, \dots, 14\}$ as in the proof of Theorem \ref{th:propag_massless}. Let $u \in \mathrm{Ran}( \mathds{1}_{\{n\}}(N_{\mathrm{lept}}-N_{\mathrm{neut}}))$ for some $n \in \mathbb{Z}$. Let $\tilde{q}^t=(q_1(\frac{ x_{t,\rho,1} }{t}),\dots,q_{14}(\frac{ x_{t,\rho,14} }{t}))$ and
\begin{equation*}
\tilde h(t):=\Big\langle e^{-\mathrm{i}tH}\chi(H)u,\Gamma(\tilde{q}^t)\frac{A}{t}\Gamma(\tilde{q}^t)e^{-\mathrm{i}tH}\chi(H)u\Big\rangle.
\end{equation*}
To verify that $\tilde h(t)$ is uniformly bounded, we modify \eqref{Step1} as follows. We can decompose 
\begin{equation*}
A=\sum_{i=1}^{14} \mathrm{d}\Gamma_i( a_i ).
\end{equation*}
Clearly the massive cases ($i\in\{1,2,3,4,7,8,11,12\}$) can be handled as in \eqref{Step1}. For $i \in \{5,6,9,10,13,14\}$, we recall that
\begin{equation*}
a_i = \frac{\mathrm{i}}{2} \Big ( \frac{p_2}{|p_2|} \cdot  \nabla + \nabla \cdot \frac{p_2}{|p_2|} \Big ) ,
\end{equation*}
and write
\begin{align}
 \Big\langle e^{-\mathrm{i}tH}\chi(H)u,\Gamma(\tilde{q}^t) \frac{\mathrm{d}\Gamma_i(a_i)}{t} \Gamma(\tilde{q}^t) e^{-\mathrm{i}tH}\chi(H)u\Big\rangle & = \Big\langle e^{-\mathrm{i}tH}\chi(H)u,\Gamma(\tilde{q}^t) \Big [ \frac{\mathrm{d}\Gamma_i(a_i)}{t} , \Gamma(\tilde{q}^t) \Big ] e^{-\mathrm{i}tH}\chi(H)u\Big\rangle \notag \\
&\quad + \Big\langle e^{-\mathrm{i}tH}\chi(H)u,\Gamma(\tilde{q}^t)^2 \frac{\mathrm{d}\Gamma_i(a_i)}{t} e^{-\mathrm{i}tH}\chi(H)u\Big\rangle. \label{eq:ddd0}
\end{align}
We estimate each term separately. First, we compute, using \cite[Lemma 5.2]{BoFaSi12_01},
\begin{equation*}
\Big [ \frac{ a_i }{ t } , q_i \Big ( \frac{ x_{t,\rho,i} }{ t } \Big ) \Big ] = \mathcal{O}( t^{-1} ) |p_2|^{-1}.
\end{equation*}
This implies that
\begin{align}
\Big | \Big\langle e^{-\mathrm{i}tH}\chi(H)u,\Gamma(\tilde{q}^t) \Big [ \frac{\mathrm{d}\Gamma_i(a_i)}{t} , \Gamma(\tilde{q}^t) \Big ] e^{-\mathrm{i}tH}\chi(H)u\Big\rangle \Big | &\lesssim t^{-1} \| u \| \big \| \mathrm{d}\Gamma( |p_2|^{-1} ) e^{-\mathrm{i}tH}\chi(H)u \big \| \notag \\
& \lesssim \| u \| \big ( \| u\|+ \|N_{\mathrm{neut}} u\|+\big\| \mathrm{d}\Gamma( |p_2|^{-1} )u \big \| \big ) , \label{eq:ddd1}
\end{align}
where we used Lemma \ref{lem:evolp2-1} in the second inequality. Next we have that
\begin{align}\label{eq:rewrite_ai}
a_i &= \sum_{\ell=1}^3 x_{t,\rho,i}^{(\ell)}\big ( 1 + \mathcal{O}(t^{-\rho} ) |p_2|^{-1} \big ) +  \mathcal{O}( 1 ) |p_2|^{-1} ,
\end{align}
see Appendix \ref{app:technical}. Using that $|x_{t,\rho,i}|\le\delta t$ on the support of $q^t_i$, we deduce from the previous equation that
\begin{align}
& \Big | \Big\langle e^{-\mathrm{i}tH}\chi(H)u,\Gamma( \tilde{q}^t )^2  \frac{\mathrm{d}\Gamma_i(a_i)}{t} e^{-\mathrm{i}tH}\chi(H)u\Big\rangle \Big | \notag \\
&\lesssim \delta \| u \| \big ( \big \| N_{\mathrm{neut}} e^{-\mathrm{i}tH}\chi(H)u \big \| + \mathcal{O}(t^{-\rho}) \big \| \mathrm{d}\Gamma( |p_2|^{-1} ) e^{-\mathrm{i}tH}\chi(H)u \big \| \big )  + \mathcal{O}(t^{-1}) \| u \| \big \| \mathrm{d}\Gamma( |p_2|^{-1} ) e^{-\mathrm{i}tH}\chi(H)u \big \|  \notag \\
&\lesssim \| u \| \big ( \big \| N_{\mathrm{neut}} u \big \| + \big \| \mathrm{d}\Gamma( |p_2|^{-1} ) u \big \| + \|u \| \big ) , \label{eq:ddd2}
\end{align}
where we used Lemma \ref{lem:evolp2-1} and the fact that $\rho > (1+\mu)^{-1}$ in the second inequality. We deduce from \eqref{eq:ddd1} and \eqref{eq:ddd2} that \eqref{eq:ddd0}, and therefore $\tilde h(t)$, are uniformly bounded in $t$.

Next, we can decompose the derivative $\partial_t \tilde h ( t ) = \tilde R_1(t) + \tilde R_2(t) + \tilde R_3(t) + \tilde R_4(t)$ analogously as $h(t)$ in \eqref{eq:R1-R4}. The term $\tilde R_2(t)$ can be estimated in the same way as $R_2(t)$, using estimates similar to \eqref{eq:ddd1} and \eqref{eq:ddd2} instead of \eqref{Step1}. This yields
\begin{equation}
|\tilde R_2(t)|\lesssim \mathcal{O}(t^{-1-\mu})\|u\|\big(\|(N_{\mathrm{neut}} u\| + \big \| \mathrm{d}\Gamma( |p_2|^{-1} ) u \big\| + \|u\| \big). \label{eq:estim_tilde_R2}
\end{equation}
The estimate for $\tilde R_3(t)$ is identical to that of $R_3(t)$, yielding
\begin{align}
\tilde R_3(t) \ge \frac{\mathrm{c}_0}{t} \Big\langle e^{-\mathrm{i}tH}u, \chi(H) \Gamma(\tilde{q}^t) ( N_{\mathrm{neut}} + \mathds{1} ) \Gamma(\tilde{q}^t) \chi(H) e^{-\mathrm{i}tH}u\Big\rangle + \mathcal{O}(t^{-2})\|(N_{\mathrm{neut}}+1)u\|^2. \label{eq:estim_tilde_R3}
\end{align}
To estimate $\tilde R_4(t)$, we proceed as in \eqref{eq:ddd1}--\eqref{eq:ddd2}. This gives
\begin{align}
| \tilde R_4(t) | = \frac{1}{t} | \tilde h ( t ) | & \lesssim  \frac{\delta}{t} \Big\langle e^{-\mathrm{i}tH}u, \chi(H) \Gamma(\tilde{q}^t) (N_{\mathrm{neut}}+1 )\Gamma(\tilde{q}^t) \chi(H) e^{-\mathrm{i}tH}u\Big\rangle \notag \\
&\quad + t^{-1-\rho} \| u \| \big \| \mathrm{d}\Gamma( |p_2|^{-1} ) e^{-\mathrm{i}tH}\chi(H)u \big \| \notag \\
& \lesssim  \frac{\delta}{t} \Big\langle e^{-\mathrm{i}tH}u, \chi(H) \Gamma(\tilde{q}^t) (N_{\mathrm{neut}}+1 )\Gamma(\tilde{q}^t) \chi(H) e^{-\mathrm{i}tH}u\Big\rangle \notag \\
&\quad + t^{-1-\rho+\alpha(1+\mu)^{-1}} \| u \| \big(\big\|\mathrm{d}\Gamma( |p_2|^{-1} ) u\big\|+\|u\|\big) , \label{eq:estim_tilde_R4}
\end{align}
the second inequality being a consequence of Lemma \ref{lem:evolp2-1}.

Next we consider $\tilde R_1(t)$. As in the proof of Theorem \ref{th:propag_massless}, we compute
\begin{align*}
\mathbf{d}_0 \tilde{q}_i^t(x_{t,\rho,i}) =  -\frac{1}{2 t} \Big \langle \frac{x_{t,\rho,i}}{t} - \nabla\omega_{t,\rho,i} ,  \nabla \tilde{q}_t \big( \frac{x_{t,\rho,i}}{t} \big) \Big \rangle + \mathrm{h.c.}   + \mathcal{O}(t^{-2+\rho}) ,
\end{align*}
and set $\tilde g^t_i :=  -\frac{1}{2} \langle \frac{x_{t,\rho,i}}{t} - \nabla\omega_{t,\rho,i} ,  \nabla \tilde q_j \big( \frac{x_{t,\rho,i}}{t} \big) \rangle + \mathrm{h.c.}$. Let also $\tilde r^t_i$ be the remainder term in the previous equality. We have that  
\begin{align*}
& \Big|\Big\langle e^{-\mathrm{i}tH} u, \Big ( \chi(H) \mathrm{d}\Gamma( \tilde{q}^t_i , \tilde{r}^t_i) \frac{\mathrm{d}\Gamma_i(a_i)}{t}\Gamma(\tilde{q}^t) \chi(H) + \mathrm{h.c.} \Big ) e^{-\mathrm{i}tH}u\Big\rangle\Big| \\
& \lesssim \mathcal{O}(t^{-2+\rho}) \big \| N_{\mathrm{neut}} \chi(H) e^{-\mathrm{i}tH} u \big \| \Big \| \frac{\mathrm{d}\Gamma_i(a_i)}{t}\Gamma(\tilde{q}^t) \chi(H) e^{-\mathrm{i}tH}u \Big \|  \\
& \lesssim \mathcal{O}(t^{-2+\rho}) \big \| N_{\mathrm{neut}} \chi(H) e^{-\mathrm{i}tH} u \big \| \Big ( \big \| N_{\mathrm{neut}} \chi(H) e^{-\mathrm{i}tH}u \big \| + \mathcal{O}(t^{-\rho}) \big \| \mathrm{d}\Gamma( |p_2|^{-1} ) \chi(H) e^{-\mathrm{i}tH}u \big \| \Big ) ,
\end{align*}
where we used \eqref{eq:ddd1}--\eqref{eq:ddd2} in the second inequality. By Lemma \ref{lem:evolp2-1}, and since $N_{\mathrm{neut}}-N_{\mathrm{lept}}$ commutes with $H$ and $N_{\mathrm{lept}}$ is relatively $H$-bounded, we obtain that
\begin{align*}
& \Big|\Big\langle e^{-\mathrm{i}tH} u, \Big ( \chi(H) \mathrm{d}\Gamma_i( \tilde{q}^t_i , \tilde{r}^t_i ) \frac{\mathrm{d}\Gamma_i(a_i)}{t}\Gamma(\tilde{q}^t) \chi(H) + \mathrm{h.c.} \Big ) e^{-\mathrm{i}tH}u\Big\rangle\Big| \\
& \lesssim \mathcal{O}(t^{-2+\rho}) \big ( \big \| N_{\mathrm{neut}} u \big \| + \| u \| \big ) \big ( \big \| N_{\mathrm{neut}} u \big \| +  \big \| \mathrm{d}\Gamma(|p_2|^{-1})u\big\|+\|u\|\big).
\end{align*}

To estimate the term corresponding to $\frac1t\mathrm{d}\Gamma(\tilde{q}^t_i , \tilde{g}^t_i)$, we write
\begin{align*}
& \frac{1}{t} \Big|\Big\langle e^{-\mathrm{i}tH} u, \Big ( \chi(H) \mathrm{d}\Gamma(\tilde{q}^t_i , \tilde{g}^t_i ) \frac{\mathrm{d}\Gamma_i(a_i)}{t}\Gamma(\tilde{q}^t) \chi(H) + \mathrm{h.c.} \Big ) e^{-\mathrm{i}tH}u\Big\rangle\Big| \\
& = \frac{1}{t} \Big|\Big\langle e^{-\mathrm{i}tH} u, \Big ( \chi(H) \mathrm{d}\Gamma(\tilde{q}^t_i , \tilde{g}^t_i ) (N_{\mathrm{neut}}+1)^{-\frac12}(N_{\mathrm{neut}}+1)^{\frac12}\frac{\mathrm{d}\Gamma_i(a_i)}{t}\Gamma(\tilde{q}^t) \chi(H) + \mathrm{h.c.} \Big ) e^{-\mathrm{i}tH}u\Big\rangle\Big| \\
& \lesssim \frac{1}{t} \Big \| \mathrm{d}\Gamma((\tilde{g}^t_i)^*\tilde{g}^t_i)^{\frac12} \chi(H) e^{-\mathrm{i}tH} u \Big \| \Big ( \big \| ( N_{\mathrm{neut}} + 1 )^{\frac32} \Gamma(\tilde{q}^t) \chi(H) e^{-\mathrm{i}tH}u \big \| \\
& \qquad \qquad \qquad \qquad \qquad \qquad \qquad\qquad + \mathcal{O}(t^{-\rho}) \big \| ( N_{\mathrm{neut}} + 1 )^{\frac12} \mathrm{d}\Gamma( |p_2|^{-1} ) \chi(H) e^{-\mathrm{i}tH}u \big \| \Big ) ,
\end{align*}
where we used again \eqref{eq:ddd1}--\eqref{eq:ddd2}. We expand the last expression and estimate the two terms separately. The first one is estimated exactly as in the proof of Theorem \ref{th:propag_massless}, yielding
\begin{align*}
& \frac{1}{t} \Big \| \mathrm{d}\Gamma((\tilde{g}^t_i)^*\tilde{g}^t_i)^{\frac12} \chi(H) e^{-\mathrm{i}tH} u \Big \| \big \| ( N_{\mathrm{neut}} + 1 )^{\frac32}\Gamma(\tilde{q}^t) \chi(H) e^{-\mathrm{i}tH}u \big \| \\
& \lesssim \frac{\alpha}{t} \Big \| \mathrm{d}\Gamma((\tilde{g}^t)^*\tilde{g}^t)^{\frac12} \chi(H) e^{-\mathrm{i}tH} (|N_{\mathrm{neut}}-N_{\mathrm{lept}}|+1)^{\frac32} u \Big \|^2 + \frac{1}{\alpha t} \big \| \Gamma(\tilde{q}^t) \chi(H) e^{-\mathrm{i}tH} u \big \|^2 + \frac{1}{\alpha t^2} \big \| ( N_{\mathrm{neut}}^2 + 1) u \big \|^2,
\end{align*}
with $\alpha > 0$. For the second term, we have that
\begin{align*}
& \frac{1}{t^{1+\rho}} \Big \| \mathrm{d}\Gamma((\tilde{g}^t_i)^*\tilde{g}^t_i)^{\frac12} \chi(H) e^{-\mathrm{i}tH} u \Big \| \big \| ( N_{\mathrm{neut}} + 1 )^{\frac12} \mathrm{d}\Gamma( |p_2|^{-1} ) \chi(H) e^{-\mathrm{i}tH}u \big \| \\
& \lesssim \frac{1}{t^{1+\rho}} \Big \| \mathrm{d}\Gamma((\tilde{g}^t_i)^*\tilde{g}^t_i)^{\frac12} \chi(H) e^{-\mathrm{i}tH} u \Big \|^2 + \frac{1}{t^{1+\rho}} \big \| ( N_{\mathrm{neut}} + 1 )^{\frac12} \mathrm{d}\Gamma( |p_2|^{-1} ) \chi(H) e^{-\mathrm{i}tH}u \big \|^2,
\end{align*}
and, since $N_{\mathrm{lept}}$ is relatively $H$-bounded,
\begin{align*}
&\big \| ( N_{\mathrm{neut}} + 1 )^{\frac12} \mathrm{d}\Gamma( |p_2|^{-1} ) \chi(H) e^{-\mathrm{i}tH}u \big \|^2 \\
&=  \big \| \mathrm{d}\Gamma( |p_2|^{-1} ) \chi(H) e^{-\mathrm{i}tH} ( | N_{\mathrm{neut}} - N_{\mathrm{lept}} | + 1 )^{\frac12}u \big \|^2 + \big \| N_{\mathrm{lept}}^{\frac12} \mathrm{d}\Gamma( |p_2|^{-1} ) \chi(H) e^{-\mathrm{i}tH}u \big \|^2 \\
&\le  \big \| \mathrm{d}\Gamma( |p_2|^{-1} ) \chi(H) e^{-\mathrm{i}tH} ( | N_{\mathrm{neut}} - N_{\mathrm{lept}} | + 1 )^{\frac12}u \big \|^2 + \big \| (c_1H+c_2)^{\frac12} \mathrm{d}\Gamma( |p_2|^{-1} ) \chi(H) e^{-\mathrm{i}tH}u \big \|^2 ,
\end{align*}
where $c_1$ and $c_2$ are real numbers. Since one easily verifies that $[H,\mathrm{d}\Gamma(|p_2|^{-1})]$ is relatively $H$-bounded, Lemma \ref{lem:evolp2-1} implies that
\begin{align*}
& \big \| ( N_{\mathrm{neut}} + 1 )^{\frac12} \mathrm{d}\Gamma( |p_2|^{-1} ) \chi(H) e^{-\mathrm{i}tH}u \big \|^2 \lesssim t^{(1+\mu)^{-1}} \big \| ( \mathrm{d}\Gamma( |p_2|^{-1} ) + 1 ) ( N_{\mathrm{neut}} + 1 )^{\frac12} u \big \|^2 .
\end{align*}

Putting together the previous estimates, we obtain that
\begin{align}
& |\tilde R_1(t)| \lesssim \frac{\alpha+1}{t} \Big \| \mathrm{d}\Gamma(\tilde{g}_t^*\tilde{g}_t)^{\frac12} \chi(H) e^{-\mathrm{i}tH} (|N_{\mathrm{neut}}-N_{\mathrm{lept}}|+1)^{\frac32} u \Big \|^2 + \frac{1}{\alpha t} \big \| \Gamma(\tilde{q}^t) \chi(H) e^{-\mathrm{i}tH} u \big \|^2 \notag \\
&\quad + \mathcal{O}(t^{-2+\rho}) \big ( \big \| N_{\mathrm{neut}} u \big \| + \| u \| \big ) \big ( \big \| N_{\mathrm{neut}} u \big \| +  \big \| \mathrm{d}\Gamma(|p_2|^{-1})u\big\|+\|u\|\big) \notag \\
&\quad + \mathcal{O}(t^{-1-\rho+(1+\mu)^{-1}} ) \big \| ( \mathrm{d}\Gamma( |p_2|^{-1} ) + 1 ) ( N_{\mathrm{neut}} + 1 )^{\frac12} u \big \|^2 . \label{eq:estim_tilde_R1}
\end{align}
From \eqref{eq:estim_tilde_R2}--\eqref{eq:estim_tilde_R1}, we conclude the proof in the same way as for Theorem \ref{th:propag_massless} (iv).
\end{proof}
\section{Asymptotic completeness}\label{sec:AC}

In this section, we prove Theorem \ref{thm:main}. We begin by recalling the definitions and basic properties of the asymptotic spaces and of the wave operators in Subsection \ref{subsec:wave}. Subsection \ref{subsec:inverse_wave} is devoted to the proof of an important ingredient of the proof of Theorem \ref{thm:main}, namely the existence of inverse wave operators. Finally, in Subsection \ref{subsec:AC}, we establish asymptotic completeness of the wave operators.

\subsection{The asymptotic space and the wave operators}\label{subsec:wave}

Most of the results of this section are straightforward adaptations of corresponding results established in \cite{DeGe99_01,Am04_01} for Pauli-Fierz Hamiltonians. Therefore we do not give the details of the proofs but refer the reader to \cite{DeGe99_01,Am04_01}. The only exception is Theorem \ref{th:unitary} where we establish unitarity of the wave operators in the case where the masses of the neutrinos vanish. Indeed, when neutrinos are supposed to be massless, Theorem \ref{th:unitary} cannot be proven as in \cite{DeGe99_01,Am04_01}. We rely instead on an elegant argument due to \cite{DeGe00_01}.

Recall that the asymptotic creation and annihilation operators for bosons are formally defined by \eqref{eq:def_asympt_a}--\eqref{eq:def_a}, and by analogous formulas for leptons and neutrinos. See Appendix \ref{sec:asympt_a} for precise definitions and basic properties. In particular, the fermionic asymptotic creation or annihilation operators $b^{\pm,\sharp}_{l,\epsilon}(h)$ (for leptons) and $c^{\pm,\sharp}_{l,\epsilon}(h)$ (for neutrinos), with $h \in \mathfrak{h}_2$, are bounded. The bosonic operators $a^{\pm,\sharp}_{\epsilon}(h)$, with $h \in \mathfrak{h}_1$, are closed but unbounded.

The space of asymptotic vacua is defined by
\begin{equation*}
\mathscr{K}^{\pm} = \left\{u \in \mathscr{D} ( N_W^{\frac12} ) , \, \forall h_1 \in \mathfrak{h}_2,  h_2,h_3 \in \mathfrak{h}_1, l , \epsilon, a^{\pm}_\epsilon(h_1) u = b^{\pm}_{l,\epsilon}(h_2) u = c^{\pm}_{l,\epsilon}(h_3) u = 0 \right\}.
\end{equation*}
The asymptotic space is 
\begin{equation*}
\mathscr{H}^{\pm} = \mathscr{K}^{\pm} \otimes \mathscr{H}.
\end{equation*}
The following proposition can be proven in the same way as \cite[Proposition 5.5]{DeGe99_01}.
\begin{Prop}\label{prop:basic_prop_omega}
Consider the Hamiltonian \eqref{total_hamiltonian_g} with $H_I$ given by \eqref{interaction_term} and suppose that
\begin{equation*}
G \in \mathbb{H}^{1+\mu} \text{ for some } \mu > 0.
\end{equation*}
Then
\begin{itemize}
\item[(i)] $\mathscr{K}^{\pm}$ is closed and $H$-invariant.
\item[(ii)] For all $n \in \mathbb{N}$ and $h_1, \dots, h_n$, $\mathscr{K}^{\pm}$ is contained in the domain of  $d^{\pm,*}(h_1)\dots d^{\pm,*}(h_n)$, where $d^{\pm,*}(h_i)$ stands for any of the operators $a^{\pm,*}_\epsilon(h_i)$, with $h_i \in \mathfrak{h}_1$, or $b^{\pm,*}_{l,\epsilon}(h_i)$ or $c^{\pm,*}_{l,\epsilon}(h_i)$, with $h_i \in \mathfrak{h}_2$.
\item[(iii)] $\mathscr{H}_{\mathrm{pp}}(H) \subset \mathscr{K}^{\pm}$.
\end{itemize}
\end{Prop}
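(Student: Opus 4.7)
My plan is to follow closely the argument of \cite[Proposition 5.5]{DeGe99_01}, which is built on the standard machinery for asymptotic fields: intertwining relations with the dynamics, canonical commutation/anticommutation relations for the asymptotic fields, and number-energy estimates of the type proven in Section \ref{subsec:estimates}. The three statements are essentially independent, so I would treat them in turn.

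For (i), the strategy is to observe that each bosonic asymptotic annihilation operator $a^{\pm}_\epsilon(h)$ is closable (as a strong limit of closable operators, with a closure containing $\mathscr{D}(N_W^{1/2})$), while the fermionic asymptotic operators $b^{\pm}_{l,\epsilon}(h)$ and $c^{\pm}_{l,\epsilon}(h)$ are bounded thanks to the CAR. The set $\mathscr{K}^{\pm}$ is then a countable intersection of kernels of closed operators (taking $h$ in a countable dense subset suffices by continuity), hence closed in $\mathscr{D}(N_W^{1/2})$ equipped with the graph norm. For $H$-invariance, I would use the key intertwining relation, derived from the results in Appendix \ref{sec:asympt_a}, namely $e^{\mathrm{i}tH} d^{\pm}(h) e^{-\mathrm{i}tH} = d^{\pm}(e^{\mathrm{i}t\omega}h)$ for $h$ in an appropriate core. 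Applied to $u\in\mathscr{K}^{\pm}$ with $h$ replaced by $e^{-\mathrm{i}t\omega}h$, this yields $d^{\pm}(h) e^{-\mathrm{i}tH} u = 0$ for all $h$ in the core, and a density argument gives $e^{-\mathrm{i}tH} u \in \mathscr{K}^{\pm}$.

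For (ii), the point is to iterate creation operators while staying in the right domain. The fermionic case is trivial because the operators are bounded by the CAR. The bosonic case reduces to showing $\mathscr{K}^{\pm} \subset \mathscr{D}\bigl((N_W^{\pm})^{n/2}\bigr)$ for all $n$, where $N_W^{\pm}$ denotes the asymptotic boson number operator (well defined as a positive self-adjoint operator on $\mathscr{K}^{\pm}$ by CCR and the fact that $\mathscr{K}^{\pm}$ is annihilated by all $a^{\pm}$). Using the intertwining of (i) together with the number-energy estimates of Lemma \ref{lemV1Numbestimates} (or Lemma \ref{lemV1NumbestimatesV2} in the massless case), one controls powers of $N_W^{\pm}$ by powers of $N_W$, and the result follows by induction on $n$, similarly to \cite[Section 5]{DeGe99_01} and \cite[Section 4]{Am04_01}.

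For (iii), I would argue as follows. Take $u$ with $Hu = \lambda u$, so that $e^{-\mathrm{i}tH}u = e^{-\mathrm{i}t\lambda}u$. For any $h$ smooth and compactly supported away from $p=0$, and for any asymptotic annihilation $d^{\pm}$,
\begin{equation*}
\| d^{\pm}(h) u \|^2 = \lim_{t\to\pm\infty} \| d(e^{\mathrm{i}t\omega}h) u \|^2 ,
\end{equation*}
by the definition of the asymptotic field. Writing out the square as a double integral against the two-point function of $u$ and using the assumption $G \in \mathbb{H}^{1+\mu}$ (so that eigenvectors have momentum-side regularity via Lemma \ref{lemV1Numbestimates}(ii)), the oscillatory factor $e^{\mathrm{i}t(\omega(\xi_1)-\omega(\xi_2))}$ combined with non-stationary phase / Riemann--Lebesgue arguments forces the limit to vanish. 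A density argument extends this to all $h\in\mathfrak{h}_1$ or $\mathfrak{h}_2$. The main obstacle I expect is precisely this step: controlling the two-point function of $u$ well enough to apply Riemann--Lebesgue. The cleanest route is via the propagation estimates of Section \ref{sec:propag} together with a minimal velocity estimate (Theorem \ref{th:propag_massive}(iv) in the massive case, Theorem \ref{th:propag_massless}(iv) in the massless case), which imply that the contribution of bounded states to the asymptotic annihilation operators vanishes, exactly as in the proof of \cite[Proposition 4.1]{DeGe99_01}.
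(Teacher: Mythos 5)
Your overall route coincides with the paper's: the paper does not prove this proposition at all, but simply observes that it can be proven in the same way as \cite[Proposition 5.5]{DeGe99_01}, and your sketch of (i)--(iii) (closedness from closedness/boundedness of the asymptotic fields, $H$-invariance from the intertwining relations of Theorem \ref{TH7}, iterated domain bounds via the CCR/CAR and the number--energy estimates, and a Riemann--Lebesgue argument for eigenvectors) is precisely that argument.

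One step you propose would, however, fail as stated. For (iii) you present the minimal velocity estimates, Theorem \ref{th:propag_massive}(iv) and Theorem \ref{th:propag_massless}(iv), as the ``cleanest'' way to complete the Riemann--Lebesgue step. Those estimates are only stated for energy cutoffs $\chi$ supported in $\mathbb{R}\setminus(\tau\cup\sigma_{\mathrm{pp}}(H))$; they carry no information whatsoever about eigenvectors and cannot be used to prove $\mathscr{H}_{\mathrm{pp}}(H)\subset\mathscr{K}^{\pm}$ (they enter later, in the proof of asymptotic completeness, precisely on the continuous spectral part). Fortunately, the obstacle you anticipate in the direct argument is not there: no control of a two-point function of the eigenvector is needed. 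If $Hu=\lambda u$, then $d^{\pm}(h)u=\lim_{t\to\pm\infty}e^{\mathrm{i}t(H-\lambda)}d(h_t)u$ with $h_t=e^{-\mathrm{i}t\omega}h$; the fermionic annihilation operators satisfy $\|b_{l,\epsilon}(h_t)\|\le\|h\|$ and $\|c_{l,\epsilon}(h_t)\|\le\|h\|$ uniformly in $t$, while for the bosons $\|a_{\epsilon}(h_t)(N_W+1)^{-\frac12}\|\le\|h\|$ and $u\in\mathscr{D}(N_W^{\frac12})$ because $N_W$ (and $N_{\mathrm{lept}}$) are relatively $H$-bounded both in the massive and in the massless case. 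It therefore suffices to verify $d(h_t)v\to0$ for $v$ in a dense set of finite particle vectors with smooth compactly supported wave functions, which follows from the weak convergence $e^{-\mathrm{i}t\omega}h\rightharpoonup0$ (Riemann--Lebesgue, using that $\nabla\omega\neq0$ almost everywhere), and then to conclude by density together with the uniform bounds above. A minor point on (i): the kernels of the closed bosonic operators $a^{\pm}_{\epsilon}(h)$ (Theorem \ref{TH7}) and of the bounded fermionic asymptotic operators are closed in $\mathscr{H}$ itself, not merely in the graph norm of $N_W^{\frac12}$ as you claim; the only remaining issue for closedness of $\mathscr{K}^{\pm}$ is the constraint $u\in\mathscr{D}(N_W^{\frac12})$ in its definition, which is dealt with as in \cite{DeGe99_01}.
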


Recall that the wave operators $\Omega^\pm:\mathscr{H}^\pm\to\mathscr{H}$ are defined by \eqref{eq:def_Omega_1}--\eqref{eq:def_Omega_2} and that $H^{\mathrm{ext}}=H\otimes\mathds{1}+\mathds{1}\otimes H_0$. The following properties are standard consequences of the definitions.
\begin{Prop}\label{prop:intert}
Consider the Hamiltonian \eqref{total_hamiltonian_g} with $H_I$ given by \eqref{interaction_term} and suppose that
\begin{equation*}
G \in \mathbb{H}^{1+\mu} \text{ for some } \mu > 0.
\end{equation*}
Then $\Omega^\pm$ are isometric and we have that  
\begin{equation*}
H \Omega^\pm = \Omega^\pm H^{\mathrm{ext}}.
\end{equation*}
Moreover, 
\begin{equation*}
d^{\pm,\sharp}(h) \Omega^\pm = \Omega^\pm (\mathds{1} \otimes d^{\sharp}(h)),
\end{equation*}
where $d^{\pm,\sharp}(h)$ stands for any of the operators $a^{\pm,\sharp}_\epsilon(h)$, with $h \in \mathfrak{h}_1$, or $b^{\pm,\sharp}_{l,\epsilon}(h)$ or $c^{\pm,\sharp}_{l,\epsilon}(h)$, with $h \in \mathfrak{h}_2$. 
\end{Prop}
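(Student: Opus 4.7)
The plan is to construct $\Omega^\pm$ explicitly on a dense subspace of $\mathscr{H}^\pm = \mathscr{K}^\pm \otimes \mathscr{H}$ and then to verify the three stated properties in turn. First I would introduce the dense subspace $\mathscr{D}^\pm \subset \mathscr{H}^\pm$ spanned by finite linear combinations of simple tensors of the form $u \otimes d^{*}_{1}(h_1)\cdots d^{*}_{n}(h_n)\Omega$, where $u \in \mathscr{K}^\pm$, $\Omega$ denotes the Fock vacuum in $\mathscr{H}$, each $d^{*}_{i}$ is a free creation operator (bosonic or fermionic), and the one-particle wave functions $h_i$ range over a suitable dense subset of $\mathfrak{h}_1$ or $\mathfrak{h}_2$ (e.g.\ Schwartz class). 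On such vectors I would set
\begin{equation*}
\Omega^\pm\bigl(u \otimes d^{*}_{1}(h_1)\cdots d^{*}_{n}(h_n)\Omega\bigr) := d^{\pm,*}_{1}(h_1)\cdots d^{\pm,*}_{n}(h_n)\,u,
\end{equation*}
which is well-defined by Proposition~\ref{prop:basic_prop_omega}(ii).

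Next I would establish isometry on $\mathscr{D}^\pm$. The key input is that the asymptotic operators $d^{\pm,\sharp}(h)$ obey the same canonical (anti)commutation relations as their free counterparts; these are preserved under the strong limit in \eqref{eq:def_asympt_a} and are recorded in Appendix~\ref{sec:asympt_a}. For two basis vectors in $\mathscr{D}^\pm$, the inner product $\langle \Omega^\pm(u \otimes \psi), \Omega^\pm(u' \otimes \psi')\rangle$ is computed by repeatedly using these CCR/CAR to move asymptotic annihilation operators to the right past creation operators, picking up the usual Wick contractions; whenever such an annihilation operator reaches $u$ or $u'$, it vanishes by definition of $\mathscr{K}^\pm$. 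The resulting sum of contractions matches exactly the Wick expansion of $\langle u,u'\rangle\langle\psi,\psi'\rangle$ computed in the free Fock space, so $\Omega^\pm$ extends uniquely to an isometry on all of $\mathscr{H}^\pm$.

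For the intertwining identity $H\Omega^\pm = \Omega^\pm H^{\mathrm{ext}}$, since $H^{\mathrm{ext}} = H\otimes \mathds{1} + \mathds{1} \otimes H_0$ and $\mathscr{K}^\pm$ is $H$-invariant by Proposition~\ref{prop:basic_prop_omega}(i), it suffices to verify on $\mathscr{D}^\pm$ that
\begin{equation*}
[H, d^{\pm,*}(h)] = d^{\pm,*}(\omega h),
\end{equation*}
where $\omega$ is the one-particle dispersion relation corresponding to $d^\sharp$. I would derive this identity by differentiating \eqref{eq:def_asympt_a} in $t$ and exchanging the strong limit with the commutator, using the number-energy estimates of Lemmas~\ref{lemV1Numbestimates}--\ref{lemV1NumbestimatesV2} to control domains. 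Commuting $H$ through the string of asymptotic creation operators on the right-hand side of the definition of $\Omega^\pm$ and noting that $Hu \in \mathscr{K}^\pm$, one obtains exactly $\Omega^\pm\bigl((H\otimes\mathds{1} + \mathds{1}\otimes H_0)(u \otimes \psi)\bigr)$ after recognizing the free Hamiltonian $H_0$ via the analogous identity $[H_0, d^{*}(h)] = d^{*}(\omega h)$ in the second tensor factor. The second intertwining property $d^{\pm,\sharp}(h)\Omega^\pm = \Omega^\pm(\mathds{1}\otimes d^\sharp(h))$ is then immediate from the definition of $\Omega^\pm$: for creation operators it is direct, while for annihilation operators one uses the CCR/CAR to push $d^\sharp(h)$ through to the vacuum on the right and $d^{\pm,\sharp}(h)$ through to $u$ on the left, the two computations producing the same pattern of contractions.

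The main obstacle is domain tracking rather than any deep new estimate: the bosonic asymptotic operators $a^{\pm,\sharp}(h)$ are unbounded, so every commutator manipulation and every CCR/CAR move on a string of such operators must be justified on an appropriate domain controlled by powers of $N$ (or of $N_{\mathrm{lept}}+N_W$ in the massless case) and $|H|^{1/2}$. This is precisely the role of Proposition~\ref{prop:basic_prop_omega}(ii) together with the number-energy estimates of Section~\ref{subsec:estimates}, which ensure that $\mathscr{D}^\pm$ is preserved by each of the operators appearing in the argument and that the formal algebraic manipulations reducing everything to free Fock space computations can be carried out rigorously.
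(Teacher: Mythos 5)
Your proposal is correct and follows exactly the standard route the paper has in mind: the paper states these properties as ``standard consequences of the definitions'' (cf.\ the explicit formula \eqref{eq:defomegaext} for $\Omega^{\mathrm{ext},\pm}$ restricted to $\mathscr{H}^\pm$), and your argument fills in those standard details via the asymptotic CCR/CAR, the vacuum property of $\mathscr{K}^\pm$, and the pullthrough formulae of Theorem \ref{TH7}, with domain issues handled by Proposition \ref{prop:basic_prop_omega} and the number-energy estimates, just as in the cited references \cite{DeGe99_01,Am04_01}.
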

If the masses of the neutrinos are supposed to be positive, then one can show that the $\Omega^\pm$ are unitary by adapting an argument of \cite{Ho69_02} (see also \cite[Theorem 5.6]{DeGe99_01}). In the massless case, this argument fails, but one can follow the approach of \cite{DeGe00_01}. We give a sketch of the proof.
\begin{Th}\label{th:unitary}
Suppose that the masses of the neutrinos $m_{\nu_e}$, $m_{\nu_{\mu}}$, $m_{\nu_{\tau}}$ are positive and consider the Hamiltonian \eqref{total_hamiltonian_g} with $H_I$ given by \eqref{interaction_term}. Suppose that
\begin{equation*}
G \in \mathbb{H}^{1+\mu} \text{ for some } \mu > 0.
\end{equation*}
Then $\Omega^{\pm}$ are unitary maps form $\mathscr{H}^\pm$ to $\mathscr{H}$.

The same holds if the masses of the neutrinos vanish and if one considers instead the Hamiltonian $H = H_0 + g ( H_I^{(1)} + H_I^{(2)} )$ with $H_I^{(1)}$ and $H_I^{(2)}$ given by \eqref{interaction_term}.
\end{Th}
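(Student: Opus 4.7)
The plan is to prove surjectivity of $\Omega^\pm$, as the isometry property is already established in Proposition \ref{prop:intert}. Observe that $\mathrm{Ran}(\Omega^\pm)$ is a closed $H$-invariant subspace of $\mathscr{H}$ containing $\mathscr{K}^\pm$ (from $n=0$ in the construction) and stable under the asymptotic creation operators $d^{\pm,*}(h)$. So it suffices to prove that the subspace
\[
\mathscr{H}^\pm_{\mathrm{fin}} := \mathrm{span}\bigl\{ d^{\pm,*}(h_1)\cdots d^{\pm,*}(h_n) u \, : \, u \in \mathscr{K}^\pm, \ h_i \in \mathfrak{h}_1 \text{ or } \mathfrak{h}_2, \ n \in \mathbb{N} \bigr\}
\]
is dense in $\mathscr{H}$.

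In the massive case (all neutrino masses positive), I would implement the argument of \cite{Ho69_02}, as used in \cite[Theorem 5.6]{DeGe99_01}. First, for $\chi\in\mathrm{C}_0^\infty(\mathbb{R})$, the operator $N\chi(H)$ is bounded by Lemma \ref{lemV1Numbestimates}(ii) (this is where the positivity of all masses is used). Given $\Psi = \chi(H)\Psi \in \mathscr{H}$, one can therefore reduce to a situation with uniformly bounded particle number. Next I would use the CAR/CCR for the asymptotic fields (cf.\ Appendix \ref{sec:asympt_a}) together with the defining property of $\mathscr{K}^\pm$ as the joint kernel of all annihilation operators $d^\pm(h)$, to decompose $\Psi$ by induction on the asymptotic particle number: iteratively ``pull out'' asymptotic creation operators using the commutation relations until one is left with a residual vector that must lie in $\mathscr{K}^\pm$. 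The uniform number bound ensures the induction terminates (in a strong-limit sense), yielding $\Psi \in \overline{\mathscr{H}^\pm_{\mathrm{fin}}}$.

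In the massless case, $N\chi(H)$ is \emph{not} bounded because arbitrarily many soft neutrinos fit inside a bounded-energy state; this is precisely why the massive strategy fails. The remedy, following \cite{DeGe00_01}, exploits the conservation law $[H, N_{\mathrm{lept}} - N_{\mathrm{neut}}] = 0$, which holds for $H = H_0 + g(H_I^{(1)} + H_I^{(2)})$ (see \eqref{commut_1}--\eqref{commut_2}). Restrict to a joint spectral subspace $\mathds{1}_{\{n\}}(N_{\mathrm{lept}} - N_{\mathrm{neut}}) \mathds{1}_I(H)\mathscr{H}$ with $I$ bounded and $n \in \mathbb{Z}$. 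On such a subspace, $N_{\mathrm{neut}} = N_{\mathrm{lept}} - n$, and $N_{\mathrm{lept}}\chi(H)$, $N_W\chi(H)$ are bounded by Lemma \ref{lemV1NumbestimatesV2}(ii). Consequently $N\chi(H)$ is bounded on each such subspace, and the total Hilbert space is the orthogonal sum of these subspaces. The Hoegh-Krohn induction argument then runs sector by sector, after which one adds the contributions to cover all of $\mathscr{H}$.

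The main technical obstacle is in the massless sector: one must verify carefully that the asymptotic creation/annihilation operators preserve the eigenspaces of $N_{\mathrm{lept}} - N_{\mathrm{neut}}$ in the appropriate way (so that the sector-by-sector reasoning is consistent), and that the strong limits defining $d^{\pm,\sharp}(h)$ on $\mathscr{K}^\pm$ are compatible with these decompositions. This is where the propagation estimates of Section \ref{sec:propag} together with Proposition \ref{prop:basic_prop_omega} play their role, ensuring the asymptotic fields are well behaved on the relevant dense subspaces. Once this bookkeeping is in place, the proof reduces, on each sector, to the same algebraic manipulation as in the massive case.
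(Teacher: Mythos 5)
Your proposal is correct in substance and, for the massive case, follows exactly the route the paper takes (Hoegh-Krohn's argument as in \cite{Ho69_02}, \cite{DeGe99_01}, using that $N$ is controlled by $H$ when all masses are positive). For the massless case you rely on the same decisive input as the paper, namely the conservation of $N_{\mathrm{lept}}-N_{\mathrm{neut}}$ together with the relative $H$-boundedness of $N_{\mathrm{lept}}$ and $N_W$, but you package it differently: you localize in charge-and-energy sectors $\mathds{1}_{\{n\}}(N_{\mathrm{lept}}-N_{\mathrm{neut}})\chi(H)\mathscr{H}$, on which $N\chi(H)$ becomes bounded, and rerun the massive argument sector by sector. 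The paper instead argues globally: it introduces the asymptotic-number quadratic form $n^\pm(u)=\sup_{\mathfrak{f}_1,\mathfrak{f}_2} n^\pm_{\mathfrak{f}_1,\mathfrak{f}_2}(u)$, invokes the identity $\mathrm{Ran}\,\Omega^\pm=\overline{\mathscr{D}(n^\pm)}$ from \cite[Theorem 4.3]{DeGe00_01}, and bounds $n^\pm(u)\lesssim\langle u,|H|u\rangle+\langle u,N_{\mathrm{neut}}u\rangle+\|u\|^2$ by writing $N=N_W+2N_{\mathrm{lept}}+(N_{\mathrm{neut}}-N_{\mathrm{lept}})$ along the evolution; density of $\mathscr{D}(|H|^{1/2})\cap\mathscr{D}(N_{\mathrm{neut}}^{1/2})$ then gives surjectivity. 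Your sector-localized vectors lie in that dense domain, so the two routes establish the same density statement; the paper's version has the minor advantage of avoiding any bookkeeping about how the asymptotic operators shift the charge sectors, which is in any case irrelevant, since one only needs each localized vector to lie in the closed range of $\Omega^\pm$.

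Two small cautions. First, ``the induction terminates'' is not literally accurate even after your sector reduction: boundedness of $N\chi(H)\mathds{1}_{\{n\}}(N_{\mathrm{lept}}-N_{\mathrm{neut}})$ is an expectation (form) bound, uniform along the evolution, not a bound on the number of quanta of the state; the substantive step is precisely the statement that finiteness of the asymptotic-number form $n^\pm(u)$ implies $u\in\mathrm{Ran}\,\Omega^\pm$, which is what \cite[Theorem 4.3]{DeGe00_01} (or the original argument of \cite{Ho69_02}) supplies and what you should invoke in place of a terminating induction. Second, the propagation estimates of Section \ref{sec:propag} are not needed for this theorem; the existence and properties of the asymptotic fields under $G\in\mathbb{H}^{1+\mu}$ (Appendix \ref{sec:asympt_a}, Propositions \ref{prop:basic_prop_omega} and \ref{prop:intert}) together with Lemma \ref{lemV1NumbestimatesV2} suffice.
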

\begin{proof}
As mentioned above, the proof in the massive case is a straightforward adaptation of arguments used in \cite{Ho69_02}. We consider the case where the masses of the neutrinos vanish. By the previous proposition, $\Omega^\pm$ are isometric. It remains to verify that they are onto.

Let $\mathfrak{f}_1 \subset \mathfrak{h}_1, \mathfrak{f}_2 \subset \mathfrak{h}_2$ be two subspaces of finite dimensions. For $u \in \mathscr{H}$, let
\begin{equation*}
n^\pm_{\mathfrak{f}_1, \mathfrak{f}_2}(u) = \sum_{\epsilon=\pm}\sum^{\mathrm{dim} \mathfrak{f}_1}_{i=1} \|a^\pm_\epsilon(h_i)u\|^2 + \sum_{\epsilon=\pm,l=1,2,3}\sum^{\mathrm{dim} \mathfrak{f}_2}_{j=1} \big( \|b^\pm_{l,\epsilon}(g_j)u\|^2 + \|c^\pm_{j,\epsilon}(g_k)u\|^2 \big),
\end{equation*}
where $\{h_i\}$ and $\{g_j\}$ are orthonormal bases of $\mathfrak{f}_1$ and $\mathfrak{f}_2$, respectively. Note that if $u \notin \mathscr{D}(a^\pm_{l,\epsilon}(h_{i}))$ for some $i$, then $n^\pm_{\mathfrak{f}_1, \mathfrak{f}_2}(u) = \infty$.  Clearly,
\begin{align*}
n^\pm_{\mathfrak{f}_1, \mathfrak{f}_2}(u) & =  \lim_{t \to \infty} \sum_{\epsilon=\pm}\sum^{\mathrm{dim} \mathfrak{f}_1}_{i=1} \|a_{\epsilon}(h_{i,\pm t}) e^{\mp \mathrm{i} t H}u\|^2 + \sum_{\epsilon=\pm,l=1,2,3}\sum^{\mathrm{dim} \mathfrak{f}_2}_{j=1} \big( \|b_{l,\epsilon}(g_{j,\pm t}) e^{\mp \mathrm{i} t H}u\|^2 + \|c_{l,\epsilon}(g_{j,\pm t}) e^{\mp \mathrm{i} t H}u\|^2 \big) \\
& \le \big\langle e^{\mp\mathrm{i}tH}u , N e^{ \mp \mathrm{i} t H } u \big\rangle.
\end{align*}
Decomposing $N = N_W + 2 N_{\mathrm{lept}} + ( N_{\mathrm{neut}} - N_{\mathrm{lept}} )$ and using that $N_W$ and $N_{\mathrm{lept}}$ are relatively $H$-bounded and that $N_{\mathrm{neut}} - N_{\mathrm{lept}}$ commutes with $H$, we deduce that
\begin{equation}
n^\pm_{\mathfrak{f}_1, \mathfrak{f}_2}(u) \lesssim \langle u,|H|u\rangle + \langle u , N_{\mathrm{neut}} u \rangle + \| u \|^2. \label{eq:npm}
\end{equation}

Now, as in \cite[Theorem 4.3]{DeGe00_01}, one can verify that
\begin{equation}
\mathrm{Ran} \, \Omega^\pm = \overline{\mathscr{D}(n^\pm)} , \label{eq:npm_2}
\end{equation}
where
\begin{equation*}
n^\pm(u) = \sup_{\mathfrak{f}_1 \subset \mathfrak{h}_1,\mathfrak{f}_2 \subset \mathfrak{h}_2} n^\pm_{\mathfrak{f}_1, \mathfrak{f}_2}(u).
\end{equation*}
and $\mathscr{D}(n^\pm) = \{ u \in \mathscr{H} , n^\pm(u) < \infty \}$. In the previous equation, the supremum is taken over all finite dimensional subspaces $\mathfrak{f}_1$, $\mathfrak{f}_2$. From \eqref{eq:npm} we deduce that $\mathscr{D}(n^\pm)$ contains the dense subset $\mathscr{D}( |H|^{\frac12} ) \cap \mathscr{D}( N_{\mathrm{neut}}^{\frac12})$. By \eqref{eq:npm_2}, this shows that $\Omega^\pm$ are onto.
\end{proof}
In the remainder of this subsection we introduce the ``extended wave operators'', defined as in \cite{DeGe99_01}, and state some of their properties. Recall that  $\mathscr{H}^{\mathrm{ext}} = \mathscr{H} \otimes \mathscr{H}$ and  $H^{\mathrm{ext}} =H \otimes \mathds{1} + \mathds{1} \otimes H_0$.
We set
\begin{equation*}
\mathscr{D} (\Omega^{\mathrm{ext},\pm}) = \bigotimes^{\infty}_{n=0} \mathscr{D}((|H|^{\frac{n}{2}})\otimes \bigoplus_{n_1 + \dots + n_{14} = n} \otimes^{n_1}_{s} \mathfrak{h}_2 \otimes \otimes^{n_2}_{s} \mathfrak{h}_2 \otimes  \otimes^{n_3}_{a} \mathfrak{h}_1 \otimes \dots \otimes \otimes^{n_{14}}_{a} \mathfrak{h}_1,
\end{equation*}
and 
\begin{align}
&  \Omega^{\mathrm{ext},\pm} : \mathscr{D}(\Omega^{\mathrm{ext},\pm}) \to \mathscr{H} \notag \\
& \Omega^{\mathrm{ext},\pm} \psi \otimes d^*(h_1)\dots d^*(h_n) \Omega = d^{\pm,*}(h_1)\dots d^{\pm,*}(h_n)\psi , \label{eq:defomegaext}
\end{align} 
where, as above, $d^{\pm,*}(h_i)$ stands for any of the operators $a^{\pm,*}_\epsilon(h_i)$, with $h_i \in \mathfrak{h}_1$, or $b^{\pm,*}_{l,\epsilon}(h_i)$ or $c^{\pm,*}_{l,\epsilon}(h_i)$, with $h_i \in \mathfrak{h}_2$, and likewise for $d^*(h_i)$. The definition \eqref{eq:defomegaext} extends to any vector in $\mathscr{D}(\Omega^{\mathrm{ext},+})$ by linearity. The fact that $\Omega^{\mathrm{ext},\pm}$ are well-defined follows from the properties of the asymptotic creation operators (see Appendix \ref{sec:asympt_a}).

It follows directly from the definitions of $\Omega^\pm$ and $\Omega^{\mathrm{ext},\pm}$ that
\begin{align*}
\Omega^{\mathrm{ext}, \pm} |_{\mathscr{H}^{\pm}}  =  \Omega^{\pm} .
\end{align*}
Moreover, in the same way as in \cite[Theorem 5.7]{DeGe99_01}, it is not difficult to verify that, for all $u \in \mathscr{D}(\Omega^{\mathrm{ext},\pm})$,
\begin{equation*}
\underset{t \to \pm \infty}{\slim} e^{\mathrm{i} t H} I e^{- \mathrm{i} t H^{\mathrm{ext}}}u = \Omega^{\mathrm{ext},\pm}u  .
\end{equation*}
where the scattering identification operator $I$ is defined in \eqref{SIO}.

\subsection{The geometric inverse wave operators}\label{subsec:inverse_wave}

In this section we establish the existence of two asymptotic observables using the propagation estimates of Section \ref{sec:propag}. Compared to similar results proven in \cite{DeGe99_01,FrGrSc02_01,Am04_01}, the main difficulty we encounter comes from the fact that the propagation observables of Section \ref{sec:propag} only hold for a dense set of states, and for suitable norms. For this reason, the results of this section are not straightforward modifications of previous papers.

We begin with the following important proposition.
\begin{Prop}
\label{Thvelocity}
Suppose that the masses of the neutrinos vanish and consider the Hamiltonian $H = H_0 + g ( H_I^{(1)} + H_I^{(2)} )$ with $H_I^{(1)}$ and $H_I^{(2)}$ given by \eqref{interaction_term}. Suppose that
\begin{equation*}
G \in L^2, \quad a_{(i),\cdot} G \in L^2, \quad |p_3|^{-1} a_{(i),\cdot} G \in L^2, \quad i = 1,2,3 ,
\end{equation*}
and that
\begin{equation*}
G \in \mathbb{H}^{1+\mu} \text{ for some } \mu > 0.
\end{equation*}
Let $\delta > 0 $ and $q_i$, $i \in \{1, \dots, 14\}$ be functions in $\mathrm{C}_0^{\infty} (\{ x \in \mathbb{R}^3 , |x| \le 2 \delta \})$ such that $0 \le q_i \le 1$, $q_i=1$ on $\{ x \in \mathbb{R}^3 , |x| \leq \delta \}$ and let $\tilde{q}^t=(q_1(\frac{x_{t,\rho}}{t}),\dots,q_{14}(\frac{x_{t,\rho}}{t}))$. 
The following limits exist
\begin{equation*}
\Gamma^{\pm}(q) := \underset{t\to \pm \infty}{\slim} e^{\mathrm{i} t H} \Gamma(\tilde q^t) e^{-\mathrm{i} t H}.
\end{equation*}
Moreover, for all $\chi \in \mathrm{C}_{0}^{\infty}(\mathbb{R})$ supported in $\mathbb{R}\backslash(\tau \cup \sigma_{\mathrm{pp}}(H))$, there exists $\delta>0$ such that
\begin{equation*}
\Gamma^{\pm}(q) \chi( H ) = 0.
\end{equation*}

The same holds if the masses of the neutrinos $m_{\nu_e}$, $m_{\nu_{\mu}}$, $m_{\nu_{\tau}}$ are positive and if one considers the Hamiltonian \eqref{total_hamiltonian_g} with $H_I$ given by \eqref{interaction_term}.
\end{Prop}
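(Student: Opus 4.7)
The plan is a standard application of Cook's method to the bounded family $t \mapsto e^{\mathrm{i}tH}\Gamma(\tilde q^t)e^{-\mathrm{i}tH}$. Since $0 \le q_i \le 1$ gives $\|\Gamma(\tilde q^t)\| \le 1$, it suffices to establish strong convergence on a dense subspace, e.g.\ on vectors of the form $\chi(H)u$ with $\chi \in \mathrm{C}_0^\infty(\mathbb{R})$ and $u \in \mathscr{D}(N_{\mathrm{neut}}^{1/2}) \cap \mathscr{D}(\mathrm{d}\Gamma(|p_2|^{-1})^{1/2})$; the full strong limit then follows by the uniform bound and a standard $3\varepsilon$ argument letting $\chi \to \mathds{1}$. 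The Heisenberg derivative decomposes as
\[
\mathbf{D}\Gamma(\tilde q^t) = \mathrm{d}\Gamma\bigl(\tilde q^t,\, \mathbf{d}_0 \tilde q^t\bigr) + \bigl[H_I,\, \mathrm{i}\Gamma(\tilde q^t)\bigr],
\]
and I will show that the matrix element against $e^{-\mathrm{i}tH}\chi(H)u$ is integrable on $[1,\infty)$.

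For the first term, the commutator expansion already carried out in the proof of Theorem \ref{th:propag_massless_2} gives
\[
\mathbf{d}_0 q_i\bigl(\tfrac{x_{t,\rho,i}}{t}\bigr) = -\tfrac{1}{2t}\bigl\langle \tfrac{x_{t,\rho,i}}{t} - \nabla\omega_{t,\rho,i},\, \nabla q_i(\tfrac{x_{t,\rho,i}}{t})\bigr\rangle + \mathrm{h.c.} + \mathcal{O}(t^{-2+\rho})
\]
in the neutrino slots (with $\rho \in ((1+\mu)^{-1},1)$ as in Theorem \ref{th:propag_massless_2}) and $\mathcal{O}(t^{-2})$ in the massive slots. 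After a Cauchy--Schwarz bound, the principal part is controlled by the square root of the integrand in the maximal-velocity propagation estimate (iii) of Theorem \ref{th:propag_massless_2} (or Theorem \ref{th:propag_massive} (iii) in the massive case), yielding $L^1(dt)$ once one extracts the $1/t$ weight. The remainders are directly in $L^1(dt)$ because $\rho < 1$. For the second term, since $G \in \mathbb{H}^{1+\mu}$ and $\tilde q^t$ is localized on the spatial scale $t$, an $N_\tau$-type estimate as in the proof of Theorem \ref{th:propag_massless_2} (iv) gives
\[
\bigl\|[H_I,\Gamma(\tilde q^t)](N_{\mathrm{lept}} + N_W + 1)^{-1}\bigr\| = \mathcal{O}(t^{-1-\mu}),
\]
which is integrable; the $H$-boundedness of $N_{\mathrm{lept}}+N_W$ together with conservation of $N_{\mathrm{neut}}-N_{\mathrm{lept}}$ lets this pass through the evolution. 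Adding the two pieces and applying Cook gives the existence of $\Gamma^\pm(q)\chi(H)u$, and the contraction estimate $\|\Gamma(\tilde q^t)\| \le 1$ then extends the limit to all of $\mathscr{H}$.

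For the vanishing, I would invoke Theorem \ref{th:propag_massless_2} (iv) (respectively Theorem \ref{th:propag_massive} (iv) in the massive case), which, for $|g|$ small enough and $\chi \in \mathrm{C}_0^\infty(\mathbb{R} \setminus (\tau \cup \sigma_{\mathrm{pp}}(H)))$, produces some $\delta_0 > 0$ with
\[
\int_1^\infty \bigl\|\Gamma\bigl(\mathds{1}_{[0,\delta_0]}(|x_{t,\rho}|/t)\bigr)\chi(H) e^{-\mathrm{i}tH} u\bigr\|^2\,\frac{dt}{t} < \infty
\]
on the relevant dense subspace. Choosing $\delta < \delta_0/2$ so that each $q_i \le \mathds{1}_{[0,\delta_0]}$, monotonicity of $\Gamma$ in its argument yields $\Gamma(\tilde q^t)^2 \le \Gamma(\mathds{1}_{[0,\delta_0]}(|x_{t,\rho}|/t))$, hence the same $dt/t$ integrability for $\Gamma(\tilde q^t)\chi(H)e^{-\mathrm{i}tH}u$. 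Coupled with the already-established strong convergence, this forces the limit to vanish: if $\Gamma^\pm(q)\chi(H)u$ had norm $\eta > 0$, then $\|\Gamma(\tilde q^t)\chi(H)e^{-\mathrm{i}tH}u\| \ge \eta/2$ for all large $|t|$, contradicting $\int^{\pm\infty}(\cdots)^2\,dt/t < \infty$. The main obstacle will be bookkeeping for the auxiliary norms: the propagation estimates for the massless case in Theorems \ref{th:propag_massless_2}--\ref{th:propag_massless} hold only on $\mathscr{D}(N_{\mathrm{neut}}^{1/2}) \cap \mathscr{D}(\mathrm{d}\Gamma(|p_2|^{-1})^{1/2})$, so one must check that the dense subspace used for Cook's argument is preserved (in a controlled, $t$-uniform manner) and that the removal of $\chi(H)$ by a $3\varepsilon$-argument respects these weighted norms before the contraction bound kicks in.
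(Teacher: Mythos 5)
Your overall architecture coincides with the paper's: Cook's method on a dense set combined with the uniform contraction bound $\|\Gamma(\tilde q^t)\|\le 1$, the commutator expansion of $\mathbf{d}_0\tilde q^t$ with $\mathcal{O}(t^{-2+\rho})$ remainders, the $\mathcal{O}(t^{-1-\mu})$ bound on $[H_I,\Gamma(\tilde q^t)]$, the maximal-velocity estimate of Theorem \ref{th:propag_massless_2} (iii) for the principal term, and the minimal-velocity estimate (iv) together with monotonicity of $\Gamma$ for the vanishing; your argument that a nonzero limit would contradict the $dt/t$-integrability is exactly the intended one and is correct.

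The genuine gap sits at the point you defer to ``bookkeeping'', and it is not mere bookkeeping. To get a norm-Cauchy estimate you must control $\sup_{\|v\|=1}\int_t^{t'}|\langle v,\partial_s(e^{\mathrm{i}sH}\Gamma(\tilde q^s)e^{-\mathrm{i}sH})\chi(H)u\rangle|\,ds$, and after the form bound \eqref{eq:lau_1} the principal term yields \emph{two} factors, one of which is $s^{-1/2}\|\mathrm{d}\Gamma(|\tilde g^s|)^{1/2}e^{-\mathrm{i}sH}(\cdot)v\|$ with $v$ arbitrary; Theorem \ref{th:propag_massless_2} (iii) applies to this factor only after an energy cutoff and $N_{\mathrm{neut}}^{1/2}$-control have been installed on the \emph{left} of $\Gamma(\tilde q^t)$ as well. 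Similarly, the $\mathcal{O}(t^{-2+\rho})$ remainders and the interaction commutator require, along the evolution, bounds on $N$ (or on $N^{1/2}$ applied to the dual vector) that your subspace $\mathscr{D}(N_{\mathrm{neut}}^{1/2})\cap\mathscr{D}(\mathrm{d}\Gamma(|p_2|^{-1})^{1/2})$ does not provide. The paper's device, absent from your proposal, is to take the dense set $\mathscr{E}$ of vectors satisfying $u=\chi(H)\mathds{1}_{[-n,n]}(N_{\mathrm{neut}}-N_{\mathrm{lept}})u$: since $N_{\mathrm{neut}}-N_{\mathrm{lept}}$ commutes with $H$ and with $\Gamma(\tilde q^t)$, and since $\|[\Gamma(\tilde q^t),\tilde\chi(H)](N_{\mathrm{neut}}+1)^{-1}\|=\mathcal{O}(t^{-1})$ (cf. \eqref{eq:kha_1}), one can sandwich the whole expression between $\tilde\chi(H)\mathds{1}_{[-n,n]}(N_{\mathrm{neut}}-N_{\mathrm{lept}})$ and $\chi(H)\mathds{1}_{[-n,n]}(N_{\mathrm{neut}}-N_{\mathrm{lept}})$ up to errors vanishing as $t\to\infty$; on these ranges all the relevant number operators are bounded uniformly in $t$, so the propagation estimate can be applied to both Cauchy--Schwarz factors, uniformly in $v$, and the remainders become integrable. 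Note also that for (i)--(iii) of Theorem \ref{th:propag_massless_2} only $\mathscr{D}(N_{\mathrm{neut}}^{1/2})$ is required (the $x_{t,\rho}$-modification was introduced precisely to remove the $\mathrm{d}\Gamma(|p_2|^{-1})^{1/2}$ weight), so the extra weight you impose is not the issue; the two-sided cutoff in $N_{\mathrm{neut}}-N_{\mathrm{lept}}$ is.
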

\begin{proof}
We consider the more difficult case of $H=H_0 + g(H_I^{(1)}+H_I^{(2)})$ with the masses of the neutrinos equal to $0$. The proof can easily be adapted in the case of $H=H_0 + g ( H_I^{(1)}+H_I^{(2)}+H_I^{(3)}+H_I^{(4)} )$, if the masses of the neutrinos are positive.

It suffices to prove the existence of
\begin{equation*}
\lim_{t\to \pm \infty} e^{\pm \mathrm{i} t H} \Gamma(\tilde{q}^t) e^{\mp \mathrm{i} t H} u,
\end{equation*}
for $u$ in a dense subset of $\mathscr{H}$. We consider 
\begin{equation*}
\mathscr{E}:=\big\{ u \in \mathscr{H} , \exists \chi \in \mathrm{C}_0^{\infty}(\mathbb{R}) , n \in \mathbb{N} , u = \chi(H) \mathds{1}_{[-n,n]}(N_{\mathrm{neut}}-N_{\mathrm{lept}})u \big\}.
\end{equation*}
Let $u \in\mathscr{E}$ and let $\tilde \chi \in \mathrm{C}_0^\infty( \mathbb{R} )$ be such that $\tilde \chi \chi = \chi$. As in the proof of Theorem \ref{th:propag_massless_2} (iv), the Helffer-Sj{\"o}strand functional shows that
\begin{align*}
\big\|\big[\Gamma(\tilde{q}^t),\tilde{\chi}(H)\big](N_{\mathrm{neut}}+1)^{-1}\big\|=\mathcal{O}(t^{-1}).
\end{align*}
Since  $N_{\mathrm{lept}}-N_{\mathrm{neut}}$ commutes with $H$ and $\Gamma(\tilde{q}^t)$ and since $N_{\mathrm{lept}}$ is relatively $H$-bounded, we deduce that
\begin{align*}
& e^{\pm \mathrm{i} t H} \Gamma(\tilde{q}^t) e^{\mp \mathrm{i} t H} \chi(H) \mathds{1}_{[-n,n]}(N_{\mathrm{neut}}-N_{\mathrm{lept}})u \\
&= \tilde \chi( H ) e^{\pm \mathrm{i} t H} \Gamma(\tilde{q}^t) e^{\mp \mathrm{i} t H} \chi(H) \mathds{1}_{[-n,n]}(N_{\mathrm{neut}}-N_{\mathrm{lept}})u + \mathcal{O}(t^{-1}) \\
&= \mathds{1}_{[-n,n]}(N_{\mathrm{neut}}-N_{\mathrm{lept}}) \tilde \chi( H ) e^{\pm \mathrm{i} t H} \Gamma(\tilde{q}^t) e^{\mp \mathrm{i} t H} \chi(H) \mathds{1}_{[-n,n]}(N_{\mathrm{neut}}-N_{\mathrm{lept}})u + \mathcal{O}(t^{-1}).
\end{align*}
To shorten notations, let $\chi_{(n)}(H) := \chi(H) \mathds{1}_{[-n,n]}(N_{\mathrm{neut}}-N_{\mathrm{lept}})$, $\tilde\chi_{(n)}(H) := \tilde\chi(H) \mathds{1}_{[-n,n]}(N_{\mathrm{neut}}-N_{\mathrm{lept}})$. By the previous equality, it now suffices to prove the existence of 
\begin{equation*}
\lim_{t\to \pm \infty} \tilde \chi_{(n)}(H) e^{\pm \mathrm{i} t H} \Gamma(\tilde{q}^t) e^{\mp \mathrm{i} t H} \chi_{(n)}(H) u.
\end{equation*}
Set $W(t) := \tilde \chi_{(n)}(H) e^{\pm \mathrm{i} t H} \Gamma(\tilde{q}^t) e^{\mp \mathrm{i} t H} \chi_{(n)}(H)$ and write, for $t'>t\ge1$,
\begin{align}
\big\|W(t')u-W(t)u\big\|&=\Big\|\int_{t}^{t'}\partial_sW(s)uds\Big\| \le \sup_{v\in\mathscr{H},\|v\|=1} \int_{t}^{t'} | \langle v , \partial_s W(s) u \rangle | ds . \label{eq:W(t)-}
\end{align}

We compute
 \begin{align}
\langle v , \partial_t W(t) u \rangle = & \langle v , \partial_t  \tilde\chi_{(n)}(H) e^{ \pm \mathrm{i} t H }\Gamma(\tilde{q}^t) e^{ \mp \mathrm{i} t H } \chi_{(n)}(H) u \rangle \notag \\
& = \pm \langle v , \tilde\chi_{(n)}(H) e^{ \pm \mathrm{i} t H } \big( \mathbf{D}_0 \Gamma(\tilde{q}^t) + \mathrm{i} g [H_I^{(1)}+H_I^{(2)} , \Gamma(\tilde{q}^t) ] \big) e^{\mp \mathrm{i} t H } \chi_{(n)}(H) u \rangle \notag \\ 
& = \pm \langle v , \tilde\chi_{(n)}(H) e^{ \pm \mathrm{i} t H } \big( \mathrm{d}\Gamma(\tilde{q}^t, \mathbf{d}_0 \tilde{q}^t) + \mathrm{i} g [H_I^{(1)}+H_I^{(2)}, \Gamma(\tilde{q}^t) ] \big) e^{\mp \mathrm{i} t H } \chi_{(n)}(H) u \rangle. \label{eq:gamma+_1}
\end{align}
We will show that the right-hand-side is integrable in $t$ on $[1,\infty)$.

We invoke arguments closely related to those used in the proof of Theorem \ref{th:propag_massless_2} (iv). First, the assumption that $G\in\mathbb{H}^{1+\mu}$, the commutation relations of Appendix \ref{sec:standard_def} and Lemma \ref{lemSCW_1} imply  that
\begin{align*}
\big \| [H_I^{(1)} (G) + H_I^{(2)} (G) , \Gamma(\tilde{q}^t)] (N_{\mathrm{lept}}+N_W)^{-1} \big \|  =  \mathcal{O}(t^{-1-\mu}) .
\end{align*}
Since $N_W$ and $N_{\mathrm{lept}}$ are relatively $H$-bounded, this yields
\begin{align*}
\big\| [H_I^{(1)}+H_I^{(2)}, \Gamma(\tilde{q}^t) ] \chi(H) \big\| \lesssim t^{-1-\mu}.
\end{align*}

Now we consider the term involving $\mathrm{d}\Gamma(\tilde{q}^t, \mathbf{d}_0 \tilde{q}^t)$ in \eqref{eq:gamma+_1}. As in the proof of Theorem \ref{th:propag_massless_2} (iv), we have that
 \begin{align*}
\mathbf{d}_0 \tilde{q}_i^t(x_{t,\rho,i}) =  -\frac{1}{2 t} \Big \langle \frac{x_{t,\rho,i}}{t} - \nabla\omega_{t,\rho,i} ,  \nabla q_i \big( \frac{x_{t,\rho,i}}{t} \big) \Big \rangle + \mathrm{h.c.} + \mathcal{O}(t^{-2}) , 
\end{align*}
if $i\in\{1,2,3,4,7,8,11,12\}$ (corresponding to the label of a massive particle), and
\begin{align*}
\mathbf{d}_0 \tilde{q}_i^t(x_{t,\rho,i}) =  -\frac{1}{2 t} \Big \langle \frac{x_{t,\rho,i}}{t} - \nabla\omega_{t,\rho,i} ,  \nabla q_i \big( \frac{x_{t,\rho,i}}{t} \big) \Big \rangle + \mathrm{h.c.} + \mathcal{O}(t^{-2+\rho}),
\end{align*}
if $i \in \{5,6,9,10,13,14\}$ (corresponding to the label of a neutrino). We treat the second case, namely $i \in \{5,6,9,10,13,14\}$, the case of $i\in\{1,2,3,4,7,8,11,12\}$ being easier.

Let $\tilde{g}^t_i :=  -\frac{1}{2} \langle \frac{x_{t,\rho,i}}{t} - \nabla\omega_{t,\rho,i}(k) ,  \nabla q_i \big( \frac{x_{t,\rho,i}}{t} \big) \rangle + \mathrm{h.c.}$ and let $\tilde{r}^t_i = \mathbf{d}_0 \tilde{q}_i^t( x_{t,\rho,i}) - \frac{1}{t} \tilde{g}^t_i = \mathcal{O}(t^{-2+\rho})$. For the term corresponding to $\tilde{r}^t_i$, we have that
\begin{align*}
& \big\| e^{ \pm \mathrm{i} t H } \tilde\chi_{(n)}(H) \mathrm{d} \Gamma(\tilde{q}^t_i , \tilde{r}^t_i ) e^{ \mp \mathrm{i} t H } \chi_{(n)}(H) u \big\| \lesssim \mathcal{O}(t^{-2+\rho}) \big\| N e^{- \mathrm{i} t H } \chi_{(n)}(H) u \big\| = \mathcal{O}(t^{-2+\rho}).
\end{align*}
The equality comes from the facts that $N_{\mathrm{lept}}-N_{\mathrm{neut}}$ commutes with $H$, $N_W$ and $N_{\mathrm{lept}}$ are relatively $H$-bounded and that $(N+1) \tilde\chi_{(n)}(H)$ is bounded.

To estimate the term corresponding to $\frac1t\mathrm{d}\Gamma(\tilde{q}^t_i , \tilde{g}^t_i)$, we use \eqref{eq:lau_1}, yielding
\begin{align*}
&  \frac1t \big|\big\langle e^{ \mp \mathrm{i} t H } \tilde\chi_{(n)}(H) v , \mathrm{d} \Gamma(\tilde{q}^t_i , \tilde{g}^t_i) e^{ \mp \mathrm{i} t H } \chi_{(n)}(H) u \big\rangle\big| \\
& \le \frac1t \big\| \mathrm{d}\Gamma( | \tilde{g}^t_i | )^{\frac12} e^{ \mp \mathrm{i} t H } \tilde\chi_{(n)}(H) v \big \| \big \| \mathrm{d}\Gamma( | \tilde{g}^t_i | )^{\frac12} e^{ \mp \mathrm{i} t H } \chi_{(n)}(H) u \big\| .
\end{align*}
By (iii) of Theorem \ref{th:propag_massless_2},
\begin{align*}
&\int_{1}^\infty \frac1t \big\| \mathrm{d}\Gamma( | \tilde{g}^t_i | )^{\frac12} e^{ \mp \mathrm{i} t H } \tilde\chi_{(n)}(H) v \big \| \big \| \mathrm{d}\Gamma( | \tilde{g}^t_i | )^{\frac12} e^{ \mp \mathrm{i} t H } \chi_{(n)}(H) u \big\| dt \lesssim \|u\|\|v\|.
\end{align*}

From \eqref{eq:W(t)-} and the previous computations, we easily deduce that for any $\varepsilon >0$,
\begin{equation*}
\big\|W(t')u-W(t)u\big\| \le \varepsilon,
\end{equation*}
for $t$ and $t'$ large enough. This proves that the limits $\Gamma^\pm(\tilde{q})$ exist.

The fact that $\Gamma^\pm(\tilde{q})\chi(H)=0$ for all $\chi \in \mathrm{C}_{0}^{\infty}(\mathbb{R})$ supported in $\mathbb{R}\backslash(\tau \cup \sigma_{\mathrm{pp}}(H))$ is a consequence of Theorem \ref{th:propag_massless_2} (iv). Indeed, Theorem \ref{th:propag_massless_2} (iv) shows that $\Gamma^\pm(\tilde{q})\chi(H)u=0$ for all $u$ in a dense subset of $\mathscr{H}$. Since $\Gamma^\pm(\tilde{q})\chi(H)$ is bounded, the statement follows.
\end{proof}
We introduce the following notations that will be used in the proof of the next theorem:
\begin{align*}
& \check{\mathbf{d}}^{(i)}_{0l} b(t) =  \frac{\partial b}{\partial t}(t) + \mathrm{i} \big( \omega^{(i)}_l(p_i) \oplus \omega^{(i)}_l(p_i) b(t) - b(t) \omega^{(i)}_l(p_i)\big), \quad i=1,2 , \\
& \check{\mathbf{d}}^{(3)}_{0} b(t) =  \frac{\partial b}{\partial t}(t) + \mathrm{i} \big( \omega^{(3)}(p_3) \oplus \omega^{(3)}(p_3) b(t) - b(t) \omega^{(3)}(p_3)\big),
\end{align*}  
if $b(t)$ is a family of operators from $\mathfrak{h}_1$ to $\mathfrak{h}_1\oplus\mathfrak{h}_1$, if $i=1,2$, or from $\mathfrak{h}_2$ to $\mathfrak{h}_2\oplus\mathfrak{h}_2$ if $j=3$. Likewise we set
\begin{align*}
\check{\mathbf{D}}_0 B(t) & =  \frac{\partial B}{\partial t}(t) + \mathrm{i} ( H_0\otimes \mathds{1} + \mathds{1} \otimes H_0) B(t) - \mathrm{i} B(t) H_0 ,
\end{align*}  
for any family of operators $B(t) : \mathscr{H} \to \mathscr{H}^{\mathrm{ext}}$.
Note that if $B(t) = (b_1(t) , \dots , b_{14}(t) )$ then, as functions of $t$,
\begin{align*}
\check{\mathbf{D}}_0 \mathrm{d}\Gamma(B) =: \mathrm{d}\Gamma( \check{\mathbf{d}}_0 b ) := \mathrm{d}\Gamma( & \check{\mathbf{d}}_{0,1} b_1 , \dots , \check{\mathbf{d}}_{0,14} b_{14} ) \\
:=  \mathrm{d}\Gamma\big( & \check{\mathbf{d}}^{(3)}_{0} b_1 , \check{\mathbf{d}}^{(3)}_{0} b_2 , \check{\mathbf{d}}^{(1)}_{01} b_3 , \check{\mathbf{d}}^{(1)}_{01} b_4 , \check{\mathbf{d}}^{(2)}_{01} b_5 , \check{\mathbf{d}}^{(2)}_{01} b_6 , \check{\mathbf{d}}^{(1)}_{02} b_7, \\
 &\check{\mathbf{d}}^{(1)}_{02} b_8 , \check{\mathbf{d}}^{(2)}_{02} b_9 , \check{\mathbf{d}}^{(2)}_{02} b_{10} , \check{\mathbf{d}}^{(1)}_{03} b_{11} ,\check{\mathbf{d}}^{(1)}_{03} b_{12} , \check{\mathbf{d}}^{(2)}_{01} b_{13} , \check{\mathbf{d}}^{(2)}_{01} b_{14} \big).
\end{align*}

The main result of this subsection is stated in the following theorem. It shows the existence of inverse wave operators.
\begin{Th}\label{th:inverse_wave}
Suppose that the masses of the neutrinos vanish and consider the Hamiltonian $H = H_0 + g ( H_I^{(1)} + H_I^{(2)} )$ with $H_I^{(1)}$ and $H_I^{(2)}$ given by \eqref{interaction_term}. Suppose that
\begin{equation*}
G \in L^2, \quad a_{(i),\cdot} G \in L^2, \quad |p_3|^{-1} a_{(i),\cdot} G \in L^2, \quad i = 1,2,3 ,
\end{equation*}
and that
\begin{equation*}
G \in \mathbb{H}^{1+\mu} \text{ for some } \mu > 0.
\end{equation*}
Let $\delta > 0 $ and $j_{0,i}$, $i \in \{1, \dots, 14\}$ be functions in $\mathrm{C}_0^{\infty} (\{ x \in \mathbb{R}^3 , |x| \le 2 \delta \})$ such that $0 \le j_{0,i} \le 1$, $j_{0,i}=1$ on $\{ x \in \mathbb{R}^3 , |x| \leq \delta \}$ and let $j_{\infty,i} = 1 - j_{0,i}$, $j_i = ( j_{0,i} , j_{\infty,i} )$. Let $\tilde{J}^t = (j_1(\frac{x_{t,\rho,1}}{t}),\dots,j_{14}(\frac{x_{t,\rho,14}}{t}))$.
\begin{itemize}
\item[(i)] The following limits exist
\begin{align*}
W^\pm(J) := \underset{t \to +\infty}{\slim} e^{\pm \mathrm{i} t H^{\mathrm{ext}}} \check{\Gamma} \big( \tilde{J}^t \big) e^{\mp \mathrm{i} t H}.
\end{align*}
\item[(ii)] For all $\chi \in \mathrm{C}_0^\infty( \mathbb{R} )$, we have that
\begin{align*}
W^\pm(J) \chi(H) = \chi(H^{\mathrm{ext}}) W^{\pm}(J) .
\end{align*} 
\item[(iii)] Let $q = (q_1,\dots,q_{14})$ be such that $q_i j_{i,0} = j_{i,0}$. Then
\begin{align*}
\big ( \Gamma^{\pm}(q) \otimes \mathds{1} \big ) W^{\pm}(J)=W^{\pm}(J).
\end{align*}
\item[(iv)]  For all $\chi \in \mathrm{C}_0^\infty( \mathbb{R} )$, we have that
\begin{align*}
\Omega^{\mathrm{ext},\pm} \chi(H^{\mathrm{ext}}) W^{\pm}(J) = \chi(H) .
\end{align*}
\end{itemize}

The same holds if the masses of the neutrinos $m_{\nu_e}$, $m_{\nu_{\mu}}$, $m_{\nu_{\tau}}$ are positive and if one considers the Hamiltonian \eqref{total_hamiltonian_g} with $H_I$ given by \eqref{interaction_term}.
\end{Th}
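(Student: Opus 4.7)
The plan is to follow the general strategy developed by Dereziński--Gérard and Ammari for massive Pauli--Fierz Hamiltonians, adapted to our setting via the propagation estimates of Section \ref{sec:propag} and the intertwining properties of $\check\Gamma$ recalled in Appendix \ref{sec:standard_def}. The main novelty concerns the massless case: to control the singularity of the neutrino dispersion relation at $p_2=0$ one must combine the propagation estimates of Theorem \ref{th:propag_massless_2} with the slow-growth estimate of Lemma \ref{lem:evolp2-1} for $\mathrm{d}\Gamma(|p_2|^{-1})$. The massive case in parallel follows the same scheme using Theorem \ref{th:propag_massive} instead.

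For (i), I would apply Cook's method. Writing
\begin{equation*}
\partial_t \big( e^{\pm\mathrm{i}tH^{\mathrm{ext}}} \check\Gamma(\tilde J^t) e^{\mp\mathrm{i}tH} \big) = \pm \mathrm{i} e^{\pm\mathrm{i}tH^{\mathrm{ext}}} \big( \mathrm{d}\check\Gamma( \tilde J^t , \check{\mathbf d}_0 \tilde J^t ) + \mathrm{i}[ H_I^{(1)}+H_I^{(2)} , \check\Gamma(\tilde J^t) ] \big) e^{\mp\mathrm{i}tH},
\end{equation*}
it suffices to show that this expression, applied to a vector of the form $\chi(H)\mathds{1}_{[-n,n]}(N_{\mathrm{neut}}-N_{\mathrm{lept}})u$ (which form a dense subset thanks to conservation of $N_{\mathrm{neut}}-N_{\mathrm{lept}}$), is integrable in norm. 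The interaction commutator is handled as in the proof of Lemma \ref{numberestimates13}: using $G\in\mathbb H^{1+\mu}$ and expanding via the intertwining relations of Lemma \ref{lem:B6}, it is $\mathcal O(t^{-\min(1,\mu)})$ on $(N_{\mathrm{neut}}+1)^{-1}$, with the $|p_2|^{-\alpha}$ weights controlled through Lemma \ref{lem:evolp2-1}. The free term $\mathrm{d}\check\Gamma(\tilde J^t,\check{\mathbf d}_0\tilde J^t)$, expanded by the commutator expansion used in Theorem \ref{th:propag_massless_2}, splits into a leading part $t^{-1}\mathrm{d}\check\Gamma(\tilde J^t, g^t)$, where $g^t$ is supported in an intermediate velocity annulus and is therefore square-integrable along the evolution by Theorem \ref{th:propag_massless_2}(ii)--(iii), plus a remainder of order $\mathcal O(t^{-2+\rho})$ times $|p_2|^{-1}$, integrable again through Lemma \ref{lem:evolp2-1}. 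Combining these estimates with a Cauchy--Schwarz argument as in the proof of Proposition \ref{Thvelocity}, one obtains existence of $W^\pm(J)$ on this dense subset, and by uniform boundedness (following from $\check\Gamma(\tilde J^t)$ being bounded on each fixed $N_{\mathrm{neut}}$-sector) on all of $\mathscr H$.

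Statement (ii) follows by a Helffer--Sjöstrand argument strictly parallel to the proof of (\ref{3rd_estim_massless}) in Lemma \ref{numberestimates13}: the bound on $\check\Gamma(\tilde J^t)H - H^{\mathrm{ext}}\check\Gamma(\tilde J^t)$ extracted in the course of (i) gives $[\chi(H^{\mathrm{ext}})\check\Gamma(\tilde J^t) - \check\Gamma(\tilde J^t)\chi(H)](N_{\mathrm{neut}}+1)^{-1} = o(t^0)$, which vanishes in the limit. For (iii), I would use that the condition $q_i j_{i,0}=j_{i,0}$ together with the intertwining identity for $\check\Gamma$ (Lemma \ref{lem:B6}) yields the pointwise identity $(\Gamma(\tilde q^t)\otimes\mathds{1})\check\Gamma(\tilde J^t) = \check\Gamma(\tilde J^t)$, so that passing to the limit via Proposition \ref{Thvelocity} delivers the claim.

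For (iv), the key algebraic fact is that, because $j_{i,0}+j_{i,\infty}=1$ (not only $j_{i,0}^2+j_{i,\infty}^2=1$), the scattering identification $I$ defined in Appendix \ref{sec:standard_def} satisfies $I\check\Gamma(\tilde J^t)=\mathds{1}$ as operators on $\mathscr{H}$. Combined with the representation $\Omega^{\mathrm{ext},\pm} = \mathrm{s}\text{-}\lim_t e^{\mathrm{i}tH} I e^{-\mathrm{i}tH^{\mathrm{ext}}}$ on $\mathscr{D}(\Omega^{\mathrm{ext},\pm})$ and (ii), one computes
\begin{equation*}
\Omega^{\mathrm{ext},\pm}\chi(H^{\mathrm{ext}})W^\pm(J) = \underset{t\to\pm\infty}{\slim} e^{\mathrm{i}tH} I \check\Gamma(\tilde J^t) \chi(H) e^{-\mathrm{i}tH} = \chi(H),
\end{equation*}
provided the vectors produced by $I\check\Gamma(\tilde J^t)\chi(H)$ stay in $\mathscr{D}(\Omega^{\mathrm{ext},\pm})$ uniformly in $t$, which is ensured by Lemma \ref{lemV1NumbestimatesV2} and the conservation of $N_{\mathrm{lept}}-N_{\mathrm{neut}}$. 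The main obstacle I expect is precisely the control of $[H_I^{(1)}+H_I^{(2)},\check\Gamma(\tilde J^t)]$ along the evolution: the neutrino component of $\check\Gamma(\tilde J^t)$ is built from $x_{t,\rho}$ rather than $x$, so one cannot simply reuse Lemma \ref{numberestimates13}; rather, the slowing-down factor $(|p_2|/(|p_2|+t^{-\rho}))^{1/2}$ must be analyzed in tandem with the bound $\|\mathrm{d}\Gamma(|p_2|^{-\alpha}) e^{-\mathrm{i}tH}\chi(H)u\|\lesssim t^{\alpha/(1+\mu)}$ of Lemma \ref{lem:evolp2-1}, exactly as in Theorem \ref{th:propag_massless_2}(iv).
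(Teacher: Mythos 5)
Your overall strategy coincides with the paper's: existence in (i) is proven by Cook's method on the dense set of vectors $\chi(H)\mathds{1}_{[-n,n]}(N_{\mathrm{neut}}-N_{\mathrm{lept}})u$ (using conservation of $N_{\mathrm{neut}}-N_{\mathrm{lept}}$ and contractivity of $\check{\Gamma}(\tilde{J}^t)$), the leading term $\frac1t\mathrm{d}\check{\Gamma}(\tilde{J}^t,\tilde{G}^t)$ is controlled by Cauchy--Schwarz and the intermediate-velocity estimate of Theorem \ref{th:propag_massless_2}(iii) applied to both factors, and (ii)--(iv) follow from the intertwining identities, from $(\Gamma(\tilde{q}^t)\otimes\mathds{1})\check{\Gamma}(\tilde{J}^t)=\check{\Gamma}(\tilde{J}^t)$ when $q_ij_{i,0}=j_{i,0}$, and from $I\check{\Gamma}(\tilde{J}^t)=\Gamma(j_0+j_\infty)=\mathds{1}$ together with $\Omega^{\mathrm{ext},\pm}=\slim e^{\mathrm{i}tH}Ie^{-\mathrm{i}tH^{\mathrm{ext}}}$; this is exactly how the paper argues.

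Two of your error estimates, however, are stated with the wrong rates and, taken literally, would not close the Cook integral. First, the interaction commutator: citing Lemma \ref{numberestimates13} you claim a bound $\mathcal{O}(t^{-\min(1,\mu)})$, which is not integrable. The correct bound, used in the paper, is $\mathcal{O}(t^{-1-\mu})$: it does not come from the resolvent-difference lemma (whose $R^{-\min(1,\mu)}$ includes the free-part commutator) but from the fact that $j_{\infty,i}$ and $1-j_{1,0}\cdots j_{14,0}$ vanish in a neighborhood of the origin, so that $\|j_i(\tfrac{x_{t,\rho,i}}{t})_{\sharp}G\|_2=\mathcal{O}(t^{-1-\mu})$ once one checks (by interpolation, as in Theorem \ref{th:propag_massless_2}) that $G\in\mathbb{H}^{1+\mu}$ lies in the domain of $|x_{t,\rho,i}|^{1+\mu}$. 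Second, the remainder of the commutator expansion of $j_i(\tfrac{x_{t,\rho,i}}{t})$: you describe it as $\mathcal{O}(t^{-2+\rho})|p_2|^{-1}$ and propose to absorb the weight via Lemma \ref{lem:evolp2-1}. But that lemma only yields a factor $t^{1/(1+\mu)}$, and since the construction requires $\rho>(1+\mu)^{-1}$, the resulting rate $t^{-2+\rho+1/(1+\mu)}$ is not integrable whenever $\mu\le1$. The point of the regularized observable $x_{t,\rho}$ (see the computation in Appendix \ref{app:technical}) is precisely that the remainder is a \emph{bounded} operator of size $\mathcal{O}(t^{-2+\rho})$, which is then integrable because $\rho<1$ and is estimated against $N$ on the range of $\chi_{(n)}(H)$; no appeal to Lemma \ref{lem:evolp2-1} is needed (or possible) at this step. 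With these two corrections your argument matches the paper's proof.
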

\begin{proof}
(i) As in the proof of Proposition \ref{Thvelocity}, it suffices to prove the existence of
\begin{equation*}
\lim_{t\to \pm \infty} e^{\pm \mathrm{i} t H^{\mathrm{ext}}} \check{\Gamma}(\tilde{J}^t) e^{\mp \mathrm{i} t H} u,
\end{equation*}
for $u$ in a dense subset of $\mathscr{H}$. We consider again
\begin{equation*}
\mathscr{E}=\big\{ u \in \mathscr{H} , \exists \chi \in \mathrm{C}_0^{\infty}(\mathbb{R}) , n \in \mathbb{N} , u = \chi(H) \mathds{1}_{[-n,n]}(N_{\mathrm{neut}}-N_{\mathrm{lept}})u \big\} ,
\end{equation*}
and fix $u \in\mathscr{E}$. Let $\tilde \chi \in \mathrm{C}_0^\infty( \mathbb{R} )$ be such that $\tilde \chi \chi = \chi$. In the same way as in Lemma \ref{numberestimates3}, one verifies that
\begin{align*}
\big\| \big ( \check{\Gamma}(\tilde{J}^t) \tilde{\chi}(H) - \chi(H^{\mathrm{ext}}) \check{\Gamma}(\tilde{J}^t) \big ) (N_{\mathrm{neut}}+1)^{-1}\big\|=\mathcal{O}(t^{-1}).
\end{align*}
Using that  $N_{\mathrm{lept}}-N_{\mathrm{neut}}$ commutes with $H$, $\check{\Gamma}(\tilde{J}^t) \mathds{1}_{[-n,n]}(N_{\mathrm{neut}}-N_{\mathrm{lept}}) = \mathds{1}_{[-n,n]}(N_{\mathrm{neut}}^{\mathrm{ext}}-N_{\mathrm{lept}}^{\mathrm{ext}}) \check{\Gamma}(\tilde{J}^t)$, and that $N_{\mathrm{lept}}$ is relatively $H$-bounded, we deduce that
\begin{align*}
& e^{\pm \mathrm{i} t H^{\mathrm{ext}}} \check{\Gamma}(\tilde{J}^t) e^{\mp \mathrm{i} t H} \chi(H) \mathds{1}_{[-n,n]}(N_{\mathrm{neut}}-N_{\mathrm{lept}})u \\
&= \tilde \chi( H^{\mathrm{ext}} ) e^{\pm \mathrm{i} t H^{\mathrm{ext}} } \check{\Gamma}(\tilde{J}^t) e^{\mp \mathrm{i} t H} \chi(H) \mathds{1}_{[-n,n]}(N_{\mathrm{neut}}-N_{\mathrm{lept}})u + \mathcal{O}(t^{-1}) \\
&= \mathds{1}_{[-n,n]}(N_{\mathrm{neut}}^{\mathrm{ext}}-N_{\mathrm{lept}}^{\mathrm{ext}}) \tilde \chi( H^{\mathrm{ext}} ) e^{\pm \mathrm{i} t H^{\mathrm{ext}}} \check{\Gamma}(\tilde{J}^t) e^{\mp \mathrm{i} t H} \chi(H) \mathds{1}_{[-n,n]}(N_{\mathrm{neut}}-N_{\mathrm{lept}})u + \mathcal{O}(t^{-1}).
\end{align*}
Similarly as in the proof of Proposition \ref{Thvelocity}, to shorten notation, we set $\chi_{(n)}(H) := \chi(H) \mathds{1}_{[-n,n]}(N_{\mathrm{neut}}-N_{\mathrm{lept}})$ and $\tilde\chi_{(n)}(H^{\mathrm{ext}}) := \tilde\chi(H^{\mathrm{ext}}) \mathds{1}_{[-n,n]}(N_{\mathrm{neut}}^{\mathrm{ext}}-N_{\mathrm{lept}}^{\mathrm{ext}})$. By the previous equality, it now suffices to prove the existence of 
\begin{equation*}
\lim_{t\to \pm \infty} \tilde \chi_{(n)}(H^{\mathrm{ext}}) e^{\pm \mathrm{i} t H^{\mathrm{ext}}} \check{\Gamma}(\tilde{J}^t) e^{\mp \mathrm{i} t H} \chi_{(n)}(H) u.
\end{equation*}
Set $\check{W}(t) := \tilde \chi_{(n)}(H^{\mathrm{ext}}) e^{\pm \mathrm{i} t H^{\mathrm{ext}}} \check{\Gamma}(\tilde{J}^t) e^{\mp \mathrm{i} t H} \chi_{(n)}(H)$ and write, for $t'>t\ge1$,
\begin{align}
\big\|\check{W}(t')u-\check{W}(t)u\big\|&=\Big\|\int_{t}^{t'}\partial_s\check{W}(s)uds\Big\| \le \sup_{v\in\mathscr{H}^{\mathrm{ext}},\|v\|=1} \int_{t}^{t'} | \langle v , \partial_s W(s) u \rangle | ds . \label{eq:checkW(t)-}
\end{align}

We compute
 \begin{align}
&\langle v , \partial_t \check{W}(t) u \rangle = \langle v , \partial_t  \tilde\chi_{(n)}(H^{\mathrm{ext}}) e^{ \pm \mathrm{i} t H^{\mathrm{ext}} }\check{\Gamma}(\tilde{J}^t) e^{ \mp \mathrm{i} t H } \chi_{(n)}(H) u \rangle \notag \\
& = \pm \langle v , \tilde\chi_{(n)}(H^{\mathrm{ext}}) e^{ \pm \mathrm{i} t H^{\mathrm{ext}} } \big\{ \check{\mathbf{D}}_0 \check{\Gamma}(\tilde{J}^t) + \mathrm{i} g \big( (( H_I^{(1)}+H_I^{(2)} ) \otimes \mathds{1} ) \check{\Gamma}(\tilde{J}^t) - \check{\Gamma}(\tilde{J}^t) ( H_I^{(1)}+H_I^{(2)} ) \big) \big\} e^{\mp \mathrm{i} t H } \chi_{(n)}(H) u \rangle \notag \\ 
& = \pm \langle v , \tilde\chi_{(n)}(H^{\mathrm{ext}}) e^{ \pm \mathrm{i} t H^{\mathrm{ext}} } \big\{ \mathrm{d}\check{\Gamma}(\tilde{J}^t, \check{\mathbf{d}}_0 \tilde{J}^t) \notag \\
&\qquad\qquad\qquad\qquad\qquad\qquad + \mathrm{i} g \big( (( H_I^{(1)}+H_I^{(2)} ) \otimes \mathds{1} ) \check{\Gamma}(\tilde{J}^t) - \check{\Gamma}(\tilde{J}^t) ( H_I^{(1)}+H_I^{(2)} ) \big) \big\} e^{\mp \mathrm{i} t H } \chi_{(n)}(H) u \rangle. \notag 
\end{align}
As in the proof of Lemma \ref{numberestimates3}, we have that
\begin{equation*}
\tilde\chi_{(n)}(H^{\mathrm{ext}}) \big( (( H_I^{(1)}+H_I^{(2)} ) \otimes \mathds{1} ) \check{\Gamma}(\tilde{J}^t) - \check{\Gamma}(\tilde{J}^t) ( H_I^{(1)}+H_I^{(2)} ) \big) \chi_{(n)}(H) = \mathcal{O}(t^{-1-\mu}).
\end{equation*}
Moreover, similarly as in the proof of Proposition \ref{Thvelocity}, we decompose 
\begin{align*}
& \check{\mathbf{d}}_0 \tilde{J}^t = \frac{1}{t} \tilde{G}^t + \tilde{R}^t, \quad \tilde{G}^t = (\tilde{g}_0^t, \tilde{g}^t_{\infty}), \quad \tilde{g}_{\sharp}^t  = - \frac{1}{2} \Big( \Big( \frac{x_{t,\rho}}{t} - \nabla \omega_{t,\rho} \Big) \nabla \tilde{J}_{\sharp} \Big( \frac{x_{t,\rho}}{t}\Big) + \mathrm{h.c.}  \Big), \quad \tilde{R}^t = \mathcal{O}(t^{-2+\rho}),
\end{align*}
and, for all $i \in \{1,\dots,14\}$, we have that
\begin{align*}
& \big\| e^{ \pm \mathrm{i} t H^{\mathrm{ext}} } \tilde\chi_{(n)}(H^{\mathrm{ext}}) \mathrm{d} \check{\Gamma}(\tilde{J}^t_i, \tilde{R}^t_i) e^{ \mp \mathrm{i} t H } \chi_{(n)}(H) u \big\| \lesssim \mathcal{O}(t^{-2+\rho}) \big\| N e^{\mp \mathrm{i} t H } \chi_{(n)}(H) u \big\| = \mathcal{O}(t^{-2+\rho}).
\end{align*}
The term corresponding to $\frac1t\mathrm{d}\check{\Gamma}(\tilde{J}^t_i, \tilde{G}^t_i)$ is estimated as
\begin{align*}
&  \frac1t \big|\big\langle e^{ \mp \mathrm{i} t H^{\mathrm{ext}} } \tilde\chi_{(n)}(H^{\mathrm{ext}}) v , \mathrm{d} \check{\Gamma}(\tilde{J}^t_i , \tilde{G}^t_i) e^{ \mp \mathrm{i} t H } \chi_{(n)}(H) u \big\rangle\big| \\
& \le \frac1t \big\| ( \mathrm{d}\Gamma( | \tilde{g}^t_{0,i} | )^{\frac12} \otimes \mathds{1} ) e^{ \mp \mathrm{i} t H^{\mathrm{ext}} } \tilde\chi_{(n)}(H^{\mathrm{ext}}) v \big \| \big \| \mathrm{d}\Gamma( | \tilde{g}^t_{0,i} | )^{\frac12} e^{ \mp \mathrm{i} t H } \chi_{(n)}(H) u \big\|  \\
& \quad + \frac1t \big\| ( \mathds{1} \otimes \mathrm{d}\Gamma( | \tilde{g}^t_{\infty,i} | )^{\frac12} ) e^{ \mp \mathrm{i} t H^{\mathrm{ext}} } \tilde\chi_{(n)}(H^{\mathrm{ext}}) v \big \| \big \| \mathrm{d}\Gamma( | \tilde{g}^t_{\infty,i} | )^{\frac12} e^{ \mp \mathrm{i} t H } \chi_{(n)}(H) u \big\| .
\end{align*}
By (iii) of Theorem \ref{th:propag_massless_2},
\begin{align*}
&\int_{1}^\infty \frac1t \big\| ( \mathrm{d}\Gamma( | \tilde{g}^t_{0,i} | )^{\frac12} \otimes \mathds{1} ) e^{ \mp \mathrm{i} t H^{\mathrm{ext}} } \tilde\chi_{(n)}(H^{\mathrm{ext}}) v \big \| \big \| \mathrm{d}\Gamma( | \tilde{g}^t_{0,i} | )^{\frac12} e^{ \mp \mathrm{i} t H } \chi_{(n)}(H) u \big\| dt \lesssim \|u\|\|v\|,
\end{align*}
and likewise for the second term in the right-hand-side of the previous inequality. Eq. \eqref{eq:checkW(t)-} and the previous estimates imply that, for any $\varepsilon >0$,
\begin{equation*}
\big\|W(t')u-W(t)u\big\| \le \varepsilon,
\end{equation*}
for $t$ and $t'$ large enough, which proves that the limits $W^\pm(\tilde{J})$ exist.

\vspace{0,2cm}

(ii) This is a standard intertwining property.

\vspace{0,2cm}

(iii) It suffices to write
\begin{align*}
\big ( \Gamma^{\pm}(q) \otimes \mathds{1} \big ) W^{\pm}(J) u &= \big ( e^{ \pm \mathrm{i} t H} \Gamma(q^t) e^{\mp\mathrm{i} t H} \otimes \mathds{1} \big ) e^{\pm \mathrm{i} t H^{\mathrm{ext}}} \check{\Gamma} \big( \tilde{J}^t \big) e^{\mp \mathrm{i} t H} u + o(1) \\
&= \big ( e^{\pm\mathrm{i} t H^{\mathrm{ext}}} ( \Gamma(q^t)  \otimes \mathds{1} \big ) \check{\Gamma} \big( \tilde{J}^t \big) e^{\mp \mathrm{i} t H} u + o(1) \\
&=  e^{\pm\mathrm{i} t H^{\mathrm{ext}}} \check{\Gamma} \big( \tilde{J}^t \big) e^{\mp \mathrm{i} t H} u + o(1) \\
&= W^{\pm}(J) u +o(1),
\end{align*}
where we used that $( \Gamma(q^t)  \otimes \mathds{1} \big ) \check{\Gamma} \big( \tilde{J}^t \big) = \check{\Gamma} \big( \tilde{J}^t \big)$ because $q_i j_{i,0} = j_{i,0}$ in the third equality.

\vspace{0,2cm}

(iv) This is again standard intertwining property.
\end{proof} 
\subsection{Asymptotic completeness}\label{subsec:AC}
	
We are now ready to conclude the proof of Theorem \ref{thm:main}. We begin by showing that the pure point spectral subspace of $H$, $\mathscr{H}_{\mathrm{pp}}(H)$, and the spaces of asymptotic vacua $\mathscr{K}^{\pm}$ coincide. Note that our reasoning process is slightly different from that of \cite{DeGe99_01,Am04_01}.
\begin{Th}\label{th:AC}
Suppose that the masses of the neutrinos vanish and consider the Hamiltonian $H = H_0 + g ( H_I^{(1)} + H_I^{(2)} )$ with $H_I^{(1)}$ and $H_I^{(2)}$ given by \eqref{interaction_term}. Suppose that
\begin{equation*}
G \in L^2, \quad a_{(i),\cdot} G \in L^2, \quad |p_3|^{-1} a_{(i),\cdot} G \in L^2, \quad i = 1,2,3 ,
\end{equation*}
and that
\begin{equation*}
G \in \mathbb{H}^{1+\mu} \text{ for some } \mu > 0.
\end{equation*}
There exists $g_0>0$ such that, for all $|g| \le g_0$,
\begin{equation*}
\mathscr{H}_{\mathrm{pp}}(H) = \mathscr{K}^{\pm}.
\end{equation*}
The same holds, for all $g\in\mathbb{R}$, if the masses of the neutrinos $m_{\nu_e}$, $m_{\nu_{\mu}}$, $m_{\nu_{\tau}}$ are positive and if one considers the Hamiltonian \eqref{total_hamiltonian_g} with $H_I$ given by \eqref{interaction_term}.
\end{Th}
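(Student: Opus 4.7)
The plan is to prove the reverse inclusion $\mathscr{K}^\pm \subset \mathscr{H}_{\mathrm{pp}}(H)$, the other one being exactly Proposition \ref{prop:basic_prop_omega}(iii). Since $\mathscr{K}^\pm$ is closed, $H$-invariant, and contains $\mathscr{H}_{\mathrm{pp}}(H)$, any $u \in \mathscr{K}^\pm$ decomposes as $u = u_{\mathrm{pp}} + u_{\mathrm{cont}}$ with both components in $\mathscr{K}^\pm$, and by Theorems \ref{MourreI} and \ref{MourreI_3} the set $\tau \cup \sigma_{\mathrm{pp}}(H)$ is countable. Since any continuous spectral measure vanishes on countable sets, it suffices to show that for every $u \in \mathscr{K}^\pm$ and every $\chi \in \mathrm{C}_0^\infty(\mathbb{R} \setminus (\tau \cup \sigma_{\mathrm{pp}}(H)))$ one has $\chi(H) u = 0$; this forces $u_{\mathrm{cont}} = 0$ and hence $u \in \mathscr{H}_{\mathrm{pp}}(H)$.

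The core ingredient is the following lemma: for every $u \in \mathscr{K}^\pm$, $W^\pm(J) u \in \mathscr{H} \otimes \mathbb{C} \Omega$, i.e., the second Fock factor is the vacuum. Heuristically, $W^\pm(J)$ extracts the asymptotic partition of particles into those staying near the origin (first factor) and those escaping to infinity (second factor), and the defining property of $\mathscr{K}^\pm$ is precisely that no particle escapes to infinity. To prove this rigorously, I would establish intertwining relations of the type $(\mathds{1} \otimes d(h)) W^\pm(J) = W^\pm(J) d^\pm(h)$, for $h$ in a dense subset of $\mathfrak{h}_1$ or $\mathfrak{h}_2$ (where $d(h)$, respectively $d^\pm(h)$, denotes a free, respectively asymptotic, annihilation operator). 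Applied to $u \in \mathscr{K}^\pm$, the right-hand side vanishes, so $(\mathds{1} \otimes d(h)) W^\pm(J) u = 0$ for a total set of $h$, which means that the second factor of $W^\pm(J) u$ is annihilated by every one-particle annihilation operator and therefore equals $\mathbb{C}\Omega$. The proof of this intertwining is adapted from \cite{DeGe99_01,Am04_01}, combining the standard algebraic intertwining properties of $\check{\Gamma}(\tilde{J}^t)$ with $a^\sharp,b^\sharp,c^\sharp$ recalled in Appendix \ref{sec:standard_def} with propagation arguments analogous to those establishing the existence of $W^\pm(J)$ in Theorem \ref{th:inverse_wave}(i).

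Granting the lemma, write $W^\pm(J) u = \psi_u \otimes \Omega$. Fix $\chi \in \mathrm{C}_0^\infty(\mathbb{R} \setminus (\tau \cup \sigma_{\mathrm{pp}}(H)))$ and let $\delta_\chi > 0$ be given by Proposition \ref{Thvelocity}. Choose the family $J = ((j_{1,0}, j_{1,\infty}), \dots, (j_{14,0}, j_{14,\infty}))$ so that each $j_{i,0}$ is supported in $\{|x| < \delta_\chi\}$; picking $q = (q_1, \dots, q_{14})$ satisfying the hypotheses of Proposition \ref{Thvelocity} with $q_i \equiv 1$ on the support of $j_{i,0}$, one has $q_i j_{i,0} = j_{i,0}$ and $\Gamma^\pm(q) \chi(H) = 0$. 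Theorem \ref{th:inverse_wave}(iii) gives $(\Gamma^\pm(q) \otimes \mathds{1})(\psi_u \otimes \Omega) = \psi_u \otimes \Omega$, whence $\Gamma^\pm(q) \psi_u = \psi_u$. Since $\Gamma^\pm(q)$ is a bounded self-adjoint operator (it is the strong limit of the self-adjoint contractions $\Gamma(\tilde{q}^t)$), $\Gamma^\pm(q) \chi(H) = 0$ also yields $\chi(H) \Gamma^\pm(q) = 0$, so $\chi(H) \psi_u = 0$. Applying Theorem \ref{th:inverse_wave}(iv) and using $H_0 \Omega = 0$,
\begin{equation*}
\chi(H) u \;=\; \Omega^{\mathrm{ext}, \pm} \chi(H^{\mathrm{ext}}) W^\pm(J) u \;=\; \Omega^{\mathrm{ext}, \pm}\bigl(\chi(H) \psi_u \otimes \Omega\bigr) \;=\; 0 ,
\end{equation*}
which is the desired identity.

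The main obstacle is the lemma $W^\pm(J) u \in \mathscr{H} \otimes \mathbb{C}\Omega$: upgrading the algebraic intertwining of the finite-time partition $\check{\Gamma}(\tilde{J}^t)$ with free annihilation operators into a statement about the strong limit $W^\pm(J)$ requires controlling error terms uniformly in $t$, which in the massless case must be handled using the $|p_2|^{-1}$ estimate of Lemma \ref{lem:evolp2-1} and the propagation estimates of Section \ref{subsec:massless_propag}. The same strategy proves the massive case (with Theorem \ref{MourreI} replacing Theorem \ref{MourreI_3}, and the simpler propagation estimates of Section \ref{subsec:massive_propag}) for arbitrary $g \in \mathbb{R}$, since all the inputs used above then hold for every value of the coupling constant.
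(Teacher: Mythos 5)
Your reduction is sound: by Theorems \ref{MourreI} and \ref{MourreI_3} the set $\tau\cup\sigma_{\mathrm{pp}}(H)$ is closed and countable, so it suffices to show $\chi(H)u=0$ for every $u\in\mathscr{K}^{\pm}$ and every $\chi\in\mathrm{C}_0^{\infty}(\mathbb{R}\setminus(\tau\cup\sigma_{\mathrm{pp}}(H)))$, and your endgame (Proposition \ref{Thvelocity}, parts (iii)--(iv) of Theorem \ref{th:inverse_wave}, self-adjointness of $\Gamma^{\pm}(q)$, $H_0\Omega=0$) is exactly the machinery the paper deploys. However, everything rests on the lemma that $W^{\pm}(J)u\in\mathscr{H}\otimes\mathbb{C}\Omega$ for $u\in\mathscr{K}^{\pm}$, which you do not prove and which appears nowhere in the paper; this is the genuine gap. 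Worse, the intertwining you propose to prove it with, $(\mathds{1}\otimes d(h))W^{\pm}(J)=W^{\pm}(J)d^{\pm}(h)$ for $h$ in a dense subset, is not correct as stated for the massive species: from the relation $(\mathds{1}\otimes d(g))\check{\Gamma}(\tilde{J}^t)=\check{\Gamma}(\tilde{J}^t)\,d(j^{t\,*}_{\infty}g)$ (up to fermionic sign twists), the limit reproduces $W^{\pm}(J)d^{\pm}(h)$ only when $\|j^{t}_{0}h_t\|\to0$, i.e.\ when $h$ is supported in group velocities above the cutoff $2\delta$ of $j_{0}$; for a lepton or a $W$ boson this set of $h$ is not dense. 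For slow $h$ one has instead $\|j^{t\,*}_{\infty}h_t\|\to0$, so $(\mathds{1}\otimes d(h))W^{\pm}(J)=0$ outright, while $W^{\pm}(J)d^{\pm}(h)\neq0$ in general, so the operator identity fails there. The lemma itself can be salvaged: decompose $h$ into a fast part (velocities $>2\delta$), where the intertwining argument applies and the right side vanishes on $\mathscr{K}^{\pm}$, and a slow part, where the left side vanishes by the localization of the free evolution; one must then control the bosonic domains with the number--energy estimates and, for the neutrinos, the $\mathrm{d}\Gamma(|p_2|^{-1})$ bound of Lemma \ref{lem:evolp2-1}. But as written the crucial step is a sketch resting on a claim that would fail, so the proposal is incomplete.

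It is also worth noting that the paper's proof is arranged precisely to avoid your lemma. Instead of showing $\chi(H)u=0$ on $\mathscr{K}^{\pm}$, it shows $\mathrm{Ran}\,\chi(H)\subset(\mathscr{K}^{\pm})^{\perp}$: writing $\chi(H)=\Omega^{\mathrm{ext},\pm}\chi(H^{\mathrm{ext}})W^{\pm}(J)$ by Theorem \ref{th:inverse_wave} (iv) and inserting $\mathds{1}\otimes\Pi_{\Omega}+\mathds{1}\otimes\Pi_{\Omega}^{\perp}$, the vacuum term is killed by the same $\Gamma^{\pm}(q)$ argument you use (Theorem \ref{th:inverse_wave} (iii) and Proposition \ref{Thvelocity}), while the non-vacuum term lies in $(\mathscr{K}^{\pm})^{\perp}$ automatically, because $\Omega^{\mathrm{ext},\pm}(\mathds{1}\otimes d^{*}(h))=d^{\pm,*}(h)\,\Omega^{\mathrm{ext},\pm}$ holds by construction of the extended wave operator. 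This orthogonality formulation requires no information on how $W^{\pm}(J)$ acts on asymptotic vacua; adopting it would close your gap using only results already established in the paper, and it applies verbatim in both the massless (small $|g|$) and massive (arbitrary $g$) cases, as you correctly indicate at the end.
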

\begin{proof}
By Proposition \ref{prop:basic_prop_omega}, we know that
\begin{equation*}
\mathscr{H}_{\mathrm{pp}}(H) \subset \mathscr{K}^{\pm}.
\end{equation*}
Since in addition $\mathscr{H}_{\mathrm{pp}}(H)$ and $\mathscr{K}^{\pm}$ are closed, it remains to establish that $\mathscr{H}_{\mathrm{pp}}(H)^\perp \subset (\mathscr{K}^{\pm})^\perp$. In turn, since $\sigma_{\mathrm{pp}}(H)$ can only accumulate at the closed countable set $\tau$, it suffices to prove that $\mathrm{Ran}( \chi( H ) ) \subset (\mathscr{K}^{\pm})^\perp$ for all $\chi \in \mathrm{C}_0^{\infty}(\mathbb{R} \backslash ( \tau \cup \sigma_{\mathrm{pp}}(H) ) )$.

Let $\chi \in \mathrm{C}_0^{\infty}(\mathbb{R} \backslash ( \tau \cup \sigma_{\mathrm{pp}}(H) ) )$ and let $u = \chi(H) v$. Let $J$ be defined as in the statement of Theorem \ref{th:inverse_wave}. By Theorem \ref{th:inverse_wave} (iv), we have that
\begin{align*}
\chi(H) = \Omega^{\mathrm{ext},\pm} \chi(H^{\mathrm{ext}}) W^{\pm}(J) = \Omega^{\mathrm{ext},\pm} ( \mathds{1} \otimes \Pi_\Omega ) \chi(H^{\mathrm{ext}}) W^{\pm}(J) + \Omega^{\mathrm{ext},\pm} ( \mathds{1} \otimes P_\Omega^\perp ) \chi(H^{\mathrm{ext}}) W^{\pm}(J) ,
\end{align*}
where $\Pi_\Omega$ denotes the projection onto the Fock vacuum and $\Pi_\Omega^\perp$ the projection onto its orthogonal complement. We claim that the first term vanishes. Indeed, we can write
\begin{align*}
  \Omega^{\mathrm{ext},\pm} ( \mathds{1} \otimes \Pi_\Omega ) \chi(H^{\mathrm{ext}}) W^{\pm}(J)  & =  \Omega^{\mathrm{ext},\pm} ( \chi(H) \otimes \Pi_\Omega ) W^{\pm}(J)\\
  & =  \Omega^{\mathrm{ext},\pm} ( \chi(H) \otimes \Pi_\Omega ) (\Gamma^\pm( q ) \otimes \mathds{1} ) W^{\pm}(J) ,
\end{align*}
by Theorem \ref{th:inverse_wave} (iii), where $q$ is as in the statement of that result. Since $\chi( H )\Gamma^\pm( q )=0$ by Proposition \ref{Thvelocity}, we see that this term indeed vanishes. Hence we have proven that
\begin{equation*}
\chi( H ) v = \Omega^{\mathrm{ext},\pm} ( \mathds{1} \otimes \Pi_\Omega^\perp ) \chi(H^{\mathrm{ext}}) W^{\pm}(J) v.
\end{equation*}
Since $\Omega^{\mathrm{ext},\pm} ( \mathds{1} \otimes d^*(h) ) = d^{\pm,*}(h) \Omega^{\mathrm{ext},\pm}$ for any kind of creation operator $d^*(h)$, the last equality clearly shows that $\chi( H ) v \in ( \mathscr{K}^{\pm} )^\perp$. This concludes the proof of the theorem.
\end{proof}
Finally, as a consequence of Theorem \ref{th:AC}, we deduce that $H-E$ and $H_0$ are unitary equivalent if the conditions on $G$ are strengthened.
\begin{Cor}
Suppose that the masses of the neutrinos vanish and consider the Hamiltonian $H = H_0 + g ( H_I^{(1)} + H_I^{(2)} )$ with $H_I^{(1)}$ and $H_I^{(2)}$ given by \eqref{interaction_term}. Under the conditions of Theorem \ref{th:AC}, and assuming in addition that
\begin{equation}
 b_{(i),\cdot} G \in L^2, \quad i = 1,2,3, \quad b_{(i),\cdot} b_{(i'),\cdot} G \in L^2 , \quad i,i'=1,2,3 , \label{eq:cond_bibi'}
\end{equation}
the operators $H-E$ and $H_0$ are unitarily equivalent.

If the masses of the neutrinos are positive and if one considers the Hamiltonian \eqref{total_hamiltonian_g} with $H_I$ given by \eqref{interaction_term}, then, under the conditions of Theorem \ref{th:AC} and assuming in addition that \eqref{eq:cond_bibi'} holds, there exists $g_0>0$, which does not depend on $m_{\nu_e}$, $m_{\nu_{\mu}}$, $m_{\nu_{\tau}}$, such that, for all $|g|\le g_0$, $H-E$ and $H_0$ are unitarily equivalent.
\end{Cor}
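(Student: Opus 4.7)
The plan is to combine the asymptotic completeness statement already established in Theorem~\ref{th:AC} with the uniqueness and non-degeneracy of the ground state provided by Theorems~\ref{MourreII} and \ref{MourreII_1}, and to read off the unitary equivalence directly from the intertwining property of the wave operator. The key observation is that once $\mathscr{H}_{\mathrm{pp}}(H)$ reduces to the one-dimensional span of the ground state, the extended Hilbert space $\mathscr{K}^{\pm} \otimes \mathscr{H}$ collapses to $\mathscr{H}$ and $H^{\mathrm{ext}}$ restricts to $E + H_0$.

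First, I would observe that all hypotheses of Theorem~\ref{th:unitary} and Theorem~\ref{th:AC} are satisfied, so that $\Omega^{+}: \mathscr{K}^{+} \otimes \mathscr{H} \to \mathscr{H}$ is a unitary operator with $\mathscr{K}^{+} = \mathscr{H}_{\mathrm{pp}}(H)$ (for $|g| \le g_0$ in the massless case, and for all $g \in \mathbb{R}$ in the massive case), and $\Omega^{+}$ satisfies $H \Omega^{+} = \Omega^{+} H^{\mathrm{ext}}$ with $H^{\mathrm{ext}} = H \otimes \mathds{1} + \mathds{1} \otimes H_0$. Next, under the additional assumption \eqref{eq:cond_bibi'}, the hypotheses of Theorem~\ref{MourreII} (massive case) or Theorem~\ref{MourreII_1} (massless case) are in force, and thus there exists $g'_0 > 0$ such that, for all $|g| \le g'_0$, the only eigenvalue of $H$ is $E = \inf \sigma(H)$, and it is non-degenerate (existence of $E$ as an eigenvalue being supplied by Theorem~\ref{Thess2}, respectively Theorem~\ref{Thess}). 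In the massive case, Theorem~\ref{MourreII} guarantees that $g'_0$ may be chosen uniformly in $m_{\nu_e},m_{\nu_\mu},m_{\nu_\tau} > 0$, as required by the statement.

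Shrinking $g_0$ if necessary so that both ranges of validity apply, we then have $\mathscr{H}_{\mathrm{pp}}(H) = \mathbb{C}\psi_{\mathrm{gs}}$ with $\|\psi_{\mathrm{gs}}\| = 1$. Consider the canonical unitary identification
\begin{equation*}
j : \mathbb{C}\psi_{\mathrm{gs}} \otimes \mathscr{H} \longrightarrow \mathscr{H}, \qquad j(\psi_{\mathrm{gs}} \otimes u) = u.
\end{equation*}
A direct computation using $H\psi_{\mathrm{gs}} = E\psi_{\mathrm{gs}}$ shows that, on $\mathbb{C}\psi_{\mathrm{gs}} \otimes \mathscr{H}$,
\begin{equation*}
H^{\mathrm{ext}}(\psi_{\mathrm{gs}} \otimes u) = \psi_{\mathrm{gs}} \otimes (E + H_0) u,
\end{equation*}
so that $j$ intertwines $H^{\mathrm{ext}}|_{\mathbb{C}\psi_{\mathrm{gs}} \otimes \mathscr{H}}$ with $E + H_0$. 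Setting $U := \Omega^{+} \circ j^{-1}$ then gives a unitary operator on $\mathscr{H}$ satisfying $H U = U (E + H_0)$, i.e. $U^* (H - E) U = H_0$, which is the claimed unitary equivalence.

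Since the substantive analytic content (existence and unicity of the ground state, absence of singular continuous spectrum, asymptotic completeness of $\Omega^{\pm}$, unitarity of $\Omega^{\pm}$) has already been proved in Sections~\ref{sec:spectral}--\ref{sec:AC}, there is no genuine obstacle in the corollary itself; the only point requiring minor care is bookkeeping of the various smallness thresholds on $|g|$ and verifying that the uniformity in the neutrino masses claimed in Theorem~\ref{MourreII} is preserved through the identification.
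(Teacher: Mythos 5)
Your proposal is correct and follows essentially the same route as the paper: asymptotic completeness together with the unitarity and intertwining properties of $\Omega^{\pm}$ identify $H$ with $H|_{\mathscr{H}_{\mathrm{pp}}(H)}\otimes\mathds{1}+\mathds{1}\otimes H_0$, and then Theorems \ref{Thess2}/\ref{MourreII} (respectively \ref{Thess}/\ref{MourreII_1}) reduce $\mathscr{H}_{\mathrm{pp}}(H)$ to the simple ground state, giving $H-E\cong H_0$, with the uniformity in the neutrino masses supplied by Theorem \ref{MourreII}. The explicit identification $j(\psi_{\mathrm{gs}}\otimes u)=u$ is just a spelled-out version of the paper's argument, so there is nothing to add.
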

\begin{proof}
It follows from Proposition \ref{prop:intert}, Theorem \ref{th:unitary} and Theorem \ref{th:AC} that $H$ is unitary equivalent to $H |_{ \mathscr{H}_{\mathrm{pp}}(H) } \otimes \mathds{1} + \mathds{1} \otimes H_0$. In the case where neutrinos are massive, Theorems \ref{Thess2} and \ref{MourreII} imply that, for $|g|\le g_0$, $\mathscr{H}_{\mathrm{pp}}(H)  = \{ E \}$ and $E$ is a simple eigenvalue of $H$. This shows that $H-E$ and $H_0$ are unitarily equivalent. Moreover, by Theorem \ref{MourreII}, $g_0$ can be chosen independently of the values of $m_{\nu_e}$, $m_{\nu_{\mu}}$, $m_{\nu_{\tau}}$.

The same holds in the massless case, by Theorems \ref{Thess} and \ref{MourreII_1}. This proves the corollary.
\end{proof}

\appendix

\section{The interaction term}\label{app:interaction}

In this appendix, we provide the full expression of the formal interaction Hamiltonian \eqref{Interac} in terms of creation and annihilation operators. It is given by
\begin{align*}
I =\sum_{l=1}^3 \sum_{\epsilon=\pm} \int & \left\{ \left[ G^{(2)}_{l,\epsilon} (\xi_1,\xi_2,\xi_3,x)b^*_{l,-\epsilon}(\xi_1)c^*_{l,\epsilon}(\xi_2)a^*_{\epsilon}(\xi_3) + \mathrm{h.c.}\right]\right.\\
& -   \left[ G^{(2)}_{l,\epsilon} (\xi_1,\xi_2,\xi_3,x)b^*_{l,-\epsilon}(\xi_2)c^*_{l,\epsilon}(\xi_1)a^*_{\epsilon}(\xi_3) + \mathrm{h.c.}\right]\\
&+\left[ G^{(1)}_{l,\epsilon} (\xi_1,\xi_2,\xi_3,x)b^*_{l,\epsilon}(\xi_1)c^*_{l,-\epsilon}(\xi_2)a_{\epsilon}(\xi_3) + \mathrm{h.c.} \right]\\
&-\left[ G^{(1)}_{l,\epsilon} (\xi_1,\xi_2,\xi_3,x)b^*_{l,\epsilon}(\xi_2)c^*_{l,-\epsilon}(\xi_1)a_{\epsilon}(\xi_3) +  \mathrm{h.c.} \right]\\
&+\left[ G^{(3)}_{l,\epsilon} (\xi_1,\xi_2,\xi_3,x)b^*_{l,-\epsilon}(\xi_1)c_{l,-\epsilon}(\xi_2)a^*_{\epsilon}(\xi_3) + \mathrm{h.c.} \right]\\
 &-\left[ G^{(3)}_{l,\epsilon} (\xi_1,\xi_2,\xi_3,x)b^*_{l,-\epsilon}(\xi_2)c_{l,-\epsilon}(\xi_1)a^*_{\epsilon}(\xi_3) + \mathrm{h.c.} \right]\\
 &+ \left[ G^{(4)}_{l,\epsilon} (\xi_1,\xi_2,\xi_3,x)a_{\epsilon}(\xi_3)b^*_{l,\epsilon}(\xi_1)c_{l,\epsilon}(\xi_2) + \mathrm{h.c.} \right]\\
 &- \left. \left[ G^{(4)}_{l,\epsilon} (\xi_1,\xi_2,\xi_3,x)a_{\epsilon}(\xi_3)b^*_{l,\epsilon}(\xi_2)c_{l,\epsilon}(\xi_1) +  \mathrm{h.c.} \right] \right\}  d xd \xi_1d \xi_2d \xi_3 ,
\end{align*}
where $-\epsilon = \mp$ if $\epsilon = \pm$ and
\begin{small}
\begin{equation*}
G^{(1)}_{l,\epsilon}(\xi_1, \xi_2, \xi_3,x)=(2\pi)^{-\frac{9}{2}} \left\{
\begin{array}{cl}
\frac{\overline{u(p_1,s_1)}\gamma^{\alpha}(1-\gamma_5)v(p_2,s_2)\epsilon_{\alpha}(p_3,\lambda)}{(2(|p_2|^2+m_{\nu_l}^2)^{\frac{1}{2}})^{\frac{1}{2}} (2(|p_3|^2+m_{W}^2)^{\frac{1}{2}})^{\frac{1}{2}}(2(|p_1|^2+m_l^2)^{\frac{1}{2}})^{\frac{1}{2}}}e^{\mathrm{i}(-p_1-p_2+p_3)\cdot x} & \text{if } \, \epsilon=+ , \\
&\\
\frac{\overline{u(p_2,s_2)}\gamma^{\alpha}(1-\gamma_5)v(p_1,s_1)\epsilon_{\alpha}(p_3,\lambda)}{(2(|p_2|^2+m_{\nu_l}^2)^{\frac{1}{2}})^{\frac{1}{2}} (2(|p_3|^2+m_{W}^2)^{\frac{1}{2}})^{\frac{1}{2}}(2(|p_1|^2+m_l^2)^{\frac{1}{2}})^{\frac{1}{2}}}e^{\mathrm{i}(-p_1-p_2+p_3)\cdot x}  & \text{if }  \epsilon=- , \\ 
\end{array}
\right.
\end{equation*}
\end{small}
\begin{small}
\begin{equation*}
G^{(2)}_{l,\epsilon}(\xi_1, \xi_2, \xi_3,x)=(2\pi)^{-\frac{9}{2}}  \left\{
\begin{array}{cl}
\frac{\overline{u(p_1,s_1)}\gamma^{\alpha}(1-\gamma_5)v(p_2,s_2)\epsilon^*_{\alpha}(p_3,\lambda)}{(2(|p_2|^2+m_{\nu_l}^2)^{\frac{1}{2}})^{\frac{1}{2}} (2(|p_3|^2+m_{W}^2)^{\frac{1}{2}})^{\frac{1}{2}}(2(|p_1|^2+m_l^2)^{\frac{1}{2}})^{\frac{1}{2}}}e^{-\mathrm{i}(p_1+p_2+p_3) \cdot x} &  \text{if }  \epsilon=- , \\
&\\
\frac{\overline{u(p_2,s_2)}\gamma^{\alpha}(1-\gamma_5)v(p_1,s_1)\epsilon^*_{\alpha}(p_3,\lambda)}{(2(|p_2|^2+m_{\nu_l}^2)^{\frac{1}{2}})^{\frac{1}{2}} (2(|p_3|^2+m_{W}^2)^{\frac{1}{2}})^{\frac{1}{2}}(2(|p_1|^2+m_l^2)^{\frac{1}{2}})^{\frac{1}{2}}} e^{-\mathrm{i}(p_1+p_2+p_3) \cdot x}&  \text{if }  \epsilon=+ , \\
\end{array}
\right.
\end{equation*}
\end{small}
\begin{small}
\begin{equation*}
G^{(3)}_{l,\epsilon}(\xi_1, \xi_2, \xi_3,x)=(2\pi)^{-\frac{9}{2}}  \left\{
\begin{array}{cl}
\frac{\overline{u(p_1,s_1)}\gamma^{\alpha}(1-\gamma_5)u(p_2,s_2)\epsilon^*_{\alpha}(p_3,\lambda)}{(2(|p_2|^2+m_{\nu_l}^2)^{\frac{1}{2}})^{\frac{1}{2}} (2(|p_3|^2+m_{W}^2)^{\frac{1}{2}})^{\frac{1}{2}}(2(|p_1|^2+m_l^2)^{\frac{1}{2}})^{\frac{1}{2}}}e^{\mathrm{i}(-p_1+p_2-p_3) \cdot x} &  \text{if } \epsilon=- , \\
&\\
\frac{\overline{v(p_2,s_2)}\gamma^{\alpha}(1-\gamma_5)v(p_1,s_1)\epsilon^*_{\alpha}(p_3,\lambda)}{(2(|p_2|^2+m_{\nu_l}^2)^{\frac{1}{2}})^{\frac{1}{2}} (2(|p_3|^2+m_{W}^2)^{\frac{1}{2}})^{\frac{1}{2}}(2(|p_1|^2+m_l^2)^{\frac{1}{2}})^{\frac{1}{2}}} e^{\mathrm{i}(-p_1+p_2-p_3) \cdot x}&  \text{if }  \epsilon=+ , \\
\end{array}
\right.
\end{equation*}
\end{small}
\begin{small}
\begin{equation*}
\label{fct4}
G^{(4)}_{l,\epsilon}(\xi_1, \xi_2, \xi_3,x)=(2\pi)^{-\frac{9}{2}}  \left\{
\begin{array}{c c c}
\frac{\overline{u(p_1,s_1)}\gamma^{\alpha}(1-\gamma_5)u(p_2,s_2)\epsilon_{\alpha}(p_3,\lambda)}{(2(|p_2|^2+m_{\nu_l}^2)^{\frac{1}{2}})^{\frac{1}{2}} (2(|p_3|^2+m_{W}^2)^{\frac{1}{2}})^{\frac{1}{2}}(2(|p_1|^2+m_l^2)^{\frac{1}{2}})^{\frac{1}{2}}}e^{\mathrm{i}(-p_1+p_2+p_3) \cdot x} &  \text{if } \epsilon=+ , \\
&\\
\frac{\overline{v(p_2,s_2)}\gamma^{\alpha}(1-\gamma_5)v(p_1,s_1)\epsilon_{\alpha}(p_3,\lambda)}{(2(|p_2|^2+m_{\nu_l}^2)^{\frac{1}{2}})^{\frac{1}{2}} (2(|p_3|^2+m_{W}^2)^{\frac{1}{2}})^{\frac{1}{2}}(2(|p_1|^2+m_l^2)^{\frac{1}{2}})^{\frac{1}{2}}} e^{\mathrm{i}(-p_1+p_2+p_3) \cdot x}&  \text{if }  \epsilon=-  .
\end{array}
\right.
\end{equation*}
\end{small}

From these expressions, the properties of the solutions $u$, $v$ to the Dirac equation (see, e.g., \cite{Das:2008zze} and recall that $u$ and $v$ are normalized as in \cite[(2.13)]{GrMu00_01}) and of the polarization vectors $\epsilon_\alpha$, one can verify that the maps $p_i \mapsto f^{(j)}_{l,\epsilon,i}( \xi_i )$ in \eqref{eq:kernel1}--\eqref{eq:kernel4} are bounded.

\section{Definition and properties of operators in Fock spaces}\label{sec:standard_def}

In this section, some tools and results, inspired by \cite{DeGe99_01} and \cite{Am04_01}, are presented. In particular, standard objects such as the functors $\mathrm{d}\Gamma$ and $\Gamma$ are written in the case of a finite tensor product of Fock spaces. For simplicity of exposition, the domains of the operator involved are not specified, see for instance \cite{RS} or \cite{DeGe99_01} for more details. If an operator is closable, its closure is denoted in the same way. Given a Hilbert space $\mathfrak{h}$, $\mathfrak{F}_s(\mathfrak{h})$, respectively  $\mathfrak{F}_a(\mathfrak{h})$, stands for the symmetric, respectively antisymmetric, Fock space over $\mathfrak{h}$. The notation $\mathfrak{F}_{\sharp}(\mathfrak{h})$ will be used if a statement is true for both the symmetric and antisymmetric Fock spaces over $\mathfrak{h}$.

\subsection{The operator $\Gamma(B)$}\label{subsec:Gamma}

Let $b$ be an operator on $\mathfrak{h}$. The operator $\Gamma(b):\mathfrak{F}_{\sharp}(\mathfrak{h}) \to \mathfrak{F}_{\sharp}(\mathfrak{h})$ is defined by
\begin{align*}
& \Gamma(b)|_{\otimes^n_{\sharp}\mathfrak{h}}=  \underset{n}{\underbrace{b \otimes \dots \otimes b}}, \quad \Gamma(b) \Omega = \Omega.
\end{align*}
This definition is extended to a finite tensor product of Fock spaces as follows. Let $\mathscr{H}=\mathfrak{F}_{\sharp}(\mathfrak{h}_1)\otimes \dots \otimes \mathfrak{F}_{\sharp}(\mathfrak{h}_p)$. and let $B=(b_1, b_2, \dots , b_p)$ be a finite sequence of operators, where each operator $b_i$ acts on $\mathfrak{h}_i$. We define the operator $\Gamma(B)$ on $\mathscr{H}$ by setting
\begin{align*}
\Gamma(B) :=  \Gamma(b_1)\otimes \dots \otimes\Gamma(b_p).
\end{align*}

\subsection{The operator $\mathrm{d}\Gamma(B)$}\label{Gensecqu}

Let $b$ be an operator on $\mathfrak{h}$. Its second quantization $\mathrm{d}\Gamma(B)$ is defined on $\mathfrak{F}_{\sharp}(\mathfrak{h})$ by
\begin{align*}
&\mathrm{d}\Gamma(b)|_{\otimes^n_{\sharp}\mathfrak{h}}  =  \overset{n}{\underset{j=1}{\sum}} \,   \underset{j-1}{\underbrace{ \mathds{1}\otimes \dots \otimes \mathds{1}} }\otimes B \otimes \underset{n-j}{\underbrace{ \mathds{1} \otimes \dots \otimes \mathds{1}} }, \quad \mathrm{d}\Gamma(B) \Omega  =  0. 
\end{align*}
In particular, the number operator $N$ is $N=\mathrm{d}\Gamma(\mathds{1})$. We recall from \cite{DeGe99_01,Am04_01} that
\begin{equation}
\| N^{-\frac{1}{2}} \mathrm{d}\Gamma(b) u \| \leq \| \mathrm{d}\Gamma(b^*b)^{\frac{1}{2}} u \| , \label{eq:recall_estim}
\end{equation}
for any $u \in \mathscr{D}( \mathrm{d}\Gamma(b^*b)^{ 1/2 } )$.

The definition of $\mathrm{d}\Gamma(b)$ is extended to a finite tensor product of Fock spaces as follows. Let $\mathscr{H}=\mathfrak{F}_{\sharp}(\mathfrak{h}_1)\otimes \dots \otimes \mathfrak{F}_{\sharp}(\mathfrak{h}_p)$. Let $B=(b_1, b_2, \dots , b_p)$ be a finite sequence of operators, with $b_i$ acting on $\mathfrak{h}_i$. We define $\mathrm{d}\Gamma(B)$ on $\mathscr{H}$ by setting
\begin{align*}
\mathrm{d}\Gamma(B) := \sum_{j=1}^p \mathrm{d} \Gamma_j( b_j ) := \overset{p}{\underset{j=1}{\sum}} \,   \underset{j-1}{\underbrace{ \mathds{1} \otimes \dots \otimes \mathds{1}} }\otimes \mathrm{d}\Gamma(b_j) \otimes \underset{p-j}{\underbrace{ \mathds{1} \otimes \dots \otimes \mathds{1}} }.
\end{align*}
Then one can define the total number operator in $\mathscr{H}$ by $N := \mathrm{d}\Gamma( (\mathds{1}, \dots, \mathds{1}) )$ and \eqref{eq:recall_estim} becomes
\begin{equation*}
\| N^{-\frac{1}{2}} \mathrm{d}\Gamma(B) u \| \leq \| \mathrm{d}\Gamma(B^*B)^{\frac{1}{2}} u \|.
\end{equation*}

\subsection{The unitary operator $U$}\label{subsec:unitary_op}

We consider the specific Hilbert space of our model, defined by
\begin{equation*}
\mathscr{H}(\mathfrak{h}_1,\mathfrak{h}_2)=\mathfrak{F}_W \otimes \mathfrak{F}_L = \mathfrak{F}_s(\mathfrak{h}_2) \otimes \mathfrak{F}_s(\mathfrak{h}_2) \otimes \underset{12}{\underbrace{\mathfrak{F}_a(\mathfrak{h}_1)\otimes \dots \otimes \mathfrak{F}_a(\mathfrak{h}_1)}},
\end{equation*}
where $\mathfrak{h}_1$ and $\mathfrak{h}_2$ are the one-fermion Hilbert space, respectively the one-boson Hilbert space, see Section \ref{2intro}. In this section, we define an analog of the unitary operators $U$ considered in \cite{DeGe99_01} (in the bosonic case) and \cite{Am04_01} (in the fermionic case). In our setting, the operator $U$ is an operator from $\mathscr{H}(\mathfrak{h}_1 \oplus \mathfrak{h}_1,\mathfrak{h}_2 \oplus \mathfrak{h}_2)$ to $\mathscr{H}(\mathfrak{h}_1,\mathfrak{h}_2) \otimes \mathscr{H}(\mathfrak{h}_1,\mathfrak{h}_2)$,
\begin{equation*}
U : \mathscr{H}(\mathfrak{h}_1 \oplus \mathfrak{h}_1,\mathfrak{h}_2 \oplus \mathfrak{h}_2) \rightarrow \mathscr{H}(\mathfrak{h}_1,\mathfrak{h}_2) \otimes \mathscr{H}(\mathfrak{h}_1,\mathfrak{h}_2).
\end{equation*}

We first define an operator $U_{s_1}$ associated to the first bosonic Fock space by setting
\begin{align*}
& U_{s_1} : \mathfrak{F}_s(\mathfrak{h}_2 \oplus \mathfrak{h}_2) \otimes \mathfrak{F}_s(\mathfrak{h}_2) \otimes \underset{12}{\underbrace{\mathfrak{F}_a(\mathfrak{h}_1)\otimes \dots \otimes \mathfrak{F}_a(\mathfrak{h}_1)}} \\
&\qquad \rightarrow \mathfrak{F}_s(\mathfrak{h}_2) \otimes \mathfrak{F}_s(\mathfrak{h}_2) \otimes \underset{12}{\underbrace{\mathfrak{F}_a(\mathfrak{h}_1)\otimes \dots \otimes \mathfrak{F}_a(\mathfrak{h}_1)}}\otimes \mathfrak{F}_s(\mathfrak{h}_2) \\
& U_{s_1}\Omega \otimes \Omega \otimes \underset{12}{\underbrace{\Omega \otimes \dots \otimes \Omega}} = \Omega \otimes \Omega \otimes \underset{12}{\underbrace{\Omega \otimes \dots \otimes \Omega}}\otimes \Omega\\
& U_{s_1} \Big(a^{\sharp}(h_1+h_3) \otimes b \Big) = \Big(a^{\sharp}(h_1) \otimes \psi \otimes \mathds{1}+\mathds{1} \otimes b \otimes a^{\sharp}(h_3)\Big) U_{s_1},
\end{align*}
for any operator $b$ acting on $\mathfrak{F}_s(\mathfrak{h}_2) \otimes \underset{12}{\underbrace{\mathfrak{F}_a(\mathfrak{h}_1)\otimes \dots \otimes \mathfrak{F}_a(\mathfrak{h}_1)}}$.

The operator $U_{s_2}$ associated to the second bosonic Fock space is defined similarly by
\begin{align*}
& U_{s_2} : \mathfrak{F}_s(\mathfrak{h}_2) \otimes \mathfrak{F}_s(\mathfrak{h}_2 \oplus \mathfrak{h}_2) \otimes \underset{12}{\underbrace{\mathfrak{F}_a(\mathfrak{h}_1)\otimes \dots \otimes \mathfrak{F}_a(\mathfrak{h}_1)}} \otimes \mathfrak{F}_s(\mathfrak{h}_2)   \\
&\qquad \rightarrow  \mathfrak{F}_s(\mathfrak{h}_2) \otimes \mathfrak{F}_s(\mathfrak{h}_2) \otimes \underset{12}{\underbrace{\mathfrak{F}_a(\mathfrak{h}_1)\otimes \dots \otimes \mathfrak{F}_a(\mathfrak{h}_1)}} \otimes \mathfrak{F}_s(\mathfrak{h}_2) \otimes \mathfrak{F}_s(\mathfrak{h}_2) \\
&U_{s_2}\Omega \otimes \Omega \otimes \underset{12}{\underbrace{\Omega \otimes \dots \otimes \Omega}} \otimes \Omega = \Omega \otimes \Omega \otimes \underset{12}{\underbrace{\Omega \otimes \dots \otimes \Omega}}\otimes \Omega \otimes \Omega\\
& U_{s_2} \Big( b_1 \otimes a^{\sharp}(h_1+h_3) \otimes b_2 \Big) = \Big( b_1\otimes a^{\sharp}(h_1) \otimes b_2 \otimes \mathds{1}+ b_1 \otimes \mathds{1} \otimes b_2 \otimes a^{\sharp}(h_3) \Big) U_{s_2} ,
\end{align*}
for any operators $b_1$ acting on $\mathfrak{F}_s(\mathfrak{h}_2)$ and $b_2$ acting on $\underset{12}{\underbrace{\mathfrak{F}_a(\mathfrak{h}_1)\otimes \dots \otimes \mathfrak{F}_a(\mathfrak{h}_1)}} \otimes \mathfrak{F}_s(\mathfrak{h}_2)$.

Following \cite{Am04_01}, the unitary operator associated to the fermionic Fock spaces are defined as follows. Two different operators may be considered. They are, however, equal up to the left multiplication by an operator of the type $(-\mathds{1})^{B}$ where $B$ is a finite tensor product of number operators. The operators $U_{a_3,l/r}$ associated to the first fermionic Fock space are defined by
\begin{align*}
& U_{a_3,l/r} : \mathfrak{F}_s(\mathfrak{h}_2) \otimes \underset{12}{\underbrace{\mathfrak{F}_a(\mathfrak{h}_1 \oplus \mathfrak{h}_1)\otimes \dots \otimes \mathfrak{F}_a(\mathfrak{h}_1)}} \otimes \mathfrak{F}_s(\mathfrak{h}_2) \\
& \qquad \rightarrow \mathfrak{F}_s(\mathfrak{h}_2) \otimes \underset{12}{\underbrace{\mathfrak{F}_a(\mathfrak{h}_1)\otimes \dots \otimes \mathfrak{F}_a(\mathfrak{h}_1)}} \otimes \mathfrak{F}_s(\mathfrak{h}_2) \otimes \mathfrak{F}_a(\mathfrak{h}_1) \\
& U_{a_3,l/r} \Omega \otimes \underset{12}{\underbrace{\Omega \otimes \dots \otimes \Omega}} \otimes \Omega =  \Omega \otimes \underset{12}{\underbrace{\Omega \otimes \dots \otimes \Omega}}\otimes \Omega \otimes \Omega\\
& U_{a_3,l} \Big(b_1\otimes b^{\sharp}(h_2+h_4) \otimes b_2 \Big) = \Big( b_1\otimes b^{\sharp}(h_2) \otimes b_2 \otimes \mathds{1} + b_1\otimes (-\mathds{1})^{N} \otimes b_2 \otimes b^{\sharp}(h_4) \Big) U_{a_3,l}\\
& U_{a_3,r} \Big(b_1\otimes b^{\sharp}(h_2+h_4) \otimes b_2 \Big) = \Big( b_1\otimes b^{\sharp}(h_2) \otimes b_2 \otimes (-\mathds{1})^{N} + b_1\otimes \mathds{1} \otimes b_2 \otimes b^{\sharp}(h_4) \Big) U_{a_3,r},
\end{align*}
for any operators $b_1$ acting on $\mathfrak{F}_s(\mathfrak{h}_2)$ and $b_2$ acting on $\underset{11}{\underbrace{\mathfrak{F}_a(\mathfrak{h}_1)\otimes \dots \otimes \mathfrak{F}_a(\mathfrak{h}_1)}} \otimes \mathfrak{F}_s(\mathfrak{h}_2)$.

The operators $U_{s_1}$, $U_{s_2}$, $U_{a_3,l/r}$ extend to unitary operators. One can define similarly unitary operators $U_{a_4,l/r}$ \dots $U_{a_{14},l/r}$. The unitary operators $U_{L/R}$ are then defined by
\begin{align*}
& U_{L,R} : \mathscr{H}(\mathfrak{h}_1\oplus \mathfrak{h}_1, \mathfrak{h}_2 \oplus \mathfrak{h}_2) \rightarrow \mathscr{H}(\mathfrak{h}_1, \mathfrak{h}_2) \otimes \mathscr{H}(\mathfrak{h}_1, \mathfrak{h}_2)\\
&U_L = U_{a_{14},l} \cdots U_{a_3,l} U_{s_2} U_{s_1} , \qquad U_R = U_{a_{14},r} \cdots U_{a_3,r} U_{s_2} U_{s_1}. 
\end{align*}

\subsection{Partition of unity of the total Hilbert space and scattering identification operator}\label{EGFSIO}

Let $j_0, j_{\infty}$ be operators on $\mathfrak{h}_i$, $i=1,2$. As in \cite{DeGe99_01,Am04_01}, we define an operator $j$ associated to $j_0$ and $j_\infty$ as
\begin{align*}
 j ~ : ~ \mathfrak{h}_i & \rightarrow   \mathfrak{h}_i \oplus \mathfrak{h}_i\\
 h & \mapsto  (j_0 h , j_{\infty} h ).
\end{align*}
It follows that
\begin{align*}
j^*~:~\mathfrak{h}_i \oplus \mathfrak{h}_i & \rightarrow \mathfrak{h}_i \\
(h_0,h_{\infty})& \mapsto  j^*_0 h_0 + j^*_{\infty} h_{\infty}.
\end{align*} 
Considering $J=(j_1,\dots,j_{14})$ a family of such maps, we consider the operator $\Gamma( J )$ (see Section \ref{subsec:Gamma}),
\begin{align*}
\Gamma(J)~:~\mathscr{H}(\mathfrak{h}_1,\mathfrak{h}_2) & \rightarrow \mathscr{H}(\mathfrak{h}_1\oplus \mathfrak{h}_1,\mathfrak{h}_2 \oplus \mathfrak{h}_2).
\end{align*}
The partition of unity $\check{\Gamma}(J)$ is then defined by
\begin{align*}
& \check{\Gamma}(J)~:~\mathscr{H}(\mathfrak{h}_1,\mathfrak{h}_2) \rightarrow \mathscr{H}(\mathfrak{h}_1, \mathfrak{h}_2) \otimes \mathscr{H}(\mathfrak{h}_1, \mathfrak{h}_2)\\
& \check{\Gamma}(J) = U_L \Gamma(J) ,
\end{align*}
where $U_L$ is the unitary operator of Section \ref{subsec:unitary_op}.

Now, let 
\begin{align*}
i~:~\mathfrak{h}_i \oplus \mathfrak{h}_i & \to \mathfrak{h}_i \\
(h_0, h_{\infty}) & \mapsto h_0 + h_{\infty}.
\end{align*}
Considering $\hat{i}=(i_1,\dots,i_{14})$ where each $i_k$ is defined by the previous identity, we define the scattering identification operator $I: \mathscr{H}(\mathfrak{h}_1, \mathfrak{h}_2) \otimes \mathscr{H}(\mathfrak{h}_1, \mathfrak{h}_2) \to \mathscr{H}(\mathfrak{h}_1,\mathfrak{h}_2)$ by
\begin{equation}
\label{SIO}
I=\Gamma(\hat{i})U_L^* = \check{\Gamma}(\hat{i}^*)^*.
\end{equation}

\subsection{Extended objects}

Recall that the total Hilbert space of our model is denoted by $\mathscr{H}=\mathscr{H}(\mathfrak{h}_1,\mathfrak{h}_2)$. As mentioned in the main text, the ``extended Hilbert space'' in our setting is defined by
\begin{equation*}
\mathscr{H}^{\mathrm{ext}} = \mathscr{H} \otimes \mathscr{H}.
\end{equation*}
In $\mathscr{H}^{\mathrm{ext}}$, one defines the number operators
\begin{equation}\label{eq:defN0Ninfty}
N_0= N \otimes \mathds{1}_\mathscr{H}, \qquad N_{\infty}= \mathds{1}_\mathscr{H} \otimes N .
\end{equation}
The ``extended Hamiltonian'' and ``extended free Hamiltonian'' are
\begin{align*}
&H^{\mathrm{ext}} = H \otimes \mathds{1}_\mathscr{H} + \mathds{1}_\mathscr{H} \otimes H_0,\\
&H^{\mathrm{ext}}_0 = H_0 \otimes \mathds{1}_\mathscr{H} + \mathds{1}_\mathscr{H} \otimes H_0.
\end{align*}

\subsection{The operators $\mathrm{d}\Gamma (Q,R)$ and $\mathrm{d}\check{\Gamma} (Q,R)$}

Let $q,r$ be two operators on $\mathfrak{h}_i$. The operator $\mathrm{d}\Gamma(q,r):\mathfrak{F}_\sharp(\mathfrak{h}_i) \rightarrow \mathfrak{F}_\sharp(\mathfrak{h}_i)$ considered in \cite{DeGe99_01,Am04_01} is defined by
\begin{align*}
\mathrm{d}\Gamma(q,r)|_{\otimes^n\mathfrak{h}_i}  =  \overset{n}{\underset{j=1}{\sum}}\underset{j-1}{\underbrace{q \otimes \dots \otimes q}}  \otimes r \otimes \underset{n-j}{\underbrace{q \otimes \dots \otimes q}}.
\end{align*}
Given $q,r,s$ three operators in $\mathfrak{h}_i$, with $\|q\|\le1$, the following estimates are proven in \cite{DeGe99_01,Am04_01}:
\begin{equation}
 \big|\big\langle\mathrm{d}\Gamma(q,rs)u,v\big\rangle\big| \leq \big\| \mathrm{d}\Gamma(r^* r)^{\frac{1}{2}} v \big\| \big\| \mathrm{d}\Gamma(s^* s)^{\frac{1}{2}} u\big\| , \label{eq:lau_1}
\end{equation}
for all $u \in \mathscr{D}( \mathrm{d}\Gamma(r^* r)^{1/2} )$ and $v \in \mathscr{D}( \mathrm{d}\Gamma( s^*s )^{1/2} )$, and
\begin{equation}\label{eq:fhza1}
\big\|N^{-\frac{1}{2}} \mathrm{d}\Gamma(q,r)u \| \leq \| \mathrm{d}\Gamma(r^* r)^{\frac{1}{2}} u\| ,
\end{equation}
for all $u \in \mathscr{D}( \mathrm{d}\Gamma(r^* r)^{1/2} )$.

Let now $Q=(q_1,\dots,q_{14})$ and $R=(r_1,\dots,r_{14})$ be defined as in Section \ref{Gensecqu}. The operator $\mathrm{d}\Gamma(Q,R):\mathscr{H} \rightarrow \mathscr{H}$ is defined by
\begin{align*}
\mathrm{d}\Gamma(Q,R) = \sum_{i=1}^{14} \Gamma\big( (q_1,\dots, q_{i-1})\big) \otimes \mathrm{d}\Gamma(q_i,r_i) \otimes \Gamma\big( (q_{i+1},\dots, q_{14})\big).
\end{align*}
Moreover, similarly as in Section \ref{EGFSIO}, we define
\begin{align*}
& \mathrm{d}\check{\Gamma}(Q,R)~:~\mathscr{H}(\mathfrak{h}_1,\mathfrak{h}_2) \rightarrow \mathscr{H}(\mathfrak{h}_1, \mathfrak{h}_2) \otimes \mathscr{H}(\mathfrak{h}_1, \mathfrak{h}_2)\\
& \mathrm{d}\check{\Gamma}(J) = U_L \Gamma(J) ,
\end{align*}
where $U_L$ is the unitary operator of Section \ref{subsec:unitary_op}.

With these definitions, the estimate recalled in \eqref{eq:fhza1} easily generalizes to the following lemma.
\begin{lem}
\label{estimationN2}
Let $Q=(q_1, \dots, q_{14})$ and $R=(r_1,\dots,r_{14})$ be finite sequences of operators, with $\|q_i \| \leq 1$. We have that
\begin{equation*}
\big\|N^{-\frac12} \mathrm{d}\Gamma(Q,R) u \big \| \leq \big\| \mathrm{d}\Gamma(R^* R)^{\frac{1}{2}} u\big\| ,
\end{equation*}
for all $u \in \mathscr{D}( \mathrm{d}\Gamma(R^* R)^{1/2} )$, where $N = \mathrm{d}\Gamma( (\mathds{1},\dots,\mathds{1}) )$ is the total number operator in $\mathscr{H}$. Moreover,
\begin{equation*}
\big \| ( N_{0}+N_{\infty} ) )^{-\frac{1}{2}} \mathrm{d}\check{\Gamma}(Q,R) u \big\| \leq \big\| \mathrm{d}\check{\Gamma}(R^* R)^{\frac{1}{2}} u \big\|
\end{equation*}
for al $u \in \mathscr{D}( \mathrm{d}\check{\Gamma}(R^* R)^{1/2} )$, where $N_0$ and $N_\infty$ are defined in \eqref{eq:defN0Ninfty}.
\end{lem}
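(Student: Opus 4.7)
My plan is to generalize the single-Fock-space estimate \eqref{eq:fhza1} recalled above by reducing to a fixed particle-number sector and then mimicking its proof on each sector. Decompose $\mathscr{H} = \bigoplus_{\mathbf{n}} \mathscr{H}_\mathbf{n}$, where $\mathbf{n} = (n_1,\dots,n_{14}) \in \mathbb{N}^{14}$ and $\mathscr{H}_\mathbf{n}$ is the subspace with exactly $n_k$ particles in the $k$-th Fock factor for each $k$. Each summand $\Gamma((q_1,\dots,q_{i-1})) \otimes \mathrm{d}\Gamma(q_i,r_i) \otimes \Gamma((q_{i+1},\dots,q_{14}))$ in the definition of $\mathrm{d}\Gamma(Q,R)$ preserves every individual $n_k$, and $N$ acts as the scalar $n := n_1 + \cdots + n_{14}$ on $\mathscr{H}_\mathbf{n}$. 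Since $\mathrm{d}\Gamma(R^*R)$ is likewise block-diagonal in this decomposition, it suffices to prove the first inequality on each $\mathscr{H}_\mathbf{n}$ and then sum.

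On $\mathscr{H}_\mathbf{n}$, I would expand $\mathrm{d}\Gamma(Q,R)|_{\mathscr{H}_\mathbf{n}} = \sum_{i=1}^{14}\sum_{j=1}^{n_i} T_{i,j}$, where $T_{i,j}$ is the tensor product over the $14$ Fock factors whose $i$-th factor carries $r_i$ in the $j$-th slot and $q_i$ in every other slot, and whose $k$-th factor ($k \neq i$) is $q_k^{\otimes n_k}$. The sum has exactly $n$ terms, so Cauchy--Schwarz gives
\begin{equation*}
\|\mathrm{d}\Gamma(Q,R) u\|^2 \le n \sum_{i,j} \|T_{i,j} u\|^2 .
\end{equation*}
Now factor $T_{i,j} = S_{i,j}\widetilde{T}_{i,j}$, where $\widetilde{T}_{i,j}$ is obtained from $T_{i,j}$ by replacing every occurrence of $q_\bullet$ by $\mathds{1}$ while leaving $r_i$ in place, and $S_{i,j}$ is a tensor product of contractions (each tensor factor being either $q_k^{\otimes n_k}$ or $q_i^{\otimes(j-1)} \otimes \mathds{1} \otimes q_i^{\otimes(n_i-j)}$). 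The hypothesis $\|q_k\| \le 1$ forces $\|S_{i,j}\| \le 1$, whence $\|T_{i,j} u\|^2 \le \|\widetilde{T}_{i,j} u\|^2 = \langle u, \widetilde{T}_{i,j}^*\widetilde{T}_{i,j} u\rangle$. The sum $\sum_{j=1}^{n_i} \widetilde{T}_{i,j}^*\widetilde{T}_{i,j}$ reconstitutes $\mathrm{d}\Gamma_i(r_i^*r_i)|_{\mathscr{H}_\mathbf{n}}$, and summing over $i$ yields $\mathrm{d}\Gamma(R^*R)|_{\mathscr{H}_\mathbf{n}}$. Dividing by $n$, the eigenvalue of $N$ on $\mathscr{H}_\mathbf{n}$, gives the desired estimate on the sector; summing over $\mathbf{n}$ via orthogonality produces the first inequality.

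For the second inequality, the identical sector-wise argument applies verbatim with $q_i,r_i$ reinterpreted as maps $\mathfrak{h}_i \to \mathfrak{h}_i\oplus\mathfrak{h}_i$, yielding $\|\widetilde{N}^{-1/2}\mathrm{d}\Gamma(Q,R) u\| \le \|\mathrm{d}\Gamma(R^*R)^{1/2} u\|$, where now $\mathrm{d}\Gamma(Q,R) : \mathscr{H}(\mathfrak{h}_1,\mathfrak{h}_2) \to \mathscr{H}(\mathfrak{h}_1\oplus\mathfrak{h}_1,\mathfrak{h}_2\oplus\mathfrak{h}_2)$, $\widetilde{N}$ is the total number operator on the latter space, and $R^*R = (r_1^*r_1,\dots,r_{14}^*r_{14})$ acts back on $\mathscr{H}(\mathfrak{h}_1,\mathfrak{h}_2)$. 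The intertwining relations satisfied by the unitary $U_L$ from Section~\ref{subsec:unitary_op} (the commutation identities displayed there for each $U_{s_j}$ and $U_{a_k,l}$) imply $U_L \widetilde{N} U_L^* = N_0 + N_\infty$, and by definition $\mathrm{d}\check{\Gamma}(Q,R) = U_L\mathrm{d}\Gamma(Q,R)$; so conjugating by $U_L$ converts the first inequality into the second. The most delicate step is the tensorial bookkeeping across the $14$ Fock factors, and in particular the interplay with their symmetric or antisymmetric projections, but since every inequality above is established on the full $n$-fold tensor product, it descends automatically to the $\otimes_\sharp^n$ subspace. I do not expect a substantive obstacle beyond this routine bookkeeping; the statement is, morally, a parametrized version of the one-Fock-space bound \eqref{eq:fhza1}.
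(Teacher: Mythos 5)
Your proposal is correct and is essentially the paper's own route: the paper offers no written argument beyond asserting that the one-Fock-space bound \eqref{eq:fhza1} "easily generalizes", and your sector-wise expansion, Cauchy--Schwarz over the $n$ terms, and factorization $T_{i,j}=S_{i,j}\widetilde T_{i,j}$ with $\|S_{i,j}\|\le 1$ (together with the intertwining $U_L \widetilde N U_L^{*}=N_0+N_\infty$ and $\mathrm{d}\check{\Gamma}(Q,R)=U_L\,\mathrm{d}\Gamma(Q,R)$ for the checked version) is exactly that generalization, i.e.\ the mechanism behind the corresponding estimates in Derezi\'nski--G\'erard and Ammari. The only point worth noting is notational: in the second inequality the paper's right-hand side $\mathrm{d}\check{\Gamma}(R^*R)^{\frac12}u$ must be read, as you do, as $\mathrm{d}\Gamma(R^*R)^{\frac12}u$ on the original space, since $r_i^*r_i$ acts on $\mathfrak{h}_i$.
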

\subsection{Intertwining property }

In this section we state some intertwining properties that we used throughout the main text.

In the next lemma, $a^{\sharp}_{i}$ stands for a bosonic creation or annihilation operator acting on the $i^{th}$ Fock space (note that $\mathscr{H}$ is the tensor product of $14$ Fock spaces, and hence $\mathscr{H} \otimes \mathscr{H}$ is the tensor product of $28$ Fock spaces). Likewise, $d^{\sharp}_{i,b}$ stands for a fermionic creation or annihilation operator acting on the $i^{\mathrm{th}}$ Fock space. 
\begin{lem}\label{lem:B6}
Let $J=\{(j_{1,0},j_{1,\infty}),\dots, (j_{14,0},j_{14,\infty})  \}$ be a family of operators defined as in Section \ref{EGFSIO}. For $i=1,2$, and $h_2 \in \mathfrak{h}_2$, we have that
\begin{align*}
\check{\Gamma}(J) \int h_2(\xi) a^{\sharp}_i(\xi) d\xi  = \int \left\{ (j_{i,0}h_2)(\xi) a^{\sharp}_i(\xi)+ ( j_{i,\infty}h_2 )(\xi) a^{\sharp}_{i+14}(\xi)\right\} d\xi \, \check{\Gamma}(J).
\end{align*}
Likewise, for $i =3 , \dots , 14$ and $h_1 \in \mathfrak{h}_1$, we have that
\begin{align*}
\check{\Gamma}(J) \int h_1(\xi) d^{\sharp}_i(\xi) d\xi = \int \left\{ (j_{i,0}h_1)(\xi)d^{\sharp}_{i,b}(\xi)  + (j_{i,\infty}h_1)(\xi) \left(- 1\right)^{N_{i}} d^{\sharp}_{i+14,b}(\xi) \right\} d\xi \, \check{\Gamma}(J) ,
\end{align*}
where we have set $N_i= \int d^*_{i,b}(\xi) d_{i,b}(\xi) d\xi$.
\end{lem}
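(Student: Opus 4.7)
The plan is to unfold the definition $\check{\Gamma}(J) = U_L \Gamma(J)$ given in Appendix \ref{EGFSIO}, apply the standard intertwining property of the functor $\Gamma$ on each of the $14$ single-Fock-space factors, and then propagate the result through the unitary $U_L = U_{a_{14},l} \cdots U_{a_3,l} U_{s_2} U_{s_1}$ using the intertwining relations that are built into the very definitions of $U_{s_1}, U_{s_2}, U_{a_3,l}, \dots, U_{a_{14},l}$ in Section \ref{subsec:unitary_op}.

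First I would treat a single tensor factor. For $j : \mathfrak{h} \to \mathfrak{h} \oplus \mathfrak{h}$ of the form $jh = (j_0 h, j_\infty h)$ with $\mathfrak{h} = \mathfrak{h}_1$ or $\mathfrak{h}_2$, one has the elementary intertwining relation
\begin{equation*}
\Gamma(j) \int h(\xi) d^\sharp(\xi) d\xi = \int h(\xi) d^\sharp_{\oplus}(jh)(\xi) d\xi \, \Gamma(j),
\end{equation*}
where $d^\sharp_{\oplus}$ denotes the corresponding creation or annihilation operator on $\mathfrak{F}_\sharp(\mathfrak{h} \oplus \mathfrak{h})$. This is standard and follows by direct verification on $n$-particle states, using the definition of $\Gamma(j)$ from Section \ref{subsec:Gamma} and that $(jh)(\xi) = (j_0 h)(\xi) \oplus (j_\infty h)(\xi)$. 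Tensoring with the identity on all other factors, one obtains the analogous identity for $\Gamma(J)$, with $d^\sharp$ acting on the $i$-th Fock space and $d^\sharp_\oplus(j_i h)$ acting on the $i$-th doubled Fock space $\mathfrak{F}_\sharp(\mathfrak{h}_i \oplus \mathfrak{h}_i)$.

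Next I would apply $U_L$. For the bosonic case $i \in \{1,2\}$, the defining relation for $U_{s_i}$ in Section \ref{subsec:unitary_op} reads, after inserting on the two sides,
\begin{equation*}
U_{s_i} \, a^\sharp_\oplus(j_{i,0} h + j_{i,\infty} h) = \bigl( a^\sharp_i(j_{i,0} h) \otimes \mathds{1} + \mathds{1} \otimes a^\sharp_{i+14}(j_{i,\infty} h) \bigr) U_{s_i},
\end{equation*}
where $a^\sharp_i$ refers to the $i$-th bosonic factor of $\mathscr{H}(\mathfrak{h}_1,\mathfrak{h}_2)$ and $a^\sharp_{i+14}$ to its counterpart in the second copy. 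Since the remaining unitaries $U_{s_{3-i}}$ and $U_{a_3,l}, \dots, U_{a_{14},l}$ act trivially on, or commute with, bosonic creation/annihilation operators on factor $i$ (by inspection of their defining relations), they simply pass through and yield the stated formula after integrating over $\xi$. The fermionic case $i \in \{3,\dots,14\}$ is identical in structure, with the only difference that the defining intertwining property of $U_{a_i,l}$ inserts a factor $(-\mathds{1})^{N}$ (coming from the $(-\mathds{1})^N$ in the definition of $U_{a_i,l}$) in front of the ``infinity'' contribution; this factor becomes exactly $(-1)^{N_i}$ after all the subsequent $U_{a_k,l}$, $k > i$, have been applied.

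The main bookkeeping obstacle will be verifying that the remaining unitaries to the left of $U_{s_i}$ (respectively $U_{a_i,l}$) truly commute with the operator at hand and do not generate extra signs. For the bosonic part this is immediate since bosonic operators commute with all fermionic creation/annihilation operators on other factors (the cross commutation relations in Section \ref{2intro}), and commute with $U_{s_{3-i}}$ because the latter acts only on another bosonic factor. For the fermionic part, one must check that the fermionic operators $d^\sharp_k$, $k \ne i$, acting on different Fock spaces, commute with $d^\sharp_i$ (as they do in our convention, see the cross anticommutation relations in Section \ref{2intro}), so that no additional signs are produced when the later $U_{a_k,l}$ are applied; the only nontrivial sign is the $(-\mathds{1})^{N}$ that appears in the $i$-th factor itself, giving $(-1)^{N_i}$ as claimed.
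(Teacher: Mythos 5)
Your proposal is correct and follows essentially the same route as the paper's proof: unfold $\check{\Gamma}(J)=U_L\Gamma(J)$, apply the one-factor intertwining relation $\Gamma(j_i)\,a^{\sharp}(h)=a^{\sharp}(j_i h)\,\Gamma(j_i)$ on the relevant tensor factor, and then invoke the defining intertwining relations of $U_{s_1}$, $U_{s_2}$ and $U_{a_i,l}$, the latter producing the $(-1)^{N_i}$ factor in the fermionic case. The paper's proof simply writes out the case $i=1$, $a^{\sharp}=a^*$ and declares the remaining cases analogous, which is exactly the bookkeeping your last paragraph spells out.
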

\begin{proof}
We prove for instance the first intertwining property with $i=1$ and $a^\sharp = a^*$, the proof of the other statements is analogous. Recall the notation $j_i=(j_{i,0},j_{i,\infty})$. We have that
\begin{equation*}
\Gamma\left(j_1\right) a^{\sharp}(h_2) = a^{\sharp}(j_1 h_2) \Gamma\left(j_1\right) ,
\end{equation*}
(see \cite{DeGe99_01}). Therefore,
\begin{align*}
\check{\Gamma}(J) a^*(h_2) & = U_L \Gamma(J) \left\{a^*(h_2)\otimes \mathds{1}_{\mathfrak{F}_s} \otimes \mathds{1}_{\mathfrak{F}_L}\right\}\\
 & =  U_L  \left\{\Gamma(j_1) a^*(h_2)\otimes \Gamma(j_2)  \otimes \Gamma\left(\{ j_3,\dots, j_{14} \}\right) \right\}\\
 & =  U_L  \left\{a^*(j_1 h_2) \Gamma(j_1)\otimes \Gamma(j_2)  \otimes \Gamma\left(\{ j_3,\dots, j_{14} \}\right)\right\}\\
  & = U_L  \left\{a^*(j_1 h_2)\otimes \mathds{1}_{\mathfrak{F}_s} \otimes \mathds{1}_{\mathfrak{F}_L}\right\}\Gamma(J)\\
 & =    \left\{a^*(j_{1,0} h_2)\otimes \mathds{1}_{\mathfrak{F}_s} \otimes \mathds{1}_{\mathfrak{F}_L} \otimes \mathds{1}_\mathscr{H} +\mathds{1}_\mathscr{H} \otimes a^*(j_{1,\infty} h_2)\otimes \mathds{1}_{\mathfrak{F}_s} \otimes \mathds{1}_{\mathfrak{F}_L} \right\} \check{ \Gamma}(J).
\end{align*}
This corresponds to the first equality in the statement of the lemma, for $i=1$ and $a^\sharp=a^*$.
\end{proof} 
We conclude this appendix with another useful intertwining property.
\begin{lem}\label{lm:B3}
Let $B=(b_1,\dots,b_{14})$ be a finite sequence of operators defined as in Section \ref{Gensecqu} and let $J=\{(j_{1,0},j_{1,\infty}),\dots, (j_{14,0},j_{14,\infty})  \}$ be a family of operators defined as in Section \ref{EGFSIO}. We have that
\begin{equation*}
\left( \mathrm{d}\Gamma(B) \otimes \mathds{1}_{\mathscr{H}} + \mathds{1}_{\mathscr{H}} \otimes \mathrm{d}\Gamma(B)  \right)  \check{\Gamma}(J) -  \check{\Gamma}(J) \mathrm{d}\Gamma(B) =\mathrm{d}\check{\Gamma}(J,[B,J]).
\end{equation*}
\end{lem}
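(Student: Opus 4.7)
The plan is to reduce the claim to a single Fock space identity and then transport it to the tensor product structure using the unitary $U_L$.

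First, I will establish the following one-Fock-space identity. For a bounded operator $b:\mathfrak{h}\to\mathfrak{h}$ and a map $j:\mathfrak{h}\to\mathfrak{h}\oplus\mathfrak{h}$, set $[b,j]:=(b\oplus b)\,j-j\,b$, which is again a map $\mathfrak{h}\to\mathfrak{h}\oplus\mathfrak{h}$. I claim that on $\mathfrak{F}_\sharp(\mathfrak{h})$ one has
\begin{equation*}
\mathrm{d}\Gamma(b\oplus b)\,\Gamma(j)-\Gamma(j)\,\mathrm{d}\Gamma(b)=\mathrm{d}\Gamma(j,[b,j]),
\end{equation*}
where $\mathrm{d}\Gamma(j,[b,j])$ restricted to $\otimes^n_\sharp\mathfrak{h}$ is $\sum_{k=1}^n j\otimes\cdots\otimes[b,j]\otimes\cdots\otimes j$ with $[b,j]$ in the $k$-th slot (the natural extension of the definition of Appendix~\ref{sec:standard_def} when the two arguments take values in different spaces). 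The verification is a direct computation on each $n$-particle sector using $\Gamma(j)|_{\otimes^n_\sharp\mathfrak{h}}=j^{\otimes n}$ and $\mathrm{d}\Gamma(b)|_{\otimes^n_\sharp\mathfrak{h}}=\sum_k\mathds{1}\otimes\cdots\otimes b\otimes\cdots\otimes\mathds{1}$: both sides reduce to $\sum_k j\otimes\cdots\otimes((b\oplus b)j-jb)\otimes\cdots\otimes j$.

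Next, I combine this componentwise. Since $\Gamma(J)=\Gamma(j_1)\otimes\cdots\otimes\Gamma(j_{14})$ and $\mathrm{d}\Gamma(B)=\sum_{i=1}^{14}\mathds{1}\otimes\cdots\otimes\mathrm{d}\Gamma(b_i)\otimes\cdots\otimes\mathds{1}$, commuting $\Gamma(J)$ through each term $\mathrm{d}\Gamma(b_i)$ by means of the one-Fock-space identity in the $i$-th factor (and leaving the other factors unchanged) yields, as operators from $\mathscr{H}(\mathfrak{h}_1,\mathfrak{h}_2)$ to $\mathscr{H}(\mathfrak{h}_1\oplus\mathfrak{h}_1,\mathfrak{h}_2\oplus\mathfrak{h}_2)$,
\begin{equation*}
\mathrm{d}\Gamma(B\oplus B)\,\Gamma(J)-\Gamma(J)\,\mathrm{d}\Gamma(B)=\mathrm{d}\Gamma(J,[B,J]),
\end{equation*}
where $B\oplus B:=(b_1\oplus b_1,\dots,b_{14}\oplus b_{14})$ and $\mathrm{d}\Gamma(J,[B,J])$ is as in Appendix~\ref{sec:standard_def}.

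Finally, I apply $U_L$ on the left. Each factor $U_{s_i}$ and $U_{a_i,l}$ of $U_L$ satisfies the intertwining property of Section~\ref{subsec:unitary_op}, in the form $U_\sharp\,a^\sharp(h_0+h_\infty)=\bigl(a^\sharp(h_0)\otimes\mathds{1}+\mathds{1}\otimes a^\sharp(h_\infty)\bigr)\,U_\sharp$ (with possible fermionic $(-\mathds{1})^N$ factors). Expanding $\mathrm{d}\Gamma(b_i\oplus b_i)$ in terms of creation and annihilation operators and using these intertwining relations—together with the observation that $\mathrm{d}\Gamma(b_i\oplus b_i)$ preserves the fermion number in each individual Fock space, so the $(-\mathds{1})^N$ signs that appear in pairs cancel—gives
\begin{equation*}
U_L\,\mathrm{d}\Gamma(B\oplus B)=\bigl(\mathrm{d}\Gamma(B)\otimes\mathds{1}_{\mathscr{H}}+\mathds{1}_{\mathscr{H}}\otimes\mathrm{d}\Gamma(B)\bigr)\,U_L.
\end{equation*}
Multiplying the previous componentwise identity on the left by $U_L$ and using $\check{\Gamma}(J)=U_L\Gamma(J)$ and $\mathrm{d}\check{\Gamma}(J,[B,J])=U_L\,\mathrm{d}\Gamma(J,[B,J])$ yields the stated equality.

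The main technical point will be a careful accounting of the fermionic signs: the operators $U_{a_i,l}$ introduce factors of $(-\mathds{1})^{N_i}$, and one must check that when they act on a number-preserving operator like $\mathrm{d}\Gamma(b_i\oplus b_i)=a^*(\cdot)a(\cdot)$ written in second-quantized form, the sign factors appear in pairs and cancel, so that the clean intertwining displayed above indeed holds. Aside from this bookkeeping, the proof is otherwise a straightforward algebraic manipulation.
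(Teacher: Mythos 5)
Your proposal is correct and follows essentially the same route as the paper's proof: reduce to the componentwise single-Fock-space identity $\mathrm{d}\Gamma(b\oplus b)\Gamma(j)-\Gamma(j)\mathrm{d}\Gamma(b)=\mathrm{d}\Gamma(j,[b,j])$ and transport it with $U_L$ via $\check{\Gamma}(J)=U_L\Gamma(J)$ and the intertwining of $U_L$ with number-preserving second quantizations. The only difference is cosmetic (you apply $U_L$ at the end rather than at the start, and you spell out the $n$-particle-sector computation and the fermionic sign cancellation that the paper leaves implicit).
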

\begin{proof}
It suffices to write
\begin{align*}
& \left( \mathrm{d}\Gamma(B) \otimes \mathds{1}_{\mathscr{H}} + \mathds{1}_{\mathscr{H}} \otimes \mathrm{d}\Gamma(B)  \right)  \check{\Gamma}(J) -  \check{\Gamma}(J) \mathrm{d}\Gamma(B) \\
& = \left( \mathrm{d}\Gamma(B) \otimes \mathds{1}_{\mathscr{H}} + \mathds{1}_{\mathscr{H}} \otimes \mathrm{d}\Gamma(B)  \right)  U_L \Gamma(J) - U_L \Gamma(J) \mathrm{d}\Gamma(B) \\
& = U_L\left\{ \mathrm{d}\Gamma\left(\left[ 
\begin{array}{cc}
B & 0\\
0 & B
\end{array}
\right]
\right) \Gamma(J) -  \Gamma(J) \mathrm{d}\Gamma(B)\right\} \\
&= U_L\overset{n}{\underset{i=1}{\sum}}~\left\{ \Gamma((j_1,\dots , j_{i-1}))\otimes \left(\mathrm{d}\Gamma\left(\left[ 
\begin{array}{cc}
b_i & 0\\
0 & b_i
\end{array}
\right]
\right)\Gamma(j_i) - \Gamma(j_i) \mathrm{d}\Gamma \left(b_i\right)\right) \otimes \Gamma((j_{i+1}, \dots, j_{n}))   \right\} \\
& = U_L\overset{n}{\underset{i=1}{\sum}}~\left\{ \Gamma((j_1,\dots , j_{i-1}))\otimes \mathrm{d}\Gamma(j_i, [b_i,j_i]) \otimes \overline{\Gamma}((j_{i+1}, \dots, j_{n}))   \right\} \\ 
& = \mathrm{d}\check{\Gamma}(J,[B,J]). 
\end{align*}
This proves the lemma.
\end{proof}
\section{Asymptotic creation and annihilation operators}\label{sec:asympt_a}

In this section, we recall the existence and basic properties of the asymptotic creation and annihilation operators.

For $h_2 \in \mathfrak{h}_2$ we define $h^{(3)}_{2,t}:=e^{- \mathrm{i} t \omega^{(3)}} h_2$. Likewise, for $h_1 \in \mathfrak{h}_1$, we set $h^{(2)}_{1,t}:=e^{- \mathrm{i} t \omega^{(2)}} h_1$ and $h^{(1)}_{1,t}:=e^{-\mathrm{i} t \omega^{(1)}} h_1 $. For all $h_1 \in \mathfrak{h}_1$, $h_2 \in \mathfrak{h}_2$ and $\epsilon=\pm$, we introduce the following notations
\begin{align*}
& \phi^{(a)}_{\epsilon}(h_2) = \frac{1}{\sqrt{2}} \left( a_{\epsilon}^{*}(h_2) + a_{\epsilon}(h_2) \right), \\
& \phi^{(b)}_{l,\epsilon}(h_1) = \frac{1}{\sqrt{2}} \left( b_{l,\epsilon}^{*}(h_1) + b_{l,\epsilon}(h_1) \right), \\
& \phi^{(c)}_{l,\epsilon}(h_1) = \frac{1}{\sqrt{2}} \left( c_{l,\epsilon}^{*}(h_1) + c_{l,\epsilon}(h_1) \right).
\end{align*}
Assuming that $G \in \mathbb{H}^{1+\mu}$ for some $\mu>0$, the asymptotic bosonic fields can be defined in the same way as in \cite[Section 5.2]{DeGe99_01}, as generators of the asymptotic Weyl operators. The latter are defined as the strong limits
\begin{equation*}
W^{(a),+}_{\epsilon} (h_2) = \underset{t \to +\infty}{\slim} e^{\mathrm{i} t H } e^{\mathrm{i}t \phi^{(a)}_{\epsilon}(h_{2,t})} e^{ - \mathrm{i} t H } .
\end{equation*}
The asymptotic fermionic fields can be defined similarly, or, equivalently, as the strong limits
\begin{align*}
& \phi^{(b),+}_{l,\epsilon} (h_1) = \underset{t \to +\infty}{\slim} e^{ \mathrm{i} t H } \phi^{(b)}_{l,\epsilon}(h_{1,t}) e^{ - \mathrm{i} t H } , \quad \phi^{(c),+}_{l,\epsilon} (h_1) = \underset{t \to +\infty}{\slim} e^{ \mathrm{i} t H } \phi^{(c)}_{l,\epsilon}(h_{1,t}) e^{ - \mathrm{i} t H }.
\end{align*}

The results stated in the next theorem are straightforward adaptations of corresponding results established in \cite{DeGe99_01,Am04_01}. Details of the proof are left to the reader.
\begin{Th}
\label{TH7}
Suppose that $G \in \mathbb{H}^{1+\mu}$ for some $\mu>0$.
\begin{enumerate}[i)]
\item For any $h_2 \in \mathfrak{h}_2$, the asymptotic bosonic creation and annihilation operators $a_{\epsilon}^{+\sharp}(h_2)$ defined on $\mathscr{D}(a_{\epsilon}^{+\sharp}(h_2)) = \mathscr{D}(\phi^{(a),+}_{\epsilon} (h_2)) \cap \mathscr{D}(\phi^{(a),+}_{\epsilon} (\mathrm{i}h_2))$  by
\begin{align*}
a_{\epsilon}^{+*}(h_2) =  \frac{1}{\sqrt{2}} (\phi_{\epsilon}^{+}(h_1)- \mathrm{i}\phi_{\epsilon}^{+}(\mathrm{i}h_2)), \qquad a_{\epsilon}^{+}(h_2)  =  \frac{1}{\sqrt{2}} (\phi_{\epsilon}^{+}(h_2)+\mathrm{i}\phi_{\epsilon}^{+}(\mathrm{i}h_2)).
\end{align*}
are closed operators. Moreover, we have that $\mathscr{D}((|H|+1)^{\frac{1}{2}}) \subset \mathscr{D}(a_{\epsilon}^{+\sharp}(h_2))$ and 
\begin{align*}
& \| a^{+ \sharp}(h_2) u \| \leq C \| h_2 \| \| (|H|+1)^{\frac{1}{2}} u \| .
\end{align*}
\item For any $h_1 \in \mathfrak{h}_1$, the asymptotic fermionic creation and annihilation operators defined by
\begin{align*}
&b_{l,\epsilon}^{+*}(h_1) = \frac{1}{\sqrt{2}} (\phi_{l,\epsilon}^{(b),+}(h_1) - \mathrm{i} \phi_{l,\epsilon}^{(b),+}(\mathrm{i}h_1)) , \qquad b_{l,\epsilon}^{+}(h_1) = \frac{1}{\sqrt{2}} (\phi_{l,\epsilon}^{(b),+}(h_1)+\mathrm{i}\phi_{l,\epsilon}^{(b),+}(\mathrm{i}h_1)), \\
&c_{l,\epsilon}^{+*}(h_1) = \frac{1}{\sqrt{2}} (\phi_{l,\epsilon}^{(c),+}(h_1)-\mathrm{i}\phi_{l,\epsilon}^{(c),+}(\mathrm{i}h_1)) , \qquad c_{l,\epsilon}^{+}(h_1) = \frac{1}{\sqrt{2}} (\phi_{l,\epsilon}^{(c),+}(h_1)+\mathrm{i}\phi_{l,\epsilon}^{(c),+}(\mathrm{i}h_1)).
\end{align*}
are bounded operators.
\item The following commutation relations hold (in the sense of quadratic forms) 
\begin{align*}
& [a^{+}(h_2), a^{+*}(g_2)]  =  \langle h_2 , g_2 \rangle \mathds{1} ,\\
& [a^{+}(h_2), a^{+}(g_2)]  =  [a^{+ *}(h_2), a^{+ *}(g_2)] = 0 ,\\
&\{b^{+}(h_1), b^{+*}(g_1)\}  =  \langle h_1 , g_1 \rangle \mathds{1}, \\
&\{b^{+}(h_1), b^{+}(g_1)\}  =  \{b^{+ *}(h_1), b^{+ *}(g_1)\} = 0 , \\
&\{c^{+}(h_1), c^{+*}(g_1)\}  =  \langle h_1 , g_1 \rangle \mathds{1} , \\
&\{c^{+}(h_1), c^{+}(g_1)\}  =  \{c^{+ *}(h_1), c^{+ *}(g_1)\} = 0 , \\
&\{b^{+}(h_1), c^{+}(g_1)\}  =  \{b^{+}(h_1), c^{+ *}(g_1)\} =  0 , \\
&\{b^{+ *}(h_1), c^{+}(g_1)\} = \{b^{+ *}(h_1), c^{+ *}(g_1)\} = 0 , \\
&[b^{+ \sharp}(h_1),a^{+  \sharp}(h_2)] = [c^{+  \sharp}(h_1),a^{+  \sharp}(h_2)] = 0 .
\end{align*}
\item \label{PullThr} We have that
\begin{align*}
e^{\mathrm{i} t H}a^{+ \sharp} (h_2) e^{ - \mathrm{i} t H} = a^{+ \sharp}(h_{2,-t}) , \quad e^{ \mathrm{i} t H} b^{+ \sharp} (h_1) e^{ - \mathrm{i} t H}  =   b^{+ \sharp}(h_{1,-t}) , \quad e^{ \mathrm{i} t H} c^{+ \sharp} (h_1) e^{ - \mathrm{i} t H} =  c^{+ \sharp}(h_{1,-t}),
\end{align*}
and the following ``pulltrough formulae'' are satisfied 
\begin{align*}
&a^{+*}(h_2) H  =  H a^{+*}(h_2)-a^{+*}(\omega^{(3)} h_2) , \quad a^{+}(h_2) H = H a^{+}(h_2)+a^{+}(\omega^{(3)} h_2) , \\
&b^{+*}(h_1) H = H b^{+*}(h_1)-b^{+*}(\omega^{(1)} h_1) , \quad b^{+}(h_1) H  =  H b^{+}(h_1)+b^{+}(\omega^{(1)} h_1) , \\
&c^{+*}(h_1) H = H c^{+*}(h_1)-c^{+*}(\omega^{(2)} h_1) , \quad c^{+}(h_1) H =  H c^{+}(h_1)+c^{+}(\omega^{(2)} h_1). 
\end{align*}
\end{enumerate}
\end{Th}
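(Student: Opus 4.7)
The plan is to apply Cook's method combined with the $N_\tau$-type estimates of Section \ref{sec:selfadjoint}, following the scheme of \cite{DeGe99_01,Am04_01} adapted to our mixed boson/fermion setting. The fermionic fields $\phi^{(b),+}_{l,\epsilon}$ and $\phi^{(c),+}_{l,\epsilon}$ are constructed first, since $b^\sharp(h_1)$ and $c^\sharp(h_1)$ are bounded on the antisymmetric Fock spaces with $\|b^\sharp(h_1)\|,\|c^\sharp(h_1)\|\le\|h_1\|$. For the bosonic case, as $a^\sharp(h_2)$ is unbounded, I would first establish the existence of the asymptotic Weyl operators $W^{(a),+}_\epsilon(h_2)$ and then recover $\phi^{(a),+}_\epsilon(h_2)$ as the infinitesimal generator on $\mathscr{D}((|H|+1)^{1/2})$, as in \cite[Section~5.2]{DeGe99_01}.

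The core step is the differential identity, for $\sharp\in\{(a),(b),(c)\}$,
\begin{equation*}
\partial_t \bigl( e^{\mathrm{i}tH} \phi^{(\sharp)}(h_t) e^{-\mathrm{i}tH} \bigr) = \mathrm{i}\,e^{\mathrm{i}tH}\,[H_I, \phi^{(\sharp)}(h_t)]\,e^{-\mathrm{i}tH},
\end{equation*}
where $h_t$ denotes the appropriate free evolution of $h$, chosen so that the $H_0$-commutator cancels with $\partial_t h_t$. Using the canonical (anti)commutation relations from Section~\ref{2intro}, $[H_I,\phi^{(\sharp)}(h_t)]$ is a finite sum of Wick monomials in which one variable of each kernel $G^{(j)}_{l,\epsilon}$ is contracted against $h_t$. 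Applying Lemma~\ref{lemSCW_1}, the norm of each such monomial on states with bounded $N_{\mathrm{lept}}+N_W$ is controlled by $\|\langle h_t,G^{(j)}_{l,\epsilon}\rangle\|_{L^2}$ in the remaining variables; by the assumption $G\in\mathbb{H}^{1+\mu}$ and a standard non-stationary phase estimate for the dispersive evolution $e^{-\mathrm{i}t\omega}$, this scalar product decays like $t^{-1-\mu}$. Using Lemmas~\ref{lemV1Numbestimates} and \ref{lemV1NumbestimatesV2} to transfer the $N$-boundedness across $e^{-\mathrm{i}tH}$, Cook's criterion yields strong convergence on the dense subspace $\mathscr{D}((|H|+1)^{1/2})$. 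The bound in (i) then follows from $\|a^\sharp(h_2)u\|\lesssim\|h_2\|\|(N+1)^{1/2}u\|$ and the $H$-boundedness of $(N_{\mathrm{lept}}+N_W+1)^{1/2}$, combined with conservation of $N_{\mathrm{lept}}-N_{\mathrm{neut}}$ to control the neutrino number on each orbit.

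The (anti)commutation relations in (iii) are obtained by passing to strong limits in the time-evolved relations between $a^\sharp(h_{2,t}),b^\sharp(h_{1,t}),c^\sharp(h_{1,t})$: the fermionic identities persist since only uniformly bounded operators are involved, while the bosonic CCR, being scalar-valued, pass trivially to the limit in the sense of quadratic forms on $\mathscr{D}((|H|+1)^{1/2})\cap \mathscr{D}(N^{1/2})$. The intertwining relation in (iv) is a direct consequence of the definition, via a change of variable $s\mapsto s+t$ in the strong limit defining $d^{+,\sharp}$. Differentiating the intertwining identity at $t=0$ on $\mathscr{D}(H)\cap\mathscr{D}(d^{+,\sharp}(\omega h))$, one obtains the pulltrough formulae in the sense of quadratic forms, where the additional requirement $h_2\in\mathscr{D}(\omega^{(3)})$ (and similarly for $h_1$) is needed only to make sense of $d^{+,\sharp}(\omega h)$.

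The main obstacle is the Cook-integrability estimate for $\|[H_I,\phi^{(\sharp)}(h_t)]u\|$. The difficulty is two-fold: one must organize the eight Wick contractions arising from the terms $H_I^{(1)},\dots,H_I^{(4)}$ in \eqref{interaction_term} to match the contraction with $h_t$ coming from the variable under consideration; and, in the massless-neutrino case, one must ensure that the dispersive decay of $\langle h_t,G^{(j)}_{l,\epsilon}\rangle$ is not spoiled by the infrared behavior at $p_2=0$. The second point is handled by noting that the contraction here is against $G^{(j)}_{l,\epsilon}$ itself (not against $|p_2|^{-1}G^{(j)}_{l,\epsilon}$), so that the Sobolev regularity $G\in\mathbb{H}^{1+\mu}$ is sufficient to yield the rate $t^{-1-\mu}$ by one integration by parts in the oscillatory integral, uniformly in $m_\nu\ge 0$.
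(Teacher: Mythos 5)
Your proposal follows essentially the same route as the paper: the paper states this theorem as a straightforward adaptation of the corresponding results of \cite{DeGe99_01,Am04_01} and leaves the details to the reader, and your scheme -- asymptotic Weyl operators for the bosons, a direct Cook argument for the uniformly bounded fermionic fields, integrability in $t$ of $\|[H_I,\phi^{(\sharp)}(h_t)]u\|$ obtained from the spatial decay encoded in $G\in\mathbb{H}^{1+\mu}$ combined with free propagation for $h$ in a dense set, and then passage to the limit in the (anti)commutation, intertwining and pullthrough identities -- is exactly that adaptation.

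One correction to the way you close part (i): you should not bound $a^{\sharp}_\epsilon(h_{2,t})$ by $(N+1)^{\frac12}$ and then invoke conservation of $N_{\mathrm{lept}}-N_{\mathrm{neut}}$. In the massless case that route only controls the neutrino number on states that already belong to $\mathscr{D}(N_{\mathrm{neut}}^{\frac12})$, which is not implied by $u\in\mathscr{D}((|H|+1)^{\frac12})$, so it would not yield the stated estimate $\|a^{+\sharp}_\epsilon(h_2)u\|\le C\|h_2\|\,\|(|H|+1)^{\frac12}u\|$ on the claimed domain. The correct (and simpler) bound is $\|a^{\sharp}_\epsilon(h_2)v\|\le\|h_2\|\,\|(N_W+1)^{\frac12}v\|$, which involves only the boson number operator; since $m_W>0$, $N_W$ is relatively $H$-bounded (in the form sense), so the bound is uniform along the evolution and passes to the limit, and no control of $N_{\mathrm{neut}}$ is needed anywhere in (i).
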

\section{Technical computations}\label{app:technical}

In this section, we prove the estimates \eqref{eq:new_propag2} and \eqref{eq:rewrite_ai} that were used in the proof of Theorem \ref{th:propag_massless_2}.
\subsection{Proof of \eqref{eq:new_propag2}}
We use the notations of the proof of Theorem \ref{th:propag_massless_2}. Using a commutator expansion at second order, proceeding as in \cite[Lemma 5.2]{BoFaSi12_01}, we compute
\begin{align*}
\mathbf{d}_0 \tilde F_i \Big ( \frac{ x_{t,\rho,i}^2 }{ c^2t^2 } \Big ) &= \Big \{ - \frac{2}{t} \frac{ x_{t,\rho,i}^2 }{ c^2t^2 } + \frac{ \rho }{ t } \Big (  \frac{ t^{-\rho} }{ |p_2| + t^{-\rho} } \frac{ x_{t,\rho,i}^2 }{ c^2t^2 } + \mathrm{h.c.} \Big ) + \frac{ 1 }{ t } \Big (  \frac{ |p_2| }{ |p_2| + t^{-\rho} } \frac{ p_2 }{ |p_2| } \cdot \frac{ x_{t,\rho,i} }{ ct } + \mathrm{h.c.} \Big ) \Big \} \tilde F_i' \Big ( \frac{ x_{t,\rho,i}^2 }{ c^2t^2 } \Big ) \\
&\quad + \mathcal{O}( t^{-2+\rho} ) \\
&= \tilde F_i' \Big ( \frac{ x_{t,\rho,i}^2 }{ c^2 t^2 } \Big )^{\frac12} \Big \{ - \frac{2}{t} \frac{ x_{t,\rho,i}^2 }{ c^2 t^2 } + \frac{ \rho }{ t } \Big (  \frac{ t^{-\rho} }{ |p_2| + t^{-\rho} } \frac{ x_{t,\rho,i}^2 }{ c^2 t^2 } + \mathrm{h.c.} \Big ) \\
& \qquad \qquad \qquad \quad + \frac{ 1 }{ t } \Big (  \frac{ |p_2| }{ |p_2| + t^{-\rho} } \frac{ p_2 }{ |p_2| } \cdot \frac{ x_{t,\rho,i} }{ ct } + \mathrm{h.c.} \Big ) \Big \} \tilde F_i' \Big ( \frac{ x_{t,\rho,i}^2 }{ c^2 t^2 } \Big )^{\frac12} + \mathcal{O}( t^{-2+\rho} ) .
\end{align*}
Using that $t^{-\rho} ( |p_2| + t^{-\rho} )^1 = 1 - |p_2| ( |p_2| + t^{-\rho} )^{-1}$ and commuting $|p_2|^{1/2} ( |p_2| + t^{-\rho} )^{-1/2}$ with $F_i' (  x_{t,\rho,i}^2 / c^2 t^2  )^{1/2}$ (using again \cite[Lemma 5.2]{BoFaSi12_01}), we obtain that
\begin{align*}
& = \tilde F_i' \Big ( \frac{ x_{t,\rho,i}^2 }{ c^2 t^2 } \Big )^{\frac12}  \Big \{ - \frac{2-2\rho}{t} \frac{ x_{t,\rho,i}^2 }{ c^2 t^2 }  \Big \} \tilde F_i' \Big ( \frac{ x_{t,\rho,i}^2 }{ c^2 t^2 } \Big )^{\frac12}  \\
& + \Big ( \frac{ |p_2| }{ |p_2| + t^{-\rho} } \Big )^{\frac12} \tilde F_i' \Big ( \frac{ x_{t,\rho,i}^2 }{ c^2 t^2 } \Big )^{\frac12} \Big \{ - \frac{ 2 \rho }{ t } \frac{ x_{t,\rho,i}^2 }{ c^2 t^2 } + \frac{ 1 }{ c t } \Big ( \frac{ p_2 }{ |p_2| } \cdot \frac{ x_{t,\rho,i} }{ c t } + \mathrm{h.c.} \Big ) \Big \} \tilde F_i' \Big ( \frac{ x_{t,\rho,i}^2 }{ c^2 t^2 } \Big )^{\frac12} \Big (  \frac{ |p_2| }{ |p_2| + t^{-\rho} } \Big )^{\frac12}  + \mathcal{O}( t^{-2+\rho} ) .
\end{align*}
From the properties of the support of $F'_i$, we then deduce that
\begin{align*}
 & \mathbf{d}_0 \tilde F_i \Big ( \frac{ x_{t,\rho,i}^2 }{ c^2 t^2 } \Big )  \le - 2 \tilde F_i' \Big ( \frac{ x_{t,\rho,i}^2 }{ c^2 t^2 } \Big )^{\frac12}  \Big \{ \frac{1-\rho}{t} \frac{ x_{t,\rho,i}^2 }{ c^2 t^2 }  \Big \} \tilde F_i' \Big ( \frac{ x_{t,\rho,i}^2 }{ c^2 t^2 } \Big )^{\frac12}  \\
& - 2 \Big ( \frac{ |p_2| }{ |p_2| + t^{-\rho} } \Big )^{\frac12} \tilde F_i' \Big ( \frac{ x_{t,\rho,i}^2 }{ c^2 t^2 } \Big )^{\frac12} \Big \{ \frac{ \rho - c^{-1} }{ t } \frac{ x_{t,\rho,i}^2 }{ c^2 t^2 } \Big \} \tilde F_i' \Big ( \frac{ x_{t,\rho,i}^2 }{ c^2 t^2 } \Big )^{\frac12} \Big (  \frac{ |p_2| }{ |p_2| + t^{-\rho} } \Big )^{\frac12}  + \mathcal{O}( t^{-2+\rho} ).
\end{align*}
Using again the properties of the support of $F'_i$ proves \eqref{eq:new_propag2}.

\subsection{Proof of \eqref{eq:rewrite_ai}}
We use again the notations of the proof of Theorem \ref{th:propag_massless_2}. We compute (with $x_i=\mathrm{i} \nabla_{p_2}$)
\begin{align*}
a_i &= \frac{1}{2} \Big ( \frac{p_2}{|p_2|} \cdot x_i + x_i \cdot \frac{p_2}{|p_2|} \Big ) \\
& = \sum_{\ell=1}^3 \Big ( x_i^{(\ell)}\frac{ p_{2}^{(\ell)} }{|p_2|} + \frac{1}{2|p_2|} - \frac{(p_{2}^{(\ell)})^2}{2|p_2|^3} \Big ) \\
&= \sum_{\ell=1}^3 \Big ( x_i^{(\ell) } \frac{ |p_2| }{ |p_2| + t^{-\rho} } \Big ( 1 + \frac{ t^{-\rho} }{ |p_2| } \Big ) \frac{ p_{2}^{(\ell)} }{|p_2|} + \frac{1}{2|p_2|} - \frac{ (p_{2}^{(\ell)} )^2}{2|p_2|^3} \Big )\\
&= \sum_{\ell=1}^3 \left ( \bigg(\Big(\frac{ |p_2| }{ |p_2| + t^{-\rho} }\Big)^{\frac{1}{2}}x_i^{(\ell) } + \bigg[x_i^{(\ell) },\Big(\frac{ |p_2| }{ |p_2| + t^{-\rho} }\Big)^{\frac{1}{2}} \bigg]\bigg)\Big(\frac{ |p_2| }{ |p_2| + t^{-\rho} }\Big)^{\frac{1}{2}} \Big ( 1 + \frac{ t^{-\rho} }{ |p_2| } \Big ) \frac{ p_{2}^{(\ell)} }{|p_2|} + \frac{1}{2|p_2|} - \frac{ (p_{2}^{(\ell)} )^2}{2|p_2|^3}\right ) \\
&= \sum_{\ell=1}^3 \Big ( \Big ( x_{t,\rho,j}^{(\ell)} +  \mathcal{O}( \frac{1}{ |p_2|+t^{-\rho} } ) \Big ) \Big ( 1 + \frac{ t^{-\rho} }{ |p_2| } \Big ) \frac{p_{2}^{(\ell)} }{|p_2|} + \frac{1}{2|p_2|} - \frac{ ( p_{2}^{(\ell)} )^2}{2|p_2|^3} \Big ) \\
&= \sum_{\ell=1}^3 x_{t,\rho,j}^{(\ell)}\big ( 1 + \mathcal{O}(t^{-\rho} ) |p_2|^{-1} \big ) +  \mathcal{O}( 1 ) |p_2|^{-1}.
\end{align*}
This proves \eqref{eq:rewrite_ai}.

\bibliographystyle{amsalpha}

\end{document}